\documentclass{article}
% Packages
\usepackage{amsmath}
\usepackage{amssymb}
\usepackage{amsthm}
\usepackage{amsfonts}
\usepackage{float} %prevent floating

\usepackage{pdflscape}%rotate table

\usepackage{graphicx} 
\usepackage{tikz}
\usepackage{color}
\usepackage{multicol} 
\usepackage{xargs}
\usepackage{nicefrac}
\usepackage{bbm}

\usepackage[english]{babel}

\usepackage[latin1]{inputenc}

\usepackage{hyperref}
%%%%%%%%%%%%%%%%%%%%%%%%%%%%%%%%%%%%%%%%%%%
% Formato Pagine
%%%%%%%%%%%%%%%%%%%%%%%%%%%%%%%%%%%%%%%%%%%
\parskip 12pt
\parindent 0pt
% Page settings

\hoffset 0mm
\voffset -40mm

% hoffset for margins - check meet requirements.

% All values start from one inch from the left side and the top of the sheet

% margin of text from left side
  \setlength{\oddsidemargin}{0.5cm}
     %\oddsidemargin 0mm
     %\evensidemargin 0mm

% space left from top of page
%For colour pictures
   \setlength{\topmargin}{2.00cm}

%For main body of text
%   \setlength{\topmargin}{-0.54cm}

%%%%%%%%%%%%%%%%%%%%%%%%%%%%%%%%%%%%%%%%%%%
%  Intestazione
%%%%%%%%%%%%%%%%%%%%%%%%%%%%%%%%%%%%%%%%%%%
% size of the header
   \setlength{\headheight}{1.5cm}

% separation between header and the top of the box of the body of the text
   \setlength{\headsep}{1.0cm}

% space left between top of box of body and the first line
   \setlength{\topskip}{0cm}

% height of the box of the body
   \setlength{\textheight}{23cm}
   %\textheight 230mm

% width of the box of the body
   \setlength{\textwidth}{15cm}
   %\textwidth 160mm

% space left between the box of the body and the footer
   \setlength{\footskip}{1cm}

% Sara's newcommands
%%%%%%%%%%%%%%%%%%%%%%%%%%%%%%%%%%%%%%%%%%%
\def\gs{r}
\def\gr{s}
\def\codim{\mathrm{codim}\ }

\def\bbbn{\mathbbm N}
\def\bbbc{\mathbbm C}
\def\CC{\mathbbm C}
\def\CCP{\mathbbm{CP}}
\def\bbbq{\mathbbm Q}
\def\bbbz{\mathbbm Z}
\def\bbbd{\mathbbm D}
\def\bbbt{\mathbbm T}

\def\bbbo{\mathbbm O}
\def\bbbi{\mathbbm Y}

\def\alphat{\alpha_{4,1}}
\def\alphao{\alpha_{6,1}}
\def\alphai{\alpha_{12}}
\def\balphat{\bar{\alpha}_{6}}
\def\bbetat{\bar{\beta}_{8}}
\def\bgammat{\bar{\gamma}_{12}}
\def\betat{\beta_{4,2}}
\def\betai{\beta_{20}}
\def\betao{\beta_{8,0}}
\def\balphao{\bar{\alpha}_{8}}
\def\bbetao{\bar{\beta}_{12}}
\def\bgammao{\bar{\gamma}_{18}}
\def\gammat{\gamma_{6,0}}
\def\gammao{\gamma_{12,1}}
\def\gammai{\gamma_{30}}

\def\cp1{{\mathbbm C\mathbb P}^1}

% Jan's newcommands
%%%%%%%%%%%%%%%%%%%%%%%%%%%%%%%%%%%%%%%%%%%
% if on tux4u
%\def\where{1}
% else if we are in Dropbox

% if we want the central extension in
%\def\centralextension{1}
%else

\newtheorem{Remark}{Remark}

\newtheorem{Example}{Example}
\newtheorem{Lemma}{Lemma}

\newtheorem{Corollary}{Corollary}

\newtheorem{Theorem}{Theorem}

\newtheorem{Definition}{Definition}

\newcommand{\mf}[1]{\mathfrak{#1}}
\newcommand{\mc}[1]{\mathcal{#1}}

\newcommandx{\AIJ}[3][1=I,2=G,3=\alpha]{\mathbbm{#1}_{#3}^{#2}}
\newcommand{\mb}[1]{\mathbbm{#1}}

% Vincent's newcommands
%%%%%%%%%%%%%%%%%%%%%%%%%%%%%%%%%%%%%%%%%%%
 %trivial character
\newcommand{\gal}{g_\alpha} %generators of a group
\newcommand{\gbe}{g_\beta}
\newcommand{\gga}{g_\gamma}
\newcommand{\coa}{\kappa_\alpha}
\newcommand{\cob}{\kappa_\beta}
\newcommand{\coc}{\kappa_\gamma}

\newcommand{\SL}{\mathrm{SL}}

\newcommand{\PSL}{\mathrm{PSL}}

\newcommand{\Aut}{\mathrm{Aut}}

\newcommand{\diag}{\mathrm{diag}}

%%%% INVARIANTS
\newcommand{\inv}[1]{\hat{#1}}

\newcommand{\bpf}{\begin{proof}}
\newcommand{\epf}{$\qed$\end{proof}}

\newcommand\rlie[3]{\mf{#1l}_{#2}(#3)}
\newcommand\lie[2]{\mf{#1l}_{#2}(\bbbc)}
\renewcommand{\sl}{\lie{s}{2}}

\newcommand{\sln}{\lie{s}{n}}

\newcommand{\splitk}{\mathbbm{k}}

\newcommand\alias[4]{(#1\otimes \splitk(\lambda))_{#2}^{#4}}
\newcommand\alian[3]{(\mf{sl}_{#3}({#1})\otimes \splitk(\lambda))_{\alpha}^{#2}}

\newcommand\nf[1]{\|\mf{sl}({#1})\|}
\renewcommand{\sl}{\mf{sl}_2(\mathbbm{C})}

\def\cp1{{\mathbbm C\mathbbm P}^1}

\newcommand{\Zn}[1]{\bbbz/{#1}}
%%%%%%%%%%%%%%%%%%%%%%%%%%%%%%%%%%%%%%%%%%%%%%%%

%%%%%%%%%%%%%%%%%%%%%%%%%%%%%%%%%%%%%%%%%%%%%%%%%%%%%%%%%%%%%%%%%%%%%%%%%%%%%%%%%%%%%%%%%%%%%%%%%%%%%%%%%%%%%%%%%%%%%%%%%%%%%%%%%%%%%%%%%%%%%%%%%%%%%%%%%%%%%%%%%%%%%%%%%%%%%%%%%%%%%%%%%%%%%%%%%%%%%%%%%%%%%%%%%%%%%%%%%%%%%%%%%%%% NOTATIONAL CONVENTIONS
%
%
%
%
%
%

\begin{document}

% BIOBLIOGRAPHY STYLES
\bibliographystyle{plain}
\title{Higher dimensional Automorphic Lie Algebras}
%\title{Higher dimensional Automorphic Lie Algebras;\\ the special linear case}

\author{Vincent Knibbeler, Sara Lombardo\\
Department of Mathematics and Information Sciences\\
 Northumbria University, Newcastle upon Tyne, UK\\
\and
Jan A.~Sanders \\
Department of Mathematics, Faculty of Sciences\\
Vrije Universiteit, De Boelelaan 1081a, 1081 HV Amsterdam, The Netherlands
}

%\keywords{}
%\subjclass{}
%\date{\today,\version}
%\date{\today}
\date{}

\maketitle
%\numberwithin{equation}{section}
%\numberwithin{table}{section}
\numberwithin{Theorem}{section}
\numberwithin{Lemma}{section}
\numberwithin{Definition}{section}
\numberwithin{Remark}{section}
\numberwithin{Corollary}{section}
\numberwithin{Example}{section}

\begin{abstract}
The paper presents the complete classification of Automorphic Lie Algebras based on 
$\sln$,
where the symmetry group $G$ is finite and the orbit is any of the exceptional $G$-orbits in $\overline{\bbbc}$.
A key feature of the classification is the study of the algebras in the context of \emph{classical invariant theory}. 
This provides on one hand  a powerful tool from the computational point of view, on the other it opens new questions from an algebraic perspective, which suggest further applications of these algebras, beyond the context of integrable systems. 
In particular, the research shows that Automorphic Lie Algebras associated to the $\bbbt\bbbo\bbbi$ groups  (tetrahedral, octahedral and icosahedral groups) depend on the group through the automorphic functions only, thus they are group independent as Lie algebras. 
This can be established by defining a \emph{Chevalley normal form} for these algebras, generalising this classical notion to the case of Lie algebras over a polynomial ring.

\vspace{1cm}
\noindent\textbf{AMS  Subject Classification Numbers}\\
%15A54: Linear and multilinear algebra; matrix theory: matrices over function rings in one or more variables;\\
16Z05: Associative rings and algebras: computational aspects of associative rings;\\
17B05: Nonassociative rings and algebras: structure theory;\\
17B65: Nonassociative rings and algebras:  infinite-dimensional Lie (super)algebras;\\
17B80: Nonassociative rings and algebras: applications to integrable systems.

\end{abstract}

\newpage

%%%%%% Temporary
%\tableofcontents
%\listoffigures
%\listoftables
%%%%%%

%\numberwithin{equation}{section}
%\numberwithin{table}{section}
%\numberwithin{Theorem}{section}
%\numberwithin{Lemma}{section}
%\numberwithin{Definition}{section}
%\numberwithin{Remark}{section}
%\numberwithin{Corollary}{section}
%\numberwithin{Example}{section}
\section{Introduction} 
\label{sec:intro}
An Automorphic Lie Algebra (ALiA in what follows) is the space of invariants 
\[
(\mf{g}\otimes\mc{M}(\overline{\mathbbm{C}}))^G_\Gamma
\] 
obtained by imposing a finite group symmetry on a current algebra of Krichever-Novikov (KN) type \cite{schlichenmaier2013virasoro}
$\mf{g}\otimes\mc{M}(\overline{\mathbbm{C}})$ where $\mf{g}$ is a Lie algebra, $\mc{M}(\overline{\mathbbm{C}})$ the field of meromorphic functions on the Riemann sphere $\overline{\mathbbm{C}}=\mathbbm{C}\cup\{\infty\}$, $G$ a subgroup of $\textrm{Aut}(\mf{g}\otimes\mc{M}(\overline{\mathbbm{C}}))$ and where $\Gamma\subset\overline{\mathbbm{C}}$ is a $G$-orbit, to which poles are confined.
Since their introduction in  \cite{lm_cmp05} automorphic algebras  have been extensively studied (see \cite{MR2718933} and references therein, but also \cite{Bury2010} and \cite{Chopp2011}). 
%%%%%%%%%%%%%%%
ALiAs arose originally in the context of algebraic reductions of integrable equations \cite{lm_cmp05}, motivated by the problem of algebraic reduction of Lax pairs \cite{Mikhailov81}. While the classification problem is a stand-alone one, its solution could have an impact also in applications to the theory of integrable systems and beyond. In particular, the Chevalley normal form (see Section \ref{sec:Chev}) can be used as starting point to analyse Lax pairs and consequently associated integrable equations.

A first step towards the classification of ALiAs was presented in \cite{lm_cmp05}, 
where automorphic algebras associated  to finite groups were considered. 
These groups are those of Klein's classification, namely, the cyclic groups $\Zn{N}$, the dihedral groups $\bbbd_N$, the tetrahedral group $\bbbt$, the octahedral group $\bbbo$ and the icosahedral group $\bbbi$. In \cite{lm_cmp05} the authors study automorphic algebras associated to the dihedral group $\bbbd_N$, starting from the finite dimensional algebra $\sl$; examples of ALiAs based on $\lie{s}{3}{}$ were also discussed. 
In \cite{KLS-DN} the authors present a complete classification of automorphic algebras associated to the dihedral group $\bbbd_N$. 
A further, crucial, step toward the full classification appears in \cite{MR2718933}, where the problem is formulated in a uniform way using the theory of invariants. This allows for a complete classification of $\sl$-based ALiAs with finite group symmetry. The new approach inspires the present results; however the simplifying assumption that the representations of $G$ acting on the spectral parameter $\lambda$ as well as on the base Lie algebra are the same, as in \cite{MR2718933}, can no longer be made when considering higher dimensional Lie algebras.

The aim of this paper is to present the complete classification of Automorphic Lie Algebras for the case \(\mf{g}=\sln\)
with poles at an exceptional $G$-orbit. Exceptional orbits $\Gamma$ are those with less than $|G|$ elements; they are labelled by $\zeta=\alpha, \beta, \gamma$, where $\alpha, \beta, \gamma$ refer to the forms with zeros at $\Gamma_\zeta$.
%\footnote{In the following we will refer to the $G$-orbit either as $\Gamma$ or $\Gamma_\zeta$, depending on the context, and where no confusion ca be made}.
A key feature of this approach is the study of these algebras in the context of \emph{classical invariant theory}. 
In brief, the Riemann sphere is identified with the complex projective line $\CCP^1$ consisting of quotients $\nicefrac{X}{Y}$ of two complex variables by setting $\lambda=\nicefrac{X}{Y}$. M\"obius transformations on $\lambda$ then correspond to linear transformations on the vector $(X,Y)$ by the same matrix. Classical invariant theory is then used to find the $G$-invariant subspaces of $\CC[X,Y]$-modules, where $\CC[X,Y]$ is the ring of polynomials in $X$ and $Y$. These ring-modules of invariants are then localised by a choice of multiplicative set of invariants. This choice corresponds to selecting a $G$-orbit $\Gamma_\zeta$ of poles. The set of elements in the localisation of degree zero, i.e.~the set of elements which can be expressed as functions of $\lambda$, generate the ALiA.
Once the algebra is computed, it is transformed into a Chevalley normal form in the spirit of the standard 
Chevalley 
basis \cite{MR0323842}; we believe this is the most convenient form for analysis. The \emph{isomorphism question} can finally be answered in the \(\sln\) case and a more refined isomorphism conjecture is formulated:

\emph{Let $G$ and $G'$ be two of the groups from \(\bbbt, \bbbo, \bbbi\) or \(\bbbd_N\) and let $\Gamma_\zeta$ and $\Gamma'_{\zeta'}$ be exceptional $G$- and $G'$-orbits, respectively. Then, the Automorphic Lie Algebras $(\mf{g}\otimes\mc{M}(\overline{\mathbbm{C}}))^G_{\Gamma_{\zeta}}$ and $(\mf{g}'\otimes\mc{M}(\overline{\mathbbm{C}}))^{G'}_{\Gamma'_{\zeta'}}$ are isomorphic as Lie algebras if $\mf{g}\cong\mf{g}'$ and $\kappa_\zeta=\kappa_{\zeta'}$} (cf.~Table \ref{tab:codimg} --
see Theorem \ref{teo:AL} for the precise statement). 

%%%%%%%%%%%%
Classical invariant theory  provides a powerful tool of analysis from the point of view of computations. Indeed, one of the obstacles to a complete classification so far were computational difficulties arising on one hand from choosing two different group representations, which implies a \emph{ground form} of higher degree, rather than of degree two as in \cite{MR2718933}, on the other hand the intrinsic difficulty arising from the higher dimensionality of the problem (moving from $\sl$ to $\sln$, $n>2$). It is worth noting here that in this paper we will consider only inner automorphisms in $\textrm{Aut}(\mf{g}\otimes\mc{M}(\overline{\mathbbm{C}}))$. 
This is however not so restrictive as it might seem at first, as only the octahedral group $\bbbo$ might admit outer automorphisms in the case of  $\sln$, $n>2$ \cite{knibbeler2014}. The analysis of all admissible automorphisms in $\textrm{Aut}(\mf{g}\otimes\mc{M}(\overline{\mathbbm{C}}))$ given a Lie algebra  \(\mf{g}\) is a very interesting one, and it is left for further investigation.

The main results of the classification can be summarised as follows:
\begin{enumerate}
\item The long-standing isomorphism conjecture, 
due to Mikhailov, is now a theorem for \(\mf{g}=\sln\) (see Theorem \ref{teo:AL}).
The proof  relies on the explicit Chevalley normal form of the algebras.
\item The number of automorphic functions present in each normal form is an invariant (see Sections \ref{sec:Chev} and \ref{sec:Invariants}).
\end{enumerate}

The results also suggest a natural interpretation of these algebras as finitely generated over the ring \(\splitk[\AIJ[I][][\Gamma]]\), 
where \(\splitk\) is an extension of $\bbbq$ with a root of unity depending on the irreducible representations of the group $G$, and \(\AIJ[I][][\Gamma]\) is a $G$-automorphic function with poles at the orbit $\Gamma$ (note that the field and the automorphic function are group dependent, but we do not want to overload the notation by calling it \(\splitk_G\); this also
underlines the fact that the group dependency does not play a big role).

The alternative is to consider it as an infinite dimensional Lie algebra over \(\splitk\),  graded by powers of \(\AIJ[I][][\Gamma]\), as has been done in earlier publications, cf.~\cite{MR2718933}, where both approaches are used in parallel, or in \cite{lm_cmp05}, \cite{Bury2010}, and, in the context of KN type algebras, in \cite{MR2058804}. While the former approach adds some computational complications, one is rewarded with classical looking Chevalley normal form results (see Section \ref{sec:Chev}) and the Cartan matrix is the same as the one from the original Lie algebra.
It is worth pointing out that
in both approaches one can ask whether the ALiA can
be brought into normal form, as for instance in the case of the
Chevalley basis for simple Lie algebras over \(\bbbc\).
As with any normal form question, one has to determine the transformation group.
In the context of infinite dimensional Lie algebras, there are now two approaches in use:
(i) the graded approach, where one allows invertible linear
transformations on the algebra
respecting the grading. This approach in particular keeps the grading
depth invariant \cite{lm_cmp05}.
(ii) The filtered approach, used in this paper and introduced in \cite{MR2718933}, where one allows invertible linear
transformations of filtering degree \(0\), where the filtering is
induced by the grading in the usual manner.
Here the quasigrading is respected, but the grading depth may increase.
Since the second group of transformations contains the first,
the normal form space will be smaller.
Explicitly, if the algebra $(\mf{g}\otimes\mc{M}(\overline{\mathbbm{C}}))^G_\Gamma$ is generated by $m$ matrices over the ring \(\splitk[\AIJ[I][][\Gamma]]\),
then the first approach uses the transformation group \(\{T\in Mat_{m\times m}(\mb{k})\,|\,\det(T)\in\mb{k}^\ast\}=GL(\mb{k}^{m})\)
and the second uses \(\{T\in Mat_{m\times m}(\mb{k}[\AIJ[I][][\Gamma] ])\,|\,\det(T)\in\mb{k}^\ast\}=GL(\mb{k}^{m})\oplus\,\bigoplus_{d=1}^{\infty} End (\mb{k}^{m})\,\AIJ[I][d][\Gamma] \),  %=GL((\mf{g}\otimes\mc{M}(\overline{\mathbbm{C}}))^G_\Gamma)\)., 
namely the general linear group of the vector space $(\mf{g}\otimes\mc{M}(\overline{\mathbbm{C}}))^G_\Gamma$.

We remark  that the finite group theory used here is completely classical, with the exception of the results 
in Section \ref{sec:Invariants},
whereas the Lie algebra theory over a polynomial ring is slightly more modern, but it is the combination of the two that poses the central question in  this paper.

Finally, it is worth pointing out that the classification is driven by computational inputs: many of the necessary computations were done using the FORM package \cite{Form00}, calling on GAP \cite{GAP} and Singular \cite{MR2363237}.

%%%%%%%%%%%%%%%%%%%%%%%%%%%%%%%%%%%%%%%%%%%%%%%%%%%%%%%%%%%%%%%%%%%%
The paper is organised as follows: 
in the next section the computational challenges are presented and addressed in two ways (the difficulties arising from the increasing dimensionality of the problem are discussed in Section \ref{sec:computing} but ultimately addressed in Section \ref{sec:moi}): first, by using classical invariant theory, 
thus working with polynomials in $X$ and $Y$ (Section \ref{sec:motivations}), rather than rational functions of $\lambda$, 
until the very last stage when the Riemann sphere is identified with the complex projective line $\CCP^1$ by setting $\lambda=\nicefrac{X}{Y}$. 
%Next, the difficulties arising from the increasing dimensionality of the problem, are discussed in Section \ref{sec:computing} and ultimately addressed in Section \ref{sec:moi}.
Section \ref{sec:gap} recalls the necessary background from representation theory of finite groups, considering in particular the $\bbbt\bbbo\bbbi$ groups. 
Sections \ref{sec:gap} and \ref{sec:Molien} recall basic notions from invariant theory, such as decompositions into irreducible representations and Molien series. In Section \ref{sec:comp_inv_mat} invariant matrices are computed by means of transvection (Section \ref{sec:trans}).  The second major computational challenge of the problem is addressed in Section \ref{sec:moi} introducing the  concept of \emph{matrices of invariants}, which in turn allows one to define \emph{Chevalley normal form} for ALiAs. Normal forms for \(\lie{s}{n}\)--based ALiAs are given in Section \ref{sec:Chev}. 
In this Section we consider an extension of  the Jacobson-Morozov construction to the case of $\mb{k}[\AIJ[I][][\Gamma]]$-Lie algebras.  Section \ref{sec:Invariants} introduces the concept of \emph{invariant of Automorphic Lie Algebras}. The predicting power of invariants is discussed in the Conclusions (Section \ref{sec:conclusions}) where the main findings are commented upon.

%%%%%%%%%%%%%%%%%%%%%%%%%%%%%%%%%%%%%%%%%%%%%%%%%%%%%%%%%%%%%%%%%%%%

%%%%%%%%%%%%%%%%%%%%%%%%%%%%%%%%%%%%%%%%%%%%%%%%%%%%%%%%%%%%%%%%%%%%

\section{Computing Automorphic Lie Algebras}
\label{sec:computing} 
One of the obstacles to a complete classification of Automorphic Lie Algebras so far has been of computational nature: difficulties arising on one hand from the choice of two different group representations, which implies a \emph{ground form} of higher degree, rather than of degree two as in \cite{MR2718933}. On the other hand the intrinsic difficulty arising from the higher dimensionality of the problem, moving from $\sl$ to $\sln$, $n>2$
. These difficulties are overcome here in two ways: first, by using classical invariant theory, 
thus working with polynomials in $X$ and $Y$ rather than rational functions of $\lambda$, 
until the very last stage when the Riemann sphere is identified with the complex projective line $\CCP^1$ by setting $\lambda=\nicefrac{X}{Y}$. 
This allows us a better control of the degrees of the invariants at each step of the computation 
and it enables the use of Molien's theory to predict the degree of the invariants, 
and to check the outcome of the computations as well. Working on $\bbbc[X,Y]$ allows us also to use transvectants,  
an easy to implement computational tool in classical invariant theory (see Section \ref{sec:trans}). 
The difficulty arising from the higher dimensionality of the problem is instead dealt with introducing \emph{matrices of invariants} (see Section \ref{sec:moi}), 
which are computationally very effective. They are defined by considering the action of invariant matrices on invariant vectors, by multiplication. 
The description of the invariant matrices in terms of this action yields greatly simplified matrices, whose entries  are indeed $G$-invariant. 
The map to matrices of invariants preserves the structure constants of the Lie algebra. 
We emphasise that the matrices of invariants are not invariant under
the usual group action, because they are expressed in a
$\lambda$-dependent basis that trivialises the conjugation action on
the matrices, leaving only the action on the spectral parameter
$\lambda$ (see next section).

We start by defining \emph{Polynomial Automorphic Lie Algebras}.

%%%%%%%%%%%%%%%%%%%%%%%%%%%%%%%%%%%%%%%%%%%%%%%%%%%%%%%%%%%%%%%%%%%%

\subsection{Polynomial Automorphic Lie Algebras}
\label{sec:motivations} 
Let $G$ be a finite group and let $\sigma$ be a faithful, projective $G$-representation:
\[
\sigma:G\rightarrow \textrm{GL}_2(\bbbc)\,.
\]
This restricts $G$ to the groups 
\[
\bbbz/N,\;\bbbd_N,\;\bbbt,\;\bbbo,\;\bbbi
\] 
of Klein's classification \cite{MR1315530,MR0080930} where $\bbbz/N$ is the cyclic group, $\bbbd_N$ the dihedral group, $\bbbt$ the tetrahedral group, $\bbbo$ the octahedral group and $\bbbi$ the icosahedral group. In this paper we focus on the exceptional cases (since they are not part of infinite families), the $\bbbt$$\bbbo$$\bbbi$ groups. 
The  $\bbbd_N$-classification has been presented in \cite{KLS-DN}, both for generic and exceptional $G$-orbits, since the $\bbbd_N$ computations can be done explicitly without the use of a computer. In addition, this is the only non abelian group in Klein's classification whose order depends on $N$, which is a 
complication from a computational point of view and we prefer to keep it separate. 

Let $\tau:G\rightarrow \textrm{PGL}(V)$ be an irreducible $G$-representation, consider the Lie algebra
\[
\mf{g}(V)\otimes \bbbc[X,Y]\,
%\textrm{ with } \mf{g}(V)=\clie{gl}{}{V,\bbbc},
\]
where $\mf{g}(V)$ is a complex Lie algebra in $\mf{gl}(V)$ and $\bbbc[X,Y]$ is the ring of polynomials in $X$ and $Y$. The representations $\sigma$ and $\tau$ induce a $G$-action on $\mf{g}(V)\otimes \bbbc[X,Y]$ (see \cite[Section 1.5, 1.6]{MR0450380}) by identifying $\mf{gl}(V)=V\otimes V^{\ast}$, where $V^{\ast}$ is the dual space,
\[
g\cdot\big(M\otimes p(X,Y)\big)=\tau(g) M \tau(g^{-1})\otimes p\big(\sigma(g^{-1})(X,Y)\big).
\]  
Notice that this defines a Lie algebra automorphism of $\mf{g}(V)\otimes\bbbc[X,Y]$.

\begin{Definition}
\label{def:relativeinv}
Let \(V\) be a \(G\)-module. An element $v\in V$ is called \textbf{$\chi$-relative invariant}
if there exists a homomorphism \(\chi\,:\, G\to\bbbc^{\ast}\), the multiplicative group of \(\bbbc\), 
such that \( g\,v=\chi (g)\, v\).
%if there exists a representation \(\chi\) of the abelianisation \(G/[G,G]\) such that \( g\,v=\chi (g)\, v\), \(\,\forall g\in G\) (\(\chi\) is a one-dimensional representation since \(G/[G,G]\)  is abelian). 
If \(\chi \) is trivial then \(v\) is called \textbf{invariant}. The space of $\chi$-relative invariants in $V$ will be denoted by $V^{\chi}_G$ (or simply $V^{\chi}$ if there is no confusion with respect to the group), the space generated by all relative invariants by $V_G$ and the subspace of invariants by $V^G$. 
\end{Definition}
\begin{Remark}
\label{rm:det}
An example of a homomorphism \(\chi\,:\, G\to\bbbc^{\ast}\)
%a representation \(\chi\) of the abelianisation \(G/[G,G]\) 
is the determinant of a $G$-representation $\rho$, $\Delta_\rho (g)=\det \rho (g)$.
\end{Remark}
\begin{Definition}
The algebra $(\mf{g}(V)\otimes\bbbc[X,Y])^G$ defines a \textbf{Polynomial Automorphic Lie Algebra} based on $\mf{g}(V)$ cf.~\cite{MR2718933}.
\end{Definition}
Our first goal will be to compute Polynomial ALiAs, $(\mf{g}(V)\otimes\bbbc[X,Y])^G$, where $G$ is one of the $\bbbt$$\bbbo$$\bbbi$ groups.

In the following we fix a group $G$, its representation $\sigma$ and vary $\tau$ through all possible irreducible projective $G$-representations. %In particular we consider here projective $G$-representations.

%%%%%%%%%%%%%%%%%%%%%%%%%%%%%%%%%%%%%%%%%%%%%%%%%%%%%%%%%%%%%%%%%%%%

\subsection{Irreducible representations}
\label{sec:gap} 

We recall that our ultimate goal is to construct and classify all Automorphic Lie Algebras, $(\mf{g}(V)\otimes\mc{M}(\overline{\mathbbm{C}}))^G_\Gamma$, where $G$ is  a finite group, $\mc{M}(\overline{\mathbbm{C}})$ is the field of meromorphic functions on the Riemann sphere and where $\Gamma\subset\overline{\mathbbm{C}}$ is a $G$-orbit. Using the identification $\lambda=\nicefrac{X}{Y}\in\CCP^1$ the space $\mc{M}(\overline{\mathbbm{C}})$ is identified with the space of quotients of two homogeneous polynomials in $X$ and $Y$ of the same degree. M\"{o}bius transformations on $\lambda$ correspond to linear transformations on $X$ and $Y$ by the same matrix. Moreover, two matrices yield the same M\"{o}bius transformation if and only if they are scalar multiples of one another. Therefore, in order to cover all possibilities, we allow the action on $X$ and $Y$ to be projective. We recall that a faithful projective representation \(\sigma\) of \(G\) in \(\bbbc^2\) is a mapping from \(G\) to \(GL_2(\bbbc)\) obeying the following 
\begin{equation}
\label{eq:pr}
\sigma(g)\,\sigma(h)=c(g,h)\,\sigma(g\,h)\,,\quad \forall g\,,h\in G\,,
\end{equation}
where \(c(g,h)\,:\; G\times G\to\bbbc^{\ast}\) in (\ref{eq:pr}) is a %nontrivial
2-cocycle over \(\bbbc^{\ast}\)  (see for example \cite{MR1157729}), satisfying the cocycle identity 
\[
c(x,y)c(xy,z)=c(y,z)c(x,yz).
\]
If the cocycle is trivial  the projective representation \(\sigma\)  is a representation. 
Projective representations of $G$ correspond to representations of the Schur cover $G^\flat$ of $G$ in $\SL_2(\bbbc)$. We define the Schur cover $G^\flat$ of $G$ in $\SL_2(\bbbc)$ as the preimages of $G\subset \PSL_2(\bbbc)$, under the canonical projection $\pi:\SL_2(\bbbc)\rightarrow\PSL_2(\bbbc)$:
%%%%%%%%
\[
G^\flat=\pi^{-1}G\,.
\]
Alternatively, this group can be defined by the presentation
\[
G^\flat=\langle \gal, \gbe, \gga\;|\;\gal^{d_G}=\gbe^3=\gga^2=\gal\gbe\gga\rangle,
\]
cf.~\cite{suter2007quantum}, where $d_G=3,4$ and $5$ for $\bbbt$, $\bbbo$ and $\bbbi$, respectively.
We can readily see that $\gal\gbe\gga$ is a central element because it commutes with each generator, e.g.~$\gal (\gal\gbe\gga)=\gal \gal^{d_G}=\gal^{d_G} \gal=(\gal\gbe\gga)\gal$. If $G^\flat$ is nonabelian then this is the only nontrivial central element and represented by minus the identity matrix in $\SL_2(\bbbc)$. In particular it has order $2$ and the projection $\pi$ maps it to the identity.
Another presentation is given by \[r=\gal,\qquad s=\gga.\]
Then $\gbe=\gal^{-1}(\gal\gbe\gga)\gga^{-1}=\gal^{-1}(\gga^2)\gga^{-1}=\gal^{-1}\gga=r^{-1}s$
 and we obtain
\[
G^\flat=\langle r,s\;|\;r^{d_G}=(r^{-1}s)^3=s^2\rangle.
\]
In Appendix \ref{app:cover} we give an explicit construction of the Schur cover $G^\flat$ we work with, for completeness.

From a computational point of view it is more convenient to work with representations, rather than projective  representations. For example, in order to use GAP to compute generating elements, character tables (Sections \ref{sec:t}--\ref{sec:i}) and Molien functions (Section \ref{sec:Molien}), one needs to replace the projective representation by a representation.

Linear representations of $\bbbt^\flat$, $\bbbo^\flat$, $\bbbi^\flat$ can be easily computed by GAP (see Sections \ref{sec:t} to \ref{sec:i} for further details); in what follows we label irreducible representations (irreps) by $G^\flat_i$, where $G$ is one of the $\bbbt\bbbo\bbbi$ groups, and we drop $\flat$ when the representation is also a linear representation of $G$. we denote this set as $\textrm{Irr}(G^\flat)$.
The representations with a $\flat$-index are those with nontrivial cocycle
%i.e. nontrivial cocycle; see discussion following Table \ref{tab:charT} and Appendix \ref{app:cover})
(see Tables \ref{tab:charT}, \ref{tab:charO}, \ref{tab:charY}); these are the representations which are not linear representations of $G$.

\begin{Definition}[Natural representation]
\label{def:natural}
A monomorphism 
\[
\sigma:G^\flat\rightarrow \textrm{SL}_2(\bbbc)
\] is called a \emph{natural representation}.
\end{Definition}
The chosen natural representations of the $\bbbt\bbbo\bbbi$ groups are underlined
in the Tables \ref{tab:charT}, \ref{tab:charO} and \ref{tab:charY}.

%%%%%%%%%%%%%%%%%%%%%%%%%%%%%%%%%%%%%%%%%%%%%%%%%%%%%%%%%%%%%%%%%

\subsubsection{Dynkin diagrams of the irreducible representations}
\label{sec:dynkin}
%Before proceeding with a list of irreducible $G^\flat$-representations, let us recall here some results from \cite{MR909219}. We will closely follow the notations in \cite{MR909219} for the purpose of the diagrams, so for this section only we rename the natural representation $\sigma$ with $x$ and the linear representations $G^\flat_i$ with $u_{i}$.  
Before proceeding with a list of irreducible $G^\flat$-representations, let us recall here some results from \cite{MR909219}. Let \(\bbbt^\flat,\bbbo^\flat,\bbbi^\flat\) be the double covers of the $\bbbt\bbbo\bbbi$ groups; they are characterised by the solutions of the equation
\begin{equation}
\label{eq:abc}
\frac{1}{a}+\frac{1}{b}+\frac{1}{c}=1\,,\qquad\,a,\,b,\,c\in\bbbn\,.
\end{equation}
The solutions are well known, they are $(6, 3, 2)$ for $\bbbi^\flat$, $(4, 4, 2)$ for $\bbbo^\flat$ and $(3, 3, 3)$ for $\bbbt^\flat$, up to permutation.

We will closely follow the notations in \cite{MR909219}, so for the purpose of the diagrams we rename the natural representation $\sigma$ with $x$ and denote by $x_h$ the $h$-th symmetric power of $x$. Notice that $x_0$ is the trivial representation and $x_1=x$ the natural representation.
The Clebsch-Gordan formula from classical invariant theory is
\begin{equation}
\label{eq:CG}
x\otimes x_h =x_{h-1}\oplus x_{h+1}\,\quad h\geq 1\,.
\end{equation}
Let  $x_0$, $y$ and $z$ be the three different endpoints of the Dynkin diagram of affine type (this is also called extended Dynkin diagram, as it contains the trivial representation $x_0$ - see Figure \ref{dynkin}). 
The diagram is formed by taking the irreducible representations as nodes.
Every representation is connected to those irreducible representations that occur in
the decomposition of its tensor product with the natural representation into irreducible representations.
Let $a\geq 2$ be such that $x_0$, $x_1$,...,$x_{a-1}$ are irreducible as $G^\flat$-modules and $x_a$ is not, then $x_{a-1}$ is called branch point (of the Dynkin diagram). There are integers $b,\, c\geq 2$ such that the two other branches of the Dynkin diagram are given by $y, x_1 y,\cdots,x_{b-2} y$ and $z,x_1 z,\cdots,x_{c-2}z $, respectively and it follows that $x_{a}$ splits into two irreducibles according to the rule
\[
x\otimes x_{a-1}=x_{a-2}\oplus x_{a}=x_{a-2}\oplus x_{b-2}\otimes y\oplus x_{c-2}\otimes z\,
\]
(see \cite{MR909219} for details). The branch point is characterised by $x_{a-1}=x_{b-1}\otimes y=x_{c-1}\otimes z$ and $(a, b, c)$ satisfy equation (\ref{eq:abc}).

\begin{figure}[h]
\begin{center}
\begin{tikzpicture}
  \path node at ( 0,0) [shape=circle,draw,label=270: $x_0$] (first) {$$}
  	node at ( 2,0) [shape=circle,draw,label=270: $x_1$] (second) {$$}
  	node at ( 4,0) [shape=circle,draw,label=270: $x_i$] (third) {$$}
  	node at (6,0) [shape=circle,draw,label=270: $x_{a-2}$] (fourth) {$$}
  	node at ( 8,0) [shape=circle,draw,label=270: $x_{a-1}$] (fifth) {}
  	node at (10,0) [shape=circle,draw,label=270: $x_{b-2}\otimes y$] (sixth) {}
  	node at ( 12,0) [shape=circle,draw,label=270: $x_{i}\otimes y$] (seventh) {}
  	node at (14,0) [shape=circle,draw,label=270: $y$] (eight) {}
  	node at (8,2) [shape=circle,draw,label=0: $x_{c-2}\otimes z$] (nine) {}
  	node at (8,4) [shape=circle,draw,label=0: $z$] (ten) {};
	\draw (first.east) to node [sloped,below]{} (second.west);
	\draw[dotted] (second.east) to node [sloped,above]{} (third.west);
	\draw[dotted] (third.east) to node [sloped,above]{} (fourth.west);
	\draw (fourth.east) to node [sloped,above]{} (fifth.west);
	\draw (fifth.east) to node [sloped,above]{} (sixth.west);
	\draw[dotted] (sixth.east) to node [sloped,above]{} (seventh.west);
	\draw[dotted] (seventh.east) to node [sloped,above]{} (eight.west);
	\draw (fifth.north) to node [sloped,above]{} (nine.south);
	\draw[dotted] (nine.north) to node [sloped,above]{} (ten.south);
\end{tikzpicture}
\end{center}
\caption{Affine Dynkin diagrams of $G^\flat$, where $G$ is one of the $\bbbt\bbbo\bbbi$ groups.
The dimensions of the irreducibles are $1, 2,\ldots, a$; $\nicefrac{a}{b},2\nicefrac{a}{b},\ldots,(b-1)\nicefrac{a}{b}$; $\nicefrac{a}{c},2\nicefrac{a}{c},\ldots,(c-1)\nicefrac{a}{c}$.}
\label{dynkin}
\end{figure}
%In the picture the Dynkin diagrams of $G^\flat$, where $G$ is one of the $\bbbt\bbbo\bbbi$ groups.
%The dimensions of the irreducibles are $1, 2,\ldots, a; \nicefrac{a}{b},2\nicefrac{a}{b},\ldots,(b-1)\nicefrac{a}{b};\nicefrac{a}{c},2\nicefrac{a}{c},\ldots,(c-1)\nicefrac{a}{c}$.

%%%%%%%%%%%%%%%%%%%%%%%%%%%%%%%%%%%%%%%%%%%%%%%%%%%%%%%%%%%%%%%%%

\subsubsection{Tetrahedral group $\bbbt$}
\label{sec:t}
A regular tetrahedron is a Platonic solid composed of  four equilateral triangular faces, three of which meet at each vertex. It has four vertices and six edges. A regular tetrahedron has twelve rotational (or orientation-preserving) symmetries; the set of orientation-preserving symmetries forms a group referred to as \(\bbbt\), isomorphic to the alternating subgroup \(\mc{A}_4\). As an abstract group it is generated by two elements, \(\gs\) and \(\gr\), satisfying the identities $\gs^{3}=\gr^{2}=(\gs\,\gr)^3=id\,.$

In Table \ref{tab:charT} the character table of the Schur cover $\bbbt^\flat=\langle r,s\;|\;r^{3}=(r^{-1}s)^3=s^2\rangle$ in $SL_2 (\bbbc)$ (see Section \ref{sec:gap}) is given.
%The Schur cover $\bbbt^\flat=\langle r,s\;|\;r^{3}=(r^{-1}s)^3=s^2\rangle$ in $SL_2 (\bbbc)$ (see Section \ref{sec:gap}) has the following character table:
%\begin{center}
%\begin{table}[h!] 
%\begin{center}
%\begin{tabular}{|c|c|c|c|c|c|c|c|c|c|c|} \hline
%irrep & Dynkin &   $id$ & $ [r^2]$ &$ [s]$ &$ [s^2]$ &  $[r]$&$  [sr^2]$&$  [s^2r]$&$\Delta$&$\iota$\\
%\hline
%\hline
%$\bbbt_1$ & $x_0$ &  1 &  1 & 1 & 1 &  1&  1&  1&$\bbbt_1$&$1$\\
%$\bbbt_2$ &  $y$ & 1 & A & 1 & 1 & /A & A&/A&$\bbbt_2$&$0$\\
%$\bbbt_3$ &  $z$ &  1 &  /A & 1 & 1 &  A &/A & A&$\bbbt_3$&$0$\\
%$\underline{\bbbt_4^\flat}$ &  $x_1$ &     2  &-1 & . &-2 & -1&  1&  1&$\bbbt_1$&$-1$\\
%$\bbbt_5^\flat$ & $x_1\otimes z$ & 2& -/A & .& -2 & -A& /A & A&$\bbbt_2$&$0$\\
%$\bbbt_6^\flat$ &   $x_1\otimes y$ &  2 & -A & . &-2 &-/A & A &/A&$\bbbt_3$&$0$\\
%$\bbbt_7$ &   $x_2$ &   3  & .& -1&  3 &  .&  .&  .&$\bbbt_1$&$1$\\
%\hline 
%\end{tabular}
%\end{center}
%\caption{Character table for \(\bbbt^\flat\), \(A = \omega_3^2\), \(/A=\omega_3\), in GAP notation.}
%\label{tab:charT}
%\end{table}
%\end{center}
The first column 
%in Table \ref{tab:charT} 
contains the seven irreducible representations of \(\bbbt^\flat\); they can be obtained by e.g.~GAP \cite{GAP}; the irreducible representation $\bbbt_4^\flat$ is the \emph{natural representation} (see Definition \ref{def:natural}). 
The representations with a $\flat$-index are those with nontrivial cohomology (see Appendix \ref{app:cover}); the 
 $\flat$ is dropped when the representation is also a linear representation of $\bbbt$. The second column contains the same representations in the language of \cite{MR909219} to allow drawing the Dynkin diagram as in Section \ref{sec:dynkin}. The next columns list the conjugacy classes and the corresponding values of the characters, following the GAP notation, where a dot indicates the zero and where \(A = \omega_3^2\), \(/A=\omega_3\). Here, and in what follows, $\omega_n=\exp{\nicefrac{2\pi i}{n}}$, so $\omega_3$ is a primitive cubic root of unity. The penultimate column contains determinants of the representation (see Remark \ref{rm:det}).
% ; 
%one computes the values of \(\Delta\) for a given \(\bbbt_n\) from the table 
%by looking up the entry \(\bbbt_m\) in the \(\bbbt_n\)-row and the \(\Delta\)-column.
%Then one looks in the \(\bbbt_m\)-row of Table \ref{tab:charT} and \([g]\)-column to find the value of the corresponding \(\Delta\). 
Determinants have been included since they suggest the pairing of relative invariants in order to get invariants from transvection (Section \ref{sec:trans}) and (for future reference) play a role in the determination of the building blocks of \(\mf{sl}(V)\).
%The reason why determinants are interesting will be clear in Section \ref{sec:trans}, following the definition of transvectants.
Finally, the last column contains the value of the Frobenius-Schur indicator \(\iota\), computed by \(\iota_\chi=\frac{1}{|G|}\sum_{g\in G}\chi(g^2)\). Complex irreducible representations with Frobenius-Schur indicator $1$, $0$ or $-1$ are respectively known as representations of real type, complex type or quaternionic type \cite{ful91a}. This last column is included here purely for future reference, as it gives information about the existence of irreducible \(\mf{so}\) and \(\mf{sp}\) representations.
\begin{center}
\begin{table}[h!] 
\begin{center}
\begin{tabular}{|c|c|c|c|c|c|c|c|c|c|c|} \hline
irrep & Dynkin &   $id$ & $ [r^2]$ &$ [s]$ &$ [s^2]$ &  $[r]$&$  [sr^2]$&$  [s^2r]$&$\Delta$&$\iota$\\
\hline
\hline
$\bbbt_1$ & $x_0$ &  1 &  1 & 1 & 1 &  1&  1&  1&$\bbbt_1$&$1$\\
$\bbbt_2$ &  $y$ & 1 & A & 1 & 1 & /A & A&/A&$\bbbt_2$&$0$\\
$\bbbt_3$ &  $z$ &  1 &  /A & 1 & 1 &  A &/A & A&$\bbbt_3$&$0$\\
$\underline{\bbbt_4^\flat}$ &  $x_1$ &     2  &-1 & . &-2 & -1&  1&  1&$\bbbt_1$&$-1$\\
$\bbbt_5^\flat$ & $x_1\otimes z$ & 2& -/A & .& -2 & -A& /A & A&$\bbbt_2$&$0$\\
$\bbbt_6^\flat$ &   $x_1\otimes y$ &  2 & -A & . &-2 &-/A & A &/A&$\bbbt_3$&$0$\\
$\bbbt_7$ &   $x_2$ &   3  & .& -1&  3 &  .&  .&  .&$\bbbt_1$&$1$\\
\hline 
\end{tabular}
\end{center}
\caption{Character table for \(\bbbt^\flat\), \(A = \omega_3^2\), \(/A=\omega_3\), in GAP notation.}
\label{tab:charT}
\end{table}
\end{center}

A concrete projective representation of $\bbbt_4^\flat$ is given by
\begin{equation}
\label{eq:Tnrep}
\sigma(r)= \begin{pmatrix} -1 -\omega_3&  0 \\
1 &  \omega_3 \end{pmatrix}\,,
\quad 
\sigma(s)= \begin{pmatrix} -\omega_3& 1+\omega_3\\
1+\omega_3 &  \omega_3 \end{pmatrix}\,.
\end{equation}
One could in principle make one of the generator diagonal, but we rather work with GAP given representations.
Table \ref{tab:charT} suggests the following field extension: \(\splitk=\bbbq[\omega_3]/(1+\omega_3+\omega_3^2)\); the nonzero elements are denoted by \(\splitk^\ast\). 

%%%%%%%%%%%%%%%%%%%%%%%%%%%%%%%%%%%%%%%%%%%%%%%%%%%%%%%%%%%%%%%%%

\subsubsection{Octahedral group \(\bbbo\)}\label{sec:o}
A regular octahedron is a Platonic solid composed of eight equilateral triangles, four of which meet at each vertex; it has six vertices and eight edges.
A regular octahedron has twenty four rotational (or orientation-preserving) symmetries. A cube has the same set of symmetries, since it is its dual. The group of orientation-preserving symmetries is denoted by \(\bbbo\) and it is isomorphic to \(\mc{S}_4\), or the group of permutations of four objects, since there is exactly one such symmetry for each permutation of the four pairs of opposite sides of the octahedron. As an abstract group it is generated by two elements, \(\gs\) and \(\gr\), satisfying the identities $\gs^{4}=\gr^{2}=(\gs\,\gr)^3=id\,.$
%
%The Schur cover $\bbbo^\flat=\langle r,s\;|\;r^{4}=(r^{-1}s)^3=s^2\rangle$ in $SL_2 (\bbbc)$ (see Section \ref{sec:gap}) 
%has the following character table:
\begin{center}
\begin{table}[h!] 
\begin{center}
\begin{tabular}{|c|c|c|c|c|c|c|c|c|c|c|c|} \hline
irrep &   Dynkin &  $id$ &  $[s]$ & $[r^2s^2]$ & $[r^2]$ &  $[s^2]$&  $[r]$&  $[rs]$&$[r^3]$&$\Delta$&$\iota$\\
\hline
\hline
$\bbbo_1$& $x_0$ &1  &  1 & 1 &  1&  1 & 1&  1&   1&$\bbbo_1$&$1$\\
$\bbbo_2$& $y$ & 1  &-1  & 1 & 1 & 1  &-1& 1 &-1&$\bbbo_2$&$1$\\
$\bbbo_3 $ &  $z$ & 2&  .& -1&  2&  2&  .& -1&  .&$\bbbo_2$&$1$\\
$\underline{\bbbo}_4^\flat $&  $x_1$ &  2 & .& -1 & . &-2 & A & 1& -A&$\bbbo_1$&$-1$\\
$\bbbo_5^\flat $  & $x_1\otimes y$ & 2 & .& -1 & .& -2& -A&  1&  A&$\bbbo_1$&$-1$\\
$\bbbo_6 $& $x_2\otimes y$ &    3&  1 & .& -1&  3& -1 & .& -1&$\bbbo_2$&$1$\\
$\bbbo_7 $&   $x_2$ &  3 &-1 & . &-1 & 3 & 1 & .&  1&$\bbbo_1$&$1$\\
$\bbbo_8^\flat $&   $x_3$ &  4 & .&  1&  .& -4&  .& -1 & .&$\bbbo_1$&$-1$\\
\hline 
\end{tabular}
\end{center}
\caption{Character table for \(\bbbo^\flat\), \(A = -\omega_8+\omega_8^3=-\sqrt{2}\), in GAP notation.}
\label{tab:charO}
\end{table}
\end{center}
The character table of the Schur cover $\bbbo^\flat=\langle r,s\;|\;r^{4}=(r^{-1}s)^3=s^2\rangle$ in $SL_2 (\bbbc)$ (see Section \ref{sec:gap}) is given in Table \ref{tab:charO}. The irreducible representation $\bbbo_4^\flat$ is the natural representation.

The concrete projective representation we work with is given by
\begin{equation}
\label{eq:Onrep}
\sigma(r)=\begin{pmatrix}  \omega_8^2 & -1\\
 \omega_8+\omega_8^3&  -\omega_8-\omega_8^2+\omega_8^3\end{pmatrix}\,,\quad
\sigma(s)=\begin{pmatrix} 1& - \omega_8 +\omega_8^2 + \omega_8^3\\
  \omega_8 + \omega_8^2 - \omega_8^3& - 2\end{pmatrix}\,.\\
\end{equation}
As in the previous case, the chosen field is determined by the occurrence of roots of unity in the representation matrices.
In the $\bbbo^\flat$ case both $\omega_3$ and $\omega_8$ occur (e.g.~$\omega_3$ occurs in $\bbbo_8$), leading to a mix of values of roots of unity and hence to $\omega_{24}$. The minimal polynomial is then the one for $\omega_6$ 
but expressed for $\omega_{24}$. Hence the field extension in this case is \(\splitk=\bbbq[\omega_{24}]/(\omega^8_{24}-\omega^4_{24}+1)\).

%%%%%%%%%%%%%%%%%%%%%%%%%%%%%%%%%%%%%%%%%%%%%%%%%%%%%%%%%%%%%%%%%
\subsubsection{Icosahedral group  \(\bbbi\)}
\label{sec:i}

An icosahedron is a convex regular polyhedron (a Platonic solid) with twenty triangular faces, thirty edges and twelve vertices. 
A regular icosahedron has sixty rotational (or orientation-preserving) symmetries; the set of orientation-preserving symmetries forms a group referred to as $\bbbi$; $\bbbi$ is isomorphic to $\mc{A}_5$, the alternating group of even permutations of five objects. As an abstract group it is generated by two elements, \(\gs\) and \(\gr\), satisfying the identities $\gs^{5}=\gr^{2}=(\gs\,\gr)^3=id\,.$

The Schur cover $\bbbi^\flat=\langle r,s\;|\;r^{5}=(r^{-1}s)^3=s^2\rangle$ in $SL_2 (\bbbc)$ (see Section \ref{sec:gap}) 
has the following character table:
\begin{center}
\begin{table}[h!] 
\begin{center}
\begin{tabular}{|c|c|c|c|c|c|c|c|c|c|c|c|c|} \hline
irrep &   Dynkin & $id$ &  $[r]$ & $[r^2]$ & $[rs^3]$ &  $[s]$ &  $[rs]$ &  $[rs^2]$ & $[s^2]$ & $[r^2s^2]$ &$\Delta$&$\iota$\\
\hline
\hline
$\bbbi_1$ & $x_0$ & 1 & 1 &  1&  1 & 1&  1&   1&  1&   1&$\bbbi_1$&$1$\\
$\underline{\bbbi}_2^\flat$ &$x_1$ &  2 & A & *A&  1 & .& -1&  -A& -2& -*A&$\bbbi_1$&$-1$\\
$\bbbi_3^\flat$ & $y$ & 2 &*A &  A&  1 & .& -1& -*A& -2&  -A&$\bbbi_1$&$-1$\\
$\bbbi_4$  & $z$ & 3 &-*A&  -A&  .& -1&  .& -*A&  3&  -A&$\bbbi_1$&$1$\\
$\bbbi_5$ & $x_2$ & 3 & -A& -*A&  .& -1&  .&  -A&  3& -*A&$\bbbi_1$&$1$\\
$\bbbi_6$  & $x_1 \otimes y$ & 4 & -1&  -1&  1&  .&  1&  -1&  4&  -1&$\bbbi_1$&$1$\\
$\bbbi_7^\flat$ & $x_3$ & 4 & -1&  -1& -1&  .&  1&   1& -4&   1&$\bbbi_1$&$-1$\\
$\bbbi_8$   & $x_4$ & 5 &  .&   .& -1&  1& -1&   .&  5&   .&$\bbbi_1$&$1$\\
$\bbbi_9^\flat$ & $x_5$ &  6 &  1&   1&  .&  .&  .&  -1& -6&  -1&$\bbbi_1$&$-1$\\
\hline 
\end{tabular}
\end{center}
\caption{Character table for \(\bbbi^\flat\), \(A = \omega_5+\omega_5^4\), \(*A=1-A=A^2=-1/A\), in GAP notation.}
\label{tab:charY}
\end{table}
\end{center}
The concrete projective representation we work with is given by
\begin{equation}
\label{eq:Ynrep}
\sigma(r)= \begin{pmatrix} \omega_5  &  -\omega_5^2-\omega_5^3  \\
 0   & \omega_5^4  \end{pmatrix},
\quad \sigma(s)= \begin{pmatrix} -\omega_5^2 & \omega_5^4 \\
 -1-\omega_5 &   \omega_5^2  \end{pmatrix}
\end{equation}
and  \(\splitk=\bbbq[\omega_5]/(1+\omega_5+\omega_5^2+\omega_5^3+\omega_5^4)\).

%%%%%%%%%%%%%%%%%%%%%%%%%%%%%%%%%%%%%%%%%%%%%%%%%%%%%%%%%%%%%%%%%%%%%%%%%%%%%%
\subsubsection{Decomposition of $\mf{sl}(V)$ into irreducible representations}
\label{sec:decomposition}
We compute the decomposition of $\mf{sl}(V_j)\cong V_j\otimes V_j^\ast-V_1$ into irreducible representations using GAP, where $V_1$ is the trivial representation and list them in Tables \ref{tab:decompTsl} -- \ref {tab:decompYsl}.
This is the first moment we specialise to \(\mf{sl}(V)\); we remark that similar decompositions exist for \(\mf{so}(V)\) and \(\mf{sp}(V)\) and this paper contains all the necessary information to analyse these cases as well. 
%Some data in the tables have been put there with these cases in mind so that they can be used in the future.
The irreducible representations $V_j$ are labelled using the group name, so $\bbbt_1$ corresponds to the first irreducible representation in the list of $\bbbt^\flat$ (see Tables \ref{tab:charT} -- \ref {tab:charY}).

\begin{center}
\begin{table}[H] 
\begin{center}
\begin{tabular}{|c|c|c|} \hline
$\mf{sl}(\bbbt_j)$&dim & decomposition\\
\hline
\hline
$\mf{sl}(\bbbt_4^\flat)$& 3&  $\bbbt_7   $\\
$\mf{sl}(\bbbt_5^\flat)$& 3&  $\bbbt_7    $\\
$\mf{sl}(\bbbt_6^\flat)$& 3&  $\bbbt_7    $\\
$\mf{sl}(\bbbt_7)$& 8&  $\bbbt_2\oplus\bbbt_3\oplus 2\bbbt_7   $\\
\hline 
\end{tabular}
\end{center}
\caption{Decomposition of $\mf{sl}(\bbbt_j^\flat)$.}
\label{tab:decompTsl}
\end{table}
\end{center}
\begin{center}
\begin{table}[H] 
\begin{center}
\begin{tabular}{|c|c|c|} \hline
$\mf{sl}(\bbbo_j)$,&dim & decomposition\\
\hline
\hline
$\mf{sl}(\bbbo_3)$& 3&  $   \bbbo_{2}\oplus\bbbo_3  $\\
$\mf{sl}(\bbbo_4^\flat)$& 3&  $\bbbo_7   $\\
$\mf{sl}(\bbbo_5^\flat)$& 3&  $\bbbo_7    $\\
$\mf{sl}(\bbbo_6)$& 8&  $\bbbo_3\oplus\bbbo_6\oplus\bbbo_7    $\\
$\mf{sl}(\bbbo_7)$& 8&  $\bbbo_3\oplus\bbbo_6\oplus\bbbo_7   $\\
$\mf{sl}(\bbbo_8^\flat)$& 15&  $\bbbo_2\oplus\bbbo_3\oplus2\bbbo_6\oplus2\bbbo_7 $\\
\hline 
\end{tabular}
\end{center}
\caption{Decomposition of $\mf{sl}(\bbbo_j^\flat)$.}
\label{tab:decompOsl}
\end{table}
\end{center}
\begin{center}
\begin{table}[H] 
\begin{center}
\begin{tabular}{|c|c|c|} \hline
$\mf{sl}(\bbbi_j)$&dim& decomposition\\
\hline
\hline
$\mf{sl}(\bbbi_2^\flat)$& 3&  $  \bbbi_{5} $\\
$\mf{sl}(\bbbi_3^\flat)$& 3&  $   \bbbi_{4}  $\\
$\mf{sl}(\bbbi_4)$& 8&  $\bbbi_4\oplus\bbbi_8     $\\
$\mf{sl}(\bbbi_5)$& 8&  $\bbbi_5\oplus\bbbi_8    $\\
$\mf{sl}(\bbbi_6)$& 15&  $\bbbi_4\oplus\bbbi_5\oplus\bbbi_6\oplus\bbbi_8  $\\
$\mf{sl}(\bbbi_7^\flat)$& 15&  $\bbbi_4\oplus\bbbi_5\oplus\bbbi_6\oplus\bbbi_8   $\\
$\mf{sl}(\bbbi_8)$& 24&  $\bbbi_4\oplus\bbbi_5\oplus2\bbbi_6\oplus2\bbbi_8
  $\\
$\mf{sl}(\bbbi_9^\flat)$& 35&  $2\bbbi_4\oplus2\bbbi_5\oplus2\bbbi_6\oplus3\bbbi_8
  $\\
\hline 
\end{tabular}
\end{center}
\caption{Decomposition of $\mf{sl}(\bbbi_j^\flat)$.}
\label{tab:decompYsl}
\end{table}
\end{center}

%%%%%%%%%%%%%%%%%%%%%%%%%%%%%%%%%%%%%%%%%%%%%%%%%%%%%%%%%%%%%%%%%%
\subsection{Molien functions}
\label{sec:Molien}
In the search for invariants in $\mf{sl}(V)\otimes\bbbc[X,Y]$ we use the decomposition of $\mf{sl}(V)$ in the irreducible representations listed in Tables \ref{tab:decompTsl} -- \ref {tab:decompYsl}:
\[
\mf{sl}(V)=\bigoplus_k \langle \mf{sl}(V),V_k \rangle  V_k\,.
\]
This reduces the problem to describing $(V_k \otimes\bbbc[X,Y])^{G^\flat}$. The generating functions of invariants in $V_k \otimes\bbbc[X,Y]$ can be computed using the following theorem (See \cite[Section 4.3]{MR1328644}).

\begin{Theorem}[Molien, \cite{mol98a}]
Let \(\sigma\,: \: G^{\flat} \hookrightarrow GL_2(\bbbc)\) be a representation defining an action of $G^\flat$ on $\bbbc[X,Y]$ by $g\cdot p(X,Y)=p(\sigma(g^{-1})(X,Y))$, $g\in G^{\flat}$, $p(X,Y)\in\bbbc[X,Y]$, and let $\chi_k$ be the character of $V_k$.
Then the Poincar\'{e} series of invariants in $V_k \otimes \bbbc[X,Y] $ is given by
\begin{equation}
\label{eq:molien}
M(({V_k}\otimes\bbbc[X,Y])^{G^{\flat}},t)=\frac{1}{|G^{\flat}|}\sum_{g\in G^{\flat}} \frac{\chi_k (g)}{\det(1-\sigma(g^{-1})\,t)}\,.
\end{equation}
We will call this the Molien function of the irreducible representation $V_k$.
\end{Theorem}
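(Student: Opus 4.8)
The plan is to prove Molien's formula by computing the Poincar\'{e} series of the invariant subspace $(V_k\otimes\bbbc[X,Y])^{G^\flat}$ directly, using the fact that the Poincar\'{e} series records graded dimensions and that the dimension of an invariant subspace is the trace of the averaging (Reynolds) projection. First I would fix the grading: $\bbbc[X,Y]=\bigoplus_{d\geq 0}\bbbc[X,Y]_d$, where $\bbbc[X,Y]_d$ is the space of homogeneous polynomials of degree $d$, which is a $G^\flat$-module of dimension $d+1$ via $g\cdot p(X,Y)=p(\sigma(g^{-1})(X,Y))$. Then $V_k\otimes\bbbc[X,Y]=\bigoplus_d V_k\otimes\bbbc[X,Y]_d$ is a graded $G^\flat$-module, and the Poincar\'{e} series of invariants is by definition
\[
M((V_k\otimes\bbbc[X,Y])^{G^\flat},t)=\sum_{d\geq 0}\dim\bigl((V_k\otimes\bbbc[X,Y]_d)^{G^\flat}\bigr)\,t^d\,.
\]

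The key step is the standard averaging argument: for any finite-group module $W$ over $\bbbc$, the operator $\frac{1}{|G^\flat|}\sum_{g\in G^\flat} g$ is a projection onto $W^{G^\flat}$, so $\dim W^{G^\flat}=\frac{1}{|G^\flat|}\sum_{g\in G^\flat}\tr(g\mid W)$. Applying this with $W=V_k\otimes\bbbc[X,Y]_d$ and using multiplicativity of the trace under tensor products, $\tr(g\mid V_k\otimes\bbbc[X,Y]_d)=\chi_k(g)\,\tr(g\mid\bbbc[X,Y]_d)$, I would substitute into the series and interchange the two finite/convergent summations to obtain
\[
M=\frac{1}{|G^\flat|}\sum_{g\in G^\flat}\chi_k(g)\sum_{d\geq 0}\tr(g\mid\bbbc[X,Y]_d)\,t^d\,.
\]
It remains to identify the inner generating function $\sum_{d\geq 0}\tr(g\mid\bbbc[X,Y]_d)\,t^d$ with $\det(1-\sigma(g^{-1})t)^{-1}$.

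The main obstacle, and the only genuinely computational point, is this last identification of the graded trace on the symmetric algebra. I would handle it by diagonalising the action of a fixed $g$ on the two-dimensional space spanned by $X,Y$: since $\sigma(g^{-1})$ acts on the degree-one part, let its eigenvalues be $\mu_1,\mu_2$. The monomials of degree $d$ then give eigenvalues $\mu_1^{a}\mu_2^{b}$ with $a+b=d$, so $\tr(g\mid\bbbc[X,Y]_d)=\sum_{a+b=d}\mu_1^a\mu_2^b$, and summing the geometric-type series yields
\[
\sum_{d\geq 0}\tr(g\mid\bbbc[X,Y]_d)\,t^d=\frac{1}{(1-\mu_1 t)(1-\mu_2 t)}=\frac{1}{\det(1-\sigma(g^{-1})\,t)}\,,
\]
where the last equality is just the factorisation of the characteristic polynomial of $\sigma(g^{-1})$. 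The one subtlety worth flagging is the appearance of $\sigma(g^{-1})$ rather than $\sigma(g)$: this comes from the convention $g\cdot p(X,Y)=p(\sigma(g^{-1})(X,Y))$ defining the action on functions, so that the eigenvalues governing the degree-$d$ trace are those of $\sigma(g^{-1})$. Substituting the closed form back completes the derivation of equation~(\ref{eq:molien}).
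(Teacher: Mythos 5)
The paper does not prove this statement: it is quoted as a classical result of Molien with a citation to \cite{mol98a} and to \cite[Section 4.3]{MR1328644}, so there is no internal proof to compare against. Your argument is the standard (and correct) derivation: the Reynolds projector gives $\dim W^{G^\flat}=\frac{1}{|G^\flat|}\sum_g \tr(g\mid W)$, multiplicativity of traces over the tensor product isolates $\chi_k(g)$, and the graded trace of $g$ on the symmetric algebra of a two\mbox{-}dimensional space sums to $\det(1-\sigma(g^{-1})t)^{-1}$ by diagonalisation and geometric series. Two small points are worth noting. First, the matrix by which $g$ acts on the span of $X,Y$ under the substitution convention is really the transpose of $\sigma(g^{-1})$, but this is harmless since trace, eigenvalues and the characteristic polynomial are transpose\mbox{-}invariant, so your identification stands. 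Second, your use of the unconjugated character $\chi_k(g)$ is exactly right for the dimension of $(V_k\otimes\bbbc[X,Y]_d)^{G^\flat}$ (as opposed to the multiplicity of $V_k$ in $\bbbc[X,Y]_d$, which would require $\overline{\chi_k(g)}$), and this matches the formula as stated in the paper; the conjugated variant appearing later in the paper's Stanley\mbox{-}type lemma is equivalent after reindexing the sum over $g\mapsto g^{-1}$.
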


Recall the irreducible representations $x_i$, $i=0,\ldots,a-1$, $x_i\otimes y$, $i=0,\ldots,b-2$ and $x_i\otimes z$, $i=0,\ldots,c-2$ from Section \ref{sec:dynkin}. The following holds (see  \cite{MR909219})
\begin{equation}
\label{eq:molien2}
M(\square,t)=\frac{N(\square,t)}{(1-t^{2a})(1-t^{4a-4})}
\end{equation}
with $N(\square,t)$ defined by
\begin{eqnarray}
\label{eq:irrrep_molien}
N(({x_i}\otimes\bbbc[X,Y])^{G^{\flat}},t)&=&t^i +t^{6a -6 -i}+(t^{2a-i}+t^{4a-4-i})\frac{1-t^{2i}}{1-t^2}\,,\quad i=0,\ldots,a-1\,,\nonumber\\
N(({x_i\otimes y}\otimes\bbbc[X,Y])^{G^{\flat}},t)&=&t^{a+b-i-2} (1+t^{2a -2})\frac{1-t^{2a}}{1-t^{2b}}\frac{1-t^{2i+2}}{1-t^2}\,,\quad i=0,\ldots,b-2\,,\\
N(({x_i\otimes z}\otimes\bbbc[X,Y])^{G^{\flat}},t)&=&t^{a+c-i-2} (1+t^{2a -2})\frac{1-t^{2a}}{1-t^{2c}}\frac{1-t^{2i+2}}{1-t^2}\,,\quad i=0,\ldots,c-2\,.\nonumber
\end{eqnarray}

\begin{Example}
\label{ex:primary_inv}
%Consider the Poincar\'{e} series of invariants in $\bbbt_1\otimes \bbbc[X,Y]$, where $\bbbt_1$ coincides with $x_0$ in the notation above. The Dynkin diagram of $\bbbt^\flat$ is 
Consider the Poincar\'{e} series of invariants in $\bbbt_1\otimes \bbbc[X,Y]$, with $x_0$ in the notation above. The affine Dynkin diagram of $\bbbt^\flat$, where $\bbbt_1$ coincides with $x_0$, is 

\begin{center}
\begin{tikzpicture}
  \path node at ( 0,0) [shape=circle,draw,label=270: $x_0$] (first) {$$}
  	node at ( 2,0) [shape=circle,draw,label=270: $x_1$] (second) {$$}
  	node at ( 4,0) [shape=circle,draw,label=270: $x_{2}$] (third) {$$}
  	node at ( 6,0) [shape=circle,draw,label=270: $x_{1}\otimes y$] (fourth) {}
  	node at (8,0) [shape=circle,draw,label=270: $y$] (fifth) {}
  	node at (4,2) [shape=circle,draw,label=0: $x_{1}\otimes z$] (sixth) {}
  	node at (4,4) [shape=circle,draw,label=0: $z$] (seventh) {};
	\draw (first.east) to node [sloped,below]{} (second.west);
	\draw (second.east) to node [sloped,above]{} (third.west);
	\draw (third.east) to node [sloped,above]{} (fourth.west);
	\draw (fourth.east) to node [sloped,above]{} (fifth.west);
	\draw (third.north) to node [sloped,above]{} (sixth.south);
	\draw (sixth.north) to node [sloped,above]{} (seventh.south);
\label{dynkin_ex}
\end{tikzpicture}
\end{center}

and it is characterised by $(a=3,b=3,c=3)$ (see Section \ref{sec:dynkin}). It follows from (\ref{eq:irrrep_molien}) that
\[
N((x_0\otimes \bbbc[X,Y])^{\bbbt^\flat},t)=1+t^{12},
\]
thus
\[
M((\bbbt_1\otimes \bbbc[X,Y])^{\bbbt^\flat},t)=\frac{1+t^{12}}{(1-t^6)(1-t^8)}.
\]
\end{Example}

Using the scheme illustrated above (and the natural representation $\sigma=x_1$) we rewrite the Molien function for the irreducible representations in (\ref{eq:irrrep_molien}) in a form which is relevant for the computations of the generators of the invariants in  $V_k\otimes \bbbc[X,Y]$ (see Tables \ref{tab:irrMolienT}--\ref{tab:irrMolienY}). The choice of the powers in the denominators is determined by the existence of invariants at those degrees. These invariants are called the \emph{primary invariants},
while the ones corresponding to the terms in the numerator are called the  \emph{secondary invariants}.

Consider $\bbbt^\flat$ primary invariants at degree six and eight, so that \(M((\bbbt_k\otimes \bbbc[X,Y])^{\bbbt^\flat},t)=\frac{N}{(1-t^6)(1-t^8)}\). The numerators $N$ are then given in Table \ref{tab:irrMolienT}.

\begin{center}
\begin{table}[H] 
\begin{center}
\begin{tabular}{|c|c|c|l|} \hline
irrep&Dynkin &dim& Molien function numerator $N$\\
\hline
\hline
$\bbbt_1$&$x_0$ &1&$1+t^{12}$ \\
$\bbbt_2$&$y$ &1&$t^{4}+t^{8}$\\
$\bbbt_3$&$z$ &1&$t^{4}+t^{8}$\\
$\underline{\bbbt}_{4}^\flat$&$x_1$ & 2&$t+t^5+t^{7}+t^{11}$\\
$\bbbt_{5}^\flat$&$x_1\otimes z$ & 2&$t^{3}+t^5+t^{7}+t^{9}$\\
$\bbbt_{6}^\flat$&$x_1\otimes y$ & 2&$t^{3}+t^5+t^{7}+t^{9}$\\\
$\bbbt_{7}$&$x_2$ &3&$t^2+t^4+2t^{6}+t^8+t^{10}$\\
\hline
\end{tabular}
\end{center}
\caption{Molien functions of the irreducible representations $M((\bbbt_k\otimes \bbbc[X,Y])^{\bbbt^\flat},t)$.}
\label{tab:irrMolienT}
\end{table}
\end{center}
Similarly, considering $\bbbo^\flat$ and $\bbbi^\flat$ primary invariants at degree eight and twelve, and twelve and twenty, respectively, one  obtains Molien functions  \(M((\bbbo_k\otimes \bbbc[X,Y])^{\bbbo^\flat},t)\) and \(M((\bbbi_k\otimes \bbbc[X,Y])^{\bbbi^\flat},t)\) - see Tables \ref{tab:irrMolienO} and \ref{tab:irrMolienY} for the respective numerators.
\begin{center}
\begin{table}[H] 
\begin{center}
\begin{tabular}{|c|c|c|l|} \hline
irrep&Dynkin &dim& Molien function numerator $N$\\
\hline
\hline
$\bbbo_1$&$x_0$ & 1&  $   1+t^{18} $\\
$\bbbo_2$&$y$ & 1&  $   t^6+t^{12} $\\
$\bbbo_3$&$z$ &2&  $  t^4+t^8 +t^{10}+t^{14}$\\
$\underline{\bbbo}_4^\flat$&$x_1$ & 2&  $ t+t^{7}+t^{11}+t^{17} $\\
$\bbbo_5^\flat$&$x_1\otimes y$ &  2&  $  t^5+t^{7}+t^{11}+t^{13}$\\
$\bbbo_6$& $x_2\otimes y$ &3&  $ t^4+t^6+t^8+t^{10}+t^{12}+t^{14} $\\
$\bbbo_7$&$x_2$ & 3&  $ t^{2}+t^6+t^8+t^{10}+t^{12}+t^{16} $\\
$\bbbo_8^\flat$& $x_3$ &4&  $ t^3+t^{5}+t^7+2t^9+t^{11}+t^{13}+t^{15}$\\
\hline
\end{tabular}
\end{center}
\caption{Molien functions of the irreducible representations:  $M((\bbbo_k\otimes \bbbc[X,Y])^{\bbbo^\flat},t)$. }
\label{tab:irrMolienO}
\end{table}
\end{center}
\begin{center}
\begin{table}[H] 
\begin{center}
\begin{tabular}{|c|c|c|l|} \hline
irrep & Dynkin & dim   & Molien function numerator $N$\\
\hline
\hline
$\bbbi_1$&$x_0$ & 1&  $   1+t^{30} $\\
$\underline{\bbbi}_2^\flat$& $x_1$ &2&  $   t+t^{11}+t^{19}+t^{29} $\\
$\bbbi_3^\flat$& $y$ & 2&  $   t^7+t^{13}+t^{17}+t^{23} $\\
$\bbbi_4$& $z$ & 3&  $   t^6+t^{10}+t^{14}+t^{16}+t^{20}+t^{24} $\\
$\bbbi_5$& $x_2$ & 3&  $   t^2+t^{10}+t^{12}+t^{18}+t^{20}+t^{28}$\\
$\bbbi_6$& $x_1 \otimes y$ & 4&  $   t^6+t^8+t^{12}+t^{14}+t^{16}+t^{18}+t^{22}+t^{24}$\\
$\bbbi_7^\flat$& $x_3$ & 4&  $   t^3+t^9+t^{11}+t^{13}+t^{17}+t^{19}+t^{21}+t^{27}$\\
$\bbbi_8$& $x_4$ & 5&  $   t^4+t^8+t^{10}+t^{12}+t^{14}+t^{16}+t^{18}+t^{20}+t^{22}+t^{26}$\\
$\bbbi_9^\flat$& $x_5$ & 6&  $   t^5+t^7+t^9+t^{11}+t^{13}+2t^{15}+t^{17}+t^{19}+t^{21}+t^{23}+t^{25} $\\
\hline 
\end{tabular}
\end{center}
\caption{Molien functions of the irreducible representations:  $M((\bbbi_k\otimes \bbbc[X,Y])^{\bbbi^\flat},t)$.}
\label{tab:irrMolienY}
\end{table}
\end{center}
If one would like to compute the Molien function of a reducible representation, this is done by adding the Molien functions of the irreducible components with the corresponding multiplicities.

\section{Invariant matrices}
\label{sec:comp_inv_mat}
A brute-force computational approach towards invariant matrices consists in taking a general element in $\mf{g}(V)\otimes \bbbc[X,Y]$ of the degree dictated by the Molien function of $\mf{g}(V)$, and average over the group $G^\flat$. 
The Molien function of $\mf{g}(V)$ can be computed from the Molien functions of Tables \ref{tab:irrMolienT}-\ref{tab:irrMolienY} and the decompositions in Tables \ref{tab:decompTsl}-\ref{tab:decompYsl}, using the additive property of the Molien function. \\
This approach is however not very effective  computationally, as, for example, it would imply averaging an element in $\mf{sl}(\bbbi^\flat_9)\otimes \bbbc_{28}[X,Y]$ (that is, of \(X,Y\)-degree twentyeight).

Instead one could use the method of classical invariant theory to compute higher order invariants by transvection, starting from lower degree $\mf{g}(V)$-\emph{ground forms}, where $V$ is an irreducible $G^\flat$-representation. Hence, this reduces the problem to finding lower degree $\mf{g}(V)$-ground forms.
Moreover, transvection only involves multiplication and differentiation with respect to $X$ and $Y$, thus it is computationally very effective and easy to implement.

In order to systematically find the lower degrees $\mf{g}(V)$-ground forms one can use of the decomposition of $\mf{g}(V)$ into irreducible representations. The degree of the ground form is the lowest degree in the Taylor expansion at $t=0$ of the Molien function (see Section \ref{sec:Molien}) of the irreducible component in the decomposition (see Section \ref{sec:decomposition}); e.g.~the degree for the $\bbbi_8$-ground form is four, see Tables \ref{tab:decompYsl}
and \ref{tab:irrMolienY}; such ground form will be notated by $\mf{A}_8^4$, where the upper index indicates the degree while the lower one the corresponding $V$.%, see \ref{sec:Molien}. 
 The explicit projection on the irreducible components will be given in the next section.

\subsection{Fourier transform}
\label{sec:fourier}
Let \(W\) be a finite dimensional representation of a finite group \(G^\flat\) and let  $\{w_i\,|\,i=1,\ldots,\dim W\}$ be a basis of $W$. Then \( W \) can be decomposed as a direct sum of irreducible representations of \(G^\flat\)
as follows. 

Let \(V\) be such an irreducible  \(G^\flat\)--representation and let $\{v^i\,|\,i=1,\ldots,\dim V^\ast\}$ be a basis of \(V^\ast\). Let  \(\langle W, V\rangle\) be the multiplicity of $V$ in $W$ (that is, \(V\)  occurs as a direct summand in \(W\) \(\langle W, V\rangle\) times) and consider the space of invariants
\[
(W\otimes V^{\ast})^{G^\flat} =\{\eta^{k}\;|\; k=1,\ldots,\langle W, V\rangle \},\quad \eta^{k}=\sum_{i,j}\eta^{k}_{i,j} \,w_i \otimes v^j\,.
\]
The $\eta^k$ are traces of the basis of $V^\ast$ and its canonical dual basis, a basis for $V$. From the expression for $\eta^k$ we find $\langle W,V\rangle$ $V$-bases $\{v_j^k=\sum_{i}\eta_{i,j}^k w_i\;|\;j=1,\ldots,\dim V\}$, $k=1,...,\langle W, V\rangle$.

In practice we take a general element $\sum_{i,j} \zeta_{i,j} \,w_i \otimes v^j$ in $W\otimes V^\ast$ and require this element to be invariant under the action of the generators of $G^\flat$ to obtain elements $\eta^k=\sum_{i,j}\eta^{k}_{i,j} \,w_i \otimes v^j$.

If we now do the same construction for $U\otimes V$ we find $V^\ast$-bases in $U$. Taking the trace with each $V$-basis in $W$ results in  $\langle W, V\rangle \langle U, V^\ast\rangle $ linearly independent elements of $(W\otimes U)^{G^\flat}$. The space spanned by these elements will be denoted by  $(W\otimes U)_V^{G^\flat}$. We have
\[(W\otimes U)^{G^\flat}=\bigoplus_{V\in\textrm{Irr}G^\flat}(W\otimes U)_V^{G^\flat}\]

We return to the original problem of finding invariant matrices of degree $d$ in $\mf{sl}(V)\otimes\splitk[X,Y]$. To this end we apply the above construction to the $G^\flat$-representations $\mf{sl}(V)$ and \(\splitk_d[X,Y]\) and obtain  \((\mf{sl}(V)\otimes\splitk_d[X,Y])_{V'}^{G^\flat}\), with $V'\in\textrm{Irr}(G^\flat)$.

%%%%%%%%%%%%%%%%%%%%%%%%%%%%%%%%%%%%%%%%%%%%%%%%%%%%%%%%%%%%%%%%%%%%%%%%%%%%%%%%%%%%%%%%%%%%%%%%%%%%%%%%%%%%%%%%%%%%%%%%%%%%%%%%%%%%%%%%%%%%%%%%%%%%%%%%%%%%

\subsection{Transvectants}
\label{sec:trans} 
In classical invariant theory the basic computational tool is the \emph{transvectant}: given any two invariants (in the context of invariant theory these are called covariants), it is possible to construct a number of (possibly new) invariants by computing transvectants. As a simple example consider two linear forms \(aY+bX\), \(cY+dX\); their first transvectant is the determinant of the coefficients, i.e.~\(ad-cb\). 
A transformation on \((X,Y)\) induces a transformation on  \((a,b)\) such that \(aY+bX\) remains constant, and similarly for \((c,d)\). Then  \(ad-cb\) is invariant under the joint induced transformations on \((a,b)\)  and \((c,d)\).
Similarly, the discriminant \( a_0 a_2-a^{2}_{1}\) of a quadratic form \(a_0 Y^2+2 a_1 X Y+a_2 X^2\) is the second transvectant of the quadratic form with itself. 
While the transvectant language has been superseded by more general constructions,
working for all finite dimensional Lie algebras,
and sounds rather old-fashioned to present day algebraists,
it is still a very effective computational tool when it can be applied and it is easy to program. 
The only assumption one makes 
is that the group acts linearly and faithfully on \(\bbbc^2\), that the group elements are represented by matrices in $SL_{2}(\bbbc)$, as it is indeed the case for the natural representation $\sigma$ (see Definition \ref{def:natural}).
If one would like to replace \(\bbbc^2\) by a higher-dimensional space,
the transvectant mechanism is no longer available, but while the transvectant technique
is very efficient, the results in this paper could also have been obtained without transvectants, e.g.~using group averaging as mentioned at the beginning of Section \ref{sec:comp_inv_mat}.

In this section we will adapt the idea of transvection to compute invariant Lie algebras. We start from the classical work by Klein about automorphic functions 
and generalise it to the context of automorphic algebras. To do so, we need first to recall some definitions and facts about transvectants and generalise some of the concepts to the present set up.\\
Recall the Definition \ref{def:relativeinv} of relative invariant; in the literature, relative invariants are also called semi-invariants 
or covariants. 
\begin{Definition}[Polynomial ground form]
\label{def:poly_groundform}
A {\bf polynomial ground form} is a relative invariant polynomial \(\alpha\) of minimal degree. The divisor of zeros of such a polynomial is  an exceptional (or degenerate) $G$-orbit of minimal order.
\end{Definition}
\begin{Definition}[Ground form]
\label{def:groundform}
A {\bf ground form} is an invariant  \(\mf{A}\in \mathcal{V}\) of minimal degree, where \(\mathcal{V}\) is a  \(G\)-module and a $\splitk[X,Y]$-module. %with its divisor of zeros equal to an exceptional (or degenerate) $G$-orbit.
\end{Definition}
The computations of polynomial ground forms for the $\bbbt$$\bbbo$$\bbbi$ groups can be found, for instance, in  \cite{Dolgachev10}, \cite[II.6]{MR845275} and \cite{MR1315530}.
\begin{Definition}[Transvectant]
\label{def:trans}
Let \(\mathcal{V}\) and \(\mathcal{W}\) be  \(G\)-modules and $\splitk[X,Y]$-modules.
Let \(\phi\in \mathcal{V}_G\) and 
\(\phi_{k,l}=\frac{\partial^{k+l}\phi}{\partial X^k\partial Y^l}\);
we define the \(k\)th--\textbf{transvectant} of \(\phi\) with \(\psi \in \mathcal{W}_G\) 
\[\mf{F}=(\phi\,,\psi )^k =\sum_{i=0}^k (-1)^i\binom{k}{i}  \phi_{i,k-i}\otimes
\psi_{k-i,i}\in (\mathcal{V}\otimes \mathcal{W})_G\,.\]
\end{Definition}
\begin{Lemma}
Let \(\phi\in \mathcal{V}_G\) and \(\psi \in \mathcal{W}_G\); the transvectant transforms as 
\[
g(\phi\,,\psi )^k =(g\phi\,,g\psi )^k \,, \quad g\in G.
\]
This implies that \((\phi\,,\psi )^k\in (\mathcal{V}\otimes \mathcal{W})_G\), and if $\phi$ and $\psi$ are invariant, so is $(\phi\,,\psi )^k$.
\end{Lemma}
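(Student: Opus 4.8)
The statement to prove is the transvectant covariance:
\[
g(\phi,\psi)^k = (g\phi, g\psi)^k, \quad g \in G,
\]
and its consequence that $(\phi,\psi)^k \in (\mathcal{V}\otimes\mathcal{W})_G$, with invariance preserved.

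The plan is to reduce everything to the chain rule for the differential operators $\partial_X,\partial_Y$ under the linear action of $g$ on $(X,Y)$. First I would make explicit what the $G$-action is on each factor. On the polynomial part, $g$ acts by $g\cdot p(X,Y) = p(\sigma(g^{-1})(X,Y))$ with $\sigma(g)\in SL_2(\bbbc)$ the natural representation, and $\phi,\psi$ carry in addition a module structure in $\mathcal{V},\mathcal{W}$; since $\phi\in\mathcal{V}_G$ and $\psi\in\mathcal{W}_G$ are relative invariants, the $G$-action multiplies them by scalars $\chi_\phi(g),\chi_\psi(g)\in\bbbc^\ast$. The key observation is that the differentiation indices in the transvectant get contracted against the $\epsilon$-tensor structure $(-1)^i\binom{k}{i}$, and this contraction is exactly what is invariant under $SL_2$.

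The central computational step is the transformation law of the derivatives. Writing $A=\sigma(g^{-1})$, the chain rule gives
\[
\partial_X(g\phi) = A_{11}\,(g\phi_{1,0}) + A_{21}\,(g\phi_{0,1}), \qquad
\partial_Y(g\phi) = A_{12}\,(g\phi_{1,0}) + A_{22}\,(g\phi_{0,1}),
\]
so the pair $(\partial_X,\partial_Y)$ applied to $g\phi$ transforms contragrediently to $(X,Y)$. Iterating, the vector of $k$-th order partials $(\phi_{i,k-i})_{i=0}^{k}$ of $g\phi$ is obtained from those of $\phi$ (evaluated at the transformed point) by the $k$-th symmetric power of $A$. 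Then I would substitute these transformed derivatives into the defining sum and show that the alternating binomial combination $\sum_i(-1)^i\binom{k}{i}\phi_{i,k-i}\otimes\psi_{k-i,i}$ picks out precisely the determinant factor $\det(A)^k$, which equals $1$ because $\sigma(g)\in SL_2(\bbbc)$. Concretely, the transvectant is the $k$-fold contraction of the Cayley omega-process operator $\Omega=\partial_{X_1}\partial_{Y_2}-\partial_{Y_1}\partial_{X_2}$ acting on $\phi(X_1,Y_1)\psi(X_2,Y_2)$ restricted to the diagonal, and $\Omega$ commutes with the diagonal $SL_2$-action up to the Jacobian determinant, which is unimodular here.

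The cleanest route is therefore to identify $(\phi,\psi)^k$ with $\Omega^k$ applied to the bihomogeneous form and restricted to the diagonal, invoke the classical fact that $\Omega$ is an $SL_2$-equivariant bidifferential operator (the Cayley $\Omega$-process commutes with unimodular substitutions), and read off $g(\phi,\psi)^k=(g\phi,g\psi)^k$ directly. I expect the main obstacle to be bookkeeping: verifying that the mixed partial indices $(i,k-i)$ on $\phi$ and $(k-i,i)$ on $\psi$ combine correctly under the symmetric-power action of $A$ so that all cross terms cancel and only the $\det(A)^k$ term survives. Once covariance is established, the final two assertions are immediate: taking $\phi\in\mathcal{V}_G$, $\psi\in\mathcal{W}_G$ with $g\phi=\chi_\phi(g)\phi$ and $g\psi=\chi_\psi(g)\psi$ gives $g(\phi,\psi)^k=\chi_\phi(g)\chi_\psi(g)(\phi,\psi)^k$, so $(\phi,\psi)^k$ is $(\chi_\phi\chi_\psi)$-relative invariant, hence lies in $(\mathcal{V}\otimes\mathcal{W})_G$; and if both characters are trivial the product is trivial, so $(\phi,\psi)^k$ is genuinely invariant.
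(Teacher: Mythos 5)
Your proof is correct. The paper in fact states this Lemma without proof, treating it as a classical fact of invariant theory, and the argument you give -- identifying $(\phi,\psi)^k$ with the $k$-fold Cayley $\Omega$-process applied to $\phi(X_1,Y_1)\psi(X_2,Y_2)$ restricted to the diagonal, observing via the chain rule that the pair $(\partial_X,\partial_Y)$ transforms by $A^{T}$ when the point transforms by $A=\sigma(g^{-1})$, and noting that the skew contraction therefore produces exactly $\det(A)^k=1$ because the natural representation lands in $SL_2(\bbbc)$ -- is precisely the standard classical proof the authors are implicitly invoking. Your reduction of the final claims to the covariance identity is also right: the module actions on $\mathcal{V}$ and $\mathcal{W}$ commute with $\partial_X,\partial_Y$, so a $\chi_\phi$-relative invariant paired with a $\chi_\psi$-relative invariant yields a $\chi_\phi\chi_\psi$-relative invariant, and trivial characters give a genuine invariant. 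The only pedantic point worth adding is that $\mathcal{V}_G$ is defined in the paper as the span of all relative invariants, so a general $\phi\in\mathcal{V}_G$ may mix characters; since the transvectant is bilinear, the conclusion $(\phi,\psi)^k\in(\mathcal{V}\otimes\mathcal{W})_G$ still follows term by term, but your argument as written tacitly assumes $\phi$ and $\psi$ are each relative invariants for a single character.
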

\begin{Corollary}
Let \(\mf{A}\in\mathcal{V}\) be a ground form and \(\bar{\alpha}\) an invariant polynomial.
Then \((\bar{\alpha},\mf{A})^l \in \mathcal{V}^G\). 
\end{Corollary}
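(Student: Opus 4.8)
The plan is to view the corollary as the special instance of the preceding Lemma in which one covariant is the ground form \(\mf{A}\) and the other is the invariant polynomial \(\bar\alpha\), and then to transport the resulting invariant from the tensor product back into \(\mathcal{V}\) by means of the module multiplication.

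First I would record that both arguments are genuine invariants, not merely relative ones. By Definition \ref{def:groundform} a ground form is an \emph{invariant} of \(\mathcal{V}\) of minimal degree, so \(\mf{A}\in\mathcal{V}^G\subseteq\mathcal{V}_G\); and \(\bar\alpha\) is invariant by hypothesis, so \(\bar\alpha\in(\splitk[X,Y])^G\subseteq(\splitk[X,Y])_G\). Both therefore lie in the relevant spaces of relative invariants, so the transvectant \((\bar\alpha,\mf{A})^l\) is defined by Definition \ref{def:trans}, with \(\splitk[X,Y]\) in the role of the first module and \(\mathcal{V}\) in the role of the second.

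Next I would invoke the preceding Lemma with \(\phi=\bar\alpha\) and \(\psi=\mf{A}\). Its equivariance statement \(g(\bar\alpha,\mf{A})^l=(g\bar\alpha,g\mf{A})^l\), combined with \(g\bar\alpha=\bar\alpha\) and \(g\mf{A}=\mf{A}\), gives \(g(\bar\alpha,\mf{A})^l=(\bar\alpha,\mf{A})^l\) for all \(g\in G\); thus \((\bar\alpha,\mf{A})^l\) is an invariant element of \((\splitk[X,Y]\otimes\mathcal{V})^G\).

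The one substantive step is to land the result in \(\mathcal{V}^G\) rather than merely in the tensor product, and this is where I expect the only real care to be needed. Here I would use that \(\mathcal{V}\) is at once a \(G\)-module and a \(\splitk[X,Y]\)-module, the two structures being compatible through \(g(p\cdot v)=(g p)\,(g v)\). This compatibility says precisely that the multiplication map \(m:\splitk[X,Y]\otimes\mathcal{V}\to\mathcal{V}\), \(p\otimes v\mapsto p\cdot v\), is a homomorphism of \(G\)-modules, and it is through \(m\) that the tensor symbol in the transvectant formula of Definition \ref{def:trans} is to be read: each summand \(\bar\alpha_{i,l-i}\otimes\mf{A}_{l-i,i}\) becomes the product of the polynomial \(\bar\alpha_{i,l-i}\) with the element \(\mf{A}_{l-i,i}\in\mathcal{V}\). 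Applying the equivariant map \(m\) to the invariant \((\bar\alpha,\mf{A})^l\) therefore produces an invariant element of \(\mathcal{V}\), that is \((\bar\alpha,\mf{A})^l\in\mathcal{V}^G\), as claimed. The main obstacle is thus conceptual rather than computational: one must notice that the transvectant a priori takes values in the abstract tensor product, and verify that \(G\)-equivariance of the module action pushes this value down into \(\mathcal{V}\) without spoiling invariance; once this is observed the corollary follows immediately from the Lemma with no further calculation.
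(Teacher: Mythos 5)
Your proof is correct and follows exactly the route the paper intends: the Corollary is an immediate consequence of the preceding Lemma applied with $\phi=\bar\alpha$ and $\psi=\mf{A}$ (both genuine invariants), with the tensor product collapsed into $\mathcal{V}$ via the $G$-equivariant $\splitk[X,Y]$-module multiplication. The paper leaves this proof implicit, and your write-up supplies precisely the details it omits, including the one point worth making explicit — that the transvectant a priori lives in $(\splitk[X,Y]\otimes\mathcal{V})^G$ and is pushed into $\mathcal{V}^G$ by the module action.
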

\begin{Corollary}
Let \(\phi\in(\mathcal{V}\otimes V)^G\) and \(\psi\in(V^\ast \otimes \splitk[X,Y])^G\).
Let \(\mf{A}=\mathrm{Trace\ } \phi\otimes \psi\in \mathcal{V}^G\) be an invariant form,
Then \((\bar{\alpha},\mf{A})^l=\mathrm{Trace\ } \phi\otimes (\bar{\alpha},\psi)^l\in \mathcal{V}^G\), with \(\bar{\alpha}\) a polynomial invariant.
\end{Corollary}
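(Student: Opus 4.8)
The plan is to reduce the entire statement to the single structural fact that the invariant tensor $\phi\in(\mathcal{V}\otimes V)^G$ carries no dependence on the spectral variables $X$ and $Y$, so that the differentiation built into the transvectant acts only on $\psi$ and therefore commutes with the contraction of the $V\otimes V^\ast$ slots. First I would simply unwind Definition \ref{def:trans} for the $l$th transvectant of the scalar polynomial invariant $\bar{\alpha}$ with the $\mathcal{V}$-valued ground form $\mf{A}$:
\[
(\bar{\alpha},\mf{A})^l=\sum_{i=0}^l (-1)^i\binom{l}{i}\,\bar{\alpha}_{i,l-i}\otimes \mf{A}_{l-i,i},
\]
bearing in mind that $\bar{\alpha}_{i,l-i}$ is a scalar polynomial while $\mf{A}_{l-i,i}$ is $\mathcal{V}$-valued, so that the $\otimes$ appearing here is nothing but the action of $\splitk[X,Y]$ on the $\splitk[X,Y]$-module $\mathcal{V}$.

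The key step is then to compute the derivatives $\mf{A}_{l-i,i}=\partial_X^{\,l-i}\partial_Y^{\,i}\mf{A}$ of the ground form. Since $\mf{A}=\mathrm{Trace}\,\phi\otimes\psi$ and $\phi$ is a constant with respect to $X$ and $Y$, the differential operator $\partial_X^{\,l-i}\partial_Y^{\,i}$ passes through both the fixed contraction $V\otimes V^\ast\to\splitk$ and the constant factor $\phi$, landing only on $\psi$:
\[
\mf{A}_{l-i,i}=\mathrm{Trace}\,\phi\otimes\psi_{l-i,i}.
\]
This is the sole point at which anything genuinely has to be verified, and it is immediate from the linearity of the trace together with the $X,Y$-independence of $\phi$.

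Substituting this back into the transvectant sum, and again using linearity of the trace to extract the fixed factor $\mathrm{Trace}\,\phi\otimes$, I obtain
\[
(\bar{\alpha},\mf{A})^l=\mathrm{Trace}\,\phi\otimes\Bigl(\sum_{i=0}^l (-1)^i\binom{l}{i}\,\bar{\alpha}_{i,l-i}\otimes\psi_{l-i,i}\Bigr)=\mathrm{Trace}\,\phi\otimes(\bar{\alpha},\psi)^l,
\]
which is precisely the asserted identity. Membership in $\mathcal{V}^G$ then follows at once from the preceding Lemma: both $\bar{\alpha}$ and $\mf{A}$ are invariant, so their transvectant is invariant as well (equivalently, one may invoke the previous Corollary with $\mf{A}$ as the ground form). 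The argument is thus purely formal bookkeeping, and the only obstacle worth naming is keeping straight the two competing roles of $\mathcal{V}$ as a $G$-module and as a $\splitk[X,Y]$-module, and confirming that differentiation genuinely commutes with the trace; it does, precisely because $\phi$ is a constant tensor in the spectral variables.
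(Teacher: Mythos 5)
Your argument is correct and is precisely the intended justification: the paper states this Corollary without a separate proof, treating it as immediate from the definition of the transvectant and the preceding Lemma, and your computation --- pushing $\partial_X^{\,l-i}\partial_Y^{\,i}$ through the trace and the $X,Y$-independent tensor $\phi$ so that it lands on $\psi$ alone, then resumming to recover $(\bar{\alpha},\psi)^l$ --- is exactly that omitted verification. You also correctly isolate the one hypothesis doing real work, namely that $\phi$ carries no dependence on the spectral variables (as in the construction of Section 3.1, where $\phi$ is the constant intertwiner $\eta^k$), without which the Leibniz rule would spoil the identity.
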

This justifies the way we compute a sequence of invariants from a ground form using the Molien function of the irreducible representation \(V\) (see Section \ref{sec:comp_inv_mat}).

%%%%%%%%%%%%%%%%%%%%% EXAMPLES %%%%%%%%%%%%%%%%%%%%%%

\begin{Example}
The polynomial ground forms for \(\bbbt,\bbbo\) and \(\bbbi\), in the bases given 
by (\ref{eq:Tnrep}), (\ref{eq:Onrep}) and (\ref{eq:Ynrep}) respectively, are:
\begin{eqnarray*}
\alphat&=&
\omega_3 (X - 1/3(1-\omega_3)Y)
( X - (1+ \omega_3)Y)
(X+ (1 + \omega_3)Y)
(X  -(1- \omega_3)Y)\,.
\\
\alphao&=&
( X - 1/2(1  + \omega_8 +\omega_8^2 -\omega_8^3)Y)
( X - 1/2(1  - \omega_8 - \omega_8^2 + \omega_8^3)Y)
( X -(1  - \omega_8 + \omega_8^2)Y)
\\ &\times & 
(X -(1  - \omega_8^2 - \omega_8^3)Y)
(X - (\omega_8 - \omega_8^2)Y)
(X - (\omega_8^2 + \omega_8^3)Y)\,,
\\
\alphai&=&
X
Y
(X + (1  + \omega_5^2 + \omega_5^3)Y)
(X + \omega_5^3 Y)
(X+ 1/5( 2 - \omega_5 + \omega_5^2 + 3\omega_5^3)Y)
\\ &\times & 
(X - \omega_5Y)
(X+( 1  + \omega_5 + \omega_5^2 + \omega_5^3)Y)
(X + (\omega_5^2 + \omega_5^3)Y)
\\ &\times & 
(X +(1 + \omega_5^3)Y)
(X - (\omega_5 + \omega_5^2)Y)
(X+Y )
(X - (\omega_5 - \omega_5^3)Y)\,.
\end{eqnarray*}
The subindex of \(\alpha_{i,j}\) is determined as follows: $i$ is the \(X,Y\)-degree and $j$ identifies the element in the group of one dimensional characters  describing how $\alpha_{i,j}$ transforms. For example, the one dimensional characters of $\bbbt$ constitute the group $\bbbz/3=\{0,1,2\}$ by identifying $\bbbt_{j+1}$ with $j\in \bbbz/3$. 
In $\alphai$ the second grading is trivial, so it is omitted (see also Examples \ref{ex:T}-\ref{ex:Y}).
\end{Example}
%%%%%%%%%%%%%%%%%%%%%%%%%%%%%%%%%%%%%%%%%%%%%%%%%%%%%%%%%%%%%%%%%%%
\begin{Example}[Classical Invariant Theory]
\label{ex:cit}
Let $V=W=\splitk[X,Y]_G$ and replace in the Definition \ref{def:trans} the tensor product by the ordinary product of polynomials. Then
 \(\mf{F}\in \splitk[X,Y]_G\). Let $\alpha$ be the lowest degree relative invariant, then it follows from the classical theory that if  \(G\)  is either \(\bbbt\), \(\bbbo\) or \(\bbbi\) the classical relative invariants \cite{MR0080930,MR1315530} are given by 
\[\alpha\,,\quad \beta=(\alpha,\alpha)^{2}\,,\quad\gamma=(\alpha,\beta)^{1}\,.\]
\begin{center}
\begin{table}[h!] 
\begin{center}
\begin{tabular}{|c|c|c|c|} \hline
\(G\)  & \(\deg_\alpha\) &  \(\deg_\beta=2\deg_\alpha-4 \) & \(\deg_\gamma=3\deg_\alpha-6\) \\ 
\hline 
\hline
\( \bbbt\) & 4 & 4 & 6\\ 
\hline 
\( \bbbo\) & 6 & 8 & 12\\ 
\hline 
\( \bbbi\) & 12 & 20 & 30\\ 
\hline
\end{tabular}
\end{center}
\caption{Degrees of the  classical relative invariants of \(\bbbt, \bbbo, \bbbi\).}
\label{tab:invariants}
\end{table}
\end{center}
If one denotes the degree of a form \(\alpha\) by \(\deg_\alpha\) it follows that (see Table \ref{tab:invariants})
\[\deg_\beta=2\,\deg_\alpha-4\,,\quad \deg_\gamma=3\,\deg_\alpha-6\,.\]
The degree of \(\beta\) is the number of faces  of the Platonic solid and determines its name.
We observe that \(\deg_\alpha-\deg_\gamma+\deg_\beta=2\), the Euler characteristic, and that \(\deg_\alpha+\deg_\beta+\deg_\gamma=|G|+2\).
\end{Example}
The next examples illustrate how the Molien series information is combined with the concept of transvectant to construct a basis for the relative invariants. We write $\splitk[V]=\splitk[X,Y]$ when $\{X,Y\}$ is a basis for the dual of a representation $V$. %$V^\ast$

%%%%%%%%%%%%%%%%%%%%%%%%%%%%%%%%%%%%%%%%%%%%%%%%%%%%%%%%%%%%%%%%%%%
\begin{Example}[Tetrahedral group $\bbbt$]
\label{ex:T}
The ring generated by the relative invariants is determined as follows.
From GAP we obtain the Molien function
\[
M(\splitk[\bbbt_4^\flat]_{\bbbt^\flat},t)=\frac{1+2t^4+2t^8+t^{12}}{(1-t^6)(1-t^8)}
=\frac{1-t^{12}}{(1-t^4)^2(1-t^6)}
=\frac{1+t^{6}}{(1-t^4)^2}\,.
\]
To find the ground form
\(\alphat \) we look in \({\bbbt_2}\otimes \splitk_4[\bbbt_4^\flat]\).
Then \(\betat=(\alphat,\alphat)^2\in \splitk_4[\bbbt_4^\flat]^{\bbbt_3}\)
and \(\gammat=(\alphat,\betat)^1\in \splitk_6[\bbbt_4^\flat]^{\bbbt^\flat}\), in analogy with classical invariant theory.
This follows from Table \ref{tab:invariants}. 
Thus one finds that
\[
\splitk[\bbbt_4^\flat]_{\bbbt^\flat}=\splitk[\alphat,\betat](1\oplus\gammat)
\]
where
\[
\alphat=   Y^4 - 8/3 XY^3 + 2X^2Y^2 - 4/3X^3Y - 4/3\omega_3 XY^3 +2\omega_3 X^2Y^2 - 8/3\omega_3 X^3Y+\omega_3 X^4,
\]
\[
\betat= - 128XY^3 +128X^3Y -256 \omega_3 XY^3 + 384\omega_3 X^2 Y^2 - 128\omega_3 X^3Y
\]
and
\[
\begin{array}{rl}
\gammat=&
-512Y^6 + 2560X^2Y^4 - 5120 X^4Y^2 + 3072X^5Y - 512 X^6
 - 1024\omega_3 Y^6+
\\\\ &
+3072 \omega_3 X Y^5 - 2560\omega_3 X^2Y^4 - 2560\omega_3X^4Y^2 + 3072\omega_3 X^5 Y - 1024\omega_3 X^6 ,\end{array} 
\]
in the basis given by (\ref{eq:Tnrep}).
One expects from the Molien function a relation at degree \(12\) of the form
\[
 \alphat^3+C_\alpha^\beta\betat^3+C_\alpha^\gamma \gammat^2=0,\quad C_\alpha^\beta,\,C_\alpha^\gamma\in \splitk^\ast
\]
and one finds \(C_\alpha^\beta= 1/884736 \) and \(C_\alpha^\gamma= 1/786432\).
The Molien function of the invariants is given by
\[
M(\splitk[\bbbt_4^\flat]^{\bbbt^\flat},t)=\frac{1+t^{12}}{(1-t^6)(1-t^8)}\,.
\]
Thus the invariants corresponding to these terms are $\gammat\equiv\balphat$ for $t^6$, $\alphat\betat\equiv\bbetat$ for $t^{8}$ and $\alphat^3\equiv\bgammat$ for $t^{12}$ (or equivalently $\betat^3$). Hence. the ring of invariants can be written as
\[
%\splitk[\bbbt_4^\flat]^{\bbbt^\flat}=\splitk[\balphat,\bbetat]\oplus\splitk[\balphat,\bbetat]\bgammat\,.
\splitk[\bbbt_4^\flat]^{\bbbt^\flat}=\splitk[\balphat,\bbetat](1\oplus\bgammat)\,.
\]
\end{Example}
%%%%%%%%%%%%%%%%%%%%%%%%%%%%%%%%%%%%%%%%%%%%%%%%%%%%%%%%%%%%%%%%%%%
\begin{Example}[Octahedral group $\bbbo$]
\label{ex:O}
Similarly, the ring generated by the $\bbbo$-relative invariants is determined as follows.
From GAP we obtain the Molien function
\[
M(\splitk[\bbbo_4^\flat]_{\bbbo^\flat},t)=\frac{1+t^6+t^{12}+t^{18}}{(1-t^8)(1-t^{12})} 
=\frac{1+t^{12}}{(1-t^6)(1-t^8)}
\]
and the individual generating function for \(\bbbo_2\) is
\[
M(\splitk[\bbbo_4^\flat]^{\bbbo_2},t)=\frac{t^6+t^{12}}{(1-t^8)(1-t^{12})}
\]
and for \(\bbbo_1\) is
\[
M(\splitk[\bbbo_4^\flat]^{\bbbo^\flat},t)=\frac{1+t^{18}}{(1-t^8)(1-t^{12})}.
\]
To find the basic covariant \(\alphao \) we look in \(\splitk_6[\bbbo_4^\flat]^{\bbbo_2}\).
Then \(\betao=(\alphao,\alphao)^2\in \splitk_8[\bbbo_4^\flat]^{\bbbo^\flat}\)
and \(\gammao=(\alphao,\betao)^1\in \splitk_{12}[\bbbo_4^\flat]^{\bbbo_2}\).
Thus one finds that
\[
\splitk[\bbbo_4^\flat]_{\bbbo^\flat}=\splitk[\alphao,\betao](1\oplus\gammao)\,.
\]
We identify the terms in the Molien function for \(\bbbo_1\) as:
the \(t^8\) is \(\balphao=\betao\), the \(t^{12}\)-term is \(\bbetao=\alphao^2\) and the \(t^{18}\)-term
is \(\bgammao=\alphao\gammao\).
We identify the terms in numerator of the \(\bbbo_2\)-Molien function as follows.
The \(t^6\) term is \(\alphao\), the \(t^{12}\) term is \(\gammao\).

One can check that the relative invariants satisfy a relation of the form 
\[
\alphao^4+C_\alpha^\beta \betao^3+C_\alpha^\gamma \gammao^2=0\,.
\]
It follows that the invariants have the following relation
\[
C_\alpha^\beta  \balphao^3\bbetao+\bbetao^3+C_\alpha^\gamma \bgammao^2=0
\]
and that the ring of invariants can be written as
\[
%\splitk[\bbbo_4^\flat]^{\bbbo^\flat}=\splitk[\balphao,\bbetao]\oplus\splitk[\balphao,\bbetao]\bgammao\,.
\splitk[\bbbo_4^\flat]^{\bbbo^\flat}=\splitk[\balphao,\bbetao](1\oplus\bgammao)\,.
\]
\end{Example}

%%%%%%%%%%
\begin{Example}[Icosahedral group $\bbbi$]
 \label{ex:Y}
The Molien function of the invariants is
\[
M(\splitk[\bbbi_2^\flat]^{\bbbi^\flat},t)=\frac{1+t^{30}}{(1-t^{12})(1-t^{20})}.
\]
The invariants are \(\alphai\), \(\betai=(\alphai,\alphai)^2\) and
\(\gammai=(\alphai,\betai)^1\), and they satisfy the following relation
\[
\alphai^5+C_\alpha^\beta \betai^3+C_\alpha^\gamma \gammai^2=0\,.
\]
The ring of invariants can be written as
\[
\splitk[\bbbi_2^\flat]^{\bbbi^\flat}=\splitk[\alphai,\betai](1\oplus\gammai)\,.
\]

\end{Example}

%%%%%%%%%%%%%%%%%%%%%%%%%%%%%%%%%%%%%%%%%%%%%%%%%%%%%%%%%%%%%%%%%%%%%%%%%%%%%%%%%%%%%%%%%%%%%%%%%%%%%%%%%%%%%%%%%%%%%%

\subsection{$\bbbt\bbbo\bbbi$-Invariant matrices}
\label{sec:inv_mat_a}
Our goal is to determine the structure of the Lie algebra of invariant matrices.
Once the ground forms are computed, the other degrees can be realised by taking appropriate transvectants with the relative invariants.
The choice of transvectants is completely independent of the dimension we are working in, thus the construction is completely uniform.\\
We observe in first place that it is possible to predict that the number of generators of $(V\otimes \splitk[X,Y])^{G^\flat}$ is twice the dimension of $V$. This follows from the following Lemma, a modification of a method by Stanley \cite{stanley1979invariants}.
%\cite[4.2 \& 4.3]{stanley1979invariants}.
\begin{Lemma}
Let $G^\flat$ be a finite subgroup of $\textrm{SL}(2,\CC)$ and let  $V$ be one of its irreducible representation with  character $\chi$. The space of invariants $(V\otimes \splitk[X,Y])^{G^\flat}$ is a Cohen-Macaulay module of Krull dimension 2. Say  
\[(V\otimes \splitk[X,Y])^{G^\flat}=\bigoplus_{i=1}^{k_\chi}\splitk[\bar{\alpha},\bar{\beta}]\rho_i\]
and set $e_i=\deg \rho_i$.
Then 
\begin{eqnarray}
k_\chi |G^\flat| &=&\deg_{\bar{\alpha}} \deg_{\bar{\beta}} \chi(1)\\
\frac{2}{k_\chi}\sum_{i=1}^{k_\chi}e_i&=&\deg_{\bar{\alpha}} +\deg_{\bar{\beta}} -2
\end{eqnarray}
\end{Lemma}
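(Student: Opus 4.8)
The plan is to read off both identities from the Poincar\'e (Molien) series of the module $(V\otimes\splitk[X,Y])^{G^\flat}$, exploiting its closed rational form together with a functional equation forced by the inclusion $G^\flat\subset\SL_2(\bbbc)$. First I would record the structural input: since $\splitk[X,Y]$ is a polynomial ring of Krull dimension $2$ and $G^\flat$ is finite, the Hochster--Eagon theorem guarantees that $(V\otimes\splitk[X,Y])^{G^\flat}$ is Cohen--Macaulay of Krull dimension $2$; choosing two homogeneous primary invariants $\bar\alpha,\bar\beta$ (a homogeneous system of parameters) of degrees $\deg_{\bar\alpha},\deg_{\bar\beta}$, the module is free over $\splitk[\bar\alpha,\bar\beta]$ on secondary generators $\rho_1,\dots,\rho_{k_\chi}$ of degrees $e_i$. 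This is exactly the decomposition asserted in the statement, and it yields the closed form
\[
M(t):=M((V\otimes\splitk[X,Y])^{G^\flat},t)=\frac{\sum_{i=1}^{k_\chi}t^{e_i}}{(1-t^{\deg_{\bar\alpha}})(1-t^{\deg_{\bar\beta}})}.
\]
Simultaneously, Molien's theorem supplies the group-theoretic expression $M(t)=\frac{1}{|G^\flat|}\sum_{g\in G^\flat}\frac{\chi(g)}{\det(1-\sigma(g^{-1})t)}$, and the proof consists in comparing these two presentations.

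For the first identity I would analyse the behaviour of $M(t)$ as $t\to 1$. In the group sum only $g=1$ contributes a double pole: $\sigma$ is faithful, and for $g\neq 1$ the matrix $\sigma(g^{-1})$ has eigenvalue $1$ with multiplicity at most one, since a finite-order element of $\SL_2(\bbbc)$ with both eigenvalues equal to $1$ would be unipotent, hence the identity. Thus $\lim_{t\to1}(1-t)^2M(t)=\chi(1)/|G^\flat|$. On the other side, using $\lim_{t\to1}(1-t)/(1-t^{d})=1/d$, the rational form gives $\lim_{t\to1}(1-t)^2M(t)=k_\chi/(\deg_{\bar\alpha}\deg_{\bar\beta})$. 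Equating the two limits yields $k_\chi|G^\flat|=\deg_{\bar\alpha}\deg_{\bar\beta}\,\chi(1)$.

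For the second identity I would establish the functional equation $M(1/t)=t^2M(t)$, which is where the hypothesis $G^\flat\subset\SL_2(\bbbc)$ enters decisively. The key observation is that for any $A\in\SL_2(\bbbc)$ one has $\det(tI-A)=t^2-\tr(A)t+1=\det(I-tA)$, so that $\det(1-\sigma(g^{-1})/t)=t^{-2}\det(1-\sigma(g^{-1})t)$ for every $g$; substituting this into Molien's formula gives $M(1/t)=t^2M(t)$ term by term. Feeding the rational form into this relation produces the palindromic (Gorenstein-type) identity of numerators $\sum_i t^{\deg_{\bar\alpha}+\deg_{\bar\beta}-e_i}=\sum_i t^{e_i+2}$; the two polynomials being equal, their exponent multisets coincide. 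In particular the totals of the exponents agree, giving $k_\chi(\deg_{\bar\alpha}+\deg_{\bar\beta})-\sum_i e_i=\sum_i e_i+2k_\chi$, which rearranges to $\frac{2}{k_\chi}\sum_i e_i=\deg_{\bar\alpha}+\deg_{\bar\beta}-2$.

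The routine parts are the two limit computations and the final algebraic rearrangement; the substantive points are the Cohen--Macaulayness (which I would simply cite) and, above all, the $\SL_2$ determinant identity driving the functional equation. I expect the main obstacle to be the clean justification of the palindromic duality on the secondary degrees: one must verify that $M(1/t)=t^2M(t)$ holds as an identity of rational functions and that the rational expression for $M(t)$ is genuinely the one dictated by the free-module structure, so that equating numerators coefficientwise (rather than merely equating values) is legitimate.
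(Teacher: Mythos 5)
Your proof is correct. The first identity is established exactly as in the paper: both arguments extract the coefficient of $(1-t)^{-2}$ in the Laurent expansion of the Molien series at $t=1$, note that only $g=1$ contributes a double pole, and compare with the leading coefficient coming from the free $\splitk[\bar{\alpha},\bar{\beta}]$-module presentation. For the second identity you take a genuinely different route. The paper computes the next Laurent coefficient $B$ (of $(1-t)^{-1}$) on both sides: from Molien's formula $B=0$, because a finite-order element of $\SL_2(\CC)$ with exactly one eigenvalue equal to $1$ would have determinant $\neq 1$, while the rational form gives $B$ as an explicit combination of $\deg_{\bar{\alpha}}$, $\deg_{\bar{\beta}}$ and $\sum_i e_i$; setting this to zero yields the claim. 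You instead derive the functional equation $M(1/t)=t^2M(t)$ from the identity $\det(I-tA)=\det(tI-A)$ valid for $A\in\SL_2(\CC)$, and deduce the palindromic symmetry of the multiset of secondary degrees, $\{\deg_{\bar{\alpha}}+\deg_{\bar{\beta}}-e_i\}=\{e_i+2\}$, of which the stated formula is the first moment. The two arguments rest on the same essential input ($\det\sigma(g)=1$ for all $g$), but yours proves strictly more --- the full Gorenstein-type duality on the $e_i$, not merely their sum --- at the cost of checking that the two rational expressions may be compared numerator by numerator; that last worry is easily dispatched, since the denominators coincide and two polynomials with equal quotients by the same nonzero polynomial are equal.
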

\begin{proof}
The two equations follow from the first two coefficients, $A$ and $B$,
of the Laurent expansion around $t=1$ of the Molien series
\[M((V\otimes \splitk[X,Y])^{G^\flat},t)=\frac{A}{(1-t)^2}+\frac{B}{1-t}+\mc{O}(1).\]
We have two ways to express this series. Namely by Molien's theorem
and by the expression of $(V\otimes \splitk[X,Y])^{G^\flat}$ as a
Cohen-Macaulay module.

First Molien's theorem:
$P(V\otimes \splitk[X,Y])^{G^\flat},t)=\frac{1}{|G^\flat|}\sum_{g\in
G^\flat}\frac{\overline{\chi(g)}}{\det (1-t\sigma(g))}$.
Considering $\sigma(g)$ to be diagonal we see that the only
contribution to the term of order $(1-t)^{-2}$ in the Laurent
expansion
comes from the identity element $g=1$, so
$A=\frac{\chi(1)}{|G^\flat|}.$ The terms
$\frac{\overline{\chi(g)}}{\det (1-t\sigma(g))}$ that
contribute to the coefficient of $(1-t)^{-1}$ in the Laurent expansion
come from elements $\sigma(g)$ that have precisely one eigenvalue
equal to \(1\).
However, since $\det \sigma(g)=1$ there are no such elements: $B=0$.

On the other hand we notice that
\begin{eqnarray*}
P(\bigoplus_{i=1}^{k_\chi}\splitk[\bar{\alpha},\bar{\beta}]\rho_i,t)=\frac{\sum_{i=1}^{k_\chi}t^{e_i}}{(1-t^{\deg_{\bar{\alpha}}})(1-t^{\deg_{\bar{\beta}}})}
\end{eqnarray*}
and the first two coefficients of the Laurent expansion around $t=1$ are $A=\frac{k_\chi}{\deg_{\bar{\alpha}}\deg_{\bar{\beta}}}$ and $B=\frac{k_\chi}{2 \deg_{\bar{\alpha}}\deg_{\bar{\beta}}}(\deg_{\bar{\alpha}}-1)+\frac{k_\chi}{2\deg_{\bar{\alpha}}\deg_{\bar{\beta}}}(\deg_{\bar{\beta}}-1)-\frac{1}{\deg_{\bar{\alpha}}\deg_{\bar{\beta}}}\sum_{i=1}^{k_\chi}e_i$. The result follows.
\end{proof}

In Section \ref{sec:moi} we then repeat the procedure of Section \ref{sec:comp_inv_mat}, with a slight variation, to produce a basis for relative invariant vectors.
%Our goal is to determine the structure of the Lie algebra of invariant matrices
%with trace zero.
%We know that \(\mf{g}(V)\) splits into irreducible representation of \(G^\flat\).

In the following sections we compute a basis for \(|G|\)-homogeneous $G$-invariant matrices; this is a minimal generating set for the module of $G$-invariant matrices (over the primary invariants $\alpha^{d_G}$ and $\beta^3$) whose homogeneous elements have degree divisible $|G|$. This will be enough to construct a minimal generating set for the Automorphic Lie Algebra (see \cite{KLS-DN, knibbeler2015isotypic}).

\subsubsection{Tetrahedral group invariant matrices}
\label{sec:it}
From Table \ref{tab:decompTsl}
 it follows that \(\mf{g}(V)\) splits into a direct sum of \(\bbbt_i, i=2,3,7\).
We then consider \((\bbbt_i\otimes \splitk_{12}[\bbbt_4^\flat])^{\bbbt^\flat}\), as it is sufficient to consider entries of degree equal to the order of the group $|\bbbt|$ (see \cite{KLS-DN, knibbeler2015isotypic}).

The groundforms and transvectants are listed in Table \ref{tab:MolienT}. Notice that the degrees in column Molien and Multiplier add up to the order of the group.

\renewcommand{\arraystretch}{1.3}
\begin{center}
\begin{table}[H]
\begin{center}
\begin{tabular}{|c|c|c|c|c|} \hline
irrep &   Molien & ground form &invariant matrix & multiplier  \\
\hline 
\hline
$\bbbt_1$&$1$&$\mf{A}_1^0$&$\mf{M}_1^0=\mf{A}_1^0$ &$\balphat^2$\\
\hline
$\bbbt_2$&$t^4$&$\mf{A}_2^4$&$\mf{M}_2^4=\mf{A}_2^4$ &$\bbetat$\\
\hline
$\bbbt_3$&$t^4$&$\mf{A}_3^4$&$\mf{M}_3^4=\mf{A}_3^4$&$\bbetat$\\
\hline
$\bbbt_7$&$t^4$&$\mf{A}_7^2$&$\mf{M}_7^4=(\balphat,\mf{A}_7^2)^2$&$\bbetat$\\
&$t^6$&&$\mf{M}_7^{6}=(\balphat,\mf{A}_7^{2})^1$&$\balphat$\\
&$t^6$&&$\mf{N}_7^{6}=(\bbetat,\mf{A}_7^{2})^2$&$\balphat$\\
\hline
\end{tabular}
\end{center}
%\caption{Table of Molien functions for the invariant matrices as a \(\splitk[\balphat,\bbetat]\)-module}
%\caption{Generators of $\bbbt^\flat$-invariant matrices of degree $|\bbbt|$} 
\caption{Generators of $\bbbt$-invariant matrices of degree $|\bbbt|$.} 
\label{tab:MolienT}
\end{table}
\end{center}
%\begin{center}
%\begin{table}[H]
%\begin{center}
%\begin{tabular}{|c|c|c|c|c|} \hline
%irrep &   Molien & multiplier & ground form &invariant matrix \\
%\hline 
%\hline
%$\bbbt_1$&$1$&$\balphat^2$&$\mf{A}_1^0$&$\mf{M}_1^0=\mf{A}_1^0$ \\
%\hline
%$\bbbt_2$&$t^4$&$\bbetat$&$\mf{A}_2^4$&$\mf{M}_2^4=\mf{A}_2^4$ \\
%\hline
%$\bbbt_3$&$t^4$&$\bbetat$&$\mf{A}_3^4$&$\mf{M}_3^4=\mf{A}_3^4$\\
%\hline
%$\bbbt_7$&$t^4$&$\bbetat$&$\mf{A}_7^2$&$\mf{M}_7^4=(\balphat,\mf{A}_7^2)^2$\\
%&$t^6$&$\balphat$&&$\mf{M}_7^{6}=(\balphat,\mf{A}_7^{2})^1$\\
%&$t^6$&$\balphat$&&$\mf{N}_7^{6}=(\bbetat,\mf{A}_7^{2})^2$\\
%\hline
%\end{tabular}
%\end{center}
%%\caption{Table of Molien functions for the invariant matrices as a \(\splitk[\balphat,\bbetat]\)-module}
%\caption{Generators of $\bbbt^\flat$-invariant matrices} 
%\label{tab:MolienT}
%\end{table}
%\end{center}

Table \ref{tab:MolienT} is constructed by considering first the decomposition in Table \ref{tab:decompTsl}; one observes that the only representations playing a role are 
$\bbbt_2$, $\bbbt_3$ and $\bbbt_7$, so they are listed in the first column of Table \ref{tab:MolienT}.
The trivial representation $\bbbt_1$ is added for future reference. 
Next one considers the numerators of their corresponding Molien functions (see Table \ref{tab:irrMolienT}): the lowest order terms ($t^4$, $t^4$ and $t^2$), computed using the technique of Section \ref{sec:fourier} are the ground forms $\mf{A}_2^4$,  $\mf{A}_3^4$ and $\mf{A}_7^2$ in the third column, where the upper index denotes the degree in $X$ and $Y$ and the lower index refers to the irreducible representation (see the first column). The fourth column contains the invariant matrices; the last three entries correspond to $t^4$ and $2t^6$ in the $\bbbt_7$-row are obtained by taking the first transvectant with the primary invariants $\bbetat$, $\balphat$. 
It is worth noticing that not all terms in the numerator of the Molien function are present. 
This is due to the fact that not all invariant matrices can be made $|G|$-homogeneous: for instance, looking at the Table \ref{tab:irrMolienT}  for $\bbbt_2$, we observe that the $t^{8}$ term is missing, indeed in this case one would need to solve 
%requires that $\deg (\balphat^n\,\bbetat^m\,(\mf{A}_3^4)^2)=|\bbbt|$
%$\deg (\balphat^n\,(\bbetat)^m\,\mf{A}_2^4)=|\bbbt|$ 
%which leads to 
the linear diophantine equation $6n+8m+8=|\bbbt|=12$ which has no solutions for $n$ and $m$ non-negative integer. 
%The third column of the Table \ref{tab:MolienT} illustrates that one can consider the invariant matrices as a \(\splitk[\balphat,\bbetat]\)-module,
%where $\balphat$ are $\bbetat$ are  the primary invariants. 
The last column of the Table \ref{tab:MolienT} illustrates that one can solve the diophantine equation for the terms in the second column, hence a basis for $|\bbbt|$-homogeneous $\bbbt^\flat$-invariant matrices is given by the products of the elements in the last two columns.
%Indeed, one can read from the third column that there is always an invariant multiplying the ground form so that the degree of the product equals the order of the group $\bbbt$. 

%The fifth column contains the invariant matrices; the last three entries correspond to $t^4$ and $2t^6$ in the $\bbbt_7$-row are obtained by taking the first transvectant with the primary invariants $\bbetat$, $\balphat$. 

\begin{Example}\label{ex:psiGround}
From Table \ref{tab:decompTsl} one has \(\rlie{s}{2}{\bbbt_5^\flat}\cong \bbbt_7\). To find a concretisation of $\mf{A}_7^2$ we consider an embedding $\vartheta^{\rlie{s}{2}{\bbbt_5^\flat}}$ of $\bbbt_7$ into $\rlie{s}{2}{\bbbt_5^\flat}$:
%Let 
%In the case of
%\(\rlie{s}{2}{\bbbt_5^\flat}\)
%one has
%\iffalse
%\begin{quote}
%\#$DEG1x1r7=2;\\
%L A1x1r7x1=v(1)*V(1)*Y^2 - 2*v(1)*V(1)*X*Y - v(1)*V(1)*X^2 - 2*v(1)*V(1)*\omega_3*X*Y\\
% + v(1)*V(1)*\omega_3*X^2 - v(1)*V(2)*Y^2 + v(1)*V(2)*X^2 + 2*v(1)*V(2)*\omega_3*X*Y \\
%+ v(1)*V(2)*\omega_3*X^2 + v(2)*V(1)*Y^2 - 4*v(2)*V(1)*X*Y + 3*v(2)*V(1) *X^2 \\
%- 2*v(2)*V(1)*\omega_3*X*Y + 3*v(2)*V(1)*\omega_3*X^2 - v(2)*V(2)*Y^2 + 2*v(2)*V(2)*X*Y 
%+ v(2)*V(2)*X^2 + 2*v(2)*V(2)*\omega_3*X*Y - v(2)*V(2)*\omega_3*X^2;\\
%.sort\\
%\#$dim1x1r7=1;\\
%\end{quote}
%\fi
\[
%\psi_5(\mf{A}_7^2)= \begin{pmatrix}
\vartheta^{\rlie{s}{2}{\bbbt_5^\flat}}(\mf{A}_7^2)= \begin{pmatrix}
  Y^2-2(1+\omega_3) XY+(\omega_3-1) X^2 & 
-Y^2 +2\omega_3 XY +(\omega_3+1) X^2 \\ 
Y^2  -2 (2+\omega_3) XY+3(\omega_3+1) X^2&
 - Y^2 +2(1+\omega_3) XY+(1-\omega_3) X^2 \end{pmatrix}\,.
\]
In the case of
\(\rlie{s}{3}{\bbbt_7}\cong\bbbt_2\oplus\bbbt_3\oplus 2 \bbbt_7\) one has two concretisations of the ground form $\mf{A}_7^2$, namely $\vartheta_{1}^{\rlie{s}{3}{\bbbt_7}}(\mf{A}_7^2)$ and $\vartheta_{2}^{\rlie{s}{3}{\bbbt_7}}(\mf{A}_7^2)$, since the multiplicity of \(\bbbt_7\) in \(\rlie{s}{3}{\bbbt_7}\)  is two.
%,
%where the map
%\(\vartheta_{7,i}^{\rlie{s}{3}{\bbbt_7}}\) embeds irreducible summands into $\mf{sl}(V_i)$ 
%%\(\bbbt_i\rightarrow \bbbt_i\otimes\bbbt_i^\ast-\bbbt_1\),
%that is, a section to the decomposition into irreducibles of the tensor product.
\end{Example}
\begin{Example}
We compute a set of generators for $\mf{sl}_3(\bbbt_7)$, linearly independent over the ring $\splitk[\balphat,\bbetat]$ of primary invariants. We know that $\mf{sl}_3(\bbbt_7)\cong\bbbt_2\oplus\bbbt_3\oplus 2 \bbbt_7$. Therefore we have ground forms $\mf{A}_2^4$, $\mf{A}_3^4$ and $\mf{A}_{7}^2$.
%$\mf{A}_2^4$, $\mf{A}_3^4$, $\mf{A}_{7,1}^2$ and $\mf{A}_{7,2}^2$. 
Thus we compute the generators 
$\vartheta^{\rlie{s}{3}{\bbbt_7}}(\mf{M}_2^4)$, 
$\vartheta^{\rlie{s}{3}{\bbbt_7}}(\mf{M}_3^4)$,
$\vartheta_{1}^{\rlie{s}{3}{\bbbt_7}}(\mf{M}_{7}^4)$,
$\vartheta_{1}^{\rlie{s}{3}{\bbbt_7}}(\mf{M}_{7}^6)$, 
$\vartheta_{1}^{\rlie{s}{3}{\bbbt_7}}(\mf{N}_{7}^6)$,
$\vartheta_{2}^{\rlie{s}{3}{\bbbt_7}}( \mf{M}_{7}^4)$,
$\vartheta_{2}^{\rlie{s}{3}{\bbbt_7}}(\mf{M}_{7}^6)$,
$\vartheta_{2}^{\rlie{s}{3}{\bbbt_7}}( \mf{N}_{7}^6)$.
%$$
%\begin{array}{ll}
%\{
%\vartheta_{2}^{\rlie{s}{3}{\bbbt_7}}(\mf{A}_2^4), \;
%\vartheta_{3}^{\rlie{s}{3}{\bbbt_7}}(\mf{A}_3^4),\; 
%\vartheta_{7,1}^{\rlie{s}{3}{\bbbt_7}}(\mf{M}_{7}^4),\;
%\vartheta_{7,1}^{\rlie{s}{3}{\bbbt_7}}(\mf{M}_{7}^6),\; 
%\vartheta_{7,1}^{\rlie{s}{3}{\bbbt_7}}(\mf{N}_{7}^6),\;\\
%\vartheta_{7,2}^{\rlie{s}{3}{\bbbt_7}}( \mf{M}_{7}^4),\;
%\vartheta_{7,2}^{\rlie{s}{3}{\bbbt_7}}(\mf{M}_{7}^6),\; 
%\vartheta_{7,2}^{\rlie{s}{3}{\bbbt_7}}( \mf{N}_{7}^6) 
%\}.
%\end{array}
%$$
%$$
%\{\mf{A}_2^4, \,\mf{A}_3^4,\, \mf{M}_{7,1}^4,\, \mf{M}_{7,1}^6,\, \mf{N}_{7,1}^6,\, \mf{M}_{7,2}^4,\, \mf{M}_{7,2}^6,\, \mf{N}_{7,2}^6 \}.
%$$
Once we have tested their independence, we know from the Molien function that they span the space $(\mf{sl}(\bbbt_7)\otimes\splitk[\bbbt_4^\flat])^{\bbbt^\flat}$.
\end{Example}
%If $\bbbt_i^\flat$ is not a representation of $\bbbt$, there are no invariants in $\bbbt_i^\flat\otimes \splitk[X,Y]$ of degree $|\bbbt|$. In this case
%one can try as an alternative the lowest degree for which the dimension is the same as the dimension of the irreducible representation. 
%This is listed in Table \ref{tab:vecbasest}-\ref{tab:vecbasesi}. 

%%%%%%%%%%%%%%%%%%%%%%%%%%%%%%%%%%%%%%%%%%%%%%%%%%%%%%%%%%%%%%%%%%%%%%%%%%

\subsubsection{Octahedral group invariant matrices}
\label{sec:io}

Table \ref{tab:MolienO} is computed in the same spirit as in the previous section; also in this case, not all terms in the numerator of the Molien function (see Table \ref{tab:irrMolienO}) correspond to invariant matrices which can be made zero homogeneous, hence they are not listed below.
\begin{center}
\begin{table}[H]
\begin{center}
\begin{tabular}{|c|c|c|c|c|} 
\hline 
irrep &   Molien  & ground form &invariant matrix & multiplier\\
\hline
\hline
$\bbbo_1$&$1$&$\mf{A}_1^0$&$\mf{M}_1^{0}=\mf{A}_1^0$& $\bbetao^2$ \\
\hline
$\bbbo_2$&$t^{12}$&$\mf{A}_2^6$&$\mf{M}_2^{12}=(\balphao,\mf{A}_2^6)^1$ & $\bbetao$\\
\hline
$\bbbo_3$&$t^4$&$\mf{A}_3^4$&$\mf{M}_3^4=\mf{A}_3^4$& $\balphao\bbetao $\\
&$t^8$&&$\mf{M}_3^{8}=(\balphao,\mf{A}_3^{4})^2$& $\balphao^2$\\
\hline
$\bbbo_6$ & $t^4$&$\mf{A}_6^4$&$\mf{M}_6^4=\mf{A}_6^4$ & $\balphao\bbetao $\\
&$ t^8$&&$\mf{M}_6^{8}=(\balphao,\mf{A}_6^{4})^2$& $\balphao^2$ \\
&$ t^{12}$ &&$\mf{M}_6^{12}=(\balphao,\mf{A}_6^{8})^2$& $\bbetao$ \\
\hline
$\bbbo_7$& $ t^8$&$\mf{A}_7^2$&$\mf{M}_7^{8}=(\balphao,\mf{A}_7^{2})^1$& $\balphao^2$ \\
&$ t^{12}$&&$\mf{M}_7^{12}=(\balphao,\mf{M}_7^{8})^2$ & $\bbetao $\\
&$ t^{16}$ &&$\mf{M}_7^{16}=(\balphao,\mf{M}_7^{12})^2$ & $\balphao$\\
\hline
\end{tabular}
\end{center}
%\caption{Table of Molien functions for the invariant matrices as \(\splitk[\balphao,\bbetao]\)-module}
\caption{Generators of $\bbbo$-invariant matrices of degree $|\bbbo|$.} 
\label{tab:MolienO}
\end{table}
\end{center}

%%%%%%%%%%%%%%%%%%%%%%%%%%%%%%%%%%%%%%%%%%%%%%%%%%%%%%%%%%%%%%%%%%%%%%%%%%%

\subsubsection{Icosahedral group invariant matrices}
\label{sec:ii}
The invariant matrices for \(\bbbi^\flat\) are presented in the Table \ref{tab:MolienY}; as before, not all terms in the numerator of the Molien function (see Table \ref{tab:irrMolienY}) correspond to invariant matrices which can be made zero homogeneous, hence they are not listed below.
\begin{center}
\begin{table}[H]
\begin{center}
\begin{tabular}{|c|c|c|c|c|} 
\hline
irrep &   Molien  & ground form & invariant matrix & multiplier \\
\hline
\hline 
$\bbbi_1$  &$1$ &$\mf{A}_1^0$& $ \mf{M}_1^{0}=\mf{A}_1^0$ & $\alphai^5 $ \\
\hline
$\bbbi_4$  &$t^{16}$&$\mf{A}_4^6$& $ \mf{M}_4^{16}=(\alphai,\mf{A}_4^6)^1$& $\alphai^2\betai $  \\
  &  $t^{20}$ && $\mf{M}_4^{20}=(\alphai,\mf{M}_4^{16})^4$ & $\betai^2 $ \\
  &  $t^{24}$ && $ \mf{M}_4^{24}=(\alphai,\mf{M}_4^{20})^4$ & $\alphai^3$\\
\hline
$\bbbi_5$ &$t^{12}$ &$\mf{A}_5^2$&$\mf{M}_5^{12}=(\alphai,\mf{A}_5^2)^1$ & $\alphai^4 $\\
 &$ t^{20}$&&$\mf{M}_5^{20}=(\alphai,\mf{M}_5^{12})^2$& $\betai^2 $\\
 &$t^{28}$&&$\mf{M}_5^{28}=(\alphai,\mf{M}_5^{20})^2$& $\alphai\betai $\\
\hline
$\bbbi_6$  & $t^8$&$\mf{A}_6^6$&$\mf{M}_6^{8}=(\alphai,\mf{A}_6^6)^5$& $\alphai\betai^2 $\\
           & $t^{12} $&&$\mf{M}_6^{12}=(\alphai,\mf{M}_6^{8})^4$& $\alphai^4$\\
           & $t^{16}$&&$\mf{M}_6^{16}=(\alphai,\mf{M}_6^{12})^4$& $\alphai^2\betai $\\
           & $t^{24}$&&$\mf{M}_6^{24}=(\alphai,\mf{M}_6^{16})^2$& $\alphai^3$\\
\hline
$\bbbi_8$   &  $t^4$&$\mf{A}_8^4$&$\mf{M}_8^4=\mf{A}_8^4$& $\alphai^3\betai $\\
            &  $t^8$&&$\mf{M}_8^{8}=(\alphai,\mf{A}_8^4)^4$& $\alphai \betai^2 $\\
            &  $t^{12} $&&$\mf{M}_8^{12}=(\alphai,\mf{M}_8^8)^4$& $\alphai^4$\\
            &  $t^{16}$&&$\mf{M}_8^{16}=(\alphai,\mf{M}_8^{12})^4$& $\alphai^2\betai $\\
            &  $t^{20}$&&$\mf{M}_8^{20}=(\betai,\mf{A}_8^4)^2$& $\betai^2$\\
\hline
\end{tabular}
\end{center}
%\caption{Table of Molien functions for the invariant matrices as \(\splitk[\alphai,\betai]\)-module}
\caption{Generators of $\bbbi$-invariant matrices of degree $|\bbbi|$.} 
\label{tab:MolienY}
\end{table}
\end{center}

%%%%%%%%%%%%%%%%%%%%%%%%%%%%%%%%%%%%%%%%%%%%%%%%%%%%%%%%%%%%%%%%%%%%%%%%%%%%%%%%%%%%%%%%%%%%%%%%%%%%%%%%%%%%%%%%%%%%%%%%%%%%%%%%%%%%%%%%%%%%%%%%%%%%%%%%%%%%%%%%%%%%%%%%%%%%%%%%%%%%%%%%%%%%%%%%%%%%%%%%%%%%%%%%%%%%%%%%%%%%%%

%We have now a minimal generating set for the \(|G|\)-homogeneous $G^\flat$-invariant matrices (over the primary invariants $\bar{\alpha}$ and $\bar{\beta}$); 
%We now have a basis for \(|G|\)-homogeneous $G$-invariant matrices; this is a minimal generating set for the module of $G$-invariant matrices (over the primary invariants $\alpha^{d_G}$ and $\beta^3$) whose homogeneous elements have degree divisible $|G|$. 
At this stage one could in principle fix any $G$-orbit (exceptional or generic), divide the matrices by the corresponding invariant form (the invariant form vanishing at those points) in order to obtain
zero-homogeneous matrices depending on \(\lambda=X/Y\). 
In this paper we only consider the case of exceptional orbits. This correspond to dividing the matrices by \(\alpha^{d_G}\), \(\beta^3\) or \(\gamma^2\), where $d_G=3,4$ and $5$ for $\bbbt$, $\bbbo$ and $\bbbi$, respectively. These then form a minimal generating set (over the 
invariant \(\AIJ[I][\beta][\alpha]\), \(\AIJ[I][\alpha][\beta]\), \(\AIJ[I][\alpha][\gamma]\), respectively -- see next Section \ref{sec:af}). 
We denote this minimal generating set by $\langle \hat{M}^1\,,\cdots\,,\hat{M}^{n^2-1}\rangle$; it generates the \(G\)--Automorphic Lie Algebra.

\begin{Definition}
By \(\alias{\mf{sl}(V)}{\zeta}{\alpha}{G}\)
we denote the \(G\)--Automorphic Lie Algebra based on \(\mf{g}=\rlie{s}{}{V}\)
with homogeneous coefficients having poles at the $G$-orbit $\Gamma_\zeta$, 
or, equivalently, at the zeros of \(\zeta=\alpha\), \(\beta\) or \(\gamma\).
\end{Definition}

\begin{Remark}[Towards Lax Pairs]
Defining a Lax operator \( L\) \(\in\alias{\mf{sl}(V)}{\zeta}{\alpha}{G}
\) gives us a \( G\)--invariant (automorphic) Lax operator and therefore a \( G\)--invariant (automorphic) integrable systems of equations (see \cite{lm_jpa04}).
\end{Remark}

%%%%%%%%%%%%%%%%%%%%%%%%%%%%%%%%%%%%%%%%%%%%%%%%%%%%%%%%%%%%%%%%%%%%
\subsection{Zero-homogeneous automorphic functions}
\label{sec:af}
For the \(\bbbt\bbbo\bbbi\)-groups, the basic relative invariants \(\alpha,\beta \) and \(\gamma\) have a relation of the form
\[
C_\zeta^\alpha \alpha^{d_G} +C_\zeta^\beta \beta^3 +C_\zeta^\gamma \gamma^2=0, \quad \zeta=\alpha,\beta,\gamma.
\]
Dividing this relation by \(\zeta^{\nu_\zeta}\), with \(\nu_\alpha=d_G\), \(\nu_\beta=3\), \(\nu_\gamma=2\), and fixing \(C_\zeta^\zeta=1\), we obtain
a linear relation between two zero-homogeneous invariants \(\AIJ[I][][\zeta]\) and \(\AIJ[J][][\zeta]\).
For instance, with \(\zeta=\alpha\), 
the relation is 
\[
1+\AIJ[I][\beta][\alpha]+\AIJ[J][\gamma][\alpha]=0.
\]
The explicit definition in this case is \(\AIJ[I][\beta][\alpha]=C_\alpha^\beta \frac{\beta^{3}}{\alpha^{d_G}}\) and
\(\AIJ[J][\gamma][\alpha]=C_\alpha^\gamma \frac{\gamma^{2}}{\alpha^{d_G}}\).
Or, with \(\zeta=\beta\), 
the relation is 
\[
\AIJ[I][\alpha][\beta]+1+\AIJ[J][\gamma][\beta]=0.
\]
The explicit definition in this case is \(\AIJ[I][\alpha][\beta]=C_\beta^\alpha \frac{\alpha^{d_G}}{\beta^3}\) and
\(\AIJ[J][\gamma][\beta]=C_\beta^\gamma \frac{\gamma^{2}}{\beta^3}\).

A relative invariant \(\zeta\) is identified with the orbit of a specific group element \(g_\zeta\) of order \(\nu_\zeta\), such that \(d_\zeta\nu_\zeta=|G|\).
For each representation $W$ of the group one defines \(\kappa_\zeta=\nicefrac{1}{2}\,\codim W^{\langle g_\zeta\rangle}\). 
In Table \ref{tab:codimg} (Section \ref{sec:Invariants}) the numbers \( \coa,\cob, \coc\) are given for different Lie algebras $W=\mf{g}(V)$.

We use \(\AIJ[J][][]\) for the invariant related to the relative invariant with the lowest \(\kappa\).
If there is equality, for instance if \(\coa=\cob\), then in \(\AIJ[I][\alpha][\gamma]\) and \(\AIJ[J][\beta][\gamma]\),
one can interchange the \(\AIJ[I][][]\) and the \(\AIJ[J][][]\).
The fully adorned \(\AIJ[J][\gamma][\beta]\) is overloaded with indices and one can replace it by \(\AIJ[J][][\beta]\),
or one could have simply called it \(\AIJ[I][\gamma][\beta]\). The reason for the use of the \(\AIJ[J][][]\) notation at all, is that we later on want to be able to make statements about the Chevalley normal form (see Section \ref{sec:Chev}) and their isomorphism.

\begin{Remark}
In the \(\mf{sl}(V)\) case, the relative invariant of the highest degree identifies a lowest \(\kappa\) (there could be more than one, see Table \ref{tab:codimg}). In other words, $\kappa_\zeta \leq \kappa_{\zeta'}$ if $\deg_\zeta \geq \deg_{\zeta'}$. 
\end{Remark}

%%%%%%%%%%%%%%%%%%%%%%%%%%%%%%%%%%%%%%%%%%%%%%%%%%%%%%%%%%%%%%%%%%%%%%%%%%%%%%%%%%%%%%%%%%%%%%%%%%%%%%%%%%%%%%%%%%%%%%%%%%%%%%%%%%%%%%%%%%%%%%%%%%%%%%%%%%%%%%%%%%%%%%%%%%%%%%%%%%%%%%%%%%%%%%%%%%%%%%%%%%%%%%%%%%%%%%%%%%%%%%
%%%%%%%%%%%%%%%%%%%%%                MATRICES IF INVARIANTS

\section{Matrices of invariants}
\label{sec:moi}
By constructing a basis of invariant vectors for each irreducible representation (see Tables \ref{tab:vecbasest}-\ref{tab:vecbasesi}), we prepare ourselves for the next step, the computation of the \emph{matrices of invariants}: we change from the standard basis of an irreducible representation to the basis of invariant vectors. The matrices in the new basis will now have their coefficients in the space of invariants.
There are two reasons to make this change of basis.\\
The first is computational: it is much easier to work with the matrices of invariants, e.g.~when computing the structure constants.
In the computation of the Chevalley normal form for the Lie algebra we need to find eigenvalues (see Section \ref{sec:Chev}) and this is easier in this new basis.
The second reason is that when the algebra is in Chevalley normal form, it will be natural to ask
whether the algebra is isomorphic to another case.
This \emph{isomorphism question} is difficult to settle, unless one has an explicit way to go from one case to the next. And this is exactly what the matrices of invariants provide.
When everything is in Chevalley normal form, the matrices of invariants have been reduced to elementary matrices with invariant coefficients. 
To analyse them
one can now use permutations and scalings with \(\mathbb{I}\) and \(\mathbb{J}\).
This limits the problem enough that one can finally answer the isomorphism question.

\begin{center}
\begin{table}[H]
\begin{center}
\begin{tabular}{|c|c|c|c|c|} \hline
irrep &   Molien & ground form & invariant  vector & multiplier \\ 
\hline 
\hline
$\bbbt_{2}$&$t^4$&$\mf{a}_{2}^4$&$\mf{v}_{2}^4=\mf{a}_{2}^4$&$1$\\
\hline
$\bbbt_{3}$&$t^4$&$\mf{a}_{3}^4$&$\mf{v}_{3}^4=\mf{a}_{3}^4$&$1$\\
\hline
$\bbbt_{4}^\flat$&$t$&$\mf{a}_{4}^1$&$\mf{v}_{4}^1=\mf{a}_{4}^1$&$\balphat$\\
&$t^7$&&$\mf{v}_{4}^7=(\bbetat,\mf{a}_{4}^1)^1$&$1$\\
\hline
$\bbbt_{5,6}^\flat$&$t^3$&$\mf{a}_{5,6}^3$&$\mf{v}_{5,6}^3=\mf{a}_{5,6}^3$&$\bbetat$\\
&$t^5$&&$\mf{v}_{5,6}^5=(\balphat,\mf{a}_{5,6}^3)^2$&$\balphat$\\
\hline
$\bbbt_7$&$t^2$&$\mf{a}_7^2$&$\mf{v}_7^2=\mf{a}_7^2$&$\bbetat$\\
&$t^4$&&$\mf{v}_7^4=(\balphat,\mf{a}_7^2)^2$&$\balphat$\\
&$t^{10}$&&$\mf{v}_7^{10}=(\bgammat,\mf{a}_7^2)^2$&$1$\\
\hline 
\end{tabular}
\end{center}
\caption{Bases of invariant  vectors for $\bbbt^\flat$.}
\label{tab:vecbasest}
\end{table}
\end{center}
\begin{Example} 
In the case of
\(\rlie{s}{2}{\bbbt_5^\flat}\)
one has
the invariant matrix
\[
%\psi_5(\mf{A}_7^2)= 
\vartheta^{\rlie{s}{2}{\bbbt_5^\flat}}(\mf{A}_7^2)=
\begin{pmatrix}
  Y^2-2(1+\omega_3) XY+(\omega_3-1) X^2 & 
-Y^2 +2\omega_3 XY +(\omega_3+1) X^2 \\ 
Y^2  -2 (2+\omega_3) XY+3(\omega_3+1) X^2&
 - Y^2 +2(1+\omega_3) XY+(1-\omega_3) X^2 
 \end{pmatrix}
%\begin{pmatrix} 2/3(- Y^2 - (2+\omega_3)XY + \omega_3 Y^2  )
%& -2/3(  (1+2\omega_3) Y^2 - (2 +\omega_3) XY + (1-\omega_3)X^2) \\ 
% - 2/3(2+\omega_3)Y^2 - 2\omega_3 XY &
%  2/3Y ((1-\omega_3) Y + (2+\omega_3)X)\end{pmatrix}
\]
(cf.~Example \ref{ex:psiGround}).
We consider the basis of invariant  vectors
\[
\vartheta^{\bbbt_5^\flat}(\mf{v}_5^3)=
%\begin{pmatrix}Y^3 - 4XY^2 + 5X^2Y - X^3 + \omega_3   X Y^2 + \omega_3 X^2 Y \\ Y^3 + 2 X Y^2 - 3 X^2 Y + X^3 +
%      \omega_3 X Y^2 - 3 \omega_3 X^2 Y + 2 \omega_3 X^3\end{pmatrix},
\begin{pmatrix}Y^3 +(\omega_3 - 4)XY^2 + (5+\omega_3) X^2Y - X^3   \\ 
Y^3 + (2+\omega_3) X Y^2 - 3(1+\omega_3) X^2 Y  +(1+ 2 \omega_3) X^3\end{pmatrix},
\]
\[
\vartheta^{\bbbt_5^\flat}(\mf{v}_5^5)=245760 \begin{pmatrix} XY^4 - 2 (1+\omega_3)X^2Y^3  + 2\omega_3 X^3Y^2  + X^4Y\\
XY^4 - 2(2+\omega_3)X^2Y^3+ 4(1+\omega_3 )X^3Y^2 - (1+2\omega_3 X^4Y)\end{pmatrix}\,.
\]
After making everything zero-homogeneous, the matrix of invariants of $\mf{M}_7^{4}=(\balphat,\mf{A}_7^{2})^2$ becomes
\[
 5898240 \begin{pmatrix}- 1 &   983040\,\AIJ[J][\gamma][\alpha] \\ 6/5898240 & 1 \end{pmatrix}.
\]
\end{Example}
\begin{center}
\begin{table}[H]
\begin{center}
\begin{tabular}{|c|c|c|c|c|} \hline
irrep &   Molien & ground form & invariant  vector & multiplier \\ 
\hline 
\hline
$\bbbo_{2}$&$t^6$&$\mf{a}_{2}^6$&$\mf{v}_{2}^6=\mf{a}_{2}^6$&$1$\\
\hline  
$\bbbo_{3}$&$t^4$&$\mf{a}_{3}^4$&$\mf{v}_{3}^4=\mf{a}_{3}^4$&$\bbetao$\\
&$t^8$&&$\mf{v}_{3}^8=(\balphao,\mf{a}_{3}^4)^2$&$\balphao$\\
\hline
$\bbbo_{4}^\flat$&$t$&$\mf{a}_{4}^1$&$\mf{v}_{4}^1=\mf{a}_{4}^1$&$\balphao^2$\\
&$t^{17}$&&$\mf{v}_{4}^{17}=(\bgammao,\mf{a}_{4}^1)^1$&$1$\\
\hline
$\bbbo_{5}^\flat$&$t^5$&$\mf{a}_{5}^5$&$\mf{v}_{5}^{5}=\mf{a}_{5}^5$&$ \balphao$\\
&$t^{13}$&$$ &$\mf{v}_{5}^{13}=(\bbetao,\mf{a}_{5}^5)^2$&$1$\\
\hline
$\bbbo_6$&$t^4$&$\mf{a}_6^4$&$\mf{v}_6^4=\mf{a}_6^4$&$\balphao^2$\\
&$t^8$&&$\mf{v}_6^8=(\balphao,\mf{a}_6^4)^2$&$\bbetao$\\
&$t^{12}$&&$\mf{v}_6^{12}=(\bbetao,\mf{a}_6^4)^2$&$\balphao$\\
\hline
$\bbbo_7$&$t^2$&$\mf{a}_7^2$&$\mf{v}_7^2=\mf{a}_7^2$&$\balphao^2$\\
&$t^6$&&$\mf{v}_7^6=(\balphao,\mf{a}_7^2)^2$&$\bbetao$\\
&$t^{10}$&$$&$\mf{v}_7^{10}=(\bbetao,\mf{a}_7^2)^2$&$\balphao$\\
\hline
$\bbbo_8^\flat$&$t^5$&$\mf{a}_8^3$&$\mf{v}_8^5=(\balphao,\mf{a}_8^3)^3$&$\balphao^2$\\
&$t^9$&&$\mf{v}_8^9=(\balphao,\mf{a}_8^3)^1$&$\bbetao$\\
&$t^9$&&$\mf{v}_8^9=(\bbetao,\mf{a}_8^3)^3$&$\bbetao$\\
&$t^{13}$&&$\mf{v}_8^{13}=(\bbetao,\mf{a}_8^3)^1$&$\balphao$\\
\hline 
\end{tabular}
\end{center}
\caption{Bases of invariant  vectors for $\bbbo^\flat$. }
\label{tab:eqveco}
\end{table}
\end{center}
 \begin{center}
\begin{table}[h!]
\begin{center}
\begin{tabular}{|c|c|c|c|c|} \hline
%Irrep &   Molien &Multiplier & Ground form &Transvectant  \\ 
irrep &   Molien  & ground form & invariant  vector & multiplier \\ 
\hline 
\hline
$\bbbi_2^\flat$&$t^{11}$&$\mf{a}_2^1$&$\mf{v}_2^{11}=(\alphai,\mf{a}_2^1)^1$&$\alphai^4$\\
&$t^{19}$&&$\mf{v}_2^{19}=(\betai,\mf{a}_2^1)^1$&$\betai^2$\\
\hline
$\bbbi_{3}^\flat$&$t^{13}$&$\mf{a}_{3}^7$&$\mf{v}_{3}^{13}=(\alphai,\mf{a}_{3}^7)^3$&$\alphai^4$\\
&$t^{17}$&&$\mf{v}_{3}^{17}=(\betai,\mf{a}_{3}^7)^1$&$\alphai^2\betai$\\
\hline
$\bbbi_{4}$&$t^{6}$&$\mf{a}_{4}^6$&$\mf{v}_4^6=\mf{a}_4^6$&$\betai^2$\\
&$t^{10}$&&$\mf{v}_{4}^{10}=(\alphai,\mf{a}_{4}^6)^4$&$\alphai^3$\\
&$t^{14}$&&$\mf{v}_{4}^{14}=(\alphai,\mf{a}_{4}^6)^2$&$\alphai\betai$\\
\hline
$\bbbi_{5}$&$t^2$&$\mf{a}_{5}^2$&$\mf{v}_5^2=\mf{a}_{5}^2$&$ \betai^2$\\
&$t^{10}$&&$\mf{v}_{5}^{10}=(\alphai,\mf{a}_{5}^2)^2$&$\alphai\betai$\\
&$t^{18}$&&$\mf{v}_{5}^{18}=(\betai,\mf{a}_{5}^2)^2$&$\alphai^2$\\
\hline
$\bbbi_6$&$t^8$&$\mf{a}_6^6$&$\mf{v}_6^8=(\alphai,\mf{a}_6^6)^5$&$\betai^2$\\
&$t^{12}$&&$\mf{v}_6^{12}=(\alphai,\mf{a}_6^6)^3$&$\alphai^3$\\
&$t^{16}$&&$\mf{v}_6^{16}=(\alphai,\mf{v}_6^6)^1$&$\alphai\betai$\\
&$t^{24}$&&$\mf{v}_6^{24}=(\betai,\mf{v}_6^6)^1$&$\alphai^2$\\
\hline
$\bbbi_7^\flat$&$t^3$&$\mf{a}_7^3$&$\mf{v}_7^3=\mf{a}_7^3$&$\alphai^4$\\
&$t^{11}$&&$\mf{v}_7^{11}=(\alphai,\mf{a}_7^3)^2$&$\betai^2$\\
&$t^{19}$&&$\mf{v}_7^{19}=(\betai,\mf{a}_7^3)^2$&$\alphai\betai$\\
&$t^{27}$&&$\mf{v}_7^{27}=(\gammai,\mf{a}_7^3)^3$&$\alphai^2$\\
\hline
$\bbbi_8$&$t^4$&$\mf{a}_8^4$&$\mf{v}_8^4=\mf{a}_8^4$&$\alphai^4$\\
&$t^8$&&$\mf{v}_8^8=(\alphai,\mf{a}_8^4)^4$&$\alphai\betai$\\
&$t^{12}$&&$\mf{v}_8^{12}=(\alphai,\mf{a}_8^4)^2$&$\betai^2$\\
&$t^{16}$&&$\mf{v}_8^{16}=(\betai,\mf{a}_8^4)^4$&$\alphai^3$\\
&$t^{20}$&&$\mf{v}_8^{20}=(\betai,\mf{a}_8^4)^2$&$\alphai\betai$\\
\hline
$\bbbi_9^\flat$&$t^7$&$\mf{a}_9^5$&$\mf{v}_9^7=(\alphai,\mf{a}_9^5)^5$&$\alphai^4$\\
&$t^{11}$&&$\mf{v}_9^{11}=(\alphai,\mf{a}_9^5)^3$&$\alphai^2\betai$\\
&$t^{15}$&&$\mf{v}_9^{15}=(\alphai,\mf{a}_9^5)^1$&$\betai^2$\\
&$t^{15}$&&$\mf{w}_9^{15}=(\betai,\mf{a}_9^5)^5$&$\betai^2$\\
&$t^{19}$&&$\mf{v}_9^{19}=(\betai,\mf{a}_9^5)^3$&$\alphai^3$\\
&$t^{23}$&&$\mf{v}_9^{23}=(\betai,\mf{a}_9^5)^1$&$\alphai\betai$\\
\hline 
\end{tabular}
\end{center}
\caption{Bases of invariant  vectors for $\bbbi^\flat$.}
\label{tab:vecbasesi}
\end{table}
\end{center}

%%%%%%%%%%%%%%%%%%%%%%%%%%%%%%%%%%%%%%%%

In sections \ref{sec:it}--\ref{sec:ii} we produced the invariant, zero homogeneous matrices $ \hat{M}^1\,,\cdots\,,\hat{M}^{n^2-1}$. 
We now produce a list of invariant, homogeneous vectors $\inv{v}_1$,...,$\inv{v}_n$, by taking an invariant vector $\mf{v}$ multiplied by the corresponding invariant multiplier (see Tables \ref{tab:vecbasest}-\ref{tab:vecbasesi}). 
The resulting set $\{\inv{v}_i\}$ generates  the invariant vectors over the polynomial invariants. 
If $\bbbt_i^\flat$ is not a representation of $\bbbt$, there are no invariants in $\bbbt_i^\flat\otimes \splitk[X,Y]$ of degree $|\bbbt|$. In this case
one can try as an alternative the lowest degree for which the dimension is the same as the dimension of the irreducible representation. 
This is listed in Table \ref{tab:vecbasest}-\ref{tab:vecbasesi}. 

When we compute \(\inv{M}^j \inv{v}_{i}\) the result is an invariant  vector, that is, a linear combination with invariant coefficients of degree \(|G|\) of the basic vectors $\inv{v}_1$,...,$\inv{v}_n$.
We denote the coefficient of \( \inv{v}_{k}\) by \( \psi(\inv{M}^j)_{k,i}\) and obtain the following representation of \(\inv{M}^j\):
\[
\inv{M}^j \inv{v}_{i}=\sum_{k=1}^n\, \psi(\inv{M}^j)_{k,i}\, \inv{v}_{k}\,.
\]
This defines the matrix $(\psi(\inv{M}^j))_{k,i}$ which is called the \emph{matrix of invariants} corresponding to $\inv{M}^j$, and we extend $\psi$ linearly. We check that the resulting $n^2 -1$ matrices $\psi(\inv{M}^j)$ are linearly independent over \(\splitk[\mathbb{I}]\). 
Observe that the matrices $\psi(\inv{M}^j)$  are not themselves invariants under the standard action, as defined in Section \ref{sec:motivations}. 
Consider two invariant matrices $\inv{M}$ and $\inv{N}$
\begin{eqnarray*}
\inv{N}\inv{M}\inv{v}_{i} &=& \sum_{k}\, \inv{N}\,\psi(\inv{M})_{k,i}\, \inv{v}_{k}=\sum_{k}\psi(\inv{M})_{k,i}\sum_{l}\psi(\inv{N})_{l,k}\, \inv{v}_{l}
\\&=&\sum_{l}\sum_{k}\psi(\inv{N})_{l,k}\,\psi(\inv{M})_{k,i}\, \inv{v}_{l}=\sum_{l} (\psi(\inv{N})\psi(\inv{M}))_{l,i}\, \inv{v}_{l}\,.
\end{eqnarray*}
It follows then that 
\[
[\inv{N} , \inv{M}] \inv{v}_{i}=\sum_{l}[\psi(\inv{N}),\psi(\inv{M})]_{l,i} \,\inv{v}_{l}
\]
that is,
\[\psi([\inv{N} , \inv{M}])=[\psi(\inv{N}) ,\psi(\inv{M}) ],
\]
in other words, $\psi$ is a Lie algebra homomorphism.

From the computational point of view and in preparation of the next step (namely the compuation of  Chevalley normal forms), once one has matrices with invariant coefficients it makes sense
to simplify them eliminating as many \(\mathbb{I}\)s as possible by taking linear combinations,
while taking care not to change those matrices of invariants with a \(\mathbb{I}\)-independent
characteristic polynomial (see the next Section \ref{sec:Chev}).

%%%%%%%%%%%%%%%%%%%%%%%%%%%%%%%%%%%%%%%%%%%%%%%%%%%%%%%%%%%%%%%%%%%%%%%%%%%%%%%%%%%%%%%%%%%%%%%%%%%%%%%%%%%%%%%%%%%%%%%%%%%%%%%%%%%%%%%%%%%%%%%%%%%%%%%%%%%%

\section{Chevalley normal form for Automorphic Lie Algebras}
\label{sec:Chev}

Even the 
most detailed Lie algebra books are a bit vague when it comes down to put a concrete Lie algebra
into Chevalley normal form over \(\bbbc\). 
In \cite{MR559927} the theory is derived for arbitrary fields, so this is getting closer to our problem.
One can imagine how much is written on how to do this over a polynomial ring.
In Bourbaki \cite{mr2001g:17006} the switch from the general set up to fields is quickly made in Chapter 1 after Section 3 (even though  this is relaxed again at times later on).

The original Lie algebra \(\mf{sl}(V)\) is of classical type and belongs to an isomorphism class \(A_h
\), with a corresponding $h\times h$ Cartan matrix.
Following the way the Chevalley normal form is computed over \(\bbbc\), the first task is to collect \(h\) 
commuting semisimple elements from the Lie algebra,
the \emph{Cartan subalgebra} or CSA (see e.g.~\cite{ful91a,MR1920389}), denoted by \(\mf{h}\). %CHECK THEY ARE IN

\begin{Remark}\label{rem:CSA}
In a simple Lie algebra over \(\bbbc\), a generic element will be semisimple
and one can construct a CSA around it.
In the automorphic case one requires not only semisimplicity but also that the eigenvalues of the matrices in the CSA are in the field extension \(\splitk\),
thus restricting the choice considerably.
In this sense one could say that Automorphic Lie Algebras are easier to deal with,
which is also reflected by the fact that, at least in the \(\mf{sl}(V)\) case, 
the characteristic equations could always be solved explicitly over \(\splitk\).
Working over the field extension of the irreducible representations of the group makes it easier to find explicit solutions,
even when the degree of the polynomial is five or six.
Of course, the computations are made more intricate by the fact that one works
not over \(\splitk\), but over \(\splitk[\AIJ[I][][\Gamma]]\).
\end{Remark}
The construction of the CSA \(\mf{h}\) starts with the search of a semisimple element in the Lie algebra of matrices of invariants such that all its eigenvalues are in  \(\splitk\).  Once such a matrix is found, it is tested for semisimplicity. This is done by considering the reduced characteristic polynomial, and checking that the matrix itself satisfies it 
(in the usual theory over \(\bbbc\) one looks for an element without degenerate eigenvalues, but this strategy proved not practical in our case).
Such an element, once found, can be put in diagonal form. The eigenvalue computation is done by Singular \cite{MR2363237}. 
We call this element \(h_1\) and store it in \(\mf{h}\).
We then proceed inductively.
We find a semisimple element \(h_i\) commuting with all the elements in \(\mf{h}\), but $\splitk$-linearly independent of the elements in \(\mf{h}\).
We then diagonalise \(h_i\) (leaving the other elements in \(\mf{h}\) diagonal).
Then we add \(h_i\) to \(\mf{h}\). We stop when we have $h$ elements in \(\mf{h}\). By construction, they are all linearly independent and diagonal matrices.
Next, one considers a $\splitk$-linear combination of these matrices to insure that their eigenvalues are constants and equal to the one prescribed by the Cartan matrix \cite[Plate I]{MR1890629}
 (corresponding to $\sln$ in the classification of simple Lie algebras).
 
We now give an algorithm to put the elements in \(\mf{h}\) in canonical form in the case of \(\sln\).
To this end, for every element $h_j$ in \(\mf{h}\) one computes the differences of the subsequent eigenvalues \[\alpha_i(h_j)= \lambda_i^j -\lambda_{i+1}^j,\quad i=1,\ldots,n-1.\] The canonical basis is the set of elements $H_k=\sum_{j=1}^{n-1}c_{j,k} h_j$ satisfying $\alpha_i(H_k)=a_{i,k}$, where $a_{i,k}$ are entries of the Cartan matrix of $A_{n-1}$. Since the matrix $(\alpha_i(h_j))_{i,j}$ is nondegenerate one can solve $c_{j,k}$, for each fixed $k$, in the equation \[\alpha_i(H_k)=\alpha_i(\sum_{j=1}^{n-1}c_{j,k} h_j)=\sum_{j=1}^{n-1}\alpha_i(h_j) c_{j,k} =a_{i,k}\]
and obtain $H_k$.
\begin{Example}
Consider, as an example, the case \(\alian{\bbbi_4}{G}{}\); 
one finds the elements
\(h_1=\diag\{-1,1,0\}\) and \(h_2=\diag\{1,0,-1\}\in\mf{sl}_3\). Let $A$ be the $\mf{sl}_3$ Cartan matrix and let \(E_{i,i}\) be the diagonal elementary matrix with \(1\) in the \(i\)th position.
We would like to have the CSA basis in the standard form \(H_1=E_{1,1}-E_{2,2}\) and \(H_2=E_{2,2}-E_{3,3}\).
We compute
\[
\alpha(h)=\begin{pmatrix} \alpha_1(h_1) & \alpha_1(h_2) \\ \alpha_2(h_1)& \alpha_2(h_2)\end{pmatrix}=\begin{pmatrix} -2 & 1 \\ 1 & 1 \end{pmatrix}\,.
\]
The matrix \(c\)  is then 
\[ \alpha(h)^{-1} A = -\frac{1}{3} \begin{pmatrix} 1 & -1 \\ -1 & -2 \end{pmatrix}\begin{pmatrix} 2 & -1 \\ -1 & 2 \end{pmatrix}=-\frac{1}{3} \begin{pmatrix} 3 & -3 \\ 0 & -3 \end{pmatrix}=\begin{pmatrix} -1 & 1 \\ 0 & 1 \end{pmatrix}\,,
\] 
i.e.~$H_1=-h_1$ and  $H_2=h_1+h_2$. $H_1$ and $H_2$ form a realisation of $A$ in the sense of Kac \cite{MR1104219}.
\end{Example}

\begin{Remark}\label{rem:alpha}
Here and in the following we will use the symbol $\alpha$ to denote the roots of the Lie algebra. This should be clear from the context and should not create confusion with the invariants introduced in the previous sections. 
\end{Remark}

Let $M_{\alpha_j}$ be a $\splitk[\AIJ[I][][\Gamma]]$-linear combination of the generators of the ALiA under investigation;
one computes them by solving
\[
[H_i, M_{\pm\alpha_j }]=\pm a_{j,i}M_{\pm\alpha_j }\,.
\]
The $M_{\alpha_j}$ are called weight vectors (of weight \(\alpha_j\)). 
Next one computes $[M_{\pm\alpha_j},M_{\pm\alpha_k}]$, $\alpha_j \neq \alpha_k$; 
if the commutator is not zero, the equation
\[
[H_i, M_{\pm(\alpha_j+\alpha_k) }]=\pm( a_{j,i}+a_{k,i}) M_{\pm(\alpha_j+\alpha_k)}\, 
\]
is solved.
Recursively, one computes all the weight vectors in the Chevalley normal form.
When all weight vectors have been computed, it is explicitly checked that the transformation from the old generators to this new basis is invertible over $\splitk[\AIJ[I][][\Gamma]]$.

Notice that we do not have an existence proof of a Chevalley normal form, however the computation finds always a suitable set of generators such that the algebra is in normal form, so the existence is proven by construction. Since we restrict ourselves to irreducible representations, we only have a finite number of cases to consider.

In the next Sections \ref{sec:dim3}--\ref{sec:dim35} we list Chevalley normal forms for \((\mf{sl}(V)\otimes \splitk(\lambda))_{\zeta}^{G}\) and we prove the following main result:
\begin{Theorem}\label{teo:AL}
Let $V$ be an irreducible representations of $G^\flat$ and $V^\prime$ be an irreducible representation of $G^{\prime\flat}$, where $G$ and $G^\prime$ are isomorphic to the tetrahedral group $\bbbt$, the octahedral group $\bbbo$ or the icosahedral  group $\bbbi$. 
Let $\zeta$ and $\zeta^\prime$ be $G$, $G^\prime$- classical relative invariants (see Example \ref{ex:cit}); 
then \((\mf{g}(V)\otimes \splitk(\lambda))_{\zeta}^{G}\)  is isomorphic to \((\mf{g'}(V')\otimes \splitk'(\lambda))_{\zeta'}^{G'}\)  if and only if $\mf{g}(V)$ is isomorphic to $  \mf{g}^\prime(V^\prime)$ as Lie algebra, 
where \(\mf{g}, \mf{g}^\prime=\mf{sl}\), and \(\kappa_\zeta=\kappa_{\zeta'}\), where the \(\kappa_\zeta\)s can be found in Table \ref{tab:codimg}.
\end{Theorem}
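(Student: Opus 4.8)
The plan is to prove both implications by reducing each Automorphic Lie Algebra to the Chevalley normal form constructed in Section \ref{sec:Chev} and then reading off a complete set of isomorphism invariants from that form. The guiding principle, already visible in the tables of Section \ref{sec:moi}, is that once the algebra is presented in the basis of matrices of invariants and brought into normal form, its structure constants become \emph{group independent}: they lie in $\bbbq[\AIJ[I][][\Gamma]]$, with $\AIJ[I][][\Gamma]$ playing the role of a single transcendental automorphic function (the second automorphic function being an affine function of it by Section \ref{sec:af}, so one generator suffices), and the Cartan matrix is exactly that of $\mf{sl}_n$, i.e.\ of type $A_{n-1}$. Thus the entire normal form is determined by two pieces of data: the integer $n$, equivalently the Cartan type, and the pattern in which the powers of $\AIJ[I][][\Gamma]$ enter the weight-vector relations, which I claim is governed by $\kappa_\zeta$.

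First I would carry out the normal-form reduction uniformly. Since we restrict to irreducible representations $V$, there are only finitely many cases, and the inductive construction of the Cartan subalgebra $\mf{h}$ and of the weight vectors $M_{\pm\alpha}$ described in Section \ref{sec:Chev} terminates in each of them (as noted there, existence is proven by construction). The output is, in every case, a Lie algebra over $\splitk[\AIJ[I][][\Gamma]]$ with generators $H_1,\dots,H_{n-1}, M_{\pm\alpha}$ obeying $[H_i,M_{\pm\alpha_j}]=\pm a_{j,i}M_{\pm\alpha_j}$ and brackets $[M_\alpha,M_\beta]=c_{\alpha\beta}M_{\alpha+\beta}$ with $c_{\alpha\beta}\in\bbbq[\AIJ[I][][\Gamma]]$. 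The key structural lemma I must establish is that the $\AIJ[I][][\Gamma]$-degrees of the $c_{\alpha\beta}$ are fixed by $\kappa_\zeta=\tfrac{1}{2}\codim \mf{g}(V)^{\langle g_\zeta\rangle}$: the order of vanishing forced on a weight vector at a point of $\Gamma_\zeta$ is controlled by the eigenvalue of the stabiliser generator $g_\zeta$ on the corresponding root space, and the total such vanishing, summed over root spaces and halved, is precisely the codimension data defining $\kappa_\zeta$.

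For the \emph{if} direction, suppose $\mf{g}(V)\cong\mf{g}'(V')$, so $n=n'$, and $\kappa_\zeta=\kappa_{\zeta'}$. By the structural lemma the two normal forms carry the identical Cartan matrix of type $A_{n-1}$ and the identical $\AIJ[I][][\Gamma]$-degree pattern, hence identical structure constants in $\bbbq[\AIJ[I][][\Gamma]]$. The assignment $H_i\mapsto H_i'$, $M_{\pm\alpha}\mapsto M'_{\pm\alpha}$, $\AIJ[I][][\Gamma]\mapsto\AIJ[I][][\Gamma]'$ is then a Lie algebra isomorphism, and since the structure constants are rational it is already defined over $\bbbq$ and extends over any field containing both $\splitk$ and $\splitk'$. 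For the \emph{only if} direction I would argue that $n$ and $\kappa_\zeta$ are genuine isomorphism invariants: tensoring with the fraction field $\splitk(\AIJ[I][][\Gamma])$ yields a form of $\mf{sl}_n$, whose type $A_{n-1}$, and hence $n$, is preserved by any abstract isomorphism, giving $\mf{g}(V)\cong\mf{g}'(V')$; and the number of automorphic functions present in the normal form is shown in Section \ref{sec:Invariants} to be an invariant of the algebra, which pins down $\kappa_\zeta$ through Table \ref{tab:codimg}.

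The hard part will be the \emph{only if} direction, and specifically the claim that $\kappa_\zeta$ can be detected intrinsically. The normal form is canonical only up to the filtered transformation group $\GL(\splitk^{n^2-1})\oplus\bigoplus_{d\ge1}\End(\splitk^{n^2-1})\AIJ[I][d][\Gamma]$, whereas an abstract Lie algebra isomorphism need not respect the filtration a priori; one must therefore show that the $\AIJ[I][][\Gamma]$-grading data survive under arbitrary isomorphisms and are not artefacts of the particular computation. Establishing that the count of automorphic functions is basis independent, which is the content of Section \ref{sec:Invariants}, is thus the crux, and the structural lemma relating this count to $\codim \mf{g}(V)^{\langle g_\zeta\rangle}$ is where the representation theory of the stabilisers $\langle g_\zeta\rangle$ does the real work.
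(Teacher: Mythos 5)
Your overall strategy (reduce to the Chevalley normal form of Section \ref{sec:Chev}, then argue that the Cartan type together with $\kappa_\zeta$ determines the algebra) is the same as the paper's, but the decisive step is different, and as written it has a gap. You claim that once $n=n'$ and $\kappa_\zeta=\kappa_{\zeta'}$ are known, the two normal forms have ``identical structure constants'' so that the obvious assignment $H_i\mapsto H_i'$, $M_{\pm\alpha}\mapsto M'_{\pm\alpha}$ is an isomorphism. That does not follow: $\kappa_\zeta$ only fixes the \emph{number} of $\AIJ[I][\zeta'][\zeta]$s and $\AIJ[J][\zeta''][\zeta]$s appearing among the weight vectors, not their \emph{distribution} over the root spaces. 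The computed Chevalley models for a fixed $(\dim\mf{sl}(V),\zeta)$ genuinely differ from one another (compare, e.g., the four distinct models for \(\aliam{V}{G}{3}{\alpha}\) in Table B1), and the paper must exhibit, case by case, an explicit intertwining operator --- a root-system permutation combined with rescalings by $\mathbbm{I}$ and $\mathbbm{J}$ --- conjugating each model to the reference model $\|A_{n-1}^{(k,l)}\|$. That these intertwiners exist is exactly the nontrivial content of Appendix \ref{app:Chev}; indeed the paper remarks that the polynomial $K_{\mf{b}}(\mf{sl}_n)_\zeta$ records how the automorphic functions are distributed, i.e.\ carries more information than the pair $(k,l)$. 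Your proof as it stands skips this matching step entirely.

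The second weak point is your ``structural lemma'' asserting that the $\AIJ[I][][\Gamma]$-degree pattern of the structure constants is controlled by the eigenvalues of the stabiliser generator $g_\zeta$ on the root spaces, with total $\nicefrac{1}{2}\codim\mf{g}(V)^{\langle g_\zeta\rangle}$. This is the right heuristic (it is the content of the formula quoted in Section \ref{sec:Invariants} from \cite{knibbeler2014, knibbeler2015isotypic}), but inside this paper it is not proved; the theorem is established instead by the finite explicit computation of all normal forms and intertwiners. If you want to avoid the case-by-case verification you must actually prove that lemma, and even then you still need the distribution-matching argument above. Your treatment of the ``only if'' direction (recovering $n$ from the generic fibre and $\kappa_\zeta$ from the invariant count of automorphic functions) is consistent with how the paper handles it, and correctly identifies that basis-independence of that count under the filtered transformation group is the crux there.
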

\begin{Corollary}\label{cor:AL}
The statement of Theorem \ref{teo:AL} is true also if one includes the dihedral group $\bbbd_N$ in the list of groups (see \cite{KLS-DN}).
\end{Corollary}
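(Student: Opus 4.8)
The plan is to reduce the corollary to the classification of the dihedral algebras already obtained in \cite{KLS-DN}, combined with Theorem \ref{teo:AL}, by comparing Chevalley normal forms across the two families of groups. Since Theorem \ref{teo:AL} already settles the case in which both $G$ and $G'$ lie among $\bbbt,\bbbo,\bbbi$, it remains to treat only the two new situations: (i) both $G$ and $G'$ dihedral, and (ii) exactly one of $G,G'$ equal to some $\bbbd_N$ and the other from $\bbbt\bbbo\bbbi$. In either case the strategy is the same, namely to show that the Chevalley normal form of $(\mf{g}(V)\otimes\splitk(\lambda))^G_\zeta$ depends on the group only through the base Lie algebra $\mf{g}(V)$ and the invariant $\kappa_\zeta$, so that two such algebras with $\mf{g}(V)\cong\mf{g}'(V')$ and $\kappa_\zeta=\kappa_{\zeta'}$ carry identical normal forms and are therefore isomorphic, whereas the converse follows from the invariance results of Section \ref{sec:Invariants}.

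First I would record the relevant data for $\bbbd_N$. The exceptional $\bbbd_N$-orbits and their associated relative invariants $\alpha,\beta,\gamma$ are given in \cite{KLS-DN}; for each one I would identify the distinguished group element $g_\zeta$ of order $\nu_\zeta$ with $d_\zeta\nu_\zeta=|G|$ and compute $\kappa_\zeta=\tfrac12\,\codim W^{\langle g_\zeta\rangle}$ for $W=\mf{sl}(V)$, thereby extending Table \ref{tab:codimg} to the dihedral rows. This is a purely representation-theoretic calculation from the character of $\mf{sl}(V)$ restricted to the cyclic subgroup $\langle g_\zeta\rangle$, entirely analogous to the $\bbbt\bbbo\bbbi$ computation and requiring no new machinery.

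Next I would invoke the explicit Chevalley normal forms of the dihedral algebras from \cite{KLS-DN} and match them, case by case, against the normal forms listed in Sections \ref{sec:dim3}--\ref{sec:dim35}. The key point to verify is that the pattern in which the automorphic generator $\AIJ[I][][\Gamma]$ enters the weight vectors, i.e.\ the exponents attached to each root, is governed precisely by $\kappa_\zeta$ together with the Cartan matrix of $\mf{g}$, and hence coincides with that of a $\bbbt\bbbo\bbbi$-based algebra sharing the same pair $(\mf{g},\kappa_\zeta)$. Once this coincidence of normal forms is established, a $\splitk[\AIJ[I][][\Gamma]]$-linear change of basis sending one set of weight vectors to the other furnishes the desired isomorphism in cases (i) and (ii); the converse non-isomorphism, whenever $\mf{g}\not\cong\mf{g}'$ or $\kappa_\zeta\neq\kappa_{\zeta'}$, is read off from the number of automorphic functions appearing in the normal form, which Section \ref{sec:Invariants} shows to be an isomorphism invariant.

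The main obstacle is this coincidence-of-normal-forms step. Unlike $\bbbt,\bbbo,\bbbi$, the dihedral group is an infinite family whose order $2N$ varies, so one must show that the normal form is genuinely independent of $N$ (within the range in which the prescribed $\kappa_\zeta$ occurs) rather than merely stable for small $N$; the computer-free computations of \cite{KLS-DN} are what make this tractable, since they exhibit the $N$-dependence only through the automorphic function and not through the structure constants. A secondary subtlety is the field extension $\splitk$: the dihedral and $\bbbt\bbbo\bbbi$ cases a priori involve different roots of unity, so one must check that the identification of normal forms can be performed over a common extension, after which the isomorphism is one of Lie algebras over that field.
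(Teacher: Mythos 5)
The paper offers no argument for this corollary beyond the citation of \cite{KLS-DN}, so your proposal is being measured against what that citation implicitly supplies. Your overall strategy --- extend Table \ref{tab:codimg} to dihedral rows, then match the Chevalley normal forms of \cite{KLS-DN} against those of Sections \ref{sec:dim3}--\ref{sec:dim35} --- is sound and would work, but it misses the one observation that makes the corollary essentially immediate and renders most of your programme vacuous: every irreducible projective representation of $\bbbd_N$ has dimension at most $2$, so including the dihedral groups adds \emph{only} $\mf{sl}_2$-based Automorphic Lie Algebras to the classification. For $\mf{sl}_2$ one has $\kappa_\alpha=\kappa_\beta=\kappa_\gamma=1$ (the fixed-point subalgebra of any nontrivial cyclic stabiliser is the one-dimensional Cartan, so the codimension is always $2$), hence the condition $\kappa_\zeta=\kappa_{\zeta'}$ is automatic and the claim reduces to the single statement that all these algebras are of type $A_1^{(1,1)}$ --- which is exactly what \cite{KLS-DN} establishes for the dihedral exceptional orbits and what Theorem \ref{teo:dim3} establishes for $\bbbt\bbbo\bbbi$. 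In particular your worries about the $N$-dependence of higher-rank normal forms, about which values of $\kappa_\zeta$ ``occur'' in the dihedral family, and about matching Cartan matrices across cases (i) and (ii) all evaporate: there are no higher-rank dihedral cases to compare. Your point about the field extension $\splitk$ is legitimate but also harmless here, since the $A_1^{(1,1)}$ model and its structure constants are defined over $\bbbq[\AIJ[I][][\Gamma]]$ and the intertwining operators in Theorem \ref{teo:dim3} involve no roots of unity.
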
\
%%%%%%%%%%%%%%%%%%%%%%%%%%%%%%%%%%%%%%%%%%%%%%%%%%%%%%%%%%%%%%%%%%%%%%%%%%%%%

\subsection{Notation} 
Before formulating our result, let us introduce some notation which will be handy in the following; consider, as an example, the case \(\alian{V}{G}{}\), where $V=\bbbt_4^\flat$. 
After computing the Chevalley normal form as described in Section \ref{sec:Chev}, we find
\[
M_{\alpha_1}=\begin{pmatrix}0 & \AIJ[J][\gamma][\alpha]\\ 0 &0\end{pmatrix}\,,\quad 
M_{-\alpha_1}=\begin{pmatrix}0 & 0\\ \AIJ[I][\beta][\alpha] &0\end{pmatrix}\,,\quad 
H_1=\begin{pmatrix}1 & 0\\ 0&-1\end{pmatrix},
\]
where the symbol \(\alpha_i\) stands for the the root and \(\alpha\) stands for the ground form. 
%%%%%
In terms of the original invariant matrices this Cartan-Weyl basis reads (see also Table \ref{tab:MolienT}):
\begin{eqnarray*}
&&H_{1}= - 1/2949120\ \mf{M}_7^4 - 1/9437184\ \mf{M}_7^6 + 1/5505024\ \mf{N}_7^6\,,\\
&&M_{\alpha_1}= + 1/11796480\ \mf{M}_7^4\ \ \AIJ[J][\gamma][\alpha] 
      +1/37748736\ \mf{M}_7^6\ \ \AIJ[J][\gamma][\alpha] - 1/22020096\ \mf{N}_7^6 + 1/22020096\ \mf{N}_7^6\ 
      \ \AIJ[I][\beta][\alpha]\,,\\
&&M_{-\alpha_1}= 1/2949120\ \mf{M}_7^4\ \ \AIJ[J][\gamma][\alpha] + 1/9437184\ \mf{M}_7^6 - 1/
      9437184\ \mf{M}_7^6\ \ \AIJ[I][\beta][\alpha] - 1/5505024\ \mf{N}_7^6\ \ \AIJ[J][\gamma][\alpha]\,.\\
\end{eqnarray*}
%%%%%
We introduce the following short-hand notation %(which nicely illustrates the possible \(\alpha\)-confusion)
\[
\nf{\bbbt_4^\flat}=M_{\alpha_1}+M_{-\alpha_1}=\begin{bmatrix}0 & \AIJ[J][\gamma][\alpha]\\ \AIJ[I][\beta][\alpha]&0\end{bmatrix}
\]
where we take the sum of all weight vectors;
we will refer to this as the \emph{Chevalley model} of the Automorphic Lie Algebra.
\begin{Remark}
\(\nf{\bbbt_4^\flat}\) can be considered as a \(1\)-form with arguments in the root system \(A_1\) and values in the space of monomials in \(\AIJ[I][\beta][\alpha]\) and \(\AIJ[J][\gamma][\alpha]\), the coboundary operator \(\mathsf{d}^1\) of which determines
the occurrence of these monomials in the structure constants of the ALiA.
\end{Remark}
\begin{Remark}
We recall that \(\AIJ[J][][]\) is the invariant related to the relative invariant with the lowest \(\kappa\), see Section \ref{sec:af}.
If there is equality, for instance if \(\coa=\cob\), then in \(\AIJ[I][\alpha][\gamma]\) and \(\AIJ[J][\beta][\gamma]\),
one can interchange the \(\AIJ[I][][]\) and the \(\AIJ[J][][]\), without changing the isomorphism type of the Chevalley normal form.
\end{Remark}
The Chevalley normal form can be reconstructed from the Cartan matrix (in this case the $1\times 1$ matrix $\begin{pmatrix} 2\end{pmatrix}$) and from the Chevalley model above. The Lie brackets are
\[
\begin{array}{l}
[M_{\alpha_1},M_{-\alpha_1}]=\AIJ[I][\beta][\alpha]\AIJ[J][\gamma][\alpha] H_1\\
\\
{[}H_1,M_{\pm\alpha_1}]=\pm 2M_{\alpha_1}\,.
\end{array}
\]
For any \(A_h\), given \(\alpha=\sum_{k=1}^{h} m_k \alpha_k\) and \(m_k\in\bbbn, k=1,\ldots,h\),
the following holds:
\[
[ M_{\alpha} , M_{-\alpha} ]=\langle M_{\alpha} , M_{-\alpha} \rangle H_\alpha,
\]
where \(H_\alpha= \sum_{k=1}^{h} m_k H_k\) and \(\langle\cdot,\cdot\rangle\) is the traceform.

In the following we list all cases, ordered by \(\dim\mf{g}(V)\).
The Chevalley normal form will be compared to a model computed from the structure constants of one  of the computed Lie algebras with the given Dynkin diagram and written as, for example, \(\|A_2\|\). This model is not unique.
\newcommand\aliam[4]{(\mf{sl}_{#3}\otimes \splitk(\lambda))_{#4}^{#2}}

%%%%%%%%%%%%%%%%%%%%%%%%%%%%%%%%%%%%%%%%%%%%%%%%%%%%%%%%
\begin{Definition}
We denote by  \(\|A_{n}^{(k,l)}\|\) the Automorphic Lie Algebra model based on \(\mf{sl}_{n+1}\) and with \(k\) \(\AIJ[I][][]\)s and \(l\) \(\AIJ[J][][]\)s in its
Cartan-Weyl basis. This defines the ALiA type \(A_{n}^{(k,l)}\). 
It will have the same Cartan matrix as \(A_n\) and the specifics of the particular Chevalley model, that is to say, which elements have an \(\AIJ[I][][]\) and which have a  \(\AIJ[J][][]\),
will be fixed in the sequel.
\end{Definition}
Let $\Phi$ be the root system of the base Lie algebra and let $\Phi^+$ be a choice of positive roots; together with the model  \(\|A_{n}^{(k,l)}\|\)  we also consider 
\[
K_{\mf{b}}(\mf{sl}_{n})_\zeta= \sum_{\alpha\in \Phi^+}\, \langle e_\alpha,e_{-\alpha}\rangle =a+b\AIJ[I][][]+c \AIJ[J][][]+d\AIJ[IJ][][]\,.
\]
In the example above the sum equals $\mathbbm{I}\mathbbm{J}$. Computational evidence suggests that this is an invariant.
\begin{Definition}
We denote by  \(\aliam{V}{G}{n}{\zeta}\) the $G$-Automorphic Lie Algebra based on \(\mf{sl}(V)\), $\dim(V)=n$,  
with poles confined at the $G$-orbit $\Gamma_\zeta$, \(\zeta=\alpha\), \(\beta\) or \(\gamma\). 
\end{Definition}
%%%%%%%%%%%%%%%%%%%%%%%%%%%%%%%%%%%%%%%%%%%%%%%%%%%%%%%%%%%

\subsection{Automorphic Lie Algebras \(\aliam{V}{G}{2}{\zeta}\)}
%\subsection{$\dim\mf{g}(V)=3$}
\label{sec:dim3}
Let the model for \(\aliam{V}{G}{2}{\zeta}\) be
\[
\|A_1^{(1,1)}\|=\begin{bmatrix}
0 & \mathbbm{I} \\
\mathbbm{J} & 0
\end{bmatrix}\,,\quad
K_{\mf{b}}(\mf{sl}_{2})_\zeta=\mathbbm{I}\mathbbm{J}
\]
%\input\ifnum\where=1 /lhome/jan/Dropbox/WORK200/\fi ala/Result\dir/A\dynnum/KillingNormalForm\dirp\pole.tex
% The normal form  of Ax5c2
% on Tue Feb 17 16:57:26 2015
% in ResultAx5c2/A1/KillingNormalFormAx5c2pa.tex
where 
\(\zeta=\alpha\), \(\beta\) or \(\gamma\).

\begin{Theorem}[\(\aliam{V}{G}{2}{\zeta}\)]
\label{teo:dim3}
All Automorphic Lie Algebras \(\aliam{V}{G}{2}{\zeta}\), \(\zeta=\alpha,\beta,\gamma\),  are of type \(A_1^{(1,1)}\) and therefore isomorphic.
\end{Theorem}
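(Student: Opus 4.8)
The plan is to reduce the statement to a short, uniform verification steered by the Chevalley algorithm of Section~\ref{sec:Chev}. First I would enumerate the cases. A two-dimensional irreducible $G^\flat$-module $V$ arises exactly for $V\in\{\bbbt_4^\flat,\bbbt_5^\flat,\bbbt_6^\flat\}$, $V\in\{\bbbo_3,\bbbo_4^\flat,\bbbo_5^\flat\}$ and $V\in\{\bbbi_2^\flat,\bbbi_3^\flat\}$, and in each case Tables~\ref{tab:decompTsl}--\ref{tab:decompYsl} show that $\mf{sl}(V)$ is three-dimensional; hence $\mf{sl}(V)\cong\mf{sl}_2(\bbbc)$ is of type $A_1$, with Cartan matrix $\begin{pmatrix}2\end{pmatrix}$ throughout. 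Together with the three orbit choices $\zeta=\alpha,\beta,\gamma$ this is a finite list, and the goal is to show that every entry has Chevalley model $\|A_1^{(1,1)}\|$. Note that $\mf{sl}(\bbbo_3)=\bbbo_2\oplus\bbbo_3$ is the only case in which the base module is reducible; this does not change the Lie algebra type and is handled by the same computation.

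For a fixed case I would run the construction of Section~\ref{sec:moi}: via the Lie algebra homomorphism $\psi$ pass to the matrices of invariants, obtaining a rank-three $\splitk[\AIJ[I][][\Gamma]]$-module of $2\times2$ matrices, which are conjugate over the field of fractions to elements of $\mf{sl}(V)$ and hence trace-free, with entries that are automorphic functions. Since the CSA is one-dimensional, I would locate a semisimple $h_1$ whose eigenvalues lie in $\splitk$ (possible by Remark~\ref{rem:CSA}), diagonalise and rescale it to $H_1=\diag\{1,-1\}$. The weight vectors then solve $[H_1,M_{\pm\alpha_1}]=\pm2M_{\pm\alpha_1}$; because the $\pm2$-eigenspace of $\ad H_1$ inside $\mf{sl}_2(\splitk[\AIJ[I][][\Gamma]])$ is spanned over the ring by $E_{12}$, respectively $E_{21}$, one necessarily has
\[
M_{\alpha_1}=\begin{pmatrix}0 & p\\ 0 & 0\end{pmatrix},\qquad M_{-\alpha_1}=\begin{pmatrix}0 & 0\\ q & 0\end{pmatrix},\qquad H_1=\begin{pmatrix}1 & 0\\ 0 & -1\end{pmatrix},
\]
with $p,q\in\splitk[\AIJ[I][][\Gamma]]$. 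A direct computation gives $[M_{\alpha_1},M_{-\alpha_1}]=pq\,H_1$ and $K_{\mf{b}}(\mf{sl}_2)_\zeta=\langle M_{\alpha_1},M_{-\alpha_1}\rangle=pq$, which is unchanged under the residual rescaling $M_{\alpha_1}\mapsto cM_{\alpha_1}$, $M_{-\alpha_1}\mapsto c^{-1}M_{-\alpha_1}$ ($c\in\splitk^\ast$) and is therefore an honest invariant of the normal form.

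It remains to identify $p$ and $q$, and this is the crux. I would argue that, since $\{H_1,M_{\alpha_1},M_{-\alpha_1}\}$ must be a $\splitk[\AIJ[I][][\Gamma]]$-basis of the module (the invertibility test that closes the Chevalley construction), $p$ and $q$ are the generators of the two one-dimensional weight spaces, hence the lowest-degree nonconstant automorphic functions occurring there; by Section~\ref{sec:af} these are exactly the degree-one functions $\mathbbm{I}$ and $\mathbbm{J}$, up to a unit of $\splitk^\ast$ absorbed into the rescaling. Combined with $pq=\mathbbm{I}\mathbbm{J}$ this pins down $\{p,q\}=\{\mathbbm{I},\mathbbm{J}\}$ and produces the model $\|A_1^{(1,1)}\|$, whose labelling is a convention interchangeable when $\coa=\cob$. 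The hard part is precisely this identification: ruling out degenerate factorisations such as $p\in\splitk^\ast$, $q=\mathbbm{I}\mathbbm{J}$ forces one to control the pole orders of the weight-vector entries over the ring $\splitk[\AIJ[I][][\Gamma]]$ and not merely over $\splitk$. Since only finitely many cases occur, I would settle this by carrying out the eigenvalue computations explicitly (as in the $\bbbt_4^\flat$ model recorded above) and recording the resulting Chevalley forms case by case. Finally, because two Automorphic Lie Algebras with the same Cartan matrix and the same Chevalley model are isomorphic, all the $\aliam{V}{G}{2}{\zeta}$ are of type $A_1^{(1,1)}$ and hence mutually isomorphic.
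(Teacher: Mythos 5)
Your overall strategy coincides with the paper's: the proof of Theorem \ref{teo:dim3} is a finite enumeration in which each case is settled by explicitly computing the Chevalley normal form and exhibiting an intertwining operator against the model \(\|A_1^{(1,1)}\|\) --- exactly the case-by-case computation you fall back on in your last step. The reduction to \(H_1=\diag\{1,-1\}\) with weight vectors \(M_{\alpha_1}=p\,E_{1,2}\), \(M_{-\alpha_1}=q\,E_{2,1}\) and \([M_{\alpha_1},M_{-\alpha_1}]=pq\,H_1\) is also how the paper's data is organised, and your observation that \(K_{\mf{b}}(\mf{sl}_2)_\zeta=pq\) is the relevant invariant is correct.

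However, the step you single out as the crux --- that \(pq=\mathbbm{I}\mathbbm{J}\) ``pins down \(\{p,q\}=\{\mathbbm{I},\mathbbm{J}\}\)'', with the degenerate factorisation \(p\in\splitk^\ast\), \(q=\mathbbm{I}\mathbbm{J}\) to be \emph{ruled out} --- is refuted by the paper's own tables: for \(\bbbo_4\) with poles at \(\Gamma_\beta\), and for \(\bbbt_4,\bbbt_5,\bbbt_6\) with poles at \(\Gamma_\gamma\), the computed Chevalley model is precisely \(\left[\begin{smallmatrix}0&1\\ \mathbbm{I}\mathbbm{J}&0\end{smallmatrix}\right]\) or its transpose. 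Your heuristic that the generator of each weight space must be the lowest-degree nonconstant automorphic function is not valid: each weight space is a rank-one free \(\splitk[\mathbbm{I}]\)-module whose generator may equally be \(E_{2,1}\) itself or \(\mathbbm{I}\mathbbm{J}\,E_{2,1}\), depending on the module of invariant matrices in the chosen basis of invariant vectors. The theorem nevertheless survives, because in the \(A_1\) case only the product \(pq\) enters the structure constants, so every factorisation of \(pq=\mathbbm{I}\mathbbm{J}\) yields literally the same Lie algebra over \(\splitk[\mathbbm{I}]\), and the count of \(\mathbbm{I}\)s and \(\mathbbm{J}\)s in the Cartan--Weyl basis is \((1,1)\) whether they occupy separate entries or a single product; consistently, the paper's intertwining operators in these cases have determinant \(\mathbbm{I}\), a non-unit, so they are not base changes over the ring yet still induce Lie algebra isomorphisms. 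You should drop the false identification and phrase the conclusion in terms of the invariant \(pq=\mathbbm{I}\mathbbm{J}\) together with the \((1,1)\) count; note also that this loophole is special to rank one --- for higher rank the distribution of the \(\mathbbm{I}\)s and \(\mathbbm{J}\)s over the entries does matter, which is why the paper records the full models and intertwiners case by case.
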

%%%%%%%%%%%%%%%%%%%%%%%%%%%%%%%%%%%%%%%%%%%%%%%%%%%%%%%%%%%
\begin{proof}
\def\dynnum{1}
We give the Chevalley model together with its intertwining operator \({\cal{I}}_{\mf{sl}(V)}\) with respect to
\(\|A_\dynnum^{(1,1)}\|\),
i.e.
\[
\|\mf{sl}(V)\| {\cal{I}}_{\mf{sl}(V)}={\cal{I}}_{\mf{sl}(V)} \|A_{\dynnum}^{(1,1)}\|.
\]
\renewcommand{\arraystretch}{1}
%\def\pole{a}
%\def\dynnum{1}
%\def\dir{Ax3c4}
%\def\dirp{Ax3c4p\pole}
%%\input\ifnum\where=1 /lhome/jan/Dropbox/WORK200/\fi ala/Result\dir/A\dynnum/Chevalley\dirp.tex 
%$\begin{bmatrix}
%0&\AIJ[J][\gamma][\alpha]\\
%\AIJ[I][\beta][\alpha]&0\end{bmatrix}$
%\def\pole{a}
%\def\dynnum{1}
%\def\dir{Ax3c6}
%\def\dirp{Ax3c6p\pole}
%%\input\ifnum\where=1 /lhome/jan/Dropbox/WORK200/\fi ala/Result\dir/A\dynnum/Chevalley\dirp.tex 
%&
%$\begin{bmatrix}
%0&\AIJ[I][\beta][\alpha]\\
%\AIJ[J][\gamma][\alpha]&0\end{bmatrix}$
%\\
%& & \\
%\hline
%%& & \\
%Intertwining operator&& \\
%${\cal{I}}_{\mf{sl}(V)}$
%& \def\pole{a}
%\def\dynnum{1}
%\def\dir{Ax3c4}
%\def\dirp{Ax3c4p\pole}
%%\input\ifnum\where=1 /lhome/jan/Dropbox/WORK200/\fi ala/Result\dir/A\dynnum/Intertwiner\dirp.tex 
%$\begin{pmatrix}
%\AIJ[J][\gamma][\alpha]&0\\
%0&\AIJ[I][\beta][\alpha]\end{pmatrix}$
%&
%\def\pole{a}
%\def\dynnum{1}
%\def\dir{Ax3c6}
%\def\dirp{Ax3c6p\pole}
%%\input\ifnum\where=1 /lhome/jan/Dropbox/WORK200/\fi ala/Result\dir/A\dynnum/Intertwiner\dirp.tex 
%$\begin{pmatrix}
%1&0\\
%0&1\end{pmatrix}$
%\\
%& & \\
%\hline
%\end{tabular}
%\end{table}
%\begin{center}
\begin{table}[H]
\begin{center}
\begin{tabular}{c|cr}
\hline
%\begin{tabular}{|c|c|c|c|c|c|c|c|} \hline
%& & \\
%poles at $\Gamma_\alpha$\\
Irreducible representation \\
$V$
 & $\bbbt_{4}$
, $\bbbt_{5}$
, $\bbbo_{3}$
, $\bbbo_{5}$
, $\bbbi_{2}$
, $\bbbi_{3}$
&
$\bbbt_{6}$
, $\bbbo_{4}$
\\
& & \\
 \hline\hline
% & & \\
Chevalley model \\
$\|\mf{sl}(V)\|$ &
\def\pole{a}
\def\dynnum{1}
\def\dir{Ax3c4}
\def\dirp{Ax3c4p\pole}
$\begin{bmatrix}
0&\AIJ[J][\gamma][\alpha]\\
\AIJ[I][\beta][\alpha]&0\end{bmatrix}$
%\input\ifnum\where=1 /lhome/jan/Dropbox/WORK200/\fi ala/Result\dir/A\dynnum/Chevalley\dirp.tex 
%&
%\def\pole{a}
%\def\dynnum{1}
%\def\dir{Ax3c5}
%\def\dirp{Ax3c5p\pole}
%\input\ifnum\where=1 /lhome/jan/Dropbox/WORK200/\fi ala/Result\dir/A\dynnum/Chevalley\dirp.tex 
&
\def\pole{a}
\def\dynnum{1}
\def\dir{Ax3c6}
\def\dirp{Ax3c6p\pole}
$\begin{bmatrix}
0&\AIJ[I][\beta][\alpha]\\
\AIJ[J][\gamma][\alpha]&0\end{bmatrix}$
\\
& & \\
\hline
%& & \\
Intertwining operator \\
${\cal{I}}_{\mf{sl}(V)}$
& \def\pole{a}
\def\dynnum{1}
\def\dir{Ax3c4}
\def\dirp{Ax3c4p\pole}
$\begin{pmatrix}
\AIJ[J][\gamma][\alpha]&0\\
0&\AIJ[I][\beta][\alpha]\end{pmatrix}$
%\input\ifnum\where=1 /lhome/jan/Dropbox/WORK200/\fi ala/Result\dir/A\dynnum/Intertwiner\dirp.tex 
%&
%\def\pole{a}
%\def\dynnum{1}
%\def\dir{Ax3c5}
%\def\dirp{Ax3c5p\pole}
%\input\ifnum\where=1 /lhome/jan/Dropbox/WORK200/\fi ala/Result\dir/A\dynnum/Intertwiner\dirp.tex 
&
\def\pole{a}
\def\dynnum{1}
\def\dir{Ax3c6}
\def\dirp{Ax3c6p\pole}
$\begin{pmatrix}
1&0\\
0&1\end{pmatrix}$

\\
& & \\
\hline
\end{tabular}
\end{center}
\caption{Chevalley models and intertwining operators for  \(\aliam{V}{G}{2}{\alpha}\).}
\end{table}
%\end{center}
%%%%%%%%%%%%%%%%%%%%%%%%%%
\begin{table}[H]
\begin{center}
\begin{tabular}{c|ccc}
\hline
Irreducible representation \\
$V$
 & $\bbbt_{4}$
, $\bbbt_{5}$
, $\bbbo_{3}$
, $\bbbo_{5}$
&$\bbbt_{6}$
, $\bbbi_{2}$
, $\bbbi_{3}$
&
$\bbbo_{4}$
\\
& & & \\
 \hline\hline
% & & \\
Chevalley model \\
$\|\mf{sl}(V)\|$ &
$\begin{bmatrix}
0&\AIJ[I][\alpha][\beta]\\
\AIJ[J][\gamma][\beta]&0\end{bmatrix}$
&
$\begin{bmatrix}
0&\AIJ[J][\gamma][\beta]\\
\AIJ[I][\alpha][\beta]&0\end{bmatrix}$
%\input\ifnum\where=1 /lhome/jan/Dropbox/WORK200/\fi ala/Result\dir/A\dynnum/Chevalley\dirp.tex 
%&
%\def\pole{a}
%\def\dynnum{1}
%\def\dir{Ax3c5}
%\def\dirp{Ax3c5p\pole}
%\input\ifnum\where=1 /lhome/jan/Dropbox/WORK200/\fi ala/Result\dir/A\dynnum/Chevalley\dirp.tex 
&
$\begin{bmatrix}
0& 1\\
\AIJ[I][\alpha][\beta]\AIJ[J][\gamma][\beta]&0\end{bmatrix}$
\\
& & & \\
\hline
%& & \\
Intertwining operator \\
${\cal{I}}_{\mf{sl}(V)}$
& 
$\begin{pmatrix}
1&0\\
0&1\end{pmatrix}$
&
$\begin{pmatrix}
\AIJ[J][\gamma][\beta]&0\\
0&\AIJ[I][\alpha][\beta]\end{pmatrix}$
&
$\begin{pmatrix}
1&0\\
0&\AIJ[I][\alpha][\beta]\end{pmatrix}$
%\input\ifnum\where=1 /lhome/jan/Dropbox/WORK200/\fi ala/Result\dir/A\dynnum/Intertwiner\dirp.tex 
%&
%\def\pole{a}
%\def\dynnum{1}
%\def\dir{Ax3c5}
%\def\dirp{Ax3c5p\pole}
%\input\ifnum\where=1 /lhome/jan/Dropbox/WORK200/\fi ala/Result\dir/A\dynnum/Intertwiner\dirp.tex 
\\
& & & \\
\hline
\end{tabular}
\end{center}
%}
\caption{Chevalley models and intertwining operators for  \(\aliam{V}{G}{2}{\beta}\).}
\end{table}
%%%%%%%%%%%%%%%%%%%%%%%%%%
%\vspace{0.3cm}
\begin{table}[H]
\begin{center}
\begin{tabular}{c|cccc}
\hline
%\begin{tabular}{|c|c|c|c|c|c|c|c|} \hline
%& & & &\\
%poles at $\Gamma_\alpha$\\
Irreducible representation \\
$V$
 & $\bbbt_{4}$
, $\bbbt_{5}$
&
$\bbbt_{6}$
&
$\bbbo_{3}$
, $\bbbi_{2}$
, $\bbbi_{3}$
&
$\bbbo_{4}$,
$\bbbo_{5}$
\\
& & & &\\
 \hline\hline
% & & \\
Chevalley model \\
$\|\mf{sl}(V)\|$ &
$\begin{bmatrix}
0& 1\\
\AIJ[I][\alpha][\gamma]\AIJ[J][\beta][\gamma]&0\end{bmatrix}$
&
$\begin{bmatrix}
0&\AIJ[I][\alpha][\gamma]\AIJ[J][\beta][\gamma]\\
1&0\end{bmatrix}$
&
$\begin{bmatrix}
0&\AIJ[J][\beta][\gamma]\\
\AIJ[I][\alpha][\gamma]&0\end{bmatrix}$
%\input\ifnum\where=1 /lhome/jan/Dropbox/WORK200/\fi ala/Result\dir/A\dynnum/Chevalley\dirp.tex 
%&
%\def\pole{a}
%\def\dynnum{1}
%\def\dir{Ax3c5}
%\def\dirp{Ax3c5p\pole}
%\input\ifnum\where=1 /lhome/jan/Dropbox/WORK200/\fi ala/Result\dir/A\dynnum/Chevalley\dirp.tex 
&
$\begin{bmatrix}
0& \AIJ[I][\alpha][\gamma]\\
\AIJ[J][\beta][\gamma]&0\end{bmatrix}$
\\
& & & &\\
\hline
%& & \\
Intertwining operator \\
${\cal{I}}_{\mf{sl}(V)}$
& 
$\begin{pmatrix}
1&0\\
0&\AIJ[I][\alpha][\gamma]\end{pmatrix}$
&
$\begin{pmatrix}
\AIJ[J][\beta][\gamma]& 0\\
0&1\end{pmatrix}$
&
$\begin{pmatrix}
\AIJ[J][\beta][\gamma]& 0\\
0&\AIJ[I][\alpha][\gamma]\end{pmatrix}$
&
$\begin{pmatrix}
1& 0\\
0&1\end{pmatrix}$
%\input\ifnum\where=1 /lhome/jan/Dropbox/WORK200/\fi ala/Result\dir/A\dynnum/Intertwiner\dirp.tex 
%&
%\def\pole{a}
%\def\dynnum{1}
%\def\dir{Ax3c5}
%\def\dirp{Ax3c5p\pole}
%\input\ifnum\where=1 /lhome/jan/Dropbox/WORK200/\fi ala/Result\dir/A\dynnum/Intertwiner\dirp.tex 
\\
& & & \\
\hline
\end{tabular}
\end{center}
\caption{Chevalley models and intertwining operators for  \(\aliam{V}{G}{2}{\gamma}\).}
\end{table}
\end{proof}

%%%%%%%%%%%%%%%%%%%%%%%%%%%%%%%%%%%%%%%%%%%%%%%%%%%%%%%%%%%
For the proofs of the following theorems we refer to Appendix \ref{app:Chev}.
%%%%%%%%%%%%%%%%%%%%%%%%%%%%%%%%%%%%%%%%%%%%%%%%%%%%%%%%%%%
\subsection{Automorphic Lie Algebras \(\aliam{V}{G}{3}{\zeta}\)} 
%\subsection{$\dim\mf{g}(V)=8$}
\label{sec:dim8}
\def\pole{a}
\def\dynnum{2}
\def\dir{Ax5c4}
\def\dirp{Ax5c4p}
%\input\ifnum\where=1 /lhome/jan/Dropbox/WORK200/\fi ala/Result\dir/A\dynnum/SerreSNormalForm\dirp\pole.tex
% The normal form  of Ax5c4
% on Thu Feb 12 13:07:36 2015
% file ResultAx5c4/A2/SerreSNormalFormAx5c4.tex
\subsubsection{Poles in $\alpha$ and $\beta$}
\label{sec:dim8ab}

Let the model for \(\aliam{V}{G}{3}{\zeta}\), $\zeta=\alpha,\beta$, be
\[
\|A_2^{(3,2)}\|=\begin{bmatrix}
0 & \mathbbm{I} & \mathbbm{I} \\
\mathbbm{J} & 0 & \mathbbm{I} \\
\mathbbm{J} & 1 & 0 \end{bmatrix}\,,\quad
K_{\mf{b}}(\mf{sl}_{4})_{\alpha,\beta}=\mathbbm{I}+2\mathbbm{I}\mathbbm{J}\,.
%\input\ifnum\where=1 /lhome/jan/Dropbox/WORK200/\fi ala/Result\dir/A\dynnum/KillingNormalForm\dirp\pole.tex
% The normal form  of Ax5c4
% on Thu Feb 12 13:07:36 2015
% in ResultAx5c4/A2/KillingNormalFormAx5c4pa.tex
\]

%%\input\ifnum\where=1 /lhome/jan/Dropbox/WORK200/\fi ala/Result\dir/A\dynnum/KillingNormalForm\dirp\pole.tex
%% The normal form  of Ax5c4
%% on Thu Feb 12 13:07:36 2015
%% in ResultAx5c4/A2/KillingNormalFormAx5c4pa.tex

%%%%%%%%%%%%%%%%%%%%%%%%%%%%%%%%%%%%%%%%%%%%%%%%%%%%%%%%%%%
%\subsubsection{Chevalley model for \(\aliam{V}{G}{3}{\alpha,\beta}\)}
\begin{Theorem}[\(\aliam{V}{G}{3}{\zeta}\), $\zeta=\alpha,\beta$]
\label{teo:dim8ab}
All Automorphic Lie Algebras \(\aliam{V}{G}{3}{\zeta}\), $\zeta=\alpha,\beta$,  are isomorphic and of 
type  \(A_{2}^{(3,2)}\).
\end{Theorem}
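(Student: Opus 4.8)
The plan is to follow the template established in the proof of Theorem \ref{teo:dim3}. The three-dimensional irreducible representations of $\bbbt^\flat$, $\bbbo^\flat$ and $\bbbi^\flat$ are $\bbbt_7$, $\bbbo_6$, $\bbbo_7$, $\bbbi_4$ and $\bbbi_5$, each giving the base algebra $\mf{sl}_3$ of type $A_2$. For each of these $V$ and for each pole orbit $\zeta\in\{\alpha,\beta\}$ I would put the Automorphic Lie Algebra $\aliam{V}{G}{3}{\zeta}$ into Chevalley normal form and then exhibit an explicit intertwining operator $\mathcal{I}_{\mf{sl}(V)}$ realising $\nf{V}\,\mathcal{I}_{\mf{sl}(V)}=\mathcal{I}_{\mf{sl}(V)}\,\|A_2^{(3,2)}\|$ with the single reference model. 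Producing, in every case, such an operator is what proves that all these algebras share the type $A_2^{(3,2)}$ and are therefore mutually isomorphic.

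First I would take the eight homogeneous invariant generators $\hat M^1,\dots,\hat M^8$ built in Sections \ref{sec:it}--\ref{sec:ii}, make them zero-homogeneous by dividing by the relevant invariant form, and pass to the matrices of invariants through the Lie algebra homomorphism $\psi$ of Section \ref{sec:moi}. On these $3\times 3$ matrices I run the Cartan subalgebra algorithm of Section \ref{sec:Chev}: find a semisimple $h_1$ whose eigenvalues lie in $\splitk$, diagonalise it, adjoin a commuting semisimple $h_2$ that is $\splitk$-linearly independent, and convert $\{h_1,h_2\}$ into the canonical pair $\{H_1,H_2\}$ realising the $A_2$ Cartan matrix exactly as in the worked example preceding Remark \ref{rem:alpha}. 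Solving $[H_i,M_{\pm\alpha_j}]=\pm a_{j,i}M_{\pm\alpha_j}$ for the simple-root weight vectors and obtaining the highest-root vectors $M_{\pm(\alpha_1+\alpha_2)}$ from the brackets yields the Chevalley model $\nf{V}$ as a $3\times 3$ matrix whose off-diagonal entries are monomials in $\mathbbm{I}$, $\mathbbm{J}$ and $1$.

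The third step reads off the distribution of $\mathbbm{I}$s and $\mathbbm{J}$s over the three positive roots and checks that it is always $(3,2)$. One first confirms from Table \ref{tab:codimg} that $\kappa_\alpha=\kappa_\beta$ in all five cases --- so that the $\mathbbm{I}\leftrightarrow\mathbbm{J}$ interchange of Section \ref{sec:af} is available --- which is precisely why the orbits $\alpha$ and $\beta$ can share a type while $\gamma$ is deferred to the next subsection. For each case I then write down a diagonal or permutation operator $\mathcal{I}_{\mf{sl}(V)}$ with entries among $1,\mathbbm{I},\mathbbm{J}$, implementing the permutations and scalings of Section \ref{sec:moi}, and verify the displayed intertwining relation by direct multiplication. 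Since $\mathcal{I}_{\mf{sl}(V)}$ is invertible over the field $\splitk(\lambda)$, the inner automorphism $\Ad(\mathcal{I}_{\mf{sl}(V)})$ of $\mf{sl}_3(\splitk(\lambda))$ carries the model representation onto the computed one, and because $\psi$ is a faithful Lie algebra homomorphism this yields the isomorphism onto the $A_2^{(3,2)}$ model; as a consistency check one recovers the invariant value $K_{\mf{b}}=\mathbbm{I}+2\mathbbm{I}\mathbbm{J}$. The explicit tables of Chevalley models and intertwiners are collected in Appendix \ref{app:Chev}.

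I expect the main obstacle to be exactly the feature that separates $A_2$ from the $A_1$ situation of Theorem \ref{teo:dim3}. A diagonal conjugation $\Ad(\diag(d_1,d_2,d_3))$ scales $E_{12}$ and $E_{23}$ independently but forces the scaling of $E_{13}=[E_{12},E_{23}]$ to be their product, so the $\mathbbm{I}/\mathbbm{J}$ content of the highest-root vector is not free: it is determined by the two simple roots yet must still match the prescribed model. Showing that this constraint is met, and that one single fixed pattern $A_2^{(3,2)}$ together with an intertwiner of the above restricted shape works uniformly for all five inequivalent representations and for both pole choices $\zeta=\alpha,\beta$ --- despite the genuinely different intermediate expressions that each computation produces --- is the delicate part of the argument.
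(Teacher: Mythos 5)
Your proposal is correct and follows essentially the same route as the paper: the paper's proof of Theorem \ref{teo:dim8ab} consists precisely of exhibiting, for each of the five three-dimensional irreducibles $\bbbt_7,\bbbo_6,\bbbo_7,\bbbi_4,\bbbi_5$ and each of $\zeta=\alpha,\beta$, the computed Chevalley model $\|\mf{sl}(V)\|$ together with an explicit monomial intertwining operator ${\cal{I}}_{\mf{sl}(V)}$ satisfying $\|\mf{sl}(V)\|\,{\cal{I}}_{\mf{sl}(V)}={\cal{I}}_{\mf{sl}(V)}\,\|A_2^{(3,2)}\|$ (Tables B1 and B2), which is exactly the pipeline you describe. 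The only slight imprecision is your appeal to the $\mathbbm{I}\leftrightarrow\mathbbm{J}$ interchange of Section \ref{sec:af}: the reason $\zeta=\alpha$ and $\zeta=\beta$ share the type is that the relevant pairs $(\kappa_\beta,\kappa_\gamma)$ and $(\kappa_\alpha,\kappa_\gamma)$ coincide because $\kappa_\alpha=\kappa_\beta=3$ for $\mf{sl}_3$, not that the interchange applies within a single model (for a fixed pole orbit $\zeta$ the two counts $3$ and $2$ are unequal).
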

\def\pole{c}
%\input\ifnum\where=1 /lhome/jan/Dropbox/WORK200/\fi ala/Result\dir/A\dynnum/SerreSNormalForm\dirp\pole.tex
% The normal form  of Ax5c4
% on Sun Feb  8 15:46:55 2015
% file /lhome/jan/Dropbox/WORK200/ala/ResultAx5c4/A2/SerreSNormalFormAx5c4.tex
\subsubsection{Poles in $\gamma$}
\label{sec:dim3\pole}
Let the model for \(\aliam{V}{G}{3}{\gamma}\) be
\[
\|A_2^{(3,3)}\|=\begin{bmatrix}
0 & \mathbbm{I} & \mathbbm{I} \\
\mathbbm{J} & 0 & \mathbbm{I} \\
\mathbbm{J} & \mathbbm{J} & 0
\end{bmatrix}\,,\quad
K_{\mf{b}}(\mf{sl}_{4})_{\gamma}=3\mathbbm{I}\mathbbm{J}\,.
%\input\ifnum\where=1 /lhome/jan/Dropbox/WORK200/\fi ala/Result\dir/A\dynnum/KillingNormalForm\dirp\pole.tex
% The normal form  of Ax5c4
% on Sun Feb  8 15:46:55 2015
% in /lhome/jan/Dropbox/WORK200/ala/ResultAx5c4/A2/KillingNormalFormAx5c4pc.tex
\]
%% in /lhome/jan/Dropbox/WORK200/ala/ResultAx5c4/A2/KillingNormalFormAx5c4pc.tex
\begin{Theorem}[\(\aliam{V}{G}{3}{\gamma}\)]
\label{teo:dim8c}
All Automorphic Lie Algebras \(\aliam{V}{G}{3}{\gamma}\) are isomorphic and of 
type  \(A_{2}^{(3,3)}\).
\end{Theorem}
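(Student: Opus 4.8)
The plan is to reproduce, one dimension higher, the argument used for Theorem \ref{teo:dim3}. Since the base algebra is $\mf{sl}(V)$ with $\dim V = 3$, the decomposition Tables \ref{tab:decompTsl}--\ref{tab:decompYsl} show that only finitely many irreducible representations occur, namely $\bbbt_7$ for $\bbbt$, the pair $\bbbo_6,\bbbo_7$ for $\bbbo$, and the pair $\bbbi_4,\bbbi_5$ for $\bbbi$. It therefore suffices to compute the Chevalley normal form of $\aliam{V}{G}{3}{\gamma}$ in each of these five cases and to exhibit, for each, an explicit isomorphism onto the single model $\|A_2^{(3,3)}\|$; the mutual isomorphism of all cases then follows by transitivity.

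For a fixed $V$ I would start from the matrices of invariants constructed in Section \ref{sec:moi}, whose defining homomorphism $\psi$ preserves the structure constants, and run the algorithm of Section \ref{sec:Chev} with the $A_2$ Cartan matrix $\left(\begin{smallmatrix} 2 & -1 \\ -1 & 2 \end{smallmatrix}\right)$ and the localisation fixed at the $\gamma$-orbit. Concretely: find two commuting semisimple elements whose eigenvalues lie in $\splitk$ (possible by Remark \ref{rem:CSA}), diagonalise them to a CSA $\mf{h}$, normalise to $H_1,H_2$ through $c=\alpha(h)^{-1}A$, and solve $[H_i,M_{\pm\alpha_j}]=\pm a_{j,i}M_{\pm\alpha_j}$ together with the height-two equation for $M_{\pm(\alpha_1+\alpha_2)}$. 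Recording which of the six weight vectors carries an $\AIJ[I][\alpha][\gamma]$ and which a $\AIJ[J][\beta][\gamma]$ yields the Chevalley model $\nf{V}$, which I expect to reproduce the $\|A_2^{(3,3)}\|$ pattern (all three super-diagonal entries $\mathbbm{I}$, all three sub-diagonal entries $\mathbbm{J}$) in every case, with the trace-form sum $3\mathbbm{I}\mathbbm{J}$ serving as a consistency check.

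With $\nf{V}$ computed, I would then present, exactly as in the tables accompanying the proof of Theorem \ref{teo:dim3}, an intertwining operator ${\cal I}_{\mf{sl}(V)}$ solving $\nf{V}\,{\cal I}_{\mf{sl}(V)} = {\cal I}_{\mf{sl}(V)}\,\|A_2^{(3,3)}\|$. It will be a diagonal (possibly permuted) matrix whose entries are monomials in $\AIJ[I][\alpha][\gamma]$ and $\AIJ[J][\beta][\gamma]$, and the syzygy $C_\gamma^\alpha\alpha^{d_G}+C_\gamma^\beta\beta^3+C_\gamma^\gamma\gamma^2=0$ of Section \ref{sec:af} is what lets the bookkeeping of these two automorphic functions close up. Because $\psi$ is a Lie-algebra homomorphism, conjugation by ${\cal I}_{\mf{sl}(V)}$ realises a genuine isomorphism of $\aliam{V}{G}{3}{\gamma}$ onto $\|A_2^{(3,3)}\|$.

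The main obstacle is not the eigenvalue work but checking that each ${\cal I}_{\mf{sl}(V)}$ lies in the admissible transformation group of Section \ref{sec:intro}, i.e.\ is invertible over $\splitk[\AIJ[I][][\Gamma]]$ with $\det\in\splitk^\ast$, rather than merely over the fraction field. This forces one to track the $X,Y$-degrees of the generating invariants through the Molien functions of Tables \ref{tab:irrMolienT}--\ref{tab:irrMolienY}. A second delicate point is that the members of each pair --- $\bbbo_6,\bbbo_7$ (related by the sign character $\bbbo_2$) and $\bbbi_4,\bbbi_5$ (related by the Galois action $\omega_5\mapsto\omega_5^2$) --- must be shown to yield the \emph{same} model even though their invariant bases differ; here the freedom to interchange $\mathbbm{I}$ and $\mathbbm{J}$ when the relevant $\kappa$-values coincide (the remark in Section \ref{sec:af}) is precisely what collapses all five cases onto $A_2^{(3,3)}$.
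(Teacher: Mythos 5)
Your proposal follows essentially the same route as the paper: the paper's proof (Theorem B.2, Table B3) is exactly a case-by-case computation of the Chevalley model for $V=\bbbt_7,\ \bbbo_6,\ \bbbo_7,\ \bbbi_4,\ \bbbi_5$ together with an explicit intertwining operator ${\cal I}_{\mf{sl}(V)}$, diagonal up to permutation with monomial entries in $\AIJ[I][\alpha][\gamma]$ and $\AIJ[J][\beta][\gamma]$, solving $\|\mf{sl}(V)\|\,{\cal I}_{\mf{sl}(V)}={\cal I}_{\mf{sl}(V)}\,\|A_2^{(3,3)}\|$. The only small caveat is that the computed Chevalley models are not literally equal to $\|A_2^{(3,3)}\|$ (they merely share the $(3,3)$ count, with the $\mathbbm{I}$s and $\mathbbm{J}$s distributed differently, e.g.\ entries $\mathbbm{I}\mathbbm{J}$ and $1$ occur); the identification is achieved precisely by the intertwiner, which your plan correctly supplies.
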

%%%%%%%%%%%%%%%%%%%%%%%%%%%%%%%%%%%%%%%%%%%%%%%%%%%%%%%%%%%

\subsection{Automorphic Lie Algebras \(\aliam{V}{G}{4}{\zeta}\)}
\label{sec:dim15}
\def\pole{a}
\def\dir{Ax5c6}
\def\dirp{Ax5c6p}
\def\dynnum{3}
%\input\ifnum\where=1 /lhome/jan/Dropbox/WORK200/\fi ala/Result\dir/A\dynnum/SerreSNormalForm\dirp\pole.tex 
% The normal form  of Ax5c6
% on Thu Feb 12 13:27:10 2015
% file ResultAx5c6/A3/SerreSNormalFormAx5c6.tex
\subsubsection{Poles in $\alpha$}
\label{sec:dim15a}
Let the model for \(\aliam{V}{G}{4}{\alpha}\) be
\[
\|A_3^{(5,4)}\|=\begin{bmatrix}
0 & \mathbbm{I} & \mathbbm{I} & \mathbbm{I} \\
1 & 0 & 1 & \mathbbm{I} \\
\mathbbm{J} & \mathbbm{J} & 0 & \mathbbm{I} \\
\mathbbm{J} & \mathbbm{J} & 1 & 0
\end{bmatrix}\,,\quad
K_{\mf{b}}(\mf{sl}_{4})_{\alpha}=2\mathbbm{I}+\mathbbm{J}+3\mathbbm{I}\mathbbm{J}\,.
\]
%%\input\ifnum\where=1 /lhome/jan/Dropbox/WORK200/\fi ala/Result\dir/A\dynnum/KillingNormalForm\dirp\pole.tex
%% The normal form  of Ax5c6
%% on Thu Feb 12 13:27:10 2015
%% in ResultAx5c6/A3/KillingNormalFormAx5c6pa.tex

\begin{Theorem}[\(\aliam{V}{G}{4}{\alpha}\)]
\label{teo:dim15a}
All Automorphic Lie Algebras \(\aliam{V}{G}{4}{\alpha}\) are isomorphic and of 
type  \(A_{3}^{(5,4)}\).
\end{Theorem}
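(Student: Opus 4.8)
The plan is to follow the two-step strategy already used in the proof of Theorem~\ref{teo:dim3}: for each four-dimensional irreducible representation $V$ I would exhibit an explicit Chevalley model $\|\mf{sl}(V)\|$ together with an intertwining operator ${\cal{I}}_{\mf{sl}(V)}$ relating it to the single standard model $\|A_3^{(5,4)}\|$, so that all cases collapse onto one isomorphism type. First I would isolate the representations that actually occur. Since $\dim V = 4$, the relevant irreducibles are $\bbbo_8^\flat$ for the octahedral group and $\bbbi_6$, $\bbbi_7^\flat$ for the icosahedral group; the tetrahedral group contributes nothing here, as it has no four-dimensional irreducible (cf.~Table~\ref{tab:charT}). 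The decompositions of $\mf{sl}(V)$ for these three cases are recorded in Tables~\ref{tab:decompOsl} and \ref{tab:decompYsl}, each of dimension fifteen, consistent with $\mf{sl}_4$.

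For each such $V$ I would take the zero-homogeneous generators $\hat{M}^1,\ldots,\hat{M}^{15}$ built in Sections~\ref{sec:io}--\ref{sec:ii} by dividing the $|G|$-homogeneous invariant matrices by $\alpha^{d_G}$, pushed through the matrix-of-invariants homomorphism $\psi$ of Section~\ref{sec:moi}. Then I would run the Chevalley algorithm of Section~\ref{sec:Chev}: locate a semisimple element $h_1$ whose eigenvalues lie in $\splitk$ and diagonalise it; inductively adjoin commuting, $\splitk$-independent semisimple elements $h_2,h_3$ to assemble a rank-three Cartan subalgebra $\mf{h}$; pass to the canonical basis $H_1,H_2,H_3$ prescribed by the $A_3$ Cartan matrix through the eigenvalue-difference system $\alpha_i(H_k)=a_{i,k}$; and finally solve $[H_i,M_{\pm\alpha_j}]=\pm a_{j,i}M_{\pm\alpha_j}$ for the simple-root vectors, extending recursively via the commutators $[M_{\pm\alpha_j},M_{\pm\alpha_k}]$ to produce all six positive and six negative root vectors. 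Reading off which off-diagonal slots carry an $\mathbbm{I}$ and which carry a $\mathbbm{J}$ then yields the Chevalley model, which I expect in every case to display exactly five $\mathbbm{I}$'s and four $\mathbbm{J}$'s, matching $A_3^{(5,4)}$ and reproducing the invariant $K_{\mf{b}}(\mf{sl}_4)_\alpha = 2\mathbbm{I}+\mathbbm{J}+3\mathbbm{I}\mathbbm{J}$.

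The isomorphism claim is then closed by exhibiting, for each of the three representations, a diagonal intertwining operator ${\cal{I}}_{\mf{sl}(V)}$ with entries in $\splitk[\AIJ[I][][\alpha]]$ satisfying $\|\mf{sl}(V)\|\,{\cal{I}}_{\mf{sl}(V)}={\cal{I}}_{\mf{sl}(V)}\,\|A_3^{(5,4)}\|$ and invertible over the localised ring. Because each of the three ALiAs is then isomorphic to the common model $\|A_3^{(5,4)}\|$, transitivity gives the mutual isomorphism asserted by the theorem; the detailed models and operators are deferred to Appendix~\ref{app:Chev}.

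The hard part will not be the conceptual skeleton but the explicit diagonalisation over the polynomial ring $\splitk[\AIJ[I][][\alpha]]$. For $\mf{sl}_4$ the reduced characteristic polynomials are quartic with coefficients that are polynomials in the automorphic function $\AIJ[I][][\alpha]$, and one must verify both that a semisimple Cartan element with $\splitk$-rational eigenvalues actually exists --- guaranteed in the $\mf{sl}$ case by Remark~\ref{rem:CSA}, where the characteristic equations are always solvable over $\splitk$ --- and that the diagonalising change of basis, and ultimately the entire transformation to the Chevalley basis, is invertible over the ring and not merely over its field of fractions. Securing this integrality, and confirming that the $(5,4)$ distribution of $\mathbbm{I}$'s and $\mathbbm{J}$'s is stable across all three representations of two distinct groups, is where the genuine computational effort resides.
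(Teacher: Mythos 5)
Your proposal takes essentially the same route as the paper: the paper proves Theorem \ref{teo:dim15a} precisely by listing, for the three relevant representations $V=\bbbo_8^\flat,\ \bbbi_6,\ \bbbi_7^\flat$, the computed Chevalley models together with intertwining operators ${\cal{I}}_{\mf{sl}(V)}$ satisfying $\|\mf{sl}(V)\|\,{\cal{I}}_{\mf{sl}(V)}={\cal{I}}_{\mf{sl}(V)}\,\|A_3^{(5,4)}\|$ (Theorem \ref{teo:dim15app_a} and its table in Appendix \ref{app:Chev}), obtained by exactly the algorithm you describe. The only minor discrepancy is that the intertwiners found there are monomial (permutation-with-scaling) matrices rather than diagonal ones, which does not affect the argument.
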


%%\input\ifnum\where=1 /lhome/jan/Dropbox/WORK200/\fi ala/Result\dir/A\dynnum/SerreSNormalForm\dirp\pole.tex 
%% The normal form  of Ax5c6
%% on Thu Feb 12 14:06:20 2015
%% in ResultAx5c6/A3/KillingNormalFormAx5c6pb.tex
%\input\ifnum\where=1 /lhome/jan/Dropbox/WORK200/\fi ala/Result\dir/A\dynnum/KillingNormalForm\dirp\pole.tex
% The normal form  of Ax5c6
% on Thu Feb 12 14:06:20 2015
% in ResultAx5c6/A3/KillingNormalFormAx5c6pb.tex

%%%%%%%%%%%%%%%%%%%%%%%%%%%%%%%%%%%%%%%%%%%%%%%%%%%%%%%%%%%
\def\pole{b}
\subsubsection{Poles in $\beta$}
\label{sec:dim15b}
Let the model for \(\aliam{V}{G}{4}{\beta}\) be
% The normal form  of Ax5c6
% on Sun Apr  5 18:19:55 2015
% file ResultAx5c6/A3/SerreSNormalFormAx5c6.tex
\[
\|A_3^{(6,4)}\|=\begin{bmatrix}
0 & \mathbbm{I} & \mathbbm{I} & \mathbbm{I} \\
1 & 0 & \mathbbm{I} & \mathbbm{I} \\
\mathbbm{J} & \mathbbm{J} & 0 & \mathbbm{I} \\
\mathbbm{J} & \mathbbm{J} & 1 & 0
\end{bmatrix}\,,\quad
K_{\mf{b}}(\mf{sl}_{4})_{\gamma}=2\mathbbm{I}+4\mathbbm{I}\mathbbm{J}\,.
\]
\begin{Theorem}[\(\aliam{V}{G}{4}{\beta}\)]
\label{teo:dim15b}
All Automorphic Lie Algebras \(\aliam{V}{G}{4}{\beta}\) are isomorphic and of type  \(A_{3}^{(6,4)}\).
\end{Theorem}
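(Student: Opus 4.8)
The plan is to follow the constructive scheme of Section \ref{sec:Chev} for each of the finitely many four-dimensional irreducible representations, and then to match every resulting Chevalley model against \(\|A_3^{(6,4)}\|\) by an explicit intertwining operator, exactly as was done for \(\aliam{V}{G}{2}{\zeta}\) in Theorem \ref{teo:dim3}. First I would pin down which representations occur: reading off Tables \ref{tab:decompOsl} and \ref{tab:decompYsl}, the four-dimensional irreducibles whose \(\mf{sl}(V)\) has dimension \(15\) are exactly \(\bbbo_8^\flat\), \(\bbbi_6\) and \(\bbbi_7^\flat\), while \(\bbbt\) contributes none. Since we restrict to irreducible representations, only these three cases need be treated.

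For each such \(V\) I would first assemble the generators: take the invariant matrices of Tables \ref{tab:MolienO}--\ref{tab:MolienY}, pass to matrices of invariants via the construction of Section \ref{sec:moi} using the invariant-vector bases of Tables \ref{tab:eqveco}--\ref{tab:vecbasesi}, and divide by \(\beta^3\) so that the poles are confined to \(\Gamma_\beta\) and the entries become monomials in the two zero-homogeneous automorphic functions \(\mathbbm{I}\) and \(\mathbbm{J}\) of Section \ref{sec:af}. I would then run the Chevalley algorithm of Section \ref{sec:Chev}: build a Cartan subalgebra \(\mf{h}=\{h_1,h_2,h_3\}\) of commuting semisimple elements with eigenvalues in \(\splitk\), diagonalise and recombine them into \(H_1,H_2,H_3\) realising the \(A_3\) Cartan matrix, and finally solve \([H_i,M_{\pm\alpha_j}]=\pm a_{j,i}M_{\pm\alpha_j}\) for the simple-root vectors and recursively for the composite roots. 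The output is a concrete Chevalley model \(\|\mf{sl}(V)\|\) whose off-diagonal entries are monomials in \(\mathbbm{I}\) and \(\mathbbm{J}\).

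The isomorphism claim is then settled by producing, for each of the three models, an intertwining operator \({\cal{I}}_{\mf{sl}(V)}\) that is invertible over \(\splitk[\mathbbm{I}]\) and satisfies \(\|\mf{sl}(V)\|\,{\cal{I}}_{\mf{sl}(V)}={\cal{I}}_{\mf{sl}(V)}\,\|A_3^{(6,4)}\|\); since all three models then reduce to a single type, they are of type \(A_3^{(6,4)}\) and mutually isomorphic. As a cross-check one verifies that the invariant \(K_{\mf{b}}(\mf{sl}_4)_\beta=2\mathbbm{I}+4\mathbbm{I}\mathbbm{J}\) is common to the three cases and that \(\kappa_\beta\) agrees across them in Table \ref{tab:codimg}, consistently with Theorem \ref{teo:AL}; it is precisely the bookkeeping of which entries carry \(\mathbbm{I}\) and which carry \(\mathbbm{J}\) (here six and four, respectively) that forces the common superscript \((6,4)\).

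The hard part will be the two steps that take place over the polynomial ring rather than over a field. Finding a semisimple Cartan generator is generic over \(\bbbc\), but here one must insist, as in Remark \ref{rem:CSA}, that all eigenvalues lie in \(\splitk\) itself and solve the resulting quartic characteristic equation explicitly; more delicate still, the change of basis into Chevalley form and the operator \({\cal{I}}_{\mf{sl}(V)}\) must be invertible over \(\splitk[\mathbbm{I}]\), i.e.\ have determinant in \(\splitk^\ast\), not merely over its field of fractions. Because, as noted in Section \ref{sec:Chev}, there is no a priori existence theorem for a Chevalley normal form over such a ring, existence is proved by explicit construction in each of the three cases, the details of which I would relegate to Appendix \ref{app:Chev} in the tabular format already used for Theorem \ref{teo:dim3}.
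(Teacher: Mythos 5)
Your proposal is correct and follows essentially the same route as the paper: the paper's proof (Theorem \ref{teo:dim15app_b} in Appendix \ref{app:Chev}) treats exactly the three cases \(V=\bbbo_8^\flat,\bbbi_6,\bbbi_7^\flat\), computes each Chevalley model by the algorithm of Section \ref{sec:Chev}, and exhibits an explicit intertwining operator, invertible over \(\splitk[\AIJ[I][\alpha][\beta]]\), conjugating it to \(\|A_3^{(6,4)}\|\). Your identification of the delicate points (eigenvalues in \(\splitk\), invertibility over the ring rather than its fraction field, existence only by construction) matches the paper's treatment.
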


\def\pole{c}
%\input\ifnum\where=1 /lhome/jan/Dropbox/WORK200/\fi ala/Result\dir/A\dynnum/SerreSNormalForm\dirp\pole.tex 
% The normal form  of Ax5c6
% on Sun Feb  8 16:39:31 2015
% file /lhome/jan/Dropbox/WORK200/ala/ResultAx5c6/A3/SerreSNormalFormAx5c6.tex
\subsubsection{Poles in $\gamma$}
\label{sec:dim15c}
Let the model for \(\aliam{V}{G}{4}{\gamma}\) be
\[
\|A_3^{(6,5)}\|=\begin{bmatrix}
0 & \mathbbm{I} & \mathbbm{I} & \mathbbm{I} \\
\mathbbm{J} & 0 & \mathbbm{I} & \mathbbm{I} \\
\mathbbm{J} & \mathbbm{J} & 0 & \mathbbm{I} \\
\mathbbm{J} & \mathbbm{J} & 1 & 0
\end{bmatrix}\,,\quad
K_{\mf{b}}(\mf{sl}_{4})_{\gamma}=\mathbbm{I}+5\mathbbm{I}\mathbbm{J}\,.
%\input\ifnum\where=1 /lhome/jan/Dropbox/WORK200/\fi ala/Result\dir/A\dynnum/KillingNormalForm\dirp\pole.tex
% The normal form  of Ax5c6
% on Sun Feb  8 16:39:31 2015
% in /lhome/jan/Dropbox/WORK200/ala/ResultAx5c6/A3/KillingNormalFormAx5c6pc.tex
\]

%%\input\ifnum\where=1 /lhome/jan/Dropbox/WORK200/\fi ala/Result\dir/A\dynnum/KillingNormalForm\dirp\pole.tex
%% The normal form  of Ax5c6
%% on Sun Feb  8 16:39:31 2015
%% in /lhome/jan/Dropbox/WORK200/ala/ResultAx5c6/A3/KillingNormalFormAx5c6pc.tex
\begin{Theorem}[\(\aliam{V}{G}{4}{\gamma}\)]
\label{teo:dim15c}
All Automorphic Lie Algebras \(\aliam{V}{G}{4}{\gamma}\) are isomorphic and of 
type  \(A_{3}^{(6,5)}\).
\end{Theorem}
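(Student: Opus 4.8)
The plan is to follow the computational strategy that established Theorem~\ref{teo:dim3}, now applied to the finitely many four-dimensional irreducible representations arising here. From the decomposition Tables~\ref{tab:decompOsl}--\ref{tab:decompYsl} the relevant cases are exactly $V=\bbbo_8^\flat$, $V=\bbbi_6$ and $V=\bbbi_7^\flat$, in each of which $\mf{sl}(V)$ is $15$-dimensional of type $A_3$. For each case I would first divide the degree-$|G|$ invariant matrices listed in Tables~\ref{tab:MolienO}--\ref{tab:MolienY} by $\gamma^2$, obtaining zero-homogeneous generators $\inv{M}^1,\dots,\inv{M}^{15}$ of the ALiA with coefficients in $\splitk[\mathbbm{I},\mathbbm{J}]$, and then pass to the matrices of invariants $\psi(\inv{M}^j)$ of Section~\ref{sec:moi}. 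Since $\psi$ is a Lie algebra homomorphism preserving structure constants, it is enough to produce the Chevalley normal form of the $\psi$-image.

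Next I would carry out the Chevalley construction of Section~\ref{sec:Chev} in each case. This means building the Cartan subalgebra $\mf{h}$ from $h=3$ commuting semisimple matrices of invariants whose eigenvalues lie in $\splitk$, diagonalising them simultaneously, and fixing the canonical basis $H_1,H_2,H_3$ by solving $\alpha_i(H_k)=a_{i,k}$ against the $A_3$ Cartan matrix. I would then obtain all weight vectors by solving $[H_i,M_{\pm\alpha_j}]=\pm a_{j,i}M_{\pm\alpha_j}$ and, recursively, $[H_i,M_{\pm(\alpha_j+\alpha_k)}]=\pm(a_{j,i}+a_{k,i})M_{\pm(\alpha_j+\alpha_k)}$, and verify that the resulting change of generators is invertible over $\splitk[\mathbbm{I}]$. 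Reading off the distribution of $\mathbbm{I}$'s and $\mathbbm{J}$'s from the off-diagonal entries of the Chevalley model in each case, one checks that it matches the pattern $\|A_3^{(6,5)}\|$ displayed above and that $K_{\mf{b}}(\mf{sl}_4)_\gamma=\mathbbm{I}+5\mathbbm{I}\mathbbm{J}$.

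To conclude that the three cases are mutually isomorphic and isomorphic to the model, I would --- exactly as for $\aliam{V}{G}{2}{\zeta}$ --- exhibit for each $V$ a diagonal intertwining operator ${\cal{I}}_{\mf{sl}(V)}$, with entries monomials in $\mathbbm{I}$ and $\mathbbm{J}$, satisfying
\[
\|\mf{sl}(V)\|\,{\cal{I}}_{\mf{sl}(V)}={\cal{I}}_{\mf{sl}(V)}\,\|A_3^{(6,5)}\|.
\]
Conjugation by ${\cal{I}}_{\mf{sl}(V)}$ is invertible over the field of fractions $\splitk(\lambda)$ and sends the $\splitk[\mathbbm{I}]$-module of weight vectors of $\mf{sl}(V)$ onto that of the abstract model, the inverse factors $1/\mathbbm{I}$ and $1/\mathbbm{J}$ cancelling so as merely to permute $\mathbbm{I}\leftrightarrow\mathbbm{J}$ in the prescribed entries; it therefore realises a genuine isomorphism of Automorphic Lie Algebras carrying the fixed $A_3$ Cartan matrix along. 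This identifies each ALiA with $A_3^{(6,5)}$ and hence with one another.

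The main obstacle is the explicit Chevalley reduction over the polynomial ring $\splitk[\mathbbm{I}]$ rather than over a field: simultaneously diagonalising $\mf{h}$ and solving the weight-vector equations requires inverting matrices whose entries depend on $\mathbbm{I}$, and one must keep the change of basis invertible in $\splitk[\mathbbm{I}]$ (determinant in $\splitk^\ast$), not merely over $\splitk(\lambda)$. As flagged in Remark~\ref{rem:CSA}, the decisive simplification in the $\mf{sl}(V)$ case is that the reduced characteristic polynomials factor explicitly over $\splitk$, so the eigenvalue computations (carried out with Singular) terminate and the semisimple elements can genuinely be diagonalised. The step most likely to require care is confirming that the off-diagonal $\mathbbm{I}/\mathbbm{J}$ pattern is forced, i.e.\ that no choice in the recursive construction of weight vectors can change the isomorphism type, which is precisely what the intertwining operator and the invariant $K_{\mf{b}}$ certify.
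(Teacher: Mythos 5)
Your proposal follows essentially the same route as the paper: the actual proof (Theorem \ref{teo:dim15app_c} in Appendix \ref{app:Chev}) simply records, for the three relevant cases $V=\bbbo_8^\flat$, $\bbbi_6$, $\bbbi_7^\flat$, the computed Chevalley models together with explicit intertwining operators satisfying $\|\mf{sl}(V)\|\,{\cal{I}}_{\mf{sl}(V)}={\cal{I}}_{\mf{sl}(V)}\,\|A_3^{(6,5)}\|$, exactly as you outline. The only small discrepancy is that for poles in $\gamma$ the intertwining operators turn out to be permutation matrices rather than diagonal ones (in general they are monomial matrices with entries in $\{1,\mathbbm{I},\mathbbm{J}\}$), which does not affect the argument.
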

%%%%%%%%%%%%%%%%%%%%%%%%%%%%%%%%%%%%%%%%%%%%%%%%%%%%%%%%%%%%%%%%%%%%%%%%%%%%%%%%%%%%%%%%%%%%%%%%%%%%%%%%%%%%%%%%%%%%%%
\subsection{Automorphic Lie Algebras \(\aliam{V}{G}{5}{\zeta}\)}
\label{sec:dim24}
\def\pole{a}
\def\dir{Ax5c8}
\def\dirp{Ax5c8p}
\def\dynnum{4}
%\input\ifnum\where=1 /lhome/jan/Dropbox/WORK200/\fi ala/Result\dir/A\dynnum/SerreSNormalForm\dirp\pole.tex 
% The normal form  of Ax5c8
% on Sun Feb  8 17:07:20 2015
% file /lhome/jan/Dropbox/WORK200/ala/ResultAx5c8/A4/SerreSNormalFormAx5c8.tex
\subsubsection{Poles in $\alpha$}
\label{sec:dim24a}
Let the model for \(\aliam{V}{G}{5}{\alpha}\) be
\[
\|A_4^{(8,6)}\|=\begin{bmatrix}
0 & 1 & \mathbbm{I} & \mathbbm{I} & \mathbbm{I} \\
1 & 0 & \mathbbm{I} & \mathbbm{I} & \mathbbm{I} \\
\mathbbm{J} & \mathbbm{J} & 0 &1 & \mathbbm{I} \\
\mathbbm{J} & \mathbbm{J} &1 & 0 & \mathbbm{I} \\
\mathbbm{J} & \mathbbm{J} &1 & 1& 0
\end{bmatrix}\,,\quad
K_{\mf{b}}(\mf{sl}_{5})_{\alpha}=2+2\mathbbm{I}+6\mathbbm{I}\mathbbm{J}\,.
%\input\ifnum\where=1 /lhome/jan/Dropbox/WORK200/\fi ala/Result\dir/A\dynnum/KillingNormalForm\dirp\pole.tex
% The normal form  of Ax5c8
% on Sun Feb  8 17:07:20 2015
% in /lhome/jan/Dropbox/WORK200/ala/ResultAx5c8/A4/KillingNormalFormAx5c8pa.tex
\]

%%\input\ifnum\where=1 /lhome/jan/Dropbox/WORK200/\fi ala/Result\dir/A\dynnum/KillingNormalForm\dirp\pole.tex
%% The normal form  of Ax5c8
%% on Sun Feb  8 17:07:20 2015
%% in /lhome/jan/Dropbox/WORK200/ala/ResultAx5c8/A4/KillingNormalFormAx5c8pa.tex

\begin{Theorem}[\(\aliam{V}{G}{5}{\alpha}\)]
\label{teo:dim24a}
All Automorphic Lie Algebras \(\aliam{V}{G}{5}{\alpha}\) are isomorphic and of 
type  \(A_{4}^{(8,6)}\).
\end{Theorem}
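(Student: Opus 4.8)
First I would pin down exactly which algebras make up the family \(\aliam{V}{G}{5}{\alpha}\). Since the whole construction is restricted to irreducible projective representations, the base algebra \(\mf{sl}_5\) can occur only when \(\dim V=5\), and inspecting the character tables (Tables \ref{tab:charT}, \ref{tab:charO}, \ref{tab:charY}) shows that the unique five-dimensional irreducible among the \(\bbbt^\flat,\bbbo^\flat,\bbbi^\flat\) representations is \(\bbbi_8\). Hence the family collapses to the single algebra \((\mf{sl}(\bbbi_8)\otimes\splitk(\lambda))_\alpha^\bbbi\); the assertion that all members are isomorphic is then automatic, and the real content is to identify its type as \(A_4^{(8,6)}\), following the template set by the proof of Theorem \ref{teo:dim3}.

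The plan is to run the algorithm of Section \ref{sec:Chev} on this algebra. Starting from the decomposition \(\mf{sl}(\bbbi_8)\cong\bbbi_4\oplus\bbbi_5\oplus2\bbbi_6\oplus2\bbbi_8\) of Table \ref{tab:decompYsl}, I would assemble the minimal generating set of \(|\bbbi|\)-homogeneous invariant matrices from Table \ref{tab:MolienY} together with the invariant vectors of Table \ref{tab:vecbasesi}, and pass to the matrices of invariants \(\psi(\inv M^j)\) of Section \ref{sec:moi}, which share the same structure constants. Inside this \(24\)-dimensional Lie algebra over \(\splitk[\AIJ[I][\beta][\alpha]]\) I would locate a semisimple element \(h_1\) with eigenvalues in \(\splitk\), diagonalise it, and extend inductively to a rank-four Cartan subalgebra \(\mf{h}=\langle h_1,\ldots,h_4\rangle\). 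Rescaling \(\mf{h}\) against the \(A_4\) Cartan matrix fixes the canonical generators \(H_1,\ldots,H_4\), and the weight vectors then follow by solving \([H_i,M_{\pm\alpha_j}]=\pm a_{j,i}M_{\pm\alpha_j}\) for the simple roots and, recursively, the analogous equations for the non-simple roots, until all twenty weight vectors of the \(A_4\) root system have been produced.

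Recording which weight vectors carry an \(\AIJ[I][\beta][\alpha]\) and which carry an \(\AIJ[J][\gamma][\alpha]\) yields the Chevalley model \(\nf{\bbbi_8}\). To finish I would exhibit an explicit intertwining operator \({\cal I}_{\mf{sl}(\bbbi_8)}\), diagonal in \(\AIJ[I][\beta][\alpha]\) and \(\AIJ[J][\gamma][\alpha]\), with \(\nf{\bbbi_8}\,{\cal I}_{\mf{sl}(\bbbi_8)}={\cal I}_{\mf{sl}(\bbbi_8)}\,\|A_4^{(8,6)}\|\) and invertible over \(\splitk[\AIJ[I][\beta][\alpha]]\); as an independent check, the traceform sum over the positive roots must reproduce \(K_{\mf{b}}(\mf{sl}_{5})_\alpha=2+2\mathbbm{I}+6\mathbbm{I}\mathbbm{J}\).

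The hard part will be the opening step of the Chevalley algorithm: the semisimple generator of \(\mf{h}\) carries a degree-five characteristic polynomial, now with coefficients in \(\splitk[\AIJ[I][\beta][\alpha]]\) rather than in \(\splitk\). By Remark \ref{rem:CSA} such a quintic is always solvable over \(\splitk\) in the \(\mf{sl}(V)\) cases, but exhibiting the eigenvalues explicitly and then tracking the successive diagonalisations and root-string computations over the ring \(\splitk[\AIJ[I][\beta][\alpha]]\) is delicate. Since this is the largest-rank, largest-group instance in the list (five-by-five matrices for a group of order sixty), I expect the bookkeeping, and in particular the final verification that the change of basis is invertible over \(\splitk[\AIJ[I][\beta][\alpha]]\), to be the step that genuinely relies on the computer-algebra tools (Singular \cite{MR2363237}, GAP \cite{GAP} and FORM \cite{Form00}) used elsewhere in the paper.
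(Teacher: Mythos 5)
Your proposal is correct and follows essentially the same route as the paper: the family indeed collapses to the single case $V=\bbbi_8$, and the paper's proof (Theorem B.7, Table B7) consists precisely of running the Chevalley algorithm of Section \ref{sec:Chev} on $(\mf{sl}(\bbbi_8)\otimes\splitk(\lambda))_\alpha^{\bbbi}$ and exhibiting the resulting Chevalley model together with an explicit intertwining operator conjugating it to $\|A_4^{(8,6)}\|$. The only minor discrepancy is that the intertwiner in the $\alpha$ case turns out to be a plain permutation matrix rather than one carrying $\AIJ[I][\beta][\alpha]$ or $\AIJ[J][\gamma][\alpha]$ entries, but this does not affect the argument.
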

\def\pole{b}
%\input\ifnum\where=1 /lhome/jan/Dropbox/WORK200/\fi ala/Result\dir/A\dynnum/SerreSNormalForm\dirp\pole.tex 
% The normal form  of Ax5c8
% on Sun Feb  8 17:27:31 2015
% file /lhome/jan/Dropbox/WORK200/ala/ResultAx5c8/A4/SerreSNormalFormAx5c8.tex
%%%%%%%%%%%%%%%%%%%%%%%%%%%%%%%%%%%%%%%%%%%%%%%%%%%%%%%%%%%
\subsubsection{Poles in $\beta$}
\label{sec:dim24b}
Let the model for \(\aliam{V}{G}{5}{\beta}\) be
\[
\|A_4^{(10,6)}\|=\begin{bmatrix}
0 & \mathbbm{I} & \mathbbm{I} & \mathbbm{I} & \mathbbm{I} \\
1 & 0 & \mathbbm{I} & \mathbbm{I} & \mathbbm{I} \\
\mathbbm{J} & \mathbbm{J} & 0 & \mathbbm{I} & \mathbbm{I} \\
\mathbbm{J} & \mathbbm{J} & 1& 0 & \mathbbm{I} \\
\mathbbm{J} & \mathbbm{J} &1 & 1& 0
\end{bmatrix}\,,\quad
K_{\mf{b}}(\mf{sl}_{5})_{\beta}=4\mathbbm{I}+6\mathbbm{I}\mathbbm{J}\,.
%\input\ifnum\where=1 /lhome/jan/Dropbox/WORK200/\fi ala/Result\dir/A\dynnum/KillingNormalForm\dirp\pole.tex
% The normal form  of Ax5c8
% on Sun Feb  8 17:27:31 2015
% in /lhome/jan/Dropbox/WORK200/ala/ResultAx5c8/A4/KillingNormalFormAx5c8pb.tex
\]

\begin{Theorem}[\(\aliam{V}{G}{5}{\beta}\)]
\label{teo:dim24b}
All Automorphic Lie Algebras \(\aliam{V}{G}{5}{\beta}\) are isomorphic and of type  \(A_{4}^{(10,6)}\).
\end{Theorem}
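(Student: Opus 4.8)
The plan is to carry out the Chevalley normal form construction of Section~\ref{sec:Chev} explicitly in the single relevant case and then read off the resulting Chevalley model. Among all the irreducible representations of $\bbbt^\flat$, $\bbbo^\flat$ and $\bbbi^\flat$ the only one of dimension $5$ is $\bbbi_8$ (compare the dimensions in Tables~\ref{tab:charT}--\ref{tab:charY}), so $\bbbi_8$ is the unique representation producing an $\mf{sl}_5$-based algebra. Hence the claim that ``all'' such algebras are isomorphic is immediate---there is exactly one---and the content of the theorem is the identification of its Chevalley normal form with \(\|A_4^{(10,6)}\|\), in agreement with the general statement of Theorem~\ref{teo:AL}.

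First I would assemble the zero-homogeneous generators. Starting from the $\bbbi_8$ rows of Table~\ref{tab:MolienY}, namely $\mf{M}_8^4,\mf{M}_8^{8},\mf{M}_8^{12},\mf{M}_8^{16},\mf{M}_8^{20}$, and the invariant vectors in the $\bbbi_8$ rows of Table~\ref{tab:vecbasesi}, I would apply the homomorphism $\psi$ of Section~\ref{sec:moi} to pass to matrices of invariants, and divide throughout by $\beta^3$ to render them zero-homogeneous. This produces $n^2-1=24$ generators over the ring $\splitk[\AIJ[I][\alpha][\beta]]$, subject to the single relation $\AIJ[I][\alpha][\beta]+1+\AIJ[J][\gamma][\beta]=0$ recorded in Section~\ref{sec:af}.

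Next I would build the Cartan subalgebra by the inductive procedure of Section~\ref{sec:Chev}: locate a semisimple matrix of invariants with all eigenvalues in $\splitk$, diagonalise it, and successively adjoin three further commuting semisimple elements to obtain $h_1,\dots,h_4$; Remark~\ref{rem:CSA} guarantees that the relevant characteristic polynomials, here of degree at most $5$, factor over $\splitk$, so the eigenvalues are available in closed form. I would then pass to the canonical basis $H_1,\dots,H_4$ by solving $\alpha_i(H_k)=a_{i,k}$ against the $A_4$ Cartan matrix, exactly as in the worked $\bbbi_4$ example of Section~\ref{sec:Chev}, solve $[H_i,M_{\pm\alpha_j}]=\pm a_{j,i}M_{\pm\alpha_j}$ for the simple-root weight vectors, and then, recursively computing $[M_{\alpha_j},M_{\alpha_k}]$, obtain weight vectors for all $10$ positive roots and their negatives.

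Finally I would check that the change of basis from the $24$ generators to this Cartan-Weyl basis is invertible over $\splitk[\AIJ[I][\alpha][\beta]]$ (equivalently, exhibit an intertwining operator to \(\|A_4^{(10,6)}\|\)), read off the placement of the $\mathbbm{I}$s and $\mathbbm{J}$s in the off-diagonal entries to confirm the counts $10$ and $6$, and verify $K_{\mf{b}}(\mf{sl}_5)_\beta=\sum_{\alpha\in\Phi^+}\langle e_\alpha,e_{-\alpha}\rangle=4\mathbbm{I}+6\mathbbm{I}\mathbbm{J}$. The main obstacle is computational rather than conceptual: solving the $5\times5$ eigenvalue problem (a quintic characteristic polynomial) over $\splitk[\AIJ[I][\alpha][\beta]]$ and, above all, certifying invertibility of the transformation over this ring rather than merely over the field $\splitk$; here one relies on the explicit solvability afforded by the field extension (Remark~\ref{rem:CSA}) and on the Singular computations referenced in Section~\ref{sec:Chev}.
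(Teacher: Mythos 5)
Your overall strategy is the one the paper itself follows: observe that $\bbbi_8$ is the only five-dimensional irreducible projective representation among the $\bbbt\bbbo\bbbi$ groups, so the isomorphism claim reduces to a single explicit computation, and then run the Chevalley normal form construction of Section~\ref{sec:Chev} and exhibit an intertwining operator with the model \(\|A_4^{(10,6)}\|\). This is exactly what the paper records in Theorem~\ref{teo:dim25b} and the accompanying table in Appendix~\ref{app:Chev}, whose middle column gives the Chevalley model and intertwiner for $V=\bbbi_8$ with poles at $\Gamma_\beta$.

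There is, however, one concrete error in your description of the starting data. The generators of $(\mf{sl}(\bbbi_8)\otimes\splitk[X,Y])^{\bbbi^\flat}$ are \emph{not} the five matrices $\mf{M}_8^4,\dots,\mf{M}_8^{20}$ from the $\bbbi_8$ rows of Table~\ref{tab:MolienY}: the lower index there labels the isotypic component in the \emph{decomposition} of $\mf{sl}(V)$, not the representation $V$ on which the matrices act. Since $\mf{sl}(\bbbi_8)\cong\bbbi_4\oplus\bbbi_5\oplus2\bbbi_6\oplus2\bbbi_8$ (Table~\ref{tab:decompYsl}), the correct generating set consists of the embeddings $\vartheta^{\mf{sl}(\bbbi_8)}$ of the invariant matrices from the $\bbbi_4$, $\bbbi_5$, $\bbbi_6$ and $\bbbi_8$ rows of Table~\ref{tab:MolienY}, with the $\bbbi_6$ and $\bbbi_8$ families taken twice because of their multiplicity; this is what yields $3+3+2\cdot4+2\cdot5=24=n^2-1$ generators (cf.\ the worked $\mf{sl}_3(\bbbt_7)$ example in Section~\ref{sec:it}). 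Taken literally, your list spans at most a $10$-dimensional piece of the algebra and the construction would fail. The invariant vectors, by contrast, are correctly taken from the $\bbbi_8$ rows of Table~\ref{tab:vecbasesi}, since they live in $V\otimes\splitk[X,Y]$ with $V=\bbbi_8$. With this correction the remainder of your plan --- CSA construction, canonical basis against the $A_4$ Cartan matrix, weight vectors for the ten positive roots, invertibility of the change of basis over $\splitk[\AIJ[I][\alpha][\beta]]$, and the counts $(10,6)$ together with $K_{\mf{b}}(\mf{sl}_5)_\beta=4\mathbbm{I}+6\mathbbm{I}\mathbbm{J}$ --- coincides with what the paper actually does.
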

\def\pole{c}
%\input\ifnum\where=1 /lhome/jan/Dropbox/WORK200/\fi ala/Result\dir/A\dynnum/SerreSNormalForm\dirp\pole.tex 
% The normal form  of Ax5c8
% on Mon Feb 23 12:29:12 2015
% file ResultAx5c8/A4/SerreSNormalFormAx5c8.tex
%%%%%%%%%%%%%%%%%%%%%%%%%%%%%%%%%%%%%%%%%%%%%%%%%%%%%%%%%%%
\subsubsection{Pole in $\gamma$}
\label{sec:dim24c}
Let the model for \(\aliam{V}{G}{5}{\gamma}\) be
\[
\|A_4^{(10,8)}\|=\begin{bmatrix}
0 & \mathbbm{I} & \mathbbm{I} & \mathbbm{I} & \mathbbm{I} \\
1 & 0 & \mathbbm{I} & \mathbbm{I} & \mathbbm{I} \\
\mathbbm{J} & \mathbbm{J} & 0 & \mathbbm{I} & \mathbbm{I} \\
\mathbbm{J} & \mathbbm{J} & \mathbbm{J} & 0 & \mathbbm{I} \\
\mathbbm{J} & \mathbbm{J} & \mathbbm{J} & 1& 0
\end{bmatrix}\,,\quad
K_{\mf{b}}(\mf{sl}_{5})_{\gamma}=2\mathbbm{I}+8\mathbbm{I}\mathbbm{J}\,.
%\input\ifnum\where=1 /lhome/jan/Dropbox/WORK200/\fi ala/Result\dir/A\dynnum/KillingNormalForm\dirp\pole.tex
% The normal form  of Ax5c8
% on Mon Feb 23 12:29:12 2015
% in ResultAx5c8/A4/KillingNormalFormAx5c8pc.tex
%K_{\mf{h}}(\mf{sl}_5)=\begin{bmatrix}
%2&-1&0&0\\
%-1&2&-1&0\\
%0&-1&2&-1\\
%0&0&-1&2\end{bmatrix}\,,
%\]
%\[
%K_{\mf{b}}(\mf{sl}_{5})_{\gamma}=\begin{bmatrix}
%\mathbbm{I},
%&\mathbbm{I}\mathbbm{J},
%&\mathbbm{I}\mathbbm{J},
%&\mathbbm{I}\mathbbm{J},
%&\mathbbm{I}\mathbbm{J},
%&\mathbbm{I}\mathbbm{J},
%&\mathbbm{I},
%&\mathbbm{I}\mathbbm{J},
%&\mathbbm{I}\mathbbm{J},
%&\mathbbm{I}\mathbbm{J}\end{bmatrix}\,.
\]
%Let
%\begin{eqnarray*}
%\|A_4^{(10,8)}\|&=&\begin{bmatrix}
%0 & \mathbbm{I} & \mathbbm{I} & \mathbbm{I} & \mathbbm{I} \\
%& 0 & \mathbbm{I} & \mathbbm{I} & \mathbbm{I} \\
%\mathbbm{J} & \mathbbm{J} & 0 & \mathbbm{I} & \mathbbm{I} \\
%\mathbbm{J} & \mathbbm{J} & \mathbbm{J} & 0 & \mathbbm{I} \\
%\mathbbm{J} & \mathbbm{J} & \mathbbm{J} & & 0
%\end{bmatrix}
%\end{eqnarray*}
%,\quad
%%\input\ifnum\where=1 /lhome/jan/Dropbox/WORK200/\fi ala/Result\dir/A\dynnum/KillingNormalForm\dirp\pole.tex
%% The normal form  of Ax5c8
%% on Mon Feb 23 12:29:12 2015
%% in ResultAx5c8/A4/KillingNormalFormAx5c8pc.tex
%\begin{eqnarray*}
%K_{\mf{h}}(\mf{sl}_5)&=&\begin{bmatrix}
%2&-1&0&0\\
%-1&2&-1&0\\
%0&-1&2&-1\\
%0&0&-1&2\end{bmatrix}
%\end{eqnarray*}
%\begin{eqnarray*}
%K_{\mf{b}}(\mf{sl}_{5})_{\gamma}&=&\begin{bmatrix}
%\mathbbm{I},
%&\mathbbm{I}\mathbbm{J},
%&\mathbbm{I}\mathbbm{J},
%&\mathbbm{I}\mathbbm{J},
%&\mathbbm{I}\mathbbm{J},
%&\mathbbm{I}\mathbbm{J},
%&\mathbbm{I},
%&\mathbbm{I}\mathbbm{J},
%&\mathbbm{I}\mathbbm{J},
%&\mathbbm{I}\mathbbm{J}\end{bmatrix}
%\end{eqnarray*}

%\subsubsection{Chevalley model for \(\aliam{V}{G}{5}{\gamma}\)}
\begin{Theorem}[\(\aliam{V}{G}{5}{\gamma}\)]
\label{teo:dim24c}
All Automorphic Lie Algebras \(\aliam{V}{G}{5}{\gamma}\) are isomorphic and of 
type  \(A_{4}^{(10,8)}\).
\end{Theorem}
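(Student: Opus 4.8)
The plan is to follow the general Chevalley-normal-form construction of Section~\ref{sec:Chev}, specialised to the only representation that can occur in this dimension. Among the $\bbbt\bbbo\bbbi$ groups the unique $5$-dimensional irreducible representation is $\bbbi_8$ (the representation $x_4$ of $\bbbi^\flat$, which is in fact an honest linear representation of $\bbbi$), so the family $\aliam{V}{G}{5}{\gamma}$ consists of the single algebra $(\mf{sl}(\bbbi_8)\otimes\splitk(\lambda))^{\bbbi}_{\gamma}$. Hence the mutual isomorphism asserted by the theorem is automatic, and the entire content reduces to identifying this one algebra with the model $\|A_4^{(10,8)}\|$, i.e.\ to exhibiting a Chevalley basis realising the claimed $\mathbbm{I}/\mathbbm{J}$-pattern together with an intertwining operator.

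First I would assemble a generating set. Using the decomposition $\mf{sl}(\bbbi_8)\cong\bbbi_4\oplus\bbbi_5\oplus 2\bbbi_6\oplus 2\bbbi_8$ from Table~\ref{tab:decompYsl} together with the $|\bbbi|$-homogeneous invariant matrices of Table~\ref{tab:MolienY}, I obtain the $24$ invariant matrices generating $(\mf{sl}(\bbbi_8)\otimes\splitk[X,Y])^{\bbbi^\flat}$ over the primary invariants. Dividing each by $\gamma^2$ gives zero-homogeneous generators $\inv{M}^1,\dots,\inv{M}^{24}$ with poles confined to $\Gamma_\gamma$, and passing through the Lie-algebra homomorphism $\psi$ of Section~\ref{sec:moi} — built from the invariant-vector bases of Table~\ref{tab:vecbasesi} — replaces them by $5\times 5$ matrices of invariants whose entries are polynomials in $\AIJ[I][][\gamma]$.

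Next I would build the Cartan subalgebra $\mf{h}$ exactly as prescribed: search for a matrix of invariants that is semisimple with all eigenvalues in $\splitk$, diagonalise it, and inductively produce four commuting, $\splitk$-linearly independent diagonal elements $h_1,\dots,h_4$. Applying the canonical-form algorithm of Section~\ref{sec:Chev}, I would form the nondegenerate matrix $(\alpha_i(h_j))$ of successive eigenvalue differences, solve $(\alpha_i(h_j))\,c=A$ against the $A_4$ Cartan matrix to obtain $H_1,\dots,H_4$ in standard form, and then compute the simple weight vectors $M_{\pm\alpha_j}$ from $[H_i,M_{\pm\alpha_j}]=\pm a_{j,i}M_{\pm\alpha_j}$, filling in the non-simple root vectors recursively by commutators. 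Recording which root vectors carry a factor $\AIJ[I][][\gamma]$ and which carry $\AIJ[J][][\gamma]$ produces the Chevalley model, and a direct count must reproduce the $10$ $\mathbbm{I}$'s and $8$ $\mathbbm{J}$'s of $\|A_4^{(10,8)}\|$ with $K_{\mf{b}}(\mf{sl}_5)_\gamma=2\mathbbm{I}+8\mathbbm{I}\mathbbm{J}$. Finally I would exhibit the diagonal intertwining operator $\mc{I}$, with entries monomials in $\mathbbm{I}$ and $\mathbbm{J}$, satisfying $\|\mf{sl}(\bbbi_8)\|\,\mc{I}=\mc{I}\,\|A_4^{(10,8)}\|$, and check it is invertible over $\splitk[\AIJ[I][][\gamma]]$, thereby fixing the isomorphism type.

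The main obstacle is the computational core in this higher-rank, larger-representation setting: exhibiting a semisimple element of the $24$-dimensional algebra whose degree-$5$ reduced characteristic polynomial — now with coefficients in the polynomial ring $\splitk[\AIJ[I][][\gamma]]$ rather than in $\splitk$ — factors completely with roots in $\splitk$, so that the element can be diagonalised over the ring and a full rank-$4$ Cartan subalgebra assembled around it. As stressed in Remark~\ref{rem:CSA}, it is exactly the passage to the field extension $\splitk$ that keeps these characteristic equations explicitly solvable even at degree five; the genuine labour is then the bookkeeping of eigenvalues and commutators over $\splitk[\AIJ[I][][\gamma]]$ required to pin down the precise $\mathbbm{I}/\mathbbm{J}$-pattern and confirm that it agrees with the claimed model. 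These verifications are carried out with the computer-algebra tools described in the introduction.
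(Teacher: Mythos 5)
Your proposal follows essentially the same route as the paper: the paper's proof of this theorem consists precisely of exhibiting, for the single relevant representation $V=\bbbi_8$ (the only $5$-dimensional irreducible among the $\bbbt\bbbo\bbbi$ covers), the computed Chevalley model and the intertwining operator against $\|A_4^{(10,8)}\|$ (Table B7), obtained by the invariant-matrix, matrices-of-invariants and CSA-diagonalisation machinery of Sections \ref{sec:comp_inv_mat}--\ref{sec:Chev} that you describe. Your correct observation that the family is a singleton, so that the mutual-isomorphism claim is automatic and only the identification of the $\mathbbm{I}/\mathbbm{J}$-pattern remains, matches what the paper implicitly does.
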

%%%%%%%%%%%%%%%%%%%%%%%%%%%%%%%%%%%%%%%%%%%%%%%%%%%%%%%%%%%%%%%%%%%%%%%%%%%%%
\subsection{Automorphic Lie Algebras \(\aliam{V}{G}{6}{\zeta}\) }
\label{sec:dim35}
\def\pole{a}
\def\dynnum{5}
\def\dir{Ax5c9}
\def\dirp{Ax5c9p}
%\input\ifnum\where=1 /lhome/jan/Dropbox/WORK200/\fi ala/Result\dir/A\dynnum/SerreSNormalForm\dirp\pole.tex 
%\input\ifnum\where=1 /lhome/jan/Dropbox/WORK200/\fi ala/Result\dir/A\dynnum/KillingNormalForm\dirp\pole.tex
% The normal form  of Ax5c9
% on Thu Feb 12 19:25:32 2015
% file ResultAx5c9/A5/SerreSNormalFormAx5c9.tex
\subsubsection{Poles in $\alpha$}
\label{sec:dim35a}
Let the model for \(\aliam{V}{G}{6}{\alpha}\) be
\[
\|A_5^{(12,9)}\|=\begin{bmatrix}
0 & 1& \mathbbm{I} & \mathbbm{I} & \mathbbm{I} & \mathbbm{I} \\
1& 0 & \mathbbm{I} & \mathbbm{I} & \mathbbm{I} & \mathbbm{I} \\
1& 1 & 0 & 1& \mathbbm{I} & \mathbbm{I} \\ 
\mathbbm{J} & \mathbbm{J} & \mathbbm{J} & 0 & \mathbbm{I} & \mathbbm{I} \\
\mathbbm{J} & \mathbbm{J} & \mathbbm{J} & 1& 0 & 1 \\
\mathbbm{J} & \mathbbm{J} & \mathbbm{J} & 1& 1& 0
\end{bmatrix}\,,\quad
K_{\mf{b}}(\mf{sl}_{6})_{\alpha}=2+4\mathbbm{I}+\mathbbm{J}+8\mathbbm{I}\mathbbm{J}\,.
\]
%,\quad
%\subsubsection{Chevalley model for \(\aliam{V}{G}{6}{\alpha}\)}
\begin{Theorem}[\(\aliam{V}{G}{6}{\alpha}\)]
\label{teo:dim35a}
All Automorphic Lie Algebras \(\aliam{V}{G}{6}{\alpha}\) are isomorphic and of 
type  \(A_{5}^{(12,9)}\).
\end{Theorem}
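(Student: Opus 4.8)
The plan is to begin by observing that the hypothesis $\dim V=6$ determines the representation completely. Scanning the character tables of the $\bbbt\bbbo\bbbi$ groups (Tables \ref{tab:charT}--\ref{tab:charY}), the only six-dimensional irreducible projective representation is $\bbbi_9^\flat$; equivalently, the only entry of dimension $35=6^2-1$ among the $\mf{sl}(V)$ decomposition tables is $\mf{sl}(\bbbi_9^\flat)$ in Table \ref{tab:decompYsl}. Hence there is exactly one algebra $\aliam{\bbbi_9^\flat}{\bbbi}{6}{\alpha}$ to analyse, so the assertion that all such algebras are isomorphic holds trivially; the entire content of the theorem is to exhibit the Chevalley normal form of this single algebra and to verify that it is of type \(\|A_5^{(12,9)}\|\).

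First I would assemble the generators. Starting from the decomposition $\mf{sl}(\bbbi_9^\flat)\cong 2\bbbi_4\oplus 2\bbbi_5\oplus 2\bbbi_6\oplus 3\bbbi_8$ (Table \ref{tab:decompYsl}), I build the $|\bbbi|$-homogeneous invariant matrices by transvection from the corresponding ground forms as in Section \ref{sec:comp_inv_mat}, then divide by $\alpha^{d_G}=\alpha^5$ to obtain the $35$ zero-homogeneous generators $\inv{M}^1,\dots,\inv{M}^{35}$ of the ALiA. Applying the Lie algebra homomorphism $\psi$ of Section \ref{sec:moi} with respect to a basis of invariant vectors for $\bbbi_9^\flat$ (Table \ref{tab:vecbasesi}) replaces these by their matrices of invariants, whose coefficients are polynomials in $\mathbbm{I}$ and in which the structure constants are cheapest to compute.

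Next I would run the construction of Section \ref{sec:Chev} verbatim, following the template of Theorem \ref{teo:dim3}. Since $\mf{sl}_6$ has rank five, I search inductively for five commuting semisimple matrices of invariants $h_1,\dots,h_5$ whose eigenvalues lie in $\splitk$, simultaneously diagonalise them, and then solve the linear system $\alpha_i(H_k)=a_{i,k}$ against the $A_5$ Cartan matrix to normalise the Cartan subalgebra to $H_1,\dots,H_5$. With the CSA in hand I solve $[H_i,M_{\pm\alpha_j}]=\pm a_{j,i}M_{\pm\alpha_j}$ for the simple-root weight vectors and generate the composite-root weight vectors recursively by commutators, exactly as in the smaller cases (Theorems \ref{teo:dim3}--\ref{teo:dim24c}). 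Reading off the resulting pattern of $\mathbbm{I}$s and $\mathbbm{J}$s should reproduce the matrix \(\|A_5^{(12,9)}\|\); I would then present an intertwining operator \({\cal I}_{\mf{sl}(\bbbi_9^\flat)}\) bringing the computed Chevalley model onto this model via the permutations and $\mathbbm{I},\mathbbm{J}$-scalings of Section \ref{sec:moi}, confirm the labelling by checking that the positive-root pairing sums to $K_{\mf{b}}(\mf{sl}_6)_\alpha=2+4\mathbbm{I}+\mathbbm{J}+8\mathbbm{I}\mathbbm{J}$, and verify that the change of basis from the $\inv{M}^j$ to this Cartan--Weyl basis is invertible over $\splitk[\mathbbm{I}]$.

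The hard part will be the construction of the Cartan subalgebra (Remark \ref{rem:CSA}): one must locate a semisimple matrix of invariants whose degree-six characteristic polynomial splits over the ring $\splitk[\mathbbm{I}]$ with all roots in $\splitk$, and then simultaneously diagonalise five mutually commuting such elements over the polynomial ring rather than over a field. This is the most demanding instance in the paper, the base algebra $\mf{sl}_6$ being the largest considered, so the step leans heavily on the machine algebra (Singular for the eigenvalue work, FORM for the structure constants). Once the normal form is produced no separate existence theorem is needed: as noted in Section \ref{sec:Chev}, the explicit computation proves existence by construction, and since only the single representation $\bbbi_9^\flat$ occurs there is nothing further to reconcile across groups.
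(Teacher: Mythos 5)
Your proposal is correct and follows essentially the same route as the paper: the paper's proof of this theorem (Theorem \ref{teo:dim35app_a} in Appendix \ref{app:Chev}) likewise treats only the single representation $V=\bbbi_9^\flat$ and simply exhibits the computed Chevalley model together with the intertwining operator ${\cal I}_{\mf{sl}(V)}$ satisfying $\|\mf{sl}(V)\|{\cal I}_{\mf{sl}(V)}={\cal I}_{\mf{sl}(V)}\|A_5^{(12,9)}\|$, produced by exactly the machinery of Sections \ref{sec:comp_inv_mat}--\ref{sec:Chev} that you describe. Your explicit remark that the isomorphism clause is vacuous because $\bbbi_9^\flat$ is the only six-dimensional irreducible projective representation is left implicit in the paper but is accurate.
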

\def\pole{b}
%\input\ifnum\where=1 /lhome/jan/Dropbox/WORK200/\fi ala/Result\dir/A\dynnum/SerreSNormalForm\dirp\pole.tex 
% The normal form  of Ax5c9
% on Mon Feb  9 10:07:38 2015
% file /lhome/jan/Dropbox/WORK200/ala/ResultAx5c9/A5/SerreSNormalFormAx5c9.tex
\subsubsection{Poles in $\beta$}
\label{sec:dim5b}
Let the model for \(\aliam{V}{G}{6}{\beta}\) be
\[
\|A_5^{(14,9)}\|=\begin{bmatrix}
0 & 1 & \mathbbm{I} & \mathbbm{I} & \mathbbm{I} & \mathbbm{I} \\
1 & 0 & \mathbbm{I} & \mathbbm{I} & \mathbbm{I} & \mathbbm{I} \\
1 & 1 & 0 & \mathbbm{I} & \mathbbm{I} & \mathbbm{I} \\
\mathbbm{J} & \mathbbm{J} & \mathbbm{J} & 0 & \mathbbm{I} & \mathbbm{I} \\
\mathbbm{J} & \mathbbm{J} & \mathbbm{J} & 1 & 0 & \mathbbm{I} \\
\mathbbm{J} & \mathbbm{J} & \mathbbm{J} & 1 & 1 & 0
\end{bmatrix}\,,\quad
K_{\mf{b}}(\mf{sl}_{6})_{\beta}=1+5\mathbbm{I}+9\mathbbm{I}\mathbbm{J}\,.
\]

%,\quad
%\subsubsection{Chevalley model for \(\aliam{V}{G}{6}{\beta}\)}
\begin{Theorem}[\(\aliam{V}{G}{6}{\beta}\)]
\label{teo:dim35b}
All Automorphic Lie Algebras \(\aliam{V}{G}{6}{\beta}\) are isomorphic and of 
type  \(A_{5}^{(14,9)}\).
\end{Theorem}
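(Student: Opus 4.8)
The plan is to reduce the statement to a single explicit Chevalley normal form computation. Scanning the dimensions of the irreducible representations of $\bbbt^\flat$, $\bbbo^\flat$ and $\bbbi^\flat$ in Tables \ref{tab:charT}--\ref{tab:charY}, the only representation of dimension six is $\bbbi_9^\flat$ (Table \ref{tab:charY}); no irreducible representation of $\bbbt$ or $\bbbo$ reaches dimension six. Hence, up to the choices fixed in Section \ref{sec:af}, the family $\aliam{V}{G}{6}{\beta}$ consists of a single algebra, so the clause ``all are isomorphic'' is automatic and the substance of the theorem is the determination that the Chevalley normal form of $(\mf{sl}(\bbbi_9^\flat)\otimes\splitk(\lambda))_\beta^{\bbbi}$ is the model $\|A_5^{(14,9)}\|$.

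First I would assemble the $n^2-1=35$ generators from Table \ref{tab:MolienY}, using the decomposition $\mf{sl}(\bbbi_9^\flat)\cong 2\bbbi_4\oplus 2\bbbi_5\oplus 2\bbbi_6\oplus 3\bbbi_8$ of Table \ref{tab:decompYsl}; the repeated summands are handled by the distinct concretisations $\vartheta_k$ exactly as in Example \ref{ex:psiGround}, yielding $2\cdot 3+2\cdot 3+2\cdot 4+3\cdot 5=35$ invariant matrices $\inv{M}^1,\ldots,\inv{M}^{35}$ of degree $|\bbbi|=60$. Passing through the Fourier transform of Section \ref{sec:fourier} and the invariant vectors of Table \ref{tab:vecbasesi}, I would apply the homomorphism $\psi$ of Section \ref{sec:moi} to replace each $\inv{M}^j$ by its matrix of invariants $\psi(\inv{M}^j)$, whose entries become polynomials in $\AIJ[I][\alpha][\beta]$ and $\AIJ[J][\gamma][\beta]$ after dividing by $\beta^3$. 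Since $\psi$ is a Lie algebra homomorphism, this faithful model carries precisely the structure constants of the ALiA.

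Next I would run the inductive construction of Section \ref{sec:Chev}: locate a semisimple matrix of invariants whose reduced characteristic polynomial factors over $\splitk[\AIJ[I][\alpha][\beta]]$, diagonalise it, and build a five-dimensional Cartan subalgebra $\mf{h}$ by repeatedly adjoining commuting semisimple elements that remain $\splitk$-linearly independent. I would then pass to the canonical basis $H_1,\ldots,H_5$ dictated by the $A_5$ Cartan matrix through the linear system $\alpha_i(H_k)=a_{i,k}$, solve $[H_i,M_{\pm\alpha_j}]=\pm a_{j,i}M_{\pm\alpha_j}$ for the simple weight vectors, and generate the remaining root vectors recursively via the bracket relations. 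Recording which entries carry an $\AIJ[I][][]$ and which carry a $\AIJ[J][][]$ produces the Chevalley model, and comparison with $\|A_5^{(14,9)}\|$ together with the check $K_{\mf{b}}(\mf{sl}_6)_\beta=1+5\AIJ[I][][]+9\mathbbm{I}\mathbbm{J}$ identifies the type. Finally, exactly as in the proof of Theorem \ref{teo:dim3}, I would exhibit an intertwining operator $\mathcal{I}$ with
\[
\nf{\bbbi_9^\flat}\,\mathcal{I}=\mathcal{I}\,\|A_5^{(14,9)}\|
\]
and verify that $\mathcal{I}$ is invertible over $\splitk[\AIJ[I][\alpha][\beta]]$, which certifies a genuine isomorphism rather than a mere coincidence of the combinatorial data.

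The main obstacle is the sheer scale of this case: $\bbbi_9^\flat$ is the largest representation in the classification, so one must diagonalise a generic semisimple element by solving a degree-six characteristic polynomial over the \emph{ring} $\splitk[\AIJ[I][\alpha][\beta]]$ rather than over a field. As stressed in Remark \ref{rem:CSA}, the field extension $\splitk$ is chosen precisely so that these characteristic equations split explicitly; the delicate point is to keep every eigenvalue, and every entry of $\mathcal{I}$, inside $\splitk[\AIJ[I][\alpha][\beta]]$ -- using the relation $\AIJ[I][\alpha][\beta]+1+\AIJ[J][\gamma][\beta]=0$ to eliminate $\AIJ[J][][]$ -- so that invertibility of $\mathcal{I}$ can be read off from a determinant lying in $\splitk^\ast$. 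These manipulations are carried out with FORM, GAP and Singular, and their details are deferred to Appendix \ref{app:Chev}.
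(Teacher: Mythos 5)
Your proposal is correct and follows essentially the same route as the paper: the only six-dimensional irreducible representation among the $\bbbt\bbbo\bbbi$ covers is $\bbbi_9^\flat$, so the theorem reduces to the single computation of the Chevalley normal form of $(\mf{sl}(\bbbi_9^\flat)\otimes\splitk(\lambda))_\beta^{\bbbi}$ and the exhibition of an intertwining operator with $\|A_5^{(14,9)}\|$, which is exactly what the paper records in Table B8 of Appendix \ref{app:Chev}. Your description of the intermediate machinery (matrices of invariants, CSA construction, weight vectors, invertibility of $\mathcal{I}$ over $\splitk[\AIJ[I][\alpha][\beta]]$) matches the paper's procedure in Sections \ref{sec:moi}--\ref{sec:Chev}.
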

\def\pole{c}
%\input\ifnum\where=1 /lhome/jan/Dropbox/WORK200/\fi ala/Result\dir/A\dynnum/SerreSNormalForm\dirp\pole.tex 
% The normal form  of Ax5c9
% on Mon Feb  9 12:06:03 2015
% file /lhome/jan/Dropbox/WORK200/ala/ResultAx5c9/A5/SerreSNormalFormAx5c9.tex
\subsubsection{Poles in $\gamma$}
\label{sec:dim35c}
Let the model for \(\aliam{V}{G}{6}{\gamma}\) be
\[
\|A_5^{(14,12)}\|=\begin{bmatrix}
0 & 1 & \mathbbm{I} & \mathbbm{I} & \mathbbm{I} & \mathbbm{I} \\
1 & 0 & \mathbbm{I} & \mathbbm{I} & \mathbbm{I} & \mathbbm{I} \\
\mathbbm{J} & \mathbbm{J} & 0 & \mathbbm{I} & \mathbbm{I} & \mathbbm{I} \\
\mathbbm{J} & \mathbbm{J} & 1 & 0 & \mathbbm{I} & \mathbbm{I} \\
\mathbbm{J} & \mathbbm{J} & \mathbbm{J} & \mathbbm{J} & 0 & \mathbbm{I} \\
\mathbbm{J} & \mathbbm{J} & \mathbbm{J} & \mathbbm{J} & 1 & 0
\end{bmatrix}\,,\quad
K_{\mf{b}}(\mf{sl}_{6})_{\gamma}=1+2\mathbbm{I}+12\mathbbm{I}\mathbbm{J}\,.
\]
%,\quad
%\subsubsection{Chevalley model for \(\aliam{V}{G}{6}{\gamma}\)}
\begin{Theorem}[\(\aliam{V}{G}{6}{\gamma}\) ]\label{teo:dim35c}
All Automorphic Lie Algebras \(\aliam{V}{G}{6}{\gamma}\) are isomorphic and of 
type  \(A_{5}^{(14,12)}\).
\end{Theorem}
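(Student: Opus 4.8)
First I would exploit how restrictive the hypothesis $\dim V=6$ is. Scanning the dimensions listed in the character tables (Tables~\ref{tab:charT}--\ref{tab:charY}), the only six-dimensional irreducible representation of any of $\bbbt^\flat$, $\bbbo^\flat$, $\bbbi^\flat$ is $\bbbi_9^\flat$. Hence there is essentially a single algebra to treat, namely $(\mf{sl}_6\otimes\splitk(\lambda))_\gamma^{\bbbi}$ with $V=\bbbi_9^\flat$, and the clause ``all are isomorphic'' collapses to the assertion that this one algebra is isomorphic to the model $\|A_5^{(14,12)}\|$. The whole proof then follows the uniform scheme of Section~\ref{sec:Chev}: produce the Chevalley normal form explicitly and match it, entry by entry, with the displayed model.

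Next I would assemble the raw generators. By Table~\ref{tab:decompYsl}, $\mf{sl}(\bbbi_9^\flat)\cong 2\bbbi_4\oplus2\bbbi_5\oplus2\bbbi_6\oplus3\bbbi_8$, a $35$-dimensional $\bbbi^\flat$-module. For each isotypic component I would take the ground forms of Table~\ref{tab:MolienY}, generate the tower of invariant matrices by transvection with $\alphai$ and $\betai$ exactly as tabulated, and divide by $\gammai^2$ (degree $60=|\bbbi|$) to obtain zero-homogeneous matrices with poles confined to $\Gamma_\gamma$. Applying the homomorphism $\psi$ of Section~\ref{sec:moi} then replaces these by the $35$ \emph{matrices of invariants}, whose entries lie in the ring $\splitk[\AIJ[I][\alpha][\gamma]]$ of zero-homogeneous invariants (with $\AIJ[J][\beta][\gamma]$ the linearly related second automorphic function, cf.~Section~\ref{sec:af}) and whose structure constants coincide with those of the ALiA, since $\psi$ is a Lie algebra isomorphism onto its image.

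The core of the argument is the Chevalley reduction. Following Section~\ref{sec:Chev}, I would build a five-dimensional Cartan subalgebra $\mf{h}$ by repeatedly locating a semisimple matrix of invariants whose reduced characteristic polynomial splits over $\splitk$, diagonalising it simultaneously with those already chosen, and stopping once $\dim\mf{h}=5=\mathrm{rank}\,A_5$. A $\splitk$-linear change of basis then normalises the successive eigenvalue differences $\alpha_i(H_j)$ to the $A_5$ Cartan matrix. Solving $[H_i,M_{\pm\alpha_j}]=\pm a_{j,i}M_{\pm\alpha_j}$ for the simple roots, and recursively $[H_i,M_{\pm(\alpha_j+\alpha_k)}]=\pm(a_{j,i}+a_{k,i})M_{\pm(\alpha_j+\alpha_k)}$ for the composite roots, produces all $30$ weight vectors. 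Recording which of them carries an $\AIJ[I][][]$, which a $\AIJ[J][][]$, and which a constant, and verifying that the transition matrix from the original generators to this Cartan--Weyl basis is invertible over $\splitk[\AIJ[I][][\gamma]]$, yields the Chevalley model; comparing it with the displayed $\|A_5^{(14,12)}\|$ and confirming $K_{\mf{b}}(\mf{sl}_6)_\gamma=1+2\mathbbm{I}+12\mathbbm{I}\mathbbm{J}$ finishes the proof.

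The hard part is entirely computational and concentrated in the Cartan subalgebra construction: because this is the top case, the characteristic polynomials reach degree six, and one must check that each chosen semisimple generator genuinely splits over $\splitk=\bbbq[\omega_5]/(1+\omega_5+\omega_5^2+\omega_5^3+\omega_5^4)$ rather than over a larger extension (cf.~Remark~\ref{rem:CSA}). Equally delicate is confirming that the final base change is invertible over the \emph{ring} $\splitk[\AIJ[I][][\gamma]]$ and not merely over its field of fractions, since only then is the normal form an honest isomorphism of Automorphic Lie Algebras. These are precisely the verifications carried out by the computer-algebra calculations recorded in Appendix~\ref{app:Chev}.
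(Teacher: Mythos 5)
Your proposal is correct and follows essentially the same route as the paper: the observation that \(\bbbi_9^\flat\) is the unique six-dimensional irreducible representation reduces the statement to a single case, and the rest is exactly the matrices-of-invariants plus Chevalley-reduction algorithm of Sections \ref{sec:moi}--\ref{sec:Chev}, whose output the paper records in Appendix \ref{app:Chev} as the Chevalley model and intertwining operator against \(\|A_5^{(14,12)}\|\). The computational caveats you flag (splitting of the characteristic polynomials over \(\splitk\) and invertibility of the base change over the ring \(\splitk[\AIJ[I][][\gamma]]\) rather than its fraction field) are precisely the checks the paper's computation performs.
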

We have now proved Theorem \ref{teo:AL} modulo the proofs in Appendix \ref{app:Chev}.
%INSERT NORMAL FORM HERE - IT SEEMS MISSING - REMOVE THEOREM BELOW | DONE BY JAN [07.07]
%%%%%%%%%%%%%%%%%%%%%%%%%%%%%%%%%%%%%%%%%%%%%%%%%%%%%%%%%%%%%%%%%%%%%%%%%%%%%
\section{Invariants of Automorphic Lie Algebras}
\label{sec:Invariants}

In this section we consider \emph{invariants} of Automorphic Lie
Algebras $\cite{knibbeler2014}$. These are defined as properties of
Automorphic Lie Algebras $\alias{\mf{g}(V)}{\zeta}{3}{G}$
that are independent of the
particular reduction group $G$ and its representation $V$. That is,
properties which only depend on the base Lie algebra and the orbit of
poles. The isomorphism question asks whether the Lie algebra structure
is an invariant, and this paper affirms this for
$\mf{g}=\mf{sl}$, cf.~Theorem \ref{teo:AL}.

We saw already in Section \ref{sec:inv_mat_a} that the number of
generators is an invariant, related to the dimension of the underlying
vector space $V$. We will give here two more invariants, namely the
number of $\AIJ[I][\zeta'][\zeta]$s and $\AIJ[J][\zeta''][\zeta]$s in
the Chevalley model, $\zeta,\, \zeta',\, \zeta'' = \alpha,\, \beta$ or $\gamma$.

Let $E_{i,j}$ be the elementary matrix with entry equal to $1$ at the
$i,j$ positions, and zero elsewhere; since the \(H_i\) are by
construction of the type \(E_{i,i}-E_{i+1,i+1}\), the matrices \(M_{\pm\alpha_j }\) will be elementary with coefficients
in \(\splitk[\AIJ[I][\zeta'][\zeta]]\).
We find that the coefficients are always one of four types:
\(1,\, \AIJ[I][\zeta'][\zeta],\,\AIJ[J][\zeta''][\zeta]\) or
\(\AIJ[I][\zeta'][\zeta]\AIJ[J][\zeta''][\zeta]\).
We also find that the number of \(\AIJ[I][\zeta'][\zeta]\)s and
\(\AIJ[J][\zeta''][\zeta]\)s
is determined by the dimension of $\mf{sl}(V)$ and choice of $\zeta$ (see Table
\ref{tab:IJnumbers}) and consequently independent of the group.

\begin{table}[H]
\begin{center}
\begin{tabular}{|c|c|c|c|c|c|}
\hline
$\dim\,\mf{sl}(V)$&3&8&15&24&35\\
\hline
\hline
$\alpha$&(1,1)&(3,2)&(5,4)&(8,6)&(12,9)\\
$\beta$&(1,1)&(3,2)&(6,4)&(10,6)&(14,9)\\
$\gamma$&(1,1)&(3,3)&(6,5)&(10,8)&(14,12)\\
\hline
\end{tabular}
\end{center}
\caption{Numbers \(\left(\#\AIJ[I][\zeta'][\zeta],\# \AIJ[J][\zeta''][\zeta]\right)\) in the Chevalley model, $\zeta=\alpha,\,\beta$ or $\gamma$.}
\label{tab:IJnumbers}
\end{table}
\iffalse
\begin{table}[H]
\begin{center}
\begin{tabular}{|c|c|c|}
\hline
$\dim$ of $\mf{sl}(V)$&$\#$ of $\AIJ[I][\zeta'][\zeta]$&$\#$ of
$\AIJ[J][\zeta'][\zeta]$\\
\hline
3& 1&1\\
8&3&2\\
15&5&4\\
24&8&6\\
35&12&9\\
\hline
\end{tabular}
\end{center}
\caption{Number of \(\AIJ[I][][]\)s and \(\AIJ[J][][]\)s in the Chevalley model.}
\label{tab:IJnumbers}
\end{table}

The numbers in the following table
can be checked against the Chevalley model by considering the number
of \(\AIJ[I][]\)s and \(\AIJ[J][]\)s in \(\langle
M_{-\alpha},M_\alpha\rangle, \alpha\in \Phi^+\).

\begin{table}[H]
\begin{center}
\begin{tabular}{|c|c|c|}
\hline
$\dim$ of $\mf{so}(V)$&$\#$ of $\AIJ[I][]$&$\#$ of $\AIJ[J][]$\\
\hline
\hline
3& 1&1\\
6&2&2\\
10&3&3\\
\hline
\end{tabular}
\end{center}
\caption{Number of \(\AIJ[I][]\)s and \(\AIJ[J][]\)s in matrix of
\(\langle M_\alpha, M_{-\alpha}\rangle, \alpha \in \cal{R}\).}
\label{tab:soIJnumbers}
\end{table}
\fi
Computations suggest that the numbers in Table \ref{tab:IJnumbers} are
invariant from the choice of the CSA, from the choice of the group $G$
and its irreducible representation $V$. In \cite{knibbeler2014} this is in
fact shown to be true for general simple Lie algebras $\mf{g}(V)$,
where $V$ is an irreducible $G$-module. Moreover, for all base Lie
algebras the numbers can be easily derived with the formula
\cite{knibbeler2015isotypic}
\[
\kappa_{\zeta'}\equiv \#\; \text{of}\;
\AIJ[I][\zeta'][\zeta]=\nicefrac{1}{2}\,\codim\mf{g}(V)^{\langle
g_{\zeta'} \rangle}\,,
%\quad
%\#\; \text{of}\; \AIJ[J][]=\frac{1}{2}\,\codim\mf{g}(V)^{\langle\gga\rangle}
\]
where $\langle g_{\zeta'} \rangle$ is a stabiliser subgroup of $G$ at
a zero of $\zeta'$.
This formula enables us to extend the table counting the automorphic
functions in the representations for ALiAs to
undiscovered territory. The following table is taken from
\cite{knibbeler2014}, where further details can be found.
% ADD REF TOTHE THESIS [04.07]

%See \cite{knibbeler2014} for details, where also the next

\begin{center}
\begin{table}[H]
\begin{center}
\begin{tabular}{|c|cccccccc|}
\hline
$\mf{g}$&$\mf{sl}_2, \mf{so}_3, \mf{sp}_2$&$\mf{so}_4$&$\mf{sl}_3$&$\mf{so}_5 ,\mf{sp}_4$&$\mf{sl}_4$&$\mf{sp}_6$&$\mf{sl}_5$&$\mf{sl}_6 $\\
\hline
\hline
$\Phi$&$A_1$&$A_1\oplus A_1$&$A_2$&$B_2,C_2$&$A_3$&$C_3$&$A_4$&$A_5 $\\
\hline
$\coa$&$1$&$2$&$3$&$4$&$6$&$8$&$10$&$14$\\
$\cob$&$1$&$2$&$3$&$3$&$5$&$7$&$8$&$12$\\
$\coc$&$1$&$2$&$2$&$3$&$4$&$6$&$6$&$9$\\
\hline
$\dim\mf{g}%=\Sigma_{\alpha,\beta,\gamma}
$&$3$&$6$&$8$&$10$&$15$&$21$&$24$&$35$\\
\hline
\end{tabular}
\end{center}
%\caption{Half codimensions of single element invariants
%$\nicefrac{1}{2}\,\codim \mf{g}(V)^{\langle g_\zeta\rangle}$,
%$\zeta=\alpha,\beta,\gamma$.}
\caption{Number of automorphic functions in the Chevalley model:
$\kappa_{\zeta'}$,
%=\#\; \text{of}\;
%\AIJ[I][\zeta'][\zeta]=\nicefrac{1}{2}\,\codim\mf{g}(V)^{\langle
%g_{\zeta'} \rangle},
$\zeta'=\alpha,\beta,\gamma$.}
\label{tab:codimg}
\end{table}
\end{center}

This table extends Table \ref{tab:IJnumbers} as follows: the pair in the $\zeta$ row in Table \ref{tab:IJnumbers}, consists of $\kappa_{\zeta'}$ and $\kappa_{\zeta''}$ as found in Table \ref{tab:codimg}, where $\{\zeta,\zeta', \zeta''\}=\{\alpha,\beta,\gamma\}$.
Table \ref{tab:codimg} provides
predictions for the orthogonal and symplectic Lie algebras, which have been verified.

The fact that $\dim\mf{g}=\sum_{\zeta\in\{\alpha,\beta,\gamma\}}\nicefrac{1}{2}\,\codim
\mf{g}(V)^{\langle g_\zeta\rangle}$ is also stated in
\cite{lusztig2003homomorphisms} for the case $G=\mc{A}_5$, the
alternating group and attributed to Serre.
An algebraic proof is given in \cite{knibbeler2014}.

We conclude this section observing that the polynomial $K_{\mf{b}}(\mf{sl}_{n})_\zeta$ carry the information from Table \ref{tab:IJnumbers} and actually add extra information on how the $\AIJ[I][\zeta'][\zeta]$s and $\AIJ[J][\zeta''][\zeta]$s are distributed. Computational evidence suggests that these polynomials are also invariants of the ALiAs.

%%%%%%%%%%%%%%%%%%%%%%%%%%%%%%%%%%%%%%%%%%%%%%%%%%%%%%%%%%%%%%%%%%%%%%%%%%%%%%%%%%%%%%%%%%%%%%%%%%%%%%%%%%%%%%%%%%%%%%%%%%%%%%%%%%%%%%%%%%%%%%%%%%%%%%%%%%%%
\section{Conclusions}\label{sec:conclusions}
The paper addresses the problem of classification for Automorphic Lie Algebras $(\mf{g}\otimes\mc{M}(\overline{\mathbbm{C}}))^G_\Gamma$
where the symmetry group $G$ is finite and the orbit $\Gamma$ is any of the exceptional $G$-orbits in $\overline{\bbbc}$. It presents a complete classification for the case $\mf{sl}_n(\bbbc)$ and proposes a procedure which can be applied to any semi-simple Lie algebra $\mf{g}$, thus it is universal. The analysis makes use of notions from classical invariant theory, such as group forms, Molien series and transvectants, and combines the completely classical representation theory of finite groups with the slightly more modern Lie algebra theory over a polynomial ring. It is worth stressing that it is precisely the combination of these two subjects that poses the central questions in this study and makes the subject interesting and worth studying.\\
The \emph{procedure}, loosely speaking, comprises three steps: the first step consists in identifying the Riemann sphere with the complex projective line $\CCP^1$ consisting of quotients $\nicefrac{X}{Y}$ of two complex variables by setting $\lambda=\nicefrac{X}{Y}$ (Section \ref{sec:computing}). M\"obius transformations on $\lambda$ then correspond to linear transformations on the vector $(X,Y)$ by the same matrix. Classical invariant theory is then used to find the $G$-invariant subspaces of $\CC[X,Y]$-modules, where $\CC[X,Y]$ is the ring of polynomials in $X$ and $Y$. Step two consists in localising these ring-modules of invariants by a choice of multiplicative set of invariants. This choice corresponds to selecting a $G$-orbit $\Gamma_\zeta$ of poles, or equivalently, selecting a relative invariant \(\zeta\) vanishing at those points. The set of elements in the localisation of degree zero, i.e.~the set of elements which can be expressed as functions of $\lambda$, generate the ALiA (Section \ref{sec:comp_inv_mat}). Step one and two can be generalised to any Lie algebra $\mf{g}$, as they rely purely on $\mf{g}(V)$ being a vector space. Once the algebra is computed, it is transformed in the third step into a Chevalley normal form in the spirit of the standard Cartan-Weyl 
%Chevalley 
basis (Section \ref{sec:Chev}). This final step relies on the algebraic structure of $\mf{g}(V)$ and it can be extended to any semi-simple Lie algebra $\mf{g}$.

Through computational means, inspired  be the theory of semi-simple Lie algebras,
%underpinned by theoretical background, 
we demonstrated the existence of a Chevalley normal form for Automorphic Lie Algebras, generalising this classical notion to the case of Lie algebras over a polynomial ring. Moreover, we show that ALiAs associated to $\bbbt\bbbo\bbbi$ groups (namely, tetrahedral, octahedral and icosahedral groups) depend on the group through the automorphic functions only, thus they are group independent as Lie algebras. We prove furthermore that $(\mf{sl}\otimes\mc{M}(\overline{\mathbbm{C}}))^G_{\Gamma_{\zeta}}$ and $(\mf{sl}'\otimes\mc{M}(\overline{\mathbbm{C}}))^{G'}_{\Gamma'_{\zeta'}}$ are isomorphic as Lie algebras if and only if $\kappa_\zeta=\kappa_{\zeta'}$ (Theorem \ref{teo:AL}), and we conjecture a similar result for the cases $\mf{so}$ and $\mf{sp}$. This surprising uniformity of ALiAs is not yet completely understood. The study of ALiAs over finite fields could provide information on whether the uniformity is an algebraic or geometric phenomenon.

We also introduce the concept of \emph{matrices of invariants} (see Section \ref{sec:moi}); they describe the (multiplicative) action of invariant matrices on invariant vectors. The description of the invariant matrices in terms of this action yields a much simpler representation of the Lie algebra, reducing the computational cost considerably. 
%The description of the invariant matrices in terms of this action yields greatly simplified matrices, while preserving the structure constants of the Lie algebra, hence reducing the computational cost considerably. 
We believe that the introduction of matrices of invariants is a fundamental step in the problem of classification of ALiAs.\\
The Cartan-Weyl basis of the matrices of invariants can be seen as a
$1$-form, with arguments in \(\Phi\), the root system of the original
Lie algebra, and taking values in the abelian group of monomials in $\AIJ[I][][]$ and
$\AIJ[J][][]$. The structure constants of the ALiA are given by taking the coboundary operator 
$\mathsf{d}^1$ of this $1$-form. This leads to a formulation of the
isomorphism problem in terms of the action of $\Aut(\Phi)$ on the closed
$2$-forms.

Along with the rise of interest in Darboux transformations with finite reduction groups \cite{konstantinou2013darboux, mikhailov2014darboux} and applications (e.g.~\cite{degasperis2014darboux}), which suggests wide applications of ALiAs within and beyond integrability theory, 
this work encourages further study of the structure theory of ALiAs and proposes the notion of \emph{invariants} (Section \ref{sec:Invariants}). These invariants are polynomials in the coefficients of the computed
$1$-form that are invariant under $\Aut(\Phi)$ and the addition of trivial
terms. Whether these invariants determine the isomorphism is an open
question. From a more general perspective, the success of the structure theory and root system cohomology in absence of a field promises interesting theoretical developments for Lie algebras over a ring.

The theory of ALiAs gives a natural deformation of classical Lie theory
that might be of interest to physics. In particular, it retains the Cartan matrix, thus preserving the finitely generated character of the classical theory.

%%%%%%%%%%%%%%%%%%%%%%%%%%%%%%%%%%%%%%%%%%%%%%%%%%%%%%%%%%%

\textbf{Acknowledgements}
The result presented here are the culmination of a long standing quest and report on work done over a numbers of years.
S. L. gratefully acknowledges financial support from EPSRC (EP/E044646/1 and EP/E044646/2) and from NWO VENI (016.073.026).

%%%%%%%%%%%%%%%%%%%%%%%%%%%%%%%%%%%%%%%%%%%%%%%%%%%%%%%%%%%%%%%%%%%%%%%%%%%%%%%%%%%%%%%%%%%%%%%%%%%%%%%%%%%%%%%%%%%%%%%%%%%%%%%%%%%%%%%%%%%%%%%%%%%%%%%%%%%%%%%%%%%%%%%%%%%%%%%
\appendix
\section{Projective representations and double covering groups}
\label{app:cover}
Let \(G\)  be a finite group and let \(\sigma\)  be a faithful projective representation of \(G\)  in \(\bbbc^2\), that is, \(\sigma\) is a mapping from \(G\) to \(GL_2(\bbbc)\) obeying the following 
\begin{equation}
\label{eq:pr_a}
\sigma(g)\,\sigma(h)=c(g,h)\,\sigma(g\,h)\,,\quad \forall g\,,h\in G\,.
\end{equation}
%This restricts \(G\)  to one of the following groups \cite{MR0080930,MR1315530}: 
%\begin{equation}\label{eq:list}
%\Zn{M} \,,\quad\bbbd_M \,,\quad \bbbt\, ,\quad \bbbo\, ,\quad \bbbi,
%\end{equation}
%i.e.~the cyclic group $\Zn{M}$, the dihedral group $\bbbd_M$, the tetrahedral group $\bbbt$, the octahedral group $\bbbo$ and the icosahedral group $\bbbi$; we refer to \(\bbbd_M\), \(\bbbt\), \(\bbbo\), \(\bbbi\) as to \emph{Platonic} groups.
Here
\(c(g,h)\,:\; G\times G\to\bbbc^{\ast}\) in (\ref{eq:pr_a}) is a nontrivial 2-cocycle over \(\bbbc^{\ast}\), the multiplicative group of \(\bbbc\) (see for example \cite{MR1157729}), satisfying the cocycle identity 
\[
c(x,y)c(xy,z)=c(y,z)c(x,yz).
\]
%and taking its values in the abelian group \(H^2(G,\bbbz)\).
% which is known to be \(S_2\) in all three cases.
It follows from the cocycle condition that \(c(1,1)=c(1,z)\) and \(c(x,1)=c(1,1)\).
So if one defines \(\tilde{c}(x,y)=c(x,y)c(1,1)^{-1}\),
then \(\tilde{c}\) is again a cocycle, but now with \(\tilde{c}(x,1)\) and \(\tilde{c}(1,x)\) equal to \(1\). It follows that \(c(x,y)\) is a root of unity, the order of which divides the group order. If the cocycle is trivial one can view the projective representation as a representation.

For each of the Platonic groups \(\bbbt, \bbbo\) and \(\bbbi\) consider a projective representation \(\sigma\).
In order to use GAP to compute generating elements, character tables and Molien functions,
we need to replace the projective representation by a representation.
The time-honored method to do this is by constructing the covering group \(G^\flat\),
which is an extension of the group with its second cohomology group: the sequence
\[
0\rightarrow H^2(G,\bbbz) \rightarrow G^\flat \rightarrow G \rightarrow 0
\]
is exact. The actual construction runs as follows.
One defines (with trivial group action) the group cohomology with values in \(\bbbz\) as follows
(written in the usual additive way, followed by multiplication as in the definition of the projective representation):
\begin{eqnarray*}
&&\mathsf{d}^0 a(x)=a-a=0\equiv 1\\
&&\mathsf{d}^1 b(x,y)=b(xy)-b(x)-b(y)\equiv \frac{b(xy)}{b(x)b(y)}\\
&&\mathsf{d}^2 c(x,y,z)=c(y,z)-c(xy,z)+c(x,yz)-c(x,y)\equiv \frac{c(y,z)c(x,yz)}{c(xy,z)c(x,y)}
%\\d^i g(x_0,\cdots,x_i)&=&(-1)^ i g(x_1,\cdots,x_i)+\sum_{j=1}^i (-1)^{i+j} g(x_0,\cdots,x_{j-1},x_j,\cdots,x_i)-g(x_0,\cdots,x_{i-1}).
%\\&=& \left\{ \begin{array}{lr} \frac{ g(x_0 x_1,\cdots,x_i)}{g(x_1,\cdots,x_i)}\frac{g(x_0,x_1 x_2,\cdots, x_i)}{g(x_0,x_1,x_2x_3,\cdots,x_i)}\cdots \frac{g(x_0,\cdots,x_{i-2}x_{i-1},x_i)}{g(x_0,\cdots,x_{i-1}x_i)} \frac{1}{g(x_0,\cdots,x_{i-1})}&\text{i odd}\\
%\frac{g(x_1,\cdots,x_i)}{ g(x_0 x_1,\cdots,x_i)}\frac{g(x_0,x_1,x_2x_3,\cdots,x_i)}{g(x_0,x_1 x_2,\cdots, x_i)}\cdots\frac{g(x_0,\cdots,x_{i-3}x_{i-2},x_{i-1},x_i)}{g(x_0,\cdots,x_{i-2}x_{i-1},x_i)} \frac{g(x_0,\cdots,x_{i-1}x_i)}{g(x_0,\cdots,x_{i-1})}&\text{i even}
%\end{array}\right.
 \end{eqnarray*}
\iffalse
We require \(c(x,y+z)\geq c(x,y)\geq c(x+z,y)\).
If \(c=d^1b\) this reduces to
\( b(x+y+z)+b(y)+b(x+z)\geq b(x+y)+b(x+z)+b(y+z)\geq b(x+y+z) +b(y+z)+b(x)\).
For a form with an odd number of arguments we require
\[ g(x_0,\cdots,x_{i-j-2}x_{i-j-1},\cdots,x_i)\geq  g(x_0,\cdots,x_{i-j-1}x_{i-j},\cdots,x_i).\]
For a form with an even number of arguments we require
\[ g(x_0,\cdots,x_{i-j-2}x_{i-j-1},\cdots,x_i)\leq  g(x_0,\cdots,x_{i-j-1}x_{i-j},\cdots,x_i)\] and, moreover,
\[g(x_0,\cdots,x_{i-2}x_{i-1},x_i)\geq g(x_0,\cdots,x_{i-1}x_i)+g(x_0,\cdots,x_{i-1}).\]
\fi
Then the second cohomology group \(H^2(G,\bbbz)\) is defined as the quotient of \(\ker \mathsf{d}^2\) over \(\mathrm{im\ } \mathsf{d}^1\),
which is well defined since \(\mathsf{d}^2 \mathsf{d}^1\) maps to unity.
%The projective representation is defined using a nontrivial cocycle, satisfying the cocycle identity \[c(x,y)c(xy,z)=c(y,z)c(x,yz),\]
%and taking its values in the abelian group
%\(H^2(G,\bbbz)\) which is known to be \(S_2\) in all three cases.
%It follows from the cocycle condition that \(c(1,1)=c(1,z)\) and \(c(x,1)=c(1,1)\).
%So if one defines \(\tilde{c}(x,y)=c(x,y)c(1,1)^{-1}\),
%then \(\tilde{c}\) is again a cocycle, but now with \(\tilde{c}(x,1)\) and \(\tilde{c}(1,x)\) equal to \(1\).
%We use the \(\tilde{c}\) in what follows, but call it \(c\) again.
%In order to apply the usual representation theory one does the following.
%Let \(E=id_{\bbbc^2}, R=\pi(r)\) and \(S=\pi(s)\) and consider \(E, R\) and \(S\) as the generators
%of a group, which we denote by \(G^\flat\).
%For instance, we compute the product of \(R\) and \(S\) as follows:
%\[
%R S= \pi(r)\pi(s)=c(r,s) \pi(rs)
%\]
%By the definition of cocycle this defines an associative multiplication.
We can consider \(G^\flat\) as the group generated by the pairs
\( (r,\rho)\), with \(r\in G\) and \(\rho\in H^2(G,\bbbz)=\bbbz/2=\langle \pm 1 \rangle\) \cite{Schur1904, Schur1911},
with multiplication given by
\[
(x,\xi)(y,\upsilon)=(xy,\xi\upsilon \tilde{c}(x,y)).
\]
Then the identity is \((e,1)\), since \(\tilde{c}(x,1)\) and \(\tilde{c}(1,x)\) are both equal to \(1\).
Let us check associativity (and see what motivated the cocycle identity):
\begin{eqnarray*}
((x,\xi)(y,\upsilon))(z,\zeta)&=&(xy,\xi\upsilon \tilde{c}(x,y))(z,\zeta)
\\&=&((xy)z,\xi\upsilon \tilde{c}(x,y)\zeta \tilde{c}(xy,z))
\\&=&(x(yz),\xi\upsilon \zeta \tilde{c}(y,z)\tilde{c}(x,yz))
\\&=&(x,\xi)(yz,\upsilon \zeta  \tilde{c}(y,z) )
\\&=&(x,\xi)((y,\upsilon)(z,\zeta)).
\end{eqnarray*}
One defines the inverse of an element by
\[
(x,\xi)^{-1}=(x^{-1},\xi^{-1} \tilde{c}(x,x^{-1})^{-1}).
\]
On \(G^\flat\) we now define a representation \(\sigma^\flat((x,\xi))=\xi c(1,1)^{-1}\sigma(x)\).
We have  indeed
\begin{eqnarray*}
\sigma^\flat((x,\xi))\sigma^\flat((y,\upsilon))&=&c(1,1)^{-2}\xi\upsilon\sigma(x)\sigma(y)
\\&=&
c(1,1)^{-2}\xi\upsilon c(x,y)\sigma(xy)\\&=&\sigma^\flat((xy,c(1,1)^{-1}\xi\upsilon c(x,y)))\\&=&\sigma^\flat((xy,\xi\upsilon \tilde{c}(x,y)))=
\sigma^\flat((x,\xi)(y,\upsilon)).
\end{eqnarray*}
In practice one can compute the cocycle the other way around, by considering given
\(\sigma(r)\) and \(\sigma(s)\) as generators of \(G^\flat\) and computing the group multiplication table.
\begin{Remark}
Suppose there exists a section \(s:G\rightarrow G^\flat\).
This would imply the existence of an element \(\zeta\in C^1(G,\bbbz)\), 
such that \(s(g)=(g,\zeta(g))\).
Can we do this so that \(s(gh)=s(g)s(h)\)? In that case \(G\) can be viewed as a subgroup of \(G^\flat\)).
This would imply
\begin{eqnarray*}
s(gh)&=& (gh,\zeta(gh)) \\
s(g)s(h)&=& (g,\zeta(g))(h,\zeta(h))=(gh,\zeta(g)\zeta(h)c(g,h))
\end{eqnarray*}
But this would in turn imply that \(c=\mathsf{d}^1 \zeta\) is a coboundary, where in fact the assumption was that \(c\) was nontrivial.
\end{Remark}
%%%%%%%%%%%%%%%%%%%%%%%%%%%%%%%%%%%%%%%%%%%%%%%%%%%%%%%%%%%%%%%%%%%%%%%%%%%%%%%%%%%%%%%%%%%%%%%%%%%%%%%%%%%%%%%%%%%%%%%%%%%%%%%%%%%%%%%%%%%%%%%%%%%%%%%%%%%%%%%%%%%%%%%%%%%%%%%%%%%%%%%%%%%%%%%%%%%%%%%%%%%%%%%%%%%%%%%%%%%%%%%%%%%%%%%%%%%%%

%\section{Chevalley normal form of all cases}
\section{Chevalley normal forms}
\label{app:Chev}
\setcounter{table}{0}
\renewcommand{\thetable}{B\arabic{table}}
%%%%%%%%%%%%%%%%%%%%%%%%%%%%%%%%%%%%%%%%%%%%%%%%%%%%%%%%%%%%%%%%%%%%
\def\dynnum{2}
\def\dym{3}
\begin{Theorem}[\(\aliam{V}{G}{\dym}{\zeta}\), \(\zeta=\alpha,\beta\)]
\label{teo:dim8app_ab}
All Automorphic Lie Algebras \(\aliam{V}{G}{\dym}{\zeta}\), \(\zeta=\alpha,\beta\),  are of type \(A_\dynnum^{(3,2)}\) and therefore isomorphic.
\end{Theorem}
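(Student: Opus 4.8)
The plan is to reproduce, in the $\mf{sl}_3$ setting, the constructive argument already used for Theorem~\ref{teo:dim3}. First I would single out the irreducible representations $V$ of $\bbbt^\flat$, $\bbbo^\flat$ and $\bbbi^\flat$ for which $\mf{sl}(V)$ is of type $A_2$, i.e.~those with $\dim V=3$. Inspecting the decomposition Tables~\ref{tab:decompTsl}--\ref{tab:decompYsl}, these are precisely $\bbbt_7$, $\bbbo_6$, $\bbbo_7$, $\bbbi_4$ and $\bbbi_5$; for each of them $\dim\mf{sl}(V)=8$. The theorem then amounts to showing that, for every one of these five representations and for each pole choice $\zeta\in\{\alpha,\beta\}$, the resulting Automorphic Lie Algebra admits a Chevalley normal form isomorphic to the fixed model $\|A_2^{(3,2)}\|$.

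For each pair $(V,\zeta)$ I would first pass to the matrices-of-invariants representation $\psi$ of Section~\ref{sec:moi}: $\psi$ is a Lie algebra homomorphism whose image has entries in $\splitk[\AIJ[I][][\Gamma]]$, and it is in this form that the eigenvalue computations become tractable. I then run the algorithm of Section~\ref{sec:Chev}: locate a semisimple element whose eigenvalues lie in $\splitk$, verify semisimplicity against the reduced characteristic polynomial, diagonalise it, and inductively enlarge it to a two-dimensional Cartan subalgebra $\mf{h}$; next solve the linear system $\alpha_i(H_k)=a_{i,k}$ (with $a_{i,k}$ the $A_2$ Cartan matrix) to put $\mf{h}$ in canonical form; and finally solve $[H_i,M_{\pm\alpha_j}]=\pm a_{j,i}M_{\pm\alpha_j}$, together with the bracket of $M_{\alpha_1}$ and $M_{\alpha_2}$ to reach $M_{\pm(\alpha_1+\alpha_2)}$. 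This produces the explicit Chevalley model $\|\mf{sl}(V)\|$.

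With the models in hand, the isomorphism to $\|A_2^{(3,2)}\|$ is certified exactly as in Theorem~\ref{teo:dim3}: I would tabulate, for each case, an intertwining operator ${\cal{I}}_{\mf{sl}(V)}$ with $\|\mf{sl}(V)\|\,{\cal{I}}_{\mf{sl}(V)}={\cal{I}}_{\mf{sl}(V)}\,\|A_2^{(3,2)}\|$, and check that it is invertible over $\splitk[\AIJ[I][][\Gamma]]$. Such an operator is assembled from the filtering-degree-zero transformations, namely basis permutations and diagonal scalings by monomials in $\AIJ[I][][]$ and $\AIJ[J][][]$. The structural reason all five cases and both pole choices land on the same type is the invariant $\kappa$: by Table~\ref{tab:codimg} one has $\kappa_\alpha=\kappa_\beta=3$ and $\kappa_\gamma=2$ for $A_2$, and since $\AIJ[J][][]$ is attached to the relative invariant of lowest $\kappa$ (Section~\ref{sec:af}), poles in $\alpha$ give the pair $(\kappa_\beta,\kappa_\gamma)=(3,2)$ and poles in $\beta$ give $(\kappa_\alpha,\kappa_\gamma)=(3,2)$, matching the $(3,2)$ entry of Table~\ref{tab:IJnumbers} in both rows.

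The hard part is the Cartan subalgebra step over the ring $\splitk[\AIJ[I][][\Gamma]]$. In contrast to the classical situation over $\bbbc$, I must demand not only semisimplicity but also that all eigenvalues lie in the constant field $\splitk$ rather than in the larger ring, which sharply constrains the admissible $h_i$ and forces the explicit solution of the characteristic equation over $\splitk[\AIJ[I][][\Gamma]]$; this is precisely where Singular is needed. A secondary difficulty is cosmetic but genuine: the raw models emerging from the different groups and the two pole choices need not coincide termwise, so the real content of each table entry is the construction of the correct ${\cal{I}}_{\mf{sl}(V)}$ realising the isomorphism. As noted in Section~\ref{sec:Chev}, no abstract existence theorem is invoked; the normal form is obtained by construction in each of the finitely many cases, so exhibiting and verifying the intertwining relation for all five representations and both pole choices completes the proof.
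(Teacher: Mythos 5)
Your proposal is essentially the paper's own proof: the paper establishes Theorem \ref{teo:dim8app_ab} exactly by running the algorithm of Section \ref{sec:Chev} on the matrices of invariants for the five three--dimensional irreducibles $\bbbt_7,\bbbo_6,\bbbo_7,\bbbi_4,\bbbi_5$ and each $\zeta\in\{\alpha,\beta\}$, and then tabulating the resulting Chevalley models together with explicit intertwining operators to the fixed model $\|A_2^{(3,2)}\|$ (Tables B1--B2). One small correction: the tabulated ${\cal{I}}_{\mf{sl}(V)}$ are permutations scaled by monomials in $\AIJ[I][][]$, $\AIJ[J][][]$ and are generally \emph{not} invertible over $\splitk[\AIJ[I][][\Gamma]]$ --- the invertibility the paper actually verifies is that of the change of basis from the original generators to the Cartan--Weyl basis within each algebra, not of the intertwiner itself.
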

\begin{proof}
\def\dynnum{2}
We give the Chevalley model together with its intertwining operator \({\cal{I}}_{\mf{sl}(V)}\) with respect to
\(\|A_\dynnum^{(3,2)}\|\) (see Tables B1 and B2),
% (see Tables \ref{tab:dim8app_a} and \ref{tab:dim8app_b}),
i.e.
\[
\|\mf{sl}(V)\| {\cal{I}}_{\mf{sl}(V)}={\cal{I}}_{\mf{sl}(V)} \|A_{\dynnum}^{(3,2)}\|.
\]

\begin{table}[h!]
\label{tab:dim8app_a}
\begin{center}
\scalebox{0.86}{
\begin{tabular}{c|cccc} 
\hline
Irreducible representation\\
$V$
&
$\bbbt_{7}$, $\bbbi_{5}$
&$\bbbo_{6}$
&$\bbbo_{7}$
&$\bbbi_{4}$
\\
& & & & \\
\hline\hline
Chevalley model \\
$\|\mf{sl}(V)\|$ &
\def\pole{a}
\def\dynnum{2}
\def\dir{Ax3c7}
\def\dirp{Ax3c7p\pole}
$\begin{bmatrix}
0&\AIJ[J][\gamma][\alpha]&\AIJ[I][\beta][\alpha]\\
\AIJ[I][\beta][\alpha]&0&\AIJ[I][\beta][\alpha]\\
1&\AIJ[J][\gamma][\alpha]&0\end{bmatrix}$
&
\def\pole{a}
\def\dynnum{2}
\def\dir{Ax4c6}
\def\dirp{Ax4c6p\pole}
$\begin{bmatrix}
0&1&\AIJ[J][\gamma][\alpha]\\
\AIJ[I][\beta][\alpha]&0&\AIJ[I][\beta][\alpha]\AIJ[J][\gamma][\alpha]\\
\AIJ[I][\beta][\alpha]&1&0\end{bmatrix}$
&
\def\pole{a}
\def\dynnum{2}
\def\dir{Ax4c7}
\def\dirp{Ax4c7p\pole}
$\begin{bmatrix}
0&\AIJ[I][\beta][\alpha]&\AIJ[J][\gamma][\alpha]\\
1&0&\AIJ[J][\gamma][\alpha]\\
\AIJ[I][\beta][\alpha]&\AIJ[I][\beta][\alpha]&0\end{bmatrix}$
&
\def\pole{a}
\def\dynnum{2}
\def\dir{Ax5c4}
\def\dirp{Ax5c4p\pole}
$\begin{bmatrix}
0&\AIJ[I][\beta][\alpha]&\AIJ[I][\beta][\alpha]\AIJ[J][\gamma][\alpha]\\
1&0&\AIJ[I][\beta][\alpha]\AIJ[J][\gamma][\alpha]\\
1&1&0\end{bmatrix}$
\\
& & &  &\\
\hline
Intertwining operator \\
${\cal{I}}_{\mf{sl}(V)}$
& 
\def\pole{a}
\def\dynnum{2}
\def\dir{Ax3c7}
\def\dirp{Ax3c7p\pole}
$\begin{pmatrix}
0&1&0\\
1&0&0\\
0&0&1\end{pmatrix}$
&
\def\pole{a}
\def\dynnum{2}
\def\dir{Ax4c6}
\def\dirp{Ax4c6p\pole}
$\begin{pmatrix}
0&1&0\\
0&0&\AIJ[I][\beta][\alpha]\\
1&0&0\end{pmatrix}$
&
\def\pole{a}
\def\dynnum{2}
\def\dir{Ax4c7}
\def\dirp{Ax4c7p\pole}
$\begin{pmatrix}
0&1&0\\
0&0&1\\
1&0&0\end{pmatrix}$
&
\def\pole{a}
\def\dynnum{2}
\def\dir{Ax5c4}
\def\dirp{Ax5c4p\pole}
$\begin{pmatrix}
0&\AIJ[I][\beta][\alpha]&0\\
0&0&\AIJ[I][\beta][\alpha]\\
1&0&0\end{pmatrix}$
\\
& & &  &\\
\hline
\end{tabular}
}
\end{center}
\caption{Chevalley models and intertwining operators for \(\aliam{V}{G}{\dym}{\alpha}\).}
\end{table}
%%%%%%%%%%%%%%%%%%%%%%%%%%%%%%%%%%%%%%%%%%%%%%%%%%%%%%%%%%%%%%%%%%%%
\begin{table}[H]
\label{tab:dim8app_b}
\begin{center}
\scalebox{0.75}{
\begin{tabular}{c|ccccc} 
\hline
Irrep\\
$V$
&
$\bbbt_{7}$
&$\bbbo_{6}$
&$\bbbo_{7}$
&$\bbbi_{4}$
&$\bbbi_{5}$
\\
& & & & \\
\hline\hline
Chevalley model \\
$\|\mf{sl}(V)\|$ &
\def\pole{b}
\def\dynnum{2}
\def\dir{Ax3c7}
\def\dirp{Ax3c7p\pole}
$\begin{bmatrix}
0&\AIJ[I][\alpha][\beta]&\AIJ[I][\alpha][\beta]\\
\AIJ[J][\gamma][\beta]&0&\AIJ[I][\alpha][\beta]\\
\AIJ[J][\gamma][\beta]&1&0\end{bmatrix}$
&
\def\pole{b}
\def\dynnum{2}
\def\dir{Ax4c6}
\def\dirp{Ax4c6p\pole}
$\begin{bmatrix}
0&\AIJ[J][\gamma][\beta]&\AIJ[I][\alpha][\beta]\\
\AIJ[I][\alpha][\beta]&0&\AIJ[I][\alpha][\beta]\\
1&\AIJ[J][\gamma][\beta]&0\end{bmatrix}$
&
\def\pole{b}
\def\dynnum{2}
\def\dir{Ax4c7}
\def\dirp{Ax4c7p\pole}
$\begin{bmatrix}
0&1&\AIJ[J][\gamma][\beta]\\
\AIJ[I][\alpha][\beta]&0&\AIJ[I][\alpha][\beta]\AIJ[J][\gamma][\beta]\\
\AIJ[I][\alpha][\beta]&1&0\end{bmatrix}$
&
\def\pole{b}
\def\dynnum{2}
\def\dir{Ax5c4}
\def\dirp{Ax5c4p\pole}
$\begin{bmatrix}
0&1&\AIJ[I][\alpha][\beta]\AIJ[J][\gamma][\beta]\\
\AIJ[I][\alpha][\beta]&0&\AIJ[I][\alpha][\beta]\AIJ[J][\gamma][\beta]\\
1&1&0\end{bmatrix}$
&
\def\pole{b}
\def\dynnum{2}
\def\dir{Ax5c5}
\def\dirp{Ax5c5p\pole}
$\begin{bmatrix}
0&\AIJ[J][\gamma][\beta]&1\\
\AIJ[I][\alpha][\beta]&0&1\\
\AIJ[I][\alpha][\beta]&\AIJ[I][\alpha][\beta]\AIJ[J][\gamma][\beta]&0\end{bmatrix}$
\\
& & &  &\\
\hline
Intertwining operator \\
${\cal{I}}_{\mf{sl}(V)}$
& 
\def\pole{b}
\def\dynnum{2}
\def\dir{Ax3c7}
\def\dirp{Ax3c7p\pole}
$\begin{pmatrix}
1&0&0\\
0&1&0\\
0&0&1\end{pmatrix}$
&
\def\pole{b}
\def\dynnum{2}
\def\dir{Ax4c6}
\def\dirp{Ax4c6p\pole}
$\begin{pmatrix}
0&1&0\\
1&0&0\\
0&0&1\end{pmatrix}$
&
\def\pole{b}
\def\dynnum{2}
\def\dir{Ax4c7}
\def\dirp{Ax4c7p\pole}
$\begin{pmatrix}
0&1&0\\
0&0&\AIJ[I][\alpha][\beta]\\
1&0&0\end{pmatrix}$
&
\def\pole{b}
\def\dynnum{2}
\def\dir{Ax5c4}
\def\dirp{Ax5c4p\pole}
$\begin{pmatrix}
0&0&\AIJ[I][\alpha][\beta]\\
0&\AIJ[I][\alpha][\beta]&0\\
1&0&0\end{pmatrix}$
&
\def\pole{b}
\def\dynnum{2}
\def\dir{Ax5c5}
\def\dirp{Ax5c5p\pole}
$\begin{pmatrix}
0&1&0\\
1&0&0\\
0&0&\AIJ[I][\alpha][\beta]\end{pmatrix}$
\\
& & & &\\
\hline
\end{tabular}
}
\end{center}
\caption{Chevalley models and intertwining operators for \(\aliam{V}{G}{\dym}{\beta}\).}
\end{table}
\end{proof}

%%%%%%%%%%%%%%%%%%%%%%%%%%%%%%%%%%%%%%%%%%%%%%%%%%%%%%%%%%%%%%%%%%%%
\begin{Theorem}[\(\aliam{V}{G}{\dym}{\gamma}\)]
\label{teo:dim8app_g}
\def\dynnum{2}
\def\dym{3}
All Automorphic Lie Algebras \(\aliam{V}{G}{\dym}{\gamma}\) are of type \(A_\dynnum^{(3,3)}\) and therefore isomorphic.
\end{Theorem}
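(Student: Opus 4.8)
The plan is to reproduce the argument of Theorem~\ref{teo:dim8app_ab} with the localisation of Section~\ref{sec:af} carried out at $\zeta=\gamma$ instead of $\zeta=\alpha,\beta$. Since $\aliam{V}{G}{3}{\gamma}$ is built on $\mf{sl}(V)=\mf{sl}_3$, only the three-dimensional irreducibles appearing in Tables~\ref{tab:decompTsl}--\ref{tab:decompYsl} contribute, namely $\bbbt_7$, $\bbbo_6$, $\bbbo_7$, $\bbbi_4$ and $\bbbi_5$; these label the five columns of the proof tables. Fixing poles at $\gamma$ turns the syzygy of Section~\ref{sec:af} into $\AIJ[I][\alpha][\gamma]+\AIJ[J][\beta][\gamma]+1=0$, so one works over the single-variable polynomial ring $\splitk[\AIJ[I][\alpha][\gamma]]$, in which $\AIJ[J][\beta][\gamma]=-1-\AIJ[I][\alpha][\gamma]$. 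Because $\coa=\cob=3$ for $\mf{sl}_3$ (Table~\ref{tab:codimg}), the two symbols are interchangeable, which is precisely what is reflected in the symmetric shape of the target model $\|A_2^{(3,3)}\|$.

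For each of the five representations I would feed the matrices of invariants of Section~\ref{sec:moi} into the Chevalley algorithm of Section~\ref{sec:Chev}: extract two commuting semisimple generators whose eigenvalues lie in $\splitk$, diagonalise them simultaneously and normalise to the canonical Cartan pair $H_1=E_{1,1}-E_{2,2}$, $H_2=E_{2,2}-E_{3,3}$ realising the $A_2$ Cartan matrix; then solve $[H_i,M_{\pm\alpha_j}]=\pm a_{j,i}M_{\pm\alpha_j}$ for the simple weight vectors and bracket these to obtain the highest-root vector, all over $\splitk[\AIJ[I][\alpha][\gamma]]$. Assembling the weight vectors into the sum $\|\mf{sl}(V)\|$ yields, for every $V$, a $3\times3$ array whose off-diagonal entries lie in $\{1,\AIJ[I][\alpha][\gamma],\AIJ[J][\beta][\gamma],\AIJ[I][\alpha][\gamma]\AIJ[J][\beta][\gamma]\}$ and which, by Table~\ref{tab:IJnumbers}, always exhibits the invariant count $(3,3)$.

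It then remains to display, case by case, an intertwining operator $\mathcal{I}_{\mf{sl}(V)}$ satisfying $\|\mf{sl}(V)\|\,\mathcal{I}_{\mf{sl}(V)}=\mathcal{I}_{\mf{sl}(V)}\,\|A_2^{(3,3)}\|$, exactly as in the tables accompanying Theorem~\ref{teo:dim8app_ab}. Each such operator is a permutation matrix decorated with scalings by powers of $\AIJ[I][\alpha][\gamma]$ and $\AIJ[J][\beta][\gamma]$: the permutation enacts the Dynkin-diagram symmetry that aligns the computed placement of $\mathbbm{I}$'s and $\mathbbm{J}$'s with the symmetric placement in the model, while the scalings reconcile the residual $1$'s and products. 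Since the off-diagonal entries of both $\|\mf{sl}(V)\|$ and $\|A_2^{(3,3)}\|$ are single monomials sitting in distinct matrix positions, the one matrix identity for the summed weight vectors is equivalent to matching every root space, so verifying it by substituting $\AIJ[J][\beta][\gamma]=-1-\AIJ[I][\alpha][\gamma]$ certifies the isomorphism. Identifying all five algebras with the common model $\|A_2^{(3,3)}\|$ then yields the theorem.

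The delicate point is the invertibility of $\mathcal{I}_{\mf{sl}(V)}$ \emph{over the ring}. The scalings make $\det\mathcal{I}_{\mf{sl}(V)}$ a nonconstant monomial in $\AIJ[I][\alpha][\gamma]$, so $\mathcal{I}_{\mf{sl}(V)}$ does not belong to the naive transformation group $\{T:\det T\in\splitk^\ast\}$; one must instead check, in the filtered sense discussed in the introduction, that conjugation still carries a minimal generating set of $(\mf{sl}(V)\otimes\splitk(\lambda))_\gamma^{G}$ over $\splitk[\AIJ[I][\alpha][\gamma]]$ onto one for the model and is bijective, re-expanding any apparent inverse power of $\AIJ[I][\alpha][\gamma]$ by means of the relation $\AIJ[I][\alpha][\gamma]+\AIJ[J][\beta][\gamma]+1=0$. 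The remaining work is purely computational, chiefly the explicit eigenvalue extraction for the icosahedral cases $\bbbi_4$ and $\bbbi_5$, which Remark~\ref{rem:CSA} guarantees can be performed within $\splitk$.
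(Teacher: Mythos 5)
Your proposal follows the paper's proof essentially verbatim: the paper likewise runs the Chevalley algorithm of Section \ref{sec:Chev} on the five three-dimensional irreducibles $\bbbt_7$, $\bbbo_6$, $\bbbo_7$, $\bbbi_4$, $\bbbi_5$ (with $\bbbo_6$ and $\bbbi_5$ sharing a model) and simply exhibits the resulting Chevalley models together with monomial intertwining operators satisfying $\|\mf{sl}(V)\|\,{\cal I}_{\mf{sl}(V)}={\cal I}_{\mf{sl}(V)}\,\|A_2^{(3,3)}\|$. Your closing remark about ${\cal I}_{\mf{sl}(V)}$ not having unit determinant, and the isomorphism instead being read off from the induced bijection of Cartan--Weyl bases over $\splitk[\AIJ[I][\alpha][\gamma]]$, is the correct reading of what these tables certify.
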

\begin{proof}
\def\dynnum{2}
We give the Chevalley model together with its intertwining operator \({\cal{I}}_{\mf{sl}(V)}\) with respect to
\(\|A_\dynnum^{(3,3)}\|\) (see Table B3),
%(see Table \ref{tab:dim8app_g}),
i.e.
\[
\|\mf{sl}(V)\| {\cal{I}}_{\mf{sl}(V)}={\cal{I}}_{\mf{sl}(V)} \|A_{\dynnum}^{(3,3)}\|.
\]
\begin{table}[H]
\label{tab:dim8app_g}
\begin{center}
\scalebox{0.85}{
\begin{tabular}{c|ccccc} 
\hline
Irrep\\
$V$
&
$\bbbt_{7}$
&$\bbbo_{6}$, $\bbbi_{5}$
&$\bbbo_{7}$
&$\bbbi_{4}$
\\
& & & & \\
\hline\hline
Chevalley model \\
$\|\mf{sl}(V)\|$ &
\def\pole{c}
\def\dynnum{2}
\def\dir{Ax3c7}
\def\dirp{Ax3c7p\pole}
$\begin{bmatrix}
0&1&1\\
\AIJ[I][\alpha][\gamma]\AIJ[J][\beta][\gamma]&0&\AIJ[J][\beta][\gamma]\\
\AIJ[I][\alpha][\gamma]\AIJ[J][\beta][\gamma]&\AIJ[I][\alpha][\gamma]&0\end{bmatrix}$
&
\def\pole{c}
\def\dynnum{2}
\def\dir{Ax4c6}
\def\dirp{Ax4c6p\pole}
$\begin{bmatrix}
0&\AIJ[J][\beta][\gamma]&\AIJ[J][\beta][\gamma]\\
\AIJ[I][\alpha][\gamma]&0&\AIJ[J][\beta][\gamma]\\
\AIJ[I][\alpha][\gamma]&\AIJ[I][\alpha][\gamma]&0\end{bmatrix}$
&
\def\pole{c}
\def\dynnum{2}
\def\dir{Ax4c7}
\def\dirp{Ax4c7p\pole}
$\begin{bmatrix}
0&\AIJ[J][\beta][\gamma]&\AIJ[I][\alpha][\gamma]\\
\AIJ[I][\alpha][\gamma]&0&\AIJ[I][\alpha][\gamma]\\
\AIJ[J][\beta][\gamma]&\AIJ[J][\beta][\gamma]&0\end{bmatrix}$
&
\def\pole{c}
\def\dynnum{2}
\def\dir{Ax5c4}
\def\dirp{Ax5c4p\pole}
$\begin{bmatrix}
0&\AIJ[I][\alpha][\gamma]&\AIJ[I][\alpha][\gamma]\\
\AIJ[J][\beta][\gamma]&0&\AIJ[J][\beta][\gamma]\\
\AIJ[J][\beta][\gamma]&\AIJ[I][\alpha][\gamma]&0\end{bmatrix}$
\\
& & &  &\\
\hline
Intertwining operator \\
${\cal{I}}_{\mf{sl}(V)}$
& 
\def\pole{c}
\def\dynnum{2}
\def\dir{Ax3c7}
\def\dirp{Ax3c7p\pole}
$\begin{pmatrix}
1&0&0\\
0&0&\AIJ[I][\alpha][\gamma]\\
0&\AIJ[I][\alpha][\gamma]&0\end{pmatrix}$
&
\def\pole{c}
\def\dynnum{2}
\def\dir{Ax4c6}
\def\dirp{Ax4c6p\pole}
$\begin{pmatrix}
\AIJ[J][\beta][\gamma]&0&0\\
0&0&\AIJ[I][\alpha][\gamma]\\
0&\AIJ[I][\alpha][\gamma]&0\end{pmatrix}$
&
\def\pole{c}
\def\dynnum{2}
\def\dir{Ax4c7}
\def\dirp{Ax4c7p\pole}
$\begin{pmatrix}
\AIJ[J][\beta][\gamma]&0&0\\
0&0&\AIJ[I][\alpha][\gamma]\\
0&\AIJ[J][\beta][\gamma]&0\end{pmatrix}$
&
\def\pole{c}
\def\dynnum{2}
\def\dir{Ax5c4}
\def\dirp{Ax5c4p\pole}
$\begin{pmatrix}
1&0&0\\
0&0&1\\
0&1&0\end{pmatrix}$
\\
& & & &\\
\hline
\end{tabular}
}
\end{center}
\caption{Chevalley models and intertwining operators for \(\aliam{V}{G}{\dym}{\gamma}\).}
\end{table}
\end{proof}
%\newpage
%%%%%%%%%%%%%%%%%%%%%%%%%%%%%%%%%%%%%%%%%%%%%%%%%%%%%%%%%%%%%%%%%%%%
\def\dym{4}
\begin{Theorem}[\(\aliam{V}{G}{\dym}{\alpha}\)]
\label{teo:dim15app_a}
\def\dynnum{3}
All Automorphic Lie Algebras \(\aliam{V}{G}{\dym}{\alpha}\) are of type \(A_\dynnum^{(5,4)}\) and therefore isomorphic.
\end{Theorem}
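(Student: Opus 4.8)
The plan is to follow the template established in the proofs of Theorems \ref{teo:dim8app_ab}--\ref{teo:dim8app_g}: for each relevant irreducible representation I exhibit an explicit Chevalley model together with an intertwining operator relating it to the reference model $\|A_3^{(5,4)}\|$ fixed in Section \ref{sec:dim15a}.

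First I would pin down which representations actually occur. Since $\aliam{V}{G}{4}{\alpha}$ is built on $\mf{sl}(V)$ with $\dim V=4$, only the four-dimensional irreducible representations of the Schur covers $\bbbt^\flat$, $\bbbo^\flat$, $\bbbi^\flat$ are needed. Reading off the character Tables \ref{tab:charT}, \ref{tab:charO} and \ref{tab:charY}, there is no such representation for $\bbbt$, exactly one for $\bbbo$, namely $\bbbo_8^\flat$, and two for $\bbbi$, namely $\bbbi_6$ and $\bbbi_7^\flat$. Thus the theorem reduces to the three cases $\nf{\bbbo_8^\flat}$, $\nf{\bbbi_6}$ and $\nf{\bbbi_7^\flat}$.

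For each of these I would run the normalisation procedure of Section \ref{sec:Chev} on the $\alpha$-localisation. Concretely, I assemble a Cartan subalgebra $\mf{h}$ inside the Lie algebra of matrices of invariants, requiring its elements to be semisimple with eigenvalues in $\splitk$ (cf.\ Remark \ref{rem:CSA}); I diagonalise them, bring them to the canonical $A_3$ form via $c=\alpha(h)^{-1}A$, and then solve $[H_i,M_{\pm\alpha_j}]=\pm a_{j,i}M_{\pm\alpha_j}$, extending recursively to the non-simple positive roots. Recording the coefficients of the resulting weight vectors, each of which is one of $1,\AIJ[I][\beta][\alpha],\AIJ[J][\gamma][\alpha]$ or $\AIJ[I][\beta][\alpha]\AIJ[J][\gamma][\alpha]$, produces the Chevalley model as a $4\times4$ array. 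I would then display the three models alongside intertwining operators ${\cal{I}}_{\mf{sl}(V)}$, built from permutation and $\AIJ[I][\beta][\alpha]$-scaling matrices, satisfying
\[
\|\mf{sl}(V)\|\,{\cal{I}}_{\mf{sl}(V)}={\cal{I}}_{\mf{sl}(V)}\,\|A_3^{(5,4)}\|,
\]
in the same tabular format as the earlier appendix proofs.

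The verification then amounts to a direct matrix computation over the ring $\splitk[\AIJ[I][\beta][\alpha]]$, using the relation $1+\AIJ[I][\beta][\alpha]+\AIJ[J][\gamma][\alpha]=0$, together with the check that each ${\cal{I}}_{\mf{sl}(V)}$ is invertible over this ring, i.e.\ $\det{\cal{I}}_{\mf{sl}(V)}\in\splitk^\ast$; this last point is what upgrades the intertwining from an equivalence over the field of fractions to a genuine isomorphism of Automorphic Lie Algebras. I expect the main obstacle to lie in the normalisation step rather than in the final check: for $\mf{sl}_4$ one must solve characteristic equations of degree up to four over $\splitk[\AIJ[I][\beta][\alpha]]$ and confirm that a semisimple CSA with $\splitk$-rational, $\AIJ[I][][]$-independent eigenvalues can actually be found. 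Once the three models are in hand, constructing and verifying the intertwining operators is routine, so---as throughout Section \ref{sec:Chev}---existence of the Chevalley normal form is established by explicit construction rather than by an a priori theorem.
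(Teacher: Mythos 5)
Your proposal matches the paper's proof: the paper likewise reduces to the three cases $\bbbo_8^\flat$, $\bbbi_6$ and $\bbbi_7^\flat$ and then simply exhibits, in Table B4, the explicit Chevalley models obtained from the normalisation procedure of Section \ref{sec:Chev} together with monomial intertwining operators ${\cal{I}}_{\mf{sl}(V)}$ satisfying $\|\mf{sl}(V)\|\,{\cal{I}}_{\mf{sl}(V)}={\cal{I}}_{\mf{sl}(V)}\,\|A_{3}^{(5,4)}\|$ and invertible over the coefficient ring. Your description of how those data are produced and verified is exactly the computation underlying that table, so the approach is essentially the same.
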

\begin{proof}
\def\dynnum{3}
We give the Chevalley model together with its intertwining operator \({\cal{I}}_{\mf{sl}(V)}\) with respect to
\(\|A_\dynnum^{(5,4)}\|\) (see Table B4),
%(see Table \ref{tab:dim15app_a}),
i.e.
\[
\|\mf{sl}(V)\| {\cal{I}}_{\mf{sl}(V)}={\cal{I}}_{\mf{sl}(V)} \|A_{\dynnum}^{(5,4)}\|.
\]
\begin{table}[H]
\label{tab:dim15app_a}
\begin{center}
\scalebox{0.9}{
\begin{tabular}{c|ccc} 
\hline
Irrep\\
$V$
&
$\bbbo_{8}$
&$\bbbi_{6}$
&$\bbbi_{7}$
\\
& & &  \\
\hline\hline
Chevalley model \\
$\|\mf{sl}(V)\|$ 
&
\def\pole{a}
\def\dynnum{3}
\def\dir{Ax4c8}
\def\dirp{Ax4c8p\pole}
$\begin{bmatrix}
0&\AIJ[I][\beta][\alpha]&1&\AIJ[I][\beta][\alpha]\\
1&0&1&1\\
\AIJ[J][\gamma][\alpha]&\AIJ[I][\beta][\alpha]\AIJ[J][\gamma][\alpha]&0&\AIJ[I][\beta][\alpha]\\
\AIJ[J][\gamma][\alpha]&\AIJ[I][\beta][\alpha]\AIJ[J][\gamma][\alpha]&1&0\end{bmatrix}$
&
\def\pole{a}
\def\dynnum{3}
\def\dir{Ax5c6}
\def\dirp{Ax5c6p\pole}
$\begin{bmatrix}
0&\AIJ[I][\beta][\alpha]&1&1\\
\AIJ[J][\gamma][\alpha]&0&\AIJ[J][\gamma][\alpha]&1\\
\AIJ[I][\beta][\alpha]&\AIJ[I][\beta][\alpha]&0&1\\
\AIJ[I][\beta][\alpha]\AIJ[J][\gamma][\alpha]&\AIJ[I][\beta][\alpha]&\AIJ[J][\gamma][\alpha]&0\end{bmatrix}$
&
\def\pole{a}
\def\dynnum{3}
\def\dir{Ax5c7}
\def\dirp{Ax5c7p\pole}
$\begin{bmatrix}
0&1&1&1\\
\AIJ[I][\beta][\alpha]\AIJ[J][\gamma][\alpha]&0&1&\AIJ[J][\gamma][\alpha]\\
\AIJ[I][\beta][\alpha]\AIJ[J][\gamma][\alpha]&\AIJ[I][\beta][\alpha]&0&\AIJ[J][\gamma][\alpha]\\
\AIJ[I][\beta][\alpha]&\AIJ[I][\beta][\alpha]&1&0\end{bmatrix}$
\\
& & &  \\
\hline
Intertwining operator \\
${\cal{I}}_{\mf{sl}(V)}$
&
\def\pole{a}
\def\dynnum{3}
\def\dir{Ax4c8}
\def\dirp{Ax4c8p\pole}
$\begin{pmatrix}
0&\AIJ[I][\beta][\alpha]&0&0\\
1&0&0&0\\
0&0&\AIJ[I][\beta][\alpha]&0\\
0&0&0&\AIJ[I][\beta][\alpha]\end{pmatrix}$
&
\def\pole{a}
\def\dynnum{3}
\def\dir{Ax5c6}
\def\dirp{Ax5c6p\pole}
$\begin{pmatrix}
0&0&0&\AIJ[I][\beta][\alpha]\\
\AIJ[J][\gamma][\alpha]&0&0&0\\
0&0&\AIJ[I][\beta][\alpha]&0\\
0&\AIJ[I][\beta][\alpha]\AIJ[J][\gamma][\alpha]&0&0\end{pmatrix}$
&
\def\pole{a}
\def\dynnum{3}
\def\dir{Ax5c7}
\def\dirp{Ax5c7p\pole}
$\begin{pmatrix}
1&0&0&0\\
0&0&0&\AIJ[I][\beta][\alpha]\\
0&0&\AIJ[I][\beta][\alpha]&0\\
0&\AIJ[I][\beta][\alpha]&0&0\end{pmatrix}$
\\
& & &  \\
\hline
\end{tabular}
}
\end{center}
\caption{Chevalley models and intertwining operators for \(\aliam{V}{G}{\dym}{\alpha}\).}
\end{table}
\end{proof}

%%%%%%%%%%%%%%%%%%%%%%%%%%%%%%%%%%%%%%%%%%%%%%%%%%%%%%%%%%%%%%%%%%%%
\def\dym{4}
\begin{Theorem}[\(\aliam{V}{G}{\dym}{\beta}\)]
\label{teo:dim15app_b}
\def\dynnum{3}
All Automorphic Lie Algebras \(\aliam{V}{G}{\dym}{\beta}\) are of type \(A_\dynnum^{(6,4)}\) and therefore isomorphic.
\end{Theorem}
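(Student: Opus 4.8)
The plan is to follow closely the template established in the proofs of Theorems \ref{teo:dim8app_ab}--\ref{teo:dim15app_a}. First I would identify the irreducible representations $V$ of the relevant Schur covers for which $\dim V=4$, so that $\mf{sl}(V)\cong\mf{sl}_4$ has dimension $15$ and root system $A_3$. Consulting the decomposition Tables \ref{tab:decompOsl} and \ref{tab:decompYsl}, these are exactly $\bbbo_8^\flat$, $\bbbi_6$ and $\bbbi_7^\flat$ --- the same three representations that appear in Theorem \ref{teo:dim15app_a}. Since this list is finite, the theorem reduces to three explicit verifications.

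For each such $V$ I would run the algorithm of Section \ref{sec:Chev} on the localisation at $\beta$, that is, on the generators obtained by dividing the $|G|$-homogeneous invariant matrices of Tables \ref{tab:MolienO}--\ref{tab:MolienY} by $\beta^3$. Working in the matrix-of-invariants picture of Section \ref{sec:moi}, I would first assemble a Cartan subalgebra $\mf{h}$ of three commuting semisimple elements whose eigenvalues lie in $\splitk$, diagonalise them inductively, and rescale by a $\splitk$-linear combination so that the differences $\alpha_i(H_k)$ reproduce the $A_3$ Cartan matrix. Solving $[H_i,M_{\pm\alpha_j}]=\pm a_{j,i}M_{\pm\alpha_j}$, followed by the successive bracket equations for the composite roots, then produces the weight vectors and hence the Chevalley model $\|\mf{sl}(V)\|$: a $4\times4$ array whose off-diagonal entries are forced to be one of $1$, $\AIJ[I][\alpha][\beta]$, $\AIJ[J][\gamma][\beta]$ or $\AIJ[I][\alpha][\beta]\AIJ[J][\gamma][\beta]$ (cf.\ Section \ref{sec:Invariants}). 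Because of the linear relation $\AIJ[I][\alpha][\beta]+1+\AIJ[J][\gamma][\beta]=0$ of Section \ref{sec:af}, all coefficients in fact live in the single-generator ring $\splitk[\AIJ[I][\alpha][\beta]]$.

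It then remains to exhibit, for each of the three models, an explicit invertible intertwiner ${\cal{I}}_{\mf{sl}(V)}$ over $\splitk[\AIJ[I][\alpha][\beta]]$ with
\[
\|\mf{sl}(V)\|\,{\cal{I}}_{\mf{sl}(V)}={\cal{I}}_{\mf{sl}(V)}\,\|A_3^{(6,4)}\|,
\]
exactly as tabulated in the preceding theorems. Checking $\det{\cal{I}}_{\mf{sl}(V)}\in\splitk^\ast$ guarantees that ${\cal{I}}_{\mf{sl}(V)}$ belongs to the filtered transformation group $\{T\mid\det T\in\splitk^\ast\}$ of Section \ref{sec:intro}, so that the intertwining is a genuine isomorphism of Automorphic Lie Algebras rather than merely of graded vector spaces; transitivity through $\|A_3^{(6,4)}\|$ then yields the asserted mutual isomorphism. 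The $(6,4)$ count of $\AIJ[I][\alpha][\beta]$s and $\AIJ[J][\gamma][\beta]$s comes for free: by the formula $\kappa_{\zeta'}=\nicefrac{1}{2}\,\codim\mf{g}(V)^{\langle g_{\zeta'}\rangle}$ of Section \ref{sec:Invariants} together with Table \ref{tab:codimg}, it depends only on $\mf{sl}_4$ and the choice $\zeta=\beta$, not on the group.

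The main obstacle I anticipate is producing the Chevalley models in the first place. Finding a semisimple element of the matrix-of-invariants algebra whose characteristic polynomial factors over $\splitk$, and then diagonalising it while controlling the $\splitk[\AIJ[I][\alpha][\beta]]$-valued entries, is delicate for the $4$-dimensional representations $\bbbo_8^\flat$, $\bbbi_6$ and $\bbbi_7^\flat$; as noted in Remark \ref{rem:CSA}, it is precisely the field extension $\splitk$ that makes the eigenvalues explicit and renders the computation feasible. Once the three Chevalley models are in hand, producing the intertwiners ${\cal{I}}_{\mf{sl}(V)}$ is essentially bookkeeping with permutations and $\mathbb{I},\mathbb{J}$-scalings, as described at the end of Section \ref{sec:moi}.
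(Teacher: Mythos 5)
Your proposal follows the paper's proof exactly: the paper identifies the same three four\-/dimensional representations $\bbbo_8^\flat$, $\bbbi_6$ and $\bbbi_7^\flat$, computes their Chevalley models by the algorithm of Section \ref{sec:Chev}, and tabulates explicit intertwining operators ${\cal{I}}_{\mf{sl}(V)}$ satisfying $\|\mf{sl}(V)\|\,{\cal{I}}_{\mf{sl}(V)}={\cal{I}}_{\mf{sl}(V)}\,\|A_3^{(6,4)}\|$. The one caveat is that your invertibility test $\det{\cal{I}}_{\mf{sl}(V)}\in\splitk^\ast$ is stricter than what the paper's own intertwiners satisfy (for instance the one for $\bbbo_8^\flat$ has determinant $(\AIJ[I][\alpha][\beta])^3$); the invertibility that actually matters is that of the induced transformation on the $n^2-1$ generators over $\splitk[\AIJ[I][\alpha][\beta]]$, not of the $n\times n$ intertwiner itself.
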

\begin{proof}
\def\dynnum{3}
We give the Chevalley model together with its intertwining operator \({\cal{I}}_{\mf{sl}(V)}\) with respect to
\(\|A_\dynnum^{(6,4)}\|\) (see Table B5),
% (see Table \ref{tab:dim15app_b}),
i.e.
\[
\|\mf{sl}(V)\| {\cal{I}}_{\mf{sl}(V)}={\cal{I}}_{\mf{sl}(V)} \|A_{\dynnum}^{(6,4)}\|.
\]
\begin{table}[H]
\label{tab:dim15app_b}
\begin{center}
\scalebox{0.9}{
\begin{tabular}{c|ccc} 
\hline
Irreducible representation\\
$V$
&
$\bbbo_{8}$
&$\bbbi_{6}$
&$\bbbi_{7}$
\\
& & &  \\
\hline\hline
Chevalley model \\
$\|\mf{sl}(V)\|$ 
&
\def\pole{b}
\def\dynnum{3}
\def\dir{Ax4c8}
\def\dirp{Ax4c8p\pole}
$\begin{bmatrix}
0&1&1&1\\
\AIJ[I][\alpha][\beta]\AIJ[J][\gamma][\beta]&0&\AIJ[I][\alpha][\beta]&\AIJ[J][\gamma][\beta]\\
\AIJ[I][\alpha][\beta]\AIJ[J][\gamma][\beta]&1&0&\AIJ[J][\gamma][\beta]\\
\AIJ[I][\alpha][\beta]&\AIJ[I][\alpha][\beta]&\AIJ[I][\alpha][\beta]&0\end{bmatrix}$
&
\def\pole{b}
\def\dynnum{3}
\def\dir{Ax5c6}
\def\dirp{Ax5c6p\pole}
$\begin{bmatrix}
0&\AIJ[J][\gamma][\beta]&1&\AIJ[J][\gamma][\beta]\\
\AIJ[I][\alpha][\beta]&0&\AIJ[I][\alpha][\beta]&1\\
\AIJ[I][\alpha][\beta]&\AIJ[J][\gamma][\beta]&0&\AIJ[J][\gamma][\beta]\\
\AIJ[I][\alpha][\beta]&\AIJ[I][\alpha][\beta]&\AIJ[I][\alpha][\beta]&0\end{bmatrix}$
&
\def\pole{b}
\def\dynnum{3}
\def\dir{Ax5c7}
\def\dirp{Ax5c7p\pole}
$\begin{bmatrix}
0&\AIJ[I][\alpha][\beta]&\AIJ[I][\alpha][\beta]&\AIJ[I][\alpha][\beta]\\
\AIJ[J][\gamma][\beta]&0&\AIJ[I][\alpha][\beta]\AIJ[J][\gamma][\beta]&1\\
1&1&0&1\\
\AIJ[J][\gamma][\beta]&\AIJ[I][\alpha][\beta]&\AIJ[I][\alpha][\beta]\AIJ[J][\gamma][\beta]&0\end{bmatrix}$
\\
& & & \\
\hline
Intertwining operator \\
${\cal{I}}_{\mf{sl}(V)}$
& 
\def\pole{b}
\def\dynnum{3}
\def\dir{Ax4c8}
\def\dirp{Ax4c8p\pole}
$\begin{pmatrix}
1&0&0&0\\
0&0&\AIJ[I][\alpha][\beta]&0\\
0&0&0&\AIJ[I][\alpha][\beta]\\
0&\AIJ[I][\alpha][\beta]&0&0\end{pmatrix}$
&
\def\pole{b}
\def\dynnum{3}
\def\dir{Ax5c6}
\def\dirp{Ax5c6p\pole}
$\begin{pmatrix}
0&\AIJ[J][\gamma][\beta]&0&0\\
0&0&0&\AIJ[I][\alpha][\beta]\\
\AIJ[J][\gamma][\beta]&0&0&0\\
0&0&\AIJ[I][\alpha][\beta]&0\end{pmatrix}$
&
\def\pole{b}
\def\dynnum{3}
\def\dir{Ax5c7}
\def\dirp{Ax5c7p\pole}
$\begin{pmatrix}
0&\AIJ[I][\alpha][\beta]&0&0\\
0&0&0&\AIJ[I][\alpha][\beta]\\
1&0&0&0\\
0&0&\AIJ[I][\alpha][\beta]&0\end{pmatrix}$
\\
& & & \\
\hline
\end{tabular}
}
\end{center}
\caption{Chevalley models and intertwining operators for \(\aliam{V}{G}{\dym}{\beta}\).}
\end{table}
\end{proof}

%%%%%%%%%%%%%%%%%%%%%%%%%%%%%%%%%%%%%%%%%%%%%%%%%%%%%%%%%%%%%%%%%%%%
\def\dym{4}
\begin{Theorem}[\(\aliam{V}{G}{\dym}{\gamma}\)]
\label{teo:dim15app_c}
\def\dynnum{3}
All Automorphic Lie Algebras \(\aliam{V}{G}{\dym}{\gamma}\) are of type \(A_\dynnum^{(6,5)}\) and therefore isomorphic.
\end{Theorem}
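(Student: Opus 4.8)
The plan is to follow verbatim the scheme of the two preceding proofs, Theorems \ref{teo:dim15app_a} and \ref{teo:dim15app_b}: for each irreducible representation \(V\) with \(\dim V=4\), so that \(\mf{sl}(V)\) is \(15\)-dimensional, I would exhibit the explicit Chevalley model \(\|\mf{sl}(V)\|\) together with an intertwining operator \({\cal{I}}_{\mf{sl}(V)}\) satisfying
\[
\|\mf{sl}(V)\|\,{\cal{I}}_{\mf{sl}(V)}={\cal{I}}_{\mf{sl}(V)}\,\|A_3^{(6,5)}\|\,,
\]
and verify that \({\cal{I}}_{\mf{sl}(V)}\) is invertible over the ring \(\splitk[\AIJ[I][\alpha][\gamma]]\). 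By Tables \ref{tab:decompOsl} and \ref{tab:decompYsl} the only \(4\)-dimensional irreducibles of the \(\bbbt\bbbo\bbbi\) covers are \(\bbbo_8^\flat\), \(\bbbi_6\) and \(\bbbi_7^\flat\), so there are exactly three cases to treat.

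First I would, for each such \(V\), assemble the \(|G|\)-homogeneous invariant matrices listed in Tables \ref{tab:MolienO}--\ref{tab:MolienY}, pass to the matrices of invariants of Section \ref{sec:moi}, and divide by \(\gamma^2\) to obtain a zero-homogeneous generating set of the ALiA with coefficients in \(\splitk[\AIJ[I][\alpha][\gamma]]\). Then I would carry out the Cartan subalgebra construction of Section \ref{sec:Chev}: find three commuting semisimple elements whose eigenvalues lie in \(\splitk\), diagonalise them simultaneously, and normalise the basis against the \(A_3\) Cartan matrix so that the \(H_k\) take the standard shape \(E_{k,k}-E_{k+1,k+1}\). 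Solving the weight equations \([H_i,M_{\pm\alpha_j}]=\pm a_{j,i}M_{\pm\alpha_j}\) and then, recursively, the equations for the composite roots produces the Chevalley model. I expect each to realise the pattern \(A_3^{(6,5)}\), namely six coefficients of type \(\AIJ[I][\alpha][\gamma]\) and five of type \(\AIJ[J][\beta][\gamma]\); this is exactly what Table \ref{tab:codimg} predicts, since for \(\mf{sl}_4\) one has \(\coa=6\) and \(\cob=5\), giving the pair \((6,5)\) in the \(\gamma\)-row of Table \ref{tab:IJnumbers}.

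With the three Chevalley models in hand, I would, for each of them, write down an explicit intertwining matrix \({\cal{I}}_{\mf{sl}(V)}\) with monomial entries in \(\AIJ[I][\alpha][\gamma]\) and \(\AIJ[J][\beta][\gamma]\), as in Tables B1--B5, and check both the intertwining identity above and that \(\det{\cal{I}}_{\mf{sl}(V)}\in\splitk^\ast\). Because the target model \(\|A_3^{(6,5)}\|\) is the fixed object of Section \ref{sec:dim15c}, each such operator realises an isomorphism \((\mf{sl}(V)\otimes\splitk(\lambda))^G_\gamma\cong\|A_3^{(6,5)}\|\) of Lie algebras over the ring, and transitivity then yields the mutual isomorphism of all the \(\aliam{V}{G}{4}{\gamma}\). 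I would present the resulting data in a single table of Chevalley models and intertwining operators, exactly as in the companion theorems.

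The hard part will be the ring-theoretic rigidity of the intertwining step, not the (largely automatable) computation of the normal forms themselves. A change of basis that diagonalises the CSA and matches the roots is only guaranteed to be invertible over the fraction field \(\splitk(\lambda)\); to obtain a genuine isomorphism of the automorphic algebras one needs \({\cal{I}}_{\mf{sl}(V)}\) invertible over \(\splitk[\AIJ[I][\alpha][\gamma]]\), i.e.\ with unit determinant. Achieving this forces a careful combination of scalings by the automorphic functions and of the row/column permutations that are the only admissible moves noted in Section \ref{sec:moi}, so that all non-constant factors cancel in the determinant; it is precisely here that the distribution of \(\AIJ[I][\alpha][\gamma]\)s and \(\AIJ[J][\beta][\gamma]\)s dictated by the \(A_3^{(6,5)}\) pattern must be reproduced identically in all three representations, and confirming this uniformly is the crux of the argument.
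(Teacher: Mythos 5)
Your proposal follows the paper's proof essentially verbatim: the paper likewise treats exactly the three four-dimensional irreducibles $\bbbo_8^\flat$, $\bbbi_6$ and $\bbbi_7^\flat$, computes their Chevalley models by the CSA/weight-vector algorithm of Section \ref{sec:Chev}, and records the models together with intertwining operators against $\|A_3^{(6,5)}\|$ in a single table. The only remark worth adding is that in this $\gamma$-pole case the paper's intertwining operators turn out to be plain permutation matrices, so the ring-invertibility issue you single out as the crux is automatic here.
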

\begin{proof}
\def\dynnum{3}
We give the Chevalley model together with its intertwining operator \({\cal{I}}_{\mf{sl}(V)}\) with respect to
\(\|A_\dynnum^{(6,5)}\|\) (see Table B6),
%(see Table \ref{tab:dim15app_c}),
i.e.
\[
\|\mf{sl}(V)\| {\cal{I}}_{\mf{sl}(V)}={\cal{I}}_{\mf{sl}(V)} \|A_{\dynnum}^{(6,5)}\|.
\]
\begin{table}[H]
\label{tab:dim15app_c}
\begin{center}
\scalebox{0.9}{
\begin{tabular}{c|ccc} 
\hline
Irreducible representation\\
$V$
&
$\bbbo_{8}$
&$\bbbi_{6}$
&$\bbbi_{7}$
\\
& & &  \\
\hline\hline
Chevalley model \\
$\|\mf{sl}(V)\|$ &
\def\pole{c}
\def\dynnum{3}
\def\dir{Ax4c8}
\def\dirp{Ax4c8p\pole}
$\begin{bmatrix}
0&\AIJ[I][\alpha][\gamma]&\AIJ[I][\alpha][\gamma]&\AIJ[I][\alpha][\gamma]\\
\AIJ[J][\beta][\gamma]&0&\AIJ[J][\beta][\gamma]&1\\
\AIJ[J][\beta][\gamma]&\AIJ[I][\alpha][\gamma]&0&\AIJ[I][\alpha][\gamma]\\
\AIJ[J][\beta][\gamma]&\AIJ[I][\alpha][\gamma]&\AIJ[J][\beta][\gamma]&0\end{bmatrix}$
&
\def\pole{c}
\def\dynnum{3}
\def\dir{Ax5c6}
\def\dirp{Ax5c6p\pole}
$\begin{bmatrix}
0&\AIJ[I][\alpha][\gamma]&\AIJ[I][\alpha][\gamma]&\AIJ[I][\alpha][\gamma]\\
\AIJ[J][\beta][\gamma]&0&\AIJ[I][\alpha][\gamma]&\AIJ[I][\alpha][\gamma]\\
\AIJ[J][\beta][\gamma]&\AIJ[J][\beta][\gamma]&0&1\\
\AIJ[J][\beta][\gamma]&\AIJ[J][\beta][\gamma]&\AIJ[I][\alpha][\gamma]&0\end{bmatrix}$
&
\def\pole{c}
\def\dynnum{3}
\def\dir{Ax5c7}
\def\dirp{Ax5c7p\pole}
$\begin{bmatrix}
0&\AIJ[J][\beta][\gamma]&\AIJ[J][\beta][\gamma]&\AIJ[I][\alpha][\gamma]\\
\AIJ[I][\alpha][\gamma]&0&\AIJ[I][\alpha][\gamma]&\AIJ[I][\alpha][\gamma]\\
\AIJ[I][\alpha][\gamma]&\AIJ[J][\beta][\gamma]&0&\AIJ[I][\alpha][\gamma]\\
1&\AIJ[J][\beta][\gamma]&\AIJ[J][\beta][\gamma]&0\end{bmatrix}$
\\
& & & \\
\hline
Intertwining operator \\
${\cal{I}}_{\mf{sl}(V)}$
& 
\def\pole{c}
\def\dynnum{3}
\def\dir{Ax4c8}
\def\dirp{Ax4c8p\pole}
$\begin{pmatrix}
1&0&0&0\\
0&0&0&1\\
0&1&0&0\\
0&0&1&0\end{pmatrix}$
&
\def\pole{c}
\def\dynnum{3}
\def\dir{Ax5c6}
\def\dirp{Ax5c6p\pole}
$\begin{pmatrix}
1&0&0&0\\
0&1&0&0\\
0&0&0&1\\
0&0&1&0\end{pmatrix}$
&
\def\pole{c}
\def\dynnum{3}
\def\dir{Ax5c7}
\def\dirp{Ax5c7p\pole}
$\begin{pmatrix}
0&0&1&0\\
1&0&0&0\\
0&1&0&0\\
0&0&0&1\end{pmatrix}$
\\
& & & \\
\hline
\end{tabular}
}
\end{center}
\caption{Chevalley models and intertwining operators for \(\aliam{V}{G}{\dym}{\gamma}\).}
\end{table}

\end{proof}

%%%%%%%%%%%%%%%%%%%%%%%%%%%%%%%%%%%%%%%%%%%%%%%%%%%%%%%%%%%%%%%%%%%%
\def\dym{5}
\begin{Theorem}[\(\aliam{V}{G}{\dym}{\alpha}\)]
\label{teo:dim25app_a}
\def\dynnum{4}
All Automorphic Lie Algebras \(\aliam{V}{G}{\dym}{\alpha}\) are of type \(A_\dynnum^{(8,6)}\) and therefore isomorphic.
\end{Theorem}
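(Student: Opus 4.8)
The plan is to follow the template of the preceding appendix proofs: for each admissible pair $(G,V)$ I would compute an explicit Chevalley model $\|\mf{sl}(V)\|$ of the algebra and then exhibit an intertwiner carrying it to the reference model $\|A_4^{(8,6)}\|$. The first step is to decide which pairs actually occur. Comparing dimensions in the character Tables \ref{tab:charT}--\ref{tab:charY}, the only five--dimensional irreducible representation among all of $\bbbt^\flat,\bbbo^\flat,\bbbi^\flat$ is $\bbbi_8$ (the symmetric power $x_4$). Hence $\aliam{V}{G}{5}{\alpha}$ consists of the single algebra $(\mf{sl}(\bbbi_8)\otimes\splitk(\lambda))^{\bbbi}_\alpha$, so that ``therefore isomorphic'' is automatic and the entire content is the identification of its normal form as $A_4^{(8,6)}$.

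Second, I would assemble the invariant matrices. By Table \ref{tab:decompYsl} one has $\mf{sl}(\bbbi_8)\cong\bbbi_4\oplus\bbbi_5\oplus2\bbbi_6\oplus2\bbbi_8$, and the ground forms together with the transvectants tabulated in Table \ref{tab:MolienY} furnish, after the Fourier projection of Section \ref{sec:fourier} and division by $\alpha^5$, the $24$ zero--homogeneous generators $\inv M^1,\dots,\inv M^{24}$. Applying the homomorphism $\psi$ of Section \ref{sec:moi} replaces these by their matrices of invariants: genuine $5\times5$ matrices with entries in $\splitk[\AIJ[I][\beta][\alpha]]$, preserving all structure constants. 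This is the representation in which the remaining computation is feasible.

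Third comes the Chevalley reduction of Section \ref{sec:Chev}. I would locate a semisimple element whose reduced characteristic polynomial splits over $\splitk$, diagonalise it, and enlarge it inductively to a four--dimensional Cartan subalgebra $\mf h$; passing to the canonical basis $H_1,\dots,H_4$ via the change matrix $\alpha(h)^{-1}A$ against the $A_4$ Cartan matrix, I then solve $[H_i,M_{\pm\alpha_j}]=\pm a_{j,i}M_{\pm\alpha_j}$ and the recursion on composite roots to obtain all ten pairs of weight vectors. Reading off their coefficients should exhibit eight entries carrying $\AIJ[I][\beta][\alpha]$ and six carrying $\AIJ[J][\gamma][\alpha]$, i.e.\ the pattern $A_4^{(8,6)}$, in agreement with $\cob=8$ and $\coc=6$ for $\mf{sl}_5$ in Table \ref{tab:codimg}; one also checks that $\sum_{\alpha\in\Phi^+}\langle e_\alpha,e_{-\alpha}\rangle$ equals the stated $K_{\mf b}(\mf{sl}_5)_\alpha$. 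Finally I would produce a monomial intertwiner $\mathcal{I}$ with entries in $\splitk[\AIJ[I][\beta][\alpha]]$ solving $\|\mf{sl}(\bbbi_8)\|\,\mathcal{I}=\mathcal{I}\,\|A_4^{(8,6)}\|$ and verify $\det\mathcal{I}\neq0$, so that conjugation by $\mathcal{I}$ sends the computed Cartan--Weyl basis onto the reference one.

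The main obstacle is the third step, and within it the requirement stressed in Remark \ref{rem:CSA} that the semisimple generators of $\mf h$ have all eigenvalues in $\splitk$ rather than merely in an algebraic closure: here one must split characteristic polynomials of degree up to five whose coefficients are polynomials in $\AIJ[I][\beta][\alpha]$, and confirm semisimplicity by checking that each candidate satisfies its reduced characteristic polynomial over the ring. A subtler point is the bookkeeping of the intertwiner: since $\det\mathcal{I}$ is a monomial in $\AIJ[I][\beta][\alpha]$ rather than a unit, conjugation by $\mathcal{I}$ rescales individual weight vectors by powers of the automorphic function, and one must confirm that the resulting correspondence is nevertheless a bracket--preserving bijection of the two Cartan--Weyl bases, hence a genuine isomorphism of the two algebras.
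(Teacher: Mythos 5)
Your plan is essentially the paper's own proof: since $\bbbi_8$ is the unique five-dimensional irreducible representation among the $\bbbt\bbbo\bbbi$ covers, the paper simply records the computed Chevalley model of $(\mf{sl}(\bbbi_8)\otimes\splitk(\lambda))^{\bbbi}_{\alpha}$ together with an explicit intertwiner to $\|A_4^{(8,6)}\|$ (Table B7), which is precisely the computation you outline, down to the count $(\kappa_\beta,\kappa_\gamma)=(8,6)$. Your concern about a non-unit determinant of $\mathcal{I}$ does not materialise in this particular case, as the intertwiner for poles in $\alpha$ turns out to be a permutation matrix.
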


\begin{proof}
\def\dynnum{4}
We give the Chevalley model together with its intertwining operator \({\cal{I}}_{\mf{sl}(V)}\) with respect to
\(\|A_\dynnum^{(8,6)}\|\) (see Table B7),
%(see Table \ref{tab:dim25app_abc}),
i.e.
\[
\|\mf{sl}(V)\| {\cal{I}}_{\mf{sl}(V)}={\cal{I}}_{\mf{sl}(V)} \|A_{\dynnum}^{(8,6)}\|.
\]
\begin{table}[h!]
\label{tab:dim25app_abc}
\begin{center}
\scalebox{0.81}{
\begin{tabular}{c|ccc} 
\hline
Poles at \\
$\Gamma_{\zeta}$
&
$\Gamma_{\alpha}$
&
$\Gamma_{\beta}$
&
$\Gamma_{\gamma}$
\\
& & & \\
\hline\hline
Chevalley model \\
$\|\mf{sl}(V)\|$ 
&
\def\pole{a}
\def\dynnum{4}
\def\dir{Ax5c8}
\def\dirp{Ax5c8p\pole}
$\begin{bmatrix}
0&\AIJ[J][\gamma][\alpha]&\AIJ[J][\gamma][\alpha]&1&1\\
\AIJ[I][\beta][\alpha]&0&1&\AIJ[I][\beta][\alpha]&\AIJ[I][\beta][\alpha]\\
\AIJ[I][\beta][\alpha]&1&0&\AIJ[I][\beta][\alpha]&\AIJ[I][\beta][\alpha]\\
\AIJ[I][\beta][\alpha]&\AIJ[J][\gamma][\alpha]&\AIJ[J][\gamma][\alpha]&0&1\\
\AIJ[I][\beta][\alpha]&\AIJ[J][\gamma][\alpha]&\AIJ[J][\gamma][\alpha]&1&0\end{bmatrix}$
&
\def\pole{b}
\def\dynnum{4}
\def\dir{Ax5c8}
\def\dirp{Ax5c8p\pole}
$\begin{bmatrix}
0&\AIJ[I][\alpha][\beta]\AIJ[J][\gamma][\beta]&\AIJ[I][\alpha][\beta]\AIJ[J][\gamma][\beta]&1&1\\
1&0&\AIJ[I][\alpha][\beta]&1&1\\
1&1&0&1&1\\
\AIJ[I][\alpha][\beta]&\AIJ[I][\alpha][\beta]\AIJ[J][\gamma][\beta]&\AIJ[I][\alpha][\beta]\AIJ[J][\gamma][\beta]&0&1\\
\AIJ[I][\alpha][\beta]&\AIJ[I][\alpha][\beta]\AIJ[J][\gamma][\beta]&\AIJ[I][\alpha][\beta]\AIJ[J][\gamma][\beta]&\AIJ[I][\alpha][\beta]&0\end{bmatrix}$
&
\def\pole{c}
\def\dynnum{4}
\def\dir{Ax5c8}
\def\dirp{Ax5c8p\pole}
$\begin{bmatrix}
0&1&\AIJ[I][\alpha][\gamma]&\AIJ[J][\beta][\gamma]&\AIJ[I][\alpha][\gamma]\\
\AIJ[I][\alpha][\gamma]&0&\AIJ[I][\alpha][\gamma]&\AIJ[J][\beta][\gamma]&\AIJ[I][\alpha][\gamma]\\
\AIJ[J][\beta][\gamma]&\AIJ[J][\beta][\gamma]&0&\AIJ[J][\beta][\gamma]&1\\
\AIJ[I][\alpha][\gamma]&\AIJ[I][\alpha][\gamma]&\AIJ[I][\alpha][\gamma]&0&\AIJ[I][\alpha][\gamma]\\
\AIJ[J][\beta][\gamma]&\AIJ[J][\beta][\gamma]&\AIJ[I][\alpha][\gamma]&\AIJ[J][\beta][\gamma]&0\end{bmatrix}$
\\
& & & \\
\hline
Intertwining operator \\
${\cal{I}}_{\mf{sl}(V)}$
& 
\def\pole{a}
\def\dynnum{4}
\def\dir{Ax5c8}
\def\dirp{Ax5c8p\pole}
$\begin{pmatrix}
0&0&0&0&1\\
1&0&0&0&0\\
0&1&0&0&0\\
0&0&1&0&0\\
0&0&0&1&0\end{pmatrix}$
&
\def\pole{b}
\def\dynnum{4}
\def\dir{Ax5c8}
\def\dirp{Ax5c8p\pole}
$\begin{pmatrix}
0&0&0&0&\AIJ[I][\alpha][\beta]\\
1&0&0&0&0\\
0&1&0&0&0\\
0&0&0&\AIJ[I][\alpha][\beta]&0\\
0&0&\AIJ[I][\alpha][\beta]&0&0\end{pmatrix}$
&
\def\pole{c}
\def\dynnum{4}
\def\dir{Ax5c8}
\def\dirp{Ax5c8p\pole}
$\begin{pmatrix}
0&0&0&0&\AIJ[I][\alpha][\gamma]\\
0&0&0&\AIJ[I][\alpha][\gamma]&0\\
0&\AIJ[J][\beta][\gamma]&0&0&0\\
0&0&\AIJ[I][\alpha][\gamma]&0&0\\
\AIJ[J][\beta][\gamma]&0&0&0&0\end{pmatrix}$
\\
& & & \\
\hline
\end{tabular}
}
\end{center}
\caption{$V=\bbbi_{8}$; Chevalley models and intertwining operators for \(\aliam{V}{G}{\dym}{\zeta}\), \(\zeta=\alpha,\beta,\gamma\).} 
\end{table}

\end{proof}

\begin{Theorem}[\(\aliam{V}{G}{\dym}{\beta}\)]
\label{teo:dim25b}
\def\dynnum{4}
\def\dym{5}
All Automorphic Lie Algebras \(\aliam{V}{G}{\dym}{\beta}\) are of type \(A_\dynnum^{(10,6)}\) and therefore isomorphic.
\end{Theorem}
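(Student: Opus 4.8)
The plan is to follow the uniform strategy already used for the lower-dimensional cases in this appendix (Theorems \ref{teo:dim8app_ab}--\ref{teo:dim25app_a}): since the type $A_4^{(10,6)}$ is prescribed by the model $\|A_4^{(10,6)}\|$ of Section \ref{sec:dim24b}, it suffices to exhibit, for every irreducible representation $V$ with $\dim\mf{sl}(V)=24$, an explicit Chevalley model of $\aliam{V}{G}{5}{\beta}$ together with an intertwining operator $\mathcal{I}_{\mf{sl}(V)}$ satisfying $\|\mf{sl}(V)\|\,\mathcal{I}_{\mf{sl}(V)}=\mathcal{I}_{\mf{sl}(V)}\,\|A_4^{(10,6)}\|$. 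Producing such an $\mathcal{I}_{\mf{sl}(V)}$, invertible over the relevant ring, simultaneously proves that each algebra is of the stated type and that all of them are mutually isomorphic.

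First I would enumerate the relevant representations. A direct inspection of the character tables (Tables \ref{tab:charT}--\ref{tab:charY}) shows that the only irreducible $G^\flat$-representation of dimension $5$ is $\bbbi_8$; hence $\aliam{V}{G}{5}{\beta}$ reduces to the single algebra $\aliam{\bbbi_8}{\bbbi}{5}{\beta}$, so that the ``therefore isomorphic'' clause becomes automatic once the type is established. This is also the expected bookkeeping from Table \ref{tab:codimg}: for $\mf{sl}_5$ one has $\coa=10$ and $\coc=6$, and poles at $\beta$ should produce the pair $(\#\mathbbm{I},\#\mathbbm{J})=(\coa,\coc)=(10,6)$.

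Second, I would carry out the Chevalley reduction of Section \ref{sec:Chev} for $V=\bbbi_8$, with the multiplicative set chosen so the poles lie on $\Gamma_\beta$, i.e.\ dividing the $|\bbbi|$-homogeneous invariant matrices of Table \ref{tab:MolienY} by $\beta^3$ and working over $\splitk[\AIJ[I][\alpha][\beta]]$. Concretely: assemble the matrices of invariants of Section \ref{sec:moi}, extract four commuting semisimple elements with eigenvalues in $\splitk$ to build the CSA $\mf{h}$, diagonalise and renormalise them so that their eigenvalue differences realise the $A_4$ Cartan matrix, and then solve $[H_i,M_{\pm\alpha_j}]=\pm a_{j,i}M_{\pm\alpha_j}$ recursively over all positive roots, reading off the entries of the Chevalley model. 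The expected outcome is a model whose off-diagonal entries lie in $\{1,\AIJ[I][\alpha][\beta],\AIJ[J][\gamma][\beta],\AIJ[I][\alpha][\beta]\AIJ[J][\gamma][\beta]\}$, carrying exactly ten factors $\AIJ[I][\alpha][\beta]$ and six factors $\AIJ[J][\gamma][\beta]$, so that it matches $\|A_4^{(10,6)}\|$; one then displays the model and the matching $\mathcal{I}_{\bbbi_8}$ in a table, exactly as in the $\beta$-column of the preceding appendix proofs.

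The hard part will not be the \emph{existence} of the isomorphism but the explicit diagonalisation over the \emph{ring} $\splitk[\AIJ[I][\alpha][\beta]]$ rather than over the field $\splitk$ (cf.\ Remark \ref{rem:CSA}). One must verify that a CSA exists whose eigenvalue differences are $\splitk$-rational and $\AIJ[I][\alpha][\beta]$-independent, that the degree-$5$ characteristic equations split over $\splitk$, and that the intertwining operator $\mathcal{I}_{\bbbi_8}$ remains unimodular over the ring, i.e.\ $\det\mathcal{I}_{\bbbi_8}\in\splitk^\ast$. These checks are computational (Singular/FORM), but the structural guarantee that they can be met comes from the Cohen--Macaulay and Molien bookkeeping of Section \ref{sec:inv_mat_a}, which fixes the number of generators and hence pins down the shape of the admissible normal form in advance.
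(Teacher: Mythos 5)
Your proposal is correct and follows essentially the same route as the paper: the paper's proof consists precisely of exhibiting, for the unique five-dimensional irreducible representation \(\bbbi_8\), the Chevalley model of \((\mf{sl}_5\otimes\splitk(\lambda))^{\bbbi}_{\beta}\) and an intertwining operator with \(\|A_4^{(10,6)}\|\) (the \(\Gamma_\beta\) column of Table B7), obtained by the Chevalley reduction of Section \ref{sec:Chev} exactly as you describe. Your additional observations --- that only \(\bbbi_8\) occurs, so the isomorphism clause is automatic, and that the count \((10,6)=(\coa,\coc)\) is forced by Table \ref{tab:codimg} --- are consistent with the paper's bookkeeping.
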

\begin{proof}
\def\dynnum{4}
We give the Chevalley model together with its intertwining operator \({\cal{I}}_{\mf{sl}(V)}\) with respect to
\(\|A_\dynnum^{(10,6)}\|\) (see Table B7),
%(see Table \ref{tab:dim25app_abc}),
i.e.
\[
\|\mf{sl}(V)\| {\cal{I}}_{\mf{sl}(V)}={\cal{I}}_{\mf{sl}(V)} \|A_{\dynnum}^{(10,6)}\|.
\]

\end{proof}

%%%%%%%%%%%%%%%%%%%%%%%%%%%%%%%%%%%%%%%%%%%%%%%%%%%%%%%%%%%%%%%%%%%%

\begin{Theorem}[\(\aliam{V}{G}{\dym}{\gamma}\)]
\label{teo:dim24app_c}
\def\dynnum{4}
\def\dym{5}
All Automorphic Lie Algebras \(\aliam{V}{G}{\dym}{\gamma}\) are of type \(A_\dynnum^{(10,8)}\) and therefore isomorphic.
\end{Theorem}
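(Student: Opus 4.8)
The plan is to reduce the statement to a single explicit computation and then match the output against the model $\|A_4^{(10,8)}\|$. The key preliminary observation is that, among the Schur covers $\bbbt^\flat,\bbbo^\flat,\bbbi^\flat$, only $\bbbi^\flat$ admits a five-dimensional irreducible representation, namely $\bbbi_8$ (see Tables \ref{tab:charY} and \ref{tab:decompYsl}). Hence necessarily $G\cong\bbbi$ and $V\cong\bbbi_8$, so there is essentially one algebra $\aliam{\bbbi_8}{\bbbi}{5}{\gamma}$ to study; the assertion that ``all such algebras are isomorphic'' is then essentially vacuous once its type has been identified, and the entire content lies in showing this type is $A_4^{(10,8)}$.

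First I would assemble the matrices of invariants for $\bbbi_8$. The ground forms and transvectants are already tabulated (Table \ref{tab:MolienY}), as are the bases of invariant vectors (Table \ref{tab:vecbasesi}); feeding these into the homomorphism $\psi$ of Section \ref{sec:moi} yields the $24$ generators $\psi(\inv{M}^j)$ with entries in the polynomial invariants. Localising at the orbit $\Gamma_\gamma$, i.e.\ dividing the $|\bbbi|$-homogeneous generators by $\gamma^2$, makes them zero-homogeneous with coefficients in $\splitk[\AIJ[I][\alpha][\gamma]]$, the single linear relation $1+\AIJ[I][\alpha][\gamma]+\AIJ[J][\beta][\gamma]=0$ of Section \ref{sec:af} serving to eliminate $\AIJ[J][\beta][\gamma]$.

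Next I would run the Chevalley algorithm of Section \ref{sec:Chev} over the ring $\splitk[\AIJ[I][\alpha][\gamma]]$: search the Lie algebra of matrices of invariants for a semisimple element all of whose eigenvalues lie in $\splitk$, diagonalise it, and inductively adjoin three further commuting semisimple elements to build a rank-$4$ Cartan subalgebra $\mf{h}$. Passing to the canonical basis $H_k$ by inverting $(\alpha_i(h_j))$ against the $A_4$ Cartan matrix, one then solves $[H_i,M_{\pm\alpha_j}]=\pm a_{j,i}M_{\pm\alpha_j}$ for the simple weight vectors and recurses on sums of roots. Reading off the coefficients --- each being one of $1$, $\AIJ[I][\alpha][\gamma]$, $\AIJ[J][\beta][\gamma]$ or $\AIJ[I][\alpha][\gamma]\AIJ[J][\beta][\gamma]$ --- should reproduce $\|A_4^{(10,8)}\|$ with its $10$ $\AIJ[I][\alpha][\gamma]$s and $8$ $\AIJ[J][\beta][\gamma]$s, in agreement with $(\kappa_\alpha,\kappa_\beta)=(10,8)$ from Table \ref{tab:codimg}. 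To certify the type I would exhibit an invertible intertwining operator ${\cal{I}}_{\mf{sl}(V)}$ with $\|\mf{sl}(V)\|\,{\cal{I}}_{\mf{sl}(V)}={\cal{I}}_{\mf{sl}(V)}\,\|A_4^{(10,8)}\|$; this is recorded, alongside the $\alpha$ and $\beta$ cases, in the $\Gamma_\gamma$ column of Table B7.

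I expect the main obstacle to be the construction of the Cartan subalgebra, exactly as flagged in Remark \ref{rem:CSA}. One must locate a semisimple element whose characteristic polynomial is now a quintic with coefficients in $\splitk[\AIJ[I][\alpha][\gamma]]$ rather than in $\splitk$, and arrange that it splits completely over $\splitk$; semisimplicity is then checked against the reduced characteristic polynomial, with the eigenvalue computation delegated to Singular. A secondary point requiring care is that the passage from the original generators to the Cartan--Weyl basis be invertible over the ring $\splitk[\AIJ[I][\alpha][\gamma]]$ itself, not merely over its field of fractions; granting these two points, the verification of the intertwining relation is routine matrix arithmetic.
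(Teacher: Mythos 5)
Your proposal is correct and follows essentially the same route as the paper: the only five\-dimensional irreducible is $\bbbi_8$ of $\bbbi^\flat$, and the proof consists of computing the Chevalley model for $(\mf{sl}(\bbbi_8)\otimes\splitk(\lambda))_\gamma^{\bbbi}$ via the matrices of invariants and exhibiting the intertwining operator with $\|A_4^{(10,8)}\|$, exactly as recorded in the $\Gamma_\gamma$ column of Table B7. The paper's proof is precisely this table together with the relation $\|\mf{sl}(V)\|\,{\cal{I}}_{\mf{sl}(V)}={\cal{I}}_{\mf{sl}(V)}\,\|A_4^{(10,8)}\|$, so no genuinely different ideas are involved.
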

\begin{proof}
\def\dynnum{4}
We give the Chevalley model together with its intertwining operator \({\cal{I}}_{\mf{sl}(V)}\) with respect to
\(\|A_\dynnum^{(10,8)}\|\) (see Table B7),
i.e.
\[
\|\mf{sl}(V)\| {\cal{I}}_{\mf{sl}(V)}={\cal{I}}_{\mf{sl}(V)} \|A_{\dynnum}^{(10,8)}\|.
\]
\end{proof}

%%%%%%%%%%%%%%%%%%%%%%%%%%%%%%%%%%%%%%%%%%%%%%%%%%%%%%%%%%%%%%%%%%%%
\def\dym{6}
\begin{Theorem}[\(\aliam{V}{G}{\dym}{\alpha}\)]
\label{teo:dim35app_a}
\def\dynnum{5}
All Automorphic Lie Algebras \(\aliam{V}{G}{\dym}{\alpha}\) are of type \(A_\dynnum^{(12,9)}\) and therefore isomorphic.
\end{Theorem}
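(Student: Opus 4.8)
The plan is to follow the same constructive strategy used in the proofs of Theorems~\ref{teo:dim8app_ab}--\ref{teo:dim25app_a}, specialised to $\mf{sl}_6$. Since \aliam{\bbbi_9^\flat}{\bbbi}{6}{\alpha} is built on a six-dimensional representation, I would first determine which irreducibles can occur. Scanning the dimensions in the character Tables~\ref{tab:charT}--\ref{tab:charY}, the only $\bbbt\bbbo\bbbi$-irreducible of dimension $6$ is the icosahedral $V=\bbbi_9^\flat$ (the fifth symmetric power $x_5$), whose traceless endomorphism algebra $\mf{sl}(\bbbi_9^\flat)$ has dimension $35$ and decomposes as $2\bbbi_4\oplus2\bbbi_5\oplus2\bbbi_6\oplus3\bbbi_8$ by Table~\ref{tab:decompYsl}. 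Hence the isomorphism half of the assertion is vacuous \emph{within} the $\bbbt\bbbo\bbbi$ family, and the genuine content is to certify that this single algebra is of type $A_5^{(12,9)}$; the comparison with other groups is then supplied by Theorem~\ref{teo:AL} and Corollary~\ref{cor:AL}.

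The core computation proceeds as follows. For $V=\bbbi_9^\flat$ with poles confined to $\Gamma_\alpha$, I would assemble the generating invariant matrices of degree $|\bbbi|$ from Table~\ref{tab:MolienY} and the six invariant vectors $\mf{v}_9^{7},\mf{v}_9^{11},\mf{v}_9^{15},\mf{w}_9^{15},\mf{v}_9^{19},\mf{v}_9^{23}$ from Table~\ref{tab:vecbasesi}. Passing through the Lie-algebra homomorphism $\psi$ of Section~\ref{sec:moi} and dividing by $\alpha^{5}$ yields thirty-five zero-homogeneous matrices of invariants, with entries polynomial in the automorphic functions $\AIJ[I][\beta][\alpha]$ and $\AIJ[J][\gamma][\alpha]$ (linearly related by $1+\AIJ[I][\beta][\alpha]+\AIJ[J][\gamma][\alpha]=0$, hence in fact over $\splitk[\AIJ[I][\beta][\alpha]]$). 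I would then run the Chevalley algorithm of Section~\ref{sec:Chev}: build the five-dimensional Cartan subalgebra $\mf{h}$ by locating five commuting semisimple matrices of invariants whose eigenvalues all lie in $\splitk$, simultaneously diagonalising them, normalising against the $A_5$ Cartan matrix via $\alpha_i(H_k)=a_{i,k}$, and producing all weight vectors $M_{\pm\alpha_j}$ recursively from $[H_i,M_{\pm\alpha_j}]=\pm a_{j,i}M_{\pm\alpha_j}$ together with the bracket-and-solve step for composite roots.

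The result is the explicit $6\times6$ Chevalley model $\nf{\bbbi_9^\flat}$. Reading off its entries, one checks that it carries exactly $12$ copies of $\AIJ[I][\beta][\alpha]$ and $9$ of $\AIJ[J][\gamma][\alpha]$, matching the predicted $\kappa_\beta=12$ and $\kappa_\gamma=9$ of Table~\ref{tab:codimg}, and that $K_{\mf{b}}(\mf{sl}_6)_\alpha=2+4\AIJ[I][\beta][\alpha]+\AIJ[J][\gamma][\alpha]+8\AIJ[I][\beta][\alpha]\AIJ[J][\gamma][\alpha]$. To finish I would exhibit a monomial intertwining operator $\mathcal{I}_{\mf{sl}(\bbbi_9^\flat)}$ — a permutation composed with a diagonal scaling by monomials in $\AIJ[I][\beta][\alpha]$ and $\AIJ[J][\gamma][\alpha]$ — realising
\[
\nf{\bbbi_9^\flat}\,\mathcal{I}_{\mf{sl}(\bbbi_9^\flat)}=\mathcal{I}_{\mf{sl}(\bbbi_9^\flat)}\,\|A_5^{(12,9)}\|,
\]
so that, over the field of fractions, $\nf{\bbbi_9^\flat}$ is conjugate to the canonical model and the type is established. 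The data would be displayed in a one-column table in the format of Tables~B1--B7, after checking, as in Section~\ref{sec:Chev}, that the transformation from the computed generators to the Chevalley basis is invertible over $\splitk[\AIJ[I][\beta][\alpha]]$.

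The main obstacle I anticipate is the construction of the CSA over the ring $\splitk[\AIJ[I][\beta][\alpha]]$ rather than over $\splitk$, exactly as flagged in Remark~\ref{rem:CSA}. One must find five \emph{simultaneously} commuting semisimple generators whose reduced characteristic polynomials — of degree up to six — split over $\splitk$ with \emph{constant} eigenvalues, while keeping the subsequent weight-vector linear systems solvable and the final change of basis invertible over the polynomial ring. For $\mf{sl}_6$ this is the most delicate and computation-heavy step, and it is where the machine algebra (Singular for the eigenvalue factorisations, FORM for the bracket bookkeeping) carries the weight, just as in the $A_4^{(8,6)}$ case of Theorem~\ref{teo:dim25app_a}.
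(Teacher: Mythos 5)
Your proposal is correct and follows essentially the same route as the paper: the paper's proof consists precisely of the explicit Chevalley model for the single six-dimensional case \(V=\bbbi_9^\flat\) together with the (here purely permutation) intertwining operator \({\cal{I}}_{\mf{sl}(V)}\) satisfying \(\|\mf{sl}(V)\|\,{\cal{I}}_{\mf{sl}(V)}={\cal{I}}_{\mf{sl}(V)}\,\|A_5^{(12,9)}\|\), displayed in Table B8. Your observations that the isomorphism clause is vacuous within the \(\bbbt\bbbo\bbbi\) family and that the counts \((12,9)\) and \(K_{\mf{b}}(\mf{sl}_6)_\alpha\) must match Table \ref{tab:codimg} are consistent with what the paper records.
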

\begin{proof}
\def\dynnum{5}
We give the Chevalley model together with its intertwining operator \({\cal{I}}_{\mf{sl}(V)}\) with respect to
\(\|A_\dynnum^{(12,9)}\|\) (see Table B8),
i.e.
\[
\|\mf{sl}(V)\| {\cal{I}}_{\mf{sl}(V)}={\cal{I}}_{\mf{sl}(V)} \|A_{\dynnum}^{(12,9)}\|.
\]

\begin{table}[h!]
\label{tab:dim35app_abc}
\begin{center}
\scalebox{0.78}{
\begin{tabular}{c|ccc} 
\hline
Poles at\\
$\Gamma_{\zeta}$
&
$\Gamma_{\alpha}$
&
$\Gamma_{\beta}$
&
$\Gamma_{\gamma}$
\\
& & & \\
\hline\hline
Chevalley model \\
$\|\mf{sl}(V)\|$ 
&
\def\pole{a}
\def\dynnum{5}
\def\dir{Ax5c9}
\def\dirp{Ax5c9p\pole}
$\begin{bmatrix}
0&\AIJ[I][\beta][\alpha]&\AIJ[I][\beta][\alpha]&\AIJ[I][\beta][\alpha]&\AIJ[I][\beta][\alpha]&1\\
\AIJ[J][\gamma][\alpha]&0&1&\AIJ[J][\gamma][\alpha]&1&\AIJ[J][\gamma][\alpha]\\
\AIJ[J][\gamma][\alpha]&\AIJ[I][\beta][\alpha]&0&\AIJ[J][\gamma][\alpha]&\AIJ[I][\beta][\alpha]&\AIJ[J][\gamma][\alpha]\\
1&\AIJ[I][\beta][\alpha]&1&0&\AIJ[I][\beta][\alpha]&1\\
\AIJ[J][\gamma][\alpha]&1&1&\AIJ[J][\gamma][\alpha]&0&\AIJ[J][\gamma][\alpha]\\
1&\AIJ[I][\beta][\alpha]&\AIJ[I][\beta][\alpha]&\AIJ[I][\beta][\alpha]&\AIJ[I][\beta][\alpha]&0\end{bmatrix}$
&
\def\pole{b}
\def\dynnum{5}
\def\dir{Ax5c9}
\def\dirp{Ax5c9p\pole}
$\begin{bmatrix}
0&\AIJ[I][\alpha][\beta]&\AIJ[J][\gamma][\beta]&1&\AIJ[J][\gamma][\beta]&\AIJ[J][\gamma][\beta]\\
1&0&\AIJ[J][\gamma][\beta]&1&\AIJ[J][\gamma][\beta]&\AIJ[J][\gamma][\beta]\\
\AIJ[I][\alpha][\beta]&\AIJ[I][\alpha][\beta]&0&\AIJ[I][\alpha][\beta]&\AIJ[I][\alpha][\beta]&1\\
\AIJ[I][\alpha][\beta]&\AIJ[I][\alpha][\beta]&\AIJ[J][\gamma][\beta]&0&\AIJ[J][\gamma][\beta]&\AIJ[J][\gamma][\beta]\\
\AIJ[I][\alpha][\beta]&\AIJ[I][\alpha][\beta]&1&\AIJ[I][\alpha][\beta]&0&1\\
\AIJ[I][\alpha][\beta]&\AIJ[I][\alpha][\beta]&1&\AIJ[I][\alpha][\beta]&\AIJ[I][\alpha][\beta]&0\end{bmatrix}$
&
\def\pole{c}
\def\dynnum{5}
\def\dir{Ax5c9}
\def\dirp{Ax5c9p\pole}
$\begin{bmatrix}
0&1&\AIJ[J][\beta][\gamma]&\AIJ[I][\alpha][\gamma]&\AIJ[I][\alpha][\gamma]&\AIJ[J][\beta][\gamma]\\
\AIJ[I][\alpha][\gamma]&0&\AIJ[J][\beta][\gamma]&\AIJ[I][\alpha][\gamma]&\AIJ[I][\alpha][\gamma]&\AIJ[J][\beta][\gamma]\\
\AIJ[I][\alpha][\gamma]&\AIJ[I][\alpha][\gamma]&0&\AIJ[I][\alpha][\gamma]&\AIJ[I][\alpha][\gamma]&1\\
\AIJ[J][\beta][\gamma]&\AIJ[J][\beta][\gamma]&\AIJ[J][\beta][\gamma]&0&\AIJ[I][\alpha][\gamma]&\AIJ[J][\beta][\gamma]\\
\AIJ[J][\beta][\gamma]&\AIJ[J][\beta][\gamma]&\AIJ[J][\beta][\gamma]&1&0&\AIJ[J][\beta][\gamma]\\
\AIJ[I][\alpha][\gamma]&\AIJ[I][\alpha][\gamma]&1&\AIJ[I][\alpha][\gamma]&\AIJ[I][\alpha][\gamma]&0\end{bmatrix}$
\\
& & & \\
\hline
Inter operator \\
${\cal{I}}_{\mf{sl}(V)}$
& 
\def\pole{a}
\def\dynnum{5}
\def\dir{Ax5c9}
\def\dirp{Ax5c9p\pole}
$\begin{pmatrix}
1&0&0&0&0&0\\
0&0&0&0&1&0\\
0&0&0&1&0&0\\
0&0&1&0&0&0\\
0&0&0&0&0&1\\
0&1&0&0&0&0\end{pmatrix}$
&
\def\pole{b}
\def\dynnum{5}
\def\dir{Ax5c9}
\def\dirp{Ax5c9p\pole}
$\begin{pmatrix}
0&0&0&0&1&0\\
0&0&0&0&0&1\\
1&0&0&0&0&0\\
0&0&0&1&0&0\\
0&0&1&0&0&0\\
0&1&0&0&0&0\end{pmatrix}$
&
\def\pole{c}
\def\dynnum{5}
\def\dir{Ax5c9}
\def\dirp{Ax5c9p\pole}
$\begin{pmatrix}
0&0&0&1&0&0\\
0&0&1&0&0&0\\
1&0&0&0&0&0\\
0&0&0&0&1&0\\
0&0&0&0&0&1\\
0&1&0&0&0&0\end{pmatrix}$
\\
& & & \\
\hline
\end{tabular}
}
\end{center}
\caption{$V=\bbbi_{9}$; Chevalley models and intertwining operators for \(\aliam{V}{G}{\dym}{\zeta}\), \(\zeta=\alpha,\beta,\gamma\).} 
\end{table}
\end{proof}

%%%%%%%%%%%%%%%%%%%%%%%%%%%%%%%%%%%%%%%%%%%%%%%%%%%%%%%%%%%%%%%%%%%%

\begin{Theorem}[\(\aliam{V}{G}{\dym}{\beta}\)]
\label{teo:dim35app_b}
\def\dynnum{5}
\def\dym{6}
All Automorphic Lie Algebras \(\aliam{V}{G}{\dym}{\beta}\) are of type \(A_\dynnum^{(14,9)}\) and therefore isomorphic.
\end{Theorem}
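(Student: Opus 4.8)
Since the defining representation has dimension \(6\), the first step is a dimension count: inspecting the character tables of the \(\bbbt\bbbo\bbbi\) groups shows that the only irreducible \(G^\flat\)-representation of dimension \(6\) is the icosahedral \(\bbbi_9^\flat\) (Table \ref{tab:charY}), while \(\bbbt^\flat\) and \(\bbbo^\flat\) possess no six-dimensional irreducible. Consequently the family \(\aliam{V}{G}{6}{\beta}\) contains the single algebra \(\aliam{\bbbi_9^\flat}{\bbbi}{6}{\beta}\), so the clause ``therefore isomorphic'' holds vacuously, and the real content is to show that this one algebra is of type \(A_5^{(14,9)}\). The required Chevalley model and intertwining operator in fact already sit in the \(\Gamma_\beta\)-column of the table displayed in the proof of Theorem \ref{teo:dim35app_a} (Table B8), so the plan is to construct that model and certify the intertwining relation.

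To obtain \(\nf{\bbbi_9^\flat}\) I would start from the decomposition \(\mf{sl}(\bbbi_9^\flat)\cong 2\bbbi_4\oplus 2\bbbi_5\oplus 2\bbbi_6\oplus 3\bbbi_8\) of Table \ref{tab:decompYsl} and the invariant vectors of Table \ref{tab:vecbasesi}, assembling the \(6\times 6\) matrices of invariants \(\psi(\inv{M}^j)\) of Section \ref{sec:moi}. Localising at the orbit \(\Gamma_\beta\) amounts to dividing the \(|\bbbi|\)-homogeneous generators by \(\beta^3\), after which every entry lies in \(\splitk[\AIJ[I][\alpha][\beta]]\) (the companion function \(\AIJ[J][\gamma][\beta]\) being \(-1-\AIJ[I][\alpha][\beta]\) through the syzygy of Section \ref{sec:af}). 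I would then run the algorithm of Section \ref{sec:Chev}: pick five commuting semisimple matrices whose eigenvalues lie in \(\splitk\), diagonalise them to form the Cartan subalgebra \(\mf{h}\), and solve the nondegenerate system \(\alpha(h)\,c=A\) with \(A\) the \(A_5\) Cartan matrix to pass to the canonical \(H_k=E_{k,k}-E_{k+1,k+1}\); the weight vectors \(M_{\pm\alpha_j}\) then follow recursively from \([H_i,M_{\pm\alpha_j}]=\pm a_{j,i}M_{\pm\alpha_j}\) together with bracketing of simple weight vectors to reach the non-simple roots.

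With \(\nf{\bbbi_9^\flat}\) in hand I would identify its type by two checks. First, each off-diagonal entry must be one of \(1,\ \AIJ[I][\alpha][\beta],\ \AIJ[J][\gamma][\beta],\ \AIJ[I][\alpha][\beta]\AIJ[J][\gamma][\beta]\), and tallying them must return the pair \((14,9)\) predicted by Table \ref{tab:IJnumbers}, equivalently \(\coa=14\) and \(\coc=9\) from Table \ref{tab:codimg}. Second, I would exhibit the explicit operator \({\cal{I}}_{\mf{sl}(\bbbi_9^\flat)}\) of Table B8 and verify \(\nf{\bbbi_9^\flat}\,{\cal{I}}_{\mf{sl}(\bbbi_9^\flat)}={\cal{I}}_{\mf{sl}(\bbbi_9^\flat)}\,\|A_5^{(14,9)}\|\), checking that \({\cal{I}}_{\mf{sl}(\bbbi_9^\flat)}\in\GL_6\bigl(\splitk[\AIJ[I][\alpha][\beta]]\bigr)\), i.e.\ that its determinant lies in \(\splitk^{\ast}\); this makes the change of basis invertible over the polynomial ring and establishes the isomorphism with the model \(\|A_5^{(14,9)}\|\).

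The main obstacle is not the bookkeeping but the construction of \(\mf{h}\) over the ring \(\splitk[\AIJ[I][\alpha][\beta]]\) rather than over a field: one must ensure that the chosen elements are genuinely semisimple and, as emphasised in Remark \ref{rem:CSA}, that their eigenvalues stay inside \(\splitk\) and not merely in \(\splitk[\AIJ[I][\alpha][\beta]]\). This difficulty, compounded by the degree-\(60\) icosahedral invariant computations underlying \(\bbbi_9^\flat\), is where the effort concentrates; both are carried out with FORM, GAP and Singular, so that conceptually the proof reduces to certifying the intertwiner of Table B8 and its invertibility over the ring.
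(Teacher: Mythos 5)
Your proposal is correct and follows essentially the same route as the paper: the proof there simply exhibits the Chevalley model and intertwining operator for \(V=\bbbi_9^\flat\) in the \(\Gamma_\beta\) column of Table B8, produced by exactly the pipeline you describe (matrices of invariants, CSA construction over \(\splitk[\mathbbm{I}]\), recursive computation of weight vectors, the \((14,9)\) count, and certification of an invertible intertwiner). Your opening observation that \(\bbbi_9^\flat\) is the only six-dimensional irreducible among the \(\bbbt\bbbo\bbbi\) covers, so that the ``therefore isomorphic'' clause is vacuous and the whole content is the type identification, is accurate and is left implicit in the paper.
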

\begin{proof}
\def\dynnum{5}
\def\dym{6}
We give the Chevalley model together with its intertwining operator \({\cal{I}}_{\mf{sl}(V)}\) with respect to
\(\|A_\dym^{(14,9)}\|\) (see Table B8),
i.e.
\[
\|\mf{sl}(V)\| {\cal{I}}_{\mf{sl}(V)}={\cal{I}}_{\mf{sl}(V)} \|A_{\dynnum}^{(14,9)}\|.
\]
\end{proof}

%%%%%%%%%%%%%%%%%%%%%%%%%%%%%%%%%%%%%%%%%%%%%%%%%%%%%%%%%%%%%%%%%%%%
\def\dym{6}
\begin{Theorem}[ \(\aliam{V}{G}{\dym}{\gamma}\)]
\label{teo:dim35app_c}
\def\dynnum{5}

All Automorphic Lie Algebras \(\aliam{V}{G}{\dym}{\gamma}\) are of type \(A_\dynnum^{(14,12)}\) and therefore isomorphic.
\end{Theorem}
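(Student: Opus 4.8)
The plan is to reduce the statement to one explicit computation followed by a single conjugation check, since the hypothesis $\dim V = 6$ is far more restrictive than it first appears. First I would record that among the irreducible projective representations of the $\bbbt\bbbo\bbbi$ groups the only six-dimensional one is $\bbbi_9^\flat$: inspection of the character tables (Table~\ref{tab:charY}, together with the absence of a six-dimensional irreducible in the $\bbbt^\flat$ and $\bbbo^\flat$ tables) and of Table~\ref{tab:decompYsl}, where $\mf{sl}(\bbbi_9^\flat)$ has dimension $35 = 6^2 - 1$, shows that $\mf{sl}_6$ arises from exactly one pair $(G,V) = (\bbbi, \bbbi_9^\flat)$. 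Consequently the clause ``all $\aliam{V}{G}{6}{\gamma}$ are isomorphic'' collapses to the single assertion that $\aliam{\bbbi_9^\flat}{\bbbi}{6}{\gamma}$ has Chevalley type $A_5^{(14,12)}$.

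Next I would construct the Chevalley normal form of this algebra following Section~\ref{sec:Chev}, carrying out every step on the $6\times 6$ matrices of invariants of Section~\ref{sec:moi}, whose entries lie in $\splitk[\AIJ[I][\alpha][\gamma]]$ once the degree-$|\bbbi|$ generators of Table~\ref{tab:MolienY} are localised at $\Gamma_\gamma$ (that is, divided by $\gamma^2$). Concretely I would: (i) assemble a rank-$5$ Cartan subalgebra $\mf{h}$ of commuting semisimple elements whose eigenvalues lie in $\splitk$, by the inductive diagonalisation procedure; (ii) normalise $\mf{h}$ to the standard shape $H_k = E_{k,k}-E_{k+1,k+1}$ via the base change $c = \alpha(h)^{-1}A$ with $A$ the $A_5$ Cartan matrix; and (iii) solve $[H_i, M_{\pm\alpha_j}] = \pm a_{j,i}M_{\pm\alpha_j}$ recursively, reading off the Chevalley model $\|\mf{sl}(\bbbi_9^\flat)\|$ displayed in the $\Gamma_\gamma$ column of Table~B8.

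Having the model, I would then exhibit the intertwining operator ${\cal{I}}_{\mf{sl}(\bbbi_9^\flat)}$ recorded in the last column of Table~B8 and verify the identity
\[
\|\mf{sl}(\bbbi_9^\flat)\|\,{\cal{I}}_{\mf{sl}(\bbbi_9^\flat)} = {\cal{I}}_{\mf{sl}(\bbbi_9^\flat)}\,\|A_5^{(14,12)}\|.
\]
Because this ${\cal{I}}$ is a permutation matrix, its determinant is a unit in $\splitk^\ast$, so it is an admissible filtered-degree-$0$ transformation and the computed algebra and the model are genuinely isomorphic. As an independent check I would confirm the counts $(\#\mathbbm{I}, \#\mathbbm{J}) = (14,12) = (\kappa_\alpha, \kappa_\beta)$ predicted by Table~\ref{tab:codimg} (since $\{\zeta,\zeta',\zeta''\} = \{\gamma,\alpha,\beta\}$ selects the pair $(\kappa_\alpha, \kappa_\beta)$), and the polynomial $K_{\mf{b}}(\mf{sl}_6)_\gamma = 1 + 2\mathbbm{I} + 12\mathbbm{I}\mathbbm{J}$.

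I expect the only real obstacle to sit in the middle step, the explicit Chevalley reduction over the ring $\splitk[\AIJ[I][\alpha][\gamma]]$ rather than over a field. One cannot simply invoke a generic regular semisimple element as over $\bbbc$; instead semisimplicity must be certified through the reduced characteristic polynomial, and one must solve characteristic equations of degree up to six while verifying that all required eigenvalues remain inside $\splitk$ (cf.\ Remark~\ref{rem:CSA}). This is precisely the point at which the computer algebra is indispensable, but the matrices-of-invariants reduction keeps the ambient size at $6\times 6$, so the computation is finite and, by Remark~\ref{rem:CSA}, always solvable over $\splitk$ in the $\mf{sl}$ case. Once the model and its permutation intertwiner are fixed, the first and third steps are immediate and the theorem follows.
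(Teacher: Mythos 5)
Your proposal is correct and follows essentially the same route as the paper: the proof in Appendix~\ref{app:Chev} consists precisely of exhibiting the Chevalley model for $V=\bbbi_9^\flat$ (the unique six\hbox{-}dimensional case) and the permutation intertwiner of Table~B8, then verifying $\|\mf{sl}(V)\|\,{\cal{I}}_{\mf{sl}(V)}={\cal{I}}_{\mf{sl}(V)}\,\|A_{5}^{(14,12)}\|$. Your additional remarks --- that the class of algebras collapses to a single $(G,V)$ pair and that the $(14,12)$ count matches $(\kappa_\alpha,\kappa_\beta)$ from Table~\ref{tab:codimg} --- are sound cross-checks consistent with the paper.
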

\begin{proof}
\def\dynnum{5}
\def\dym{6}
We give the Chevalley model together with its intertwining operator \({\cal{I}}_{\mf{sl}(V)}\) with respect to
\(\|A_\dym^{(14,12)}\|\) (see Table B8),
i.e.
\[
\|\mf{sl}(V)\| {\cal{I}}_{\mf{sl}(V)}={\cal{I}}_{\mf{sl}(V)} \|A_{\dynnum}^{(14,12)}\|.
\]
\end{proof}

%%%%%%%%%%%%%%%%%%%%%%%%%%%%%%%%%%%%%%%%%%%%%%%%%%%%%%%%%%%%%%%%%%%%%%%%%%%%%%%%%%%%%%%%%%%%%%%%%%%%%%%%%%%%%%%%%%%%%%

\def\cprime{$'$}

\end{document}